\newtheorem{theorem}{Theorem}
\newtheorem{definition}{Definition}
\newtheorem{lemma}{Lemma}
\newtheorem{corollary}{Corollary}
\newtheorem{observation}{Observation}
\newcommand{\termasm}[1]{\mathcal{A}_{\Box}[{#1}]}
\newcommand{\pathassembly}[1]{\mathrm{asm}{(#1)}}
\newcommand{\asm}[1]{\pathassembly{#1}}
\newcommand{\draws}[2]{\draw[fill=black] (#1.16,#2.16) rectangle (#1.84,#2.84);}
\newcommand{\drawt}[2]{\draw[fill=white] (#1.16,#2.16) rectangle (#1.84,#2.84);}
\newcommand{\drawlg}[2]{\draw[fill=green!50!white] (#1.16,#2.16) rectangle (#1.84,#2.84);}
\newcommand{\drawg}[2]{\draw[fill=green!50!black] (#1.16,#2.16) rectangle (#1.84,#2.84);}
\newcommand{\drawgre}[2]{\draw[fill=green!75!black] (#1.16,#2.16) rectangle (#1.84,#2.84);}
\newcommand{\drawgreen}[2]{\draw[fill=green] (#1.16,#2.16) rectangle (#1.84,#2.84);}
\newcommand{\drawo}[2]{\draw[fill=orange!50!white] (#1.16,#2.16) rectangle (#1.84,#2.84);}
\newcommand{\drawp}[2]{\draw[fill=purple!50!white] (#1.16,#2.16) rectangle (#1.84,#2.84);}
\newcommand{\drawb}[2]{\draw[fill=blue!50!white] (#1.16,#2.16) rectangle (#1.84,#2.84);}
\newcommand{\drawblue}[2]{\draw[fill=blue] (#1.16,#2.16) rectangle (#1.84,#2.84);}
\newcommand{\drawr}[2]{\draw[fill=red!50!white] (#1.16,#2.16) rectangle (#1.84,#2.84);}
\newcommand{\drawlr}[2]{\draw[fill=red!25!white] (#1.16,#2.16) rectangle (#1.84,#2.84);}
\newcommand{\drawre}[2]{\draw[fill=red!75!black] (#1.16,#2.16) rectangle (#1.84,#2.84);}
\newcommand{\drawred}[2]{\draw[fill=red!90!white] (#1.16,#2.16) rectangle (#1.84,#2.84);}
\newcommand{\drawgray}[2]{\draw[fill=black!50!white] (#1.16,#2.16) rectangle (#1.84,#2.84);}
\newcommand{\drawdgray}[2]{\draw[fill=black!75!white] (#1.16,#2.16) rectangle (#1.84,#2.84);}
\newcommand{\drawwhite}[2]{\fill [fill=white] (#1.16,#2.16) rectangle (#1.84,#2.84);}
\newcommand{\drawyellow}[3]{\draw[fill=yellow!#3!black] (#1.16,#2.16) rectangle (#1.84,#2.84);}
\newcommand{\drawpurple}[3]{\draw[fill=purple!#3!white] (#1.16,#2.16) rectangle (#1.84,#2.84);}
\newcommand{\Z}{\mathbb{Z}}
\newcommand{\N}{\mathbb{N}}
\newcommand{\R}{\ensuremath{\mathbb{R}}}
\newcommand{\resp}{respectively\xspace}
\newcommand{\vect}{\protect\overrightarrow}
\newcommand{\prodasm}[1]{\mathcal{A}[{#1}]}
\newcommand{\calT}{\mathcal{T}}
\newcommand{\prodT}{\prodasm{\mathcal{T}}}
\newcommand{\prodpaths}[1]{{\bf{P}}[{#1}]}
\newcommand{\prodpathsT}{\prodpaths{\mathcal{T}}}
\newcommand{\pos}[1]{\mathrm{pos}(#1)}
\newcommand\type[1]{\mathrm{type}(#1)}
\newcommand{\reverse}[1]{\ensuremath{{#1}^\leftarrow}}
\newcommand{\xcoord}[1]{\mathrm{x}_{#1} }
\newcommand{\ycoord}[1]{\mathrm{y}_{#1} }
\newcommand\range[3]{{#1},{#2},\ldots,{#3}}
\newcommand{\gs}[2]{\ensuremath{\Big[#1,#2\Big]}}
\newcommand{\olq}{{\ensuremath{\overline{q}}}}
\newcommand\torture[4]{{#1}_{{#3}+1 + (({#2}-{#3}-1)\mod({#4}-{#3}))} + \left\lfloor \frac{{#2}-{#3}-1}{{#4}-{#3}}\right\rfloor\vect{{#1}_{#3}{#1}_{#4}}}
\newcommand\embed[1]{{\ensuremath{\frak{E}[#1] }}}
\newcommand{\concat}[1]{\mathrm{concat}\!\left(#1\right)}
\newcommand{\uniterm}{\alpha}
\newcommand{\inuniterm}{\prodpaths{\mathcal{\uniterm}}}
\newcommand\bound{\ensuremath{(8|T|)^{4|T|+1}(5|\sigma|+6)}\xspace}
\newcommand\ass{\beta}
\newcommand{\pump}[1]{(#1)^*}
\newcommand{\bipump}[1]{{}^*(#1)^*}
\newcommand{\leftside}[1]{L(#1)}
\newcommand{\rightside}[1]{R(#1)}
\newcommand{\domain}[1]{dom(#1)}
\newcommand{\funpump}[2]{f(#1,#2)}
\newcommand{\height}{\mathrm{h}}
\title{On the directed tile assembly systems at temperature $1$.\thanks{Supported by European Research Council (ERC) award number 772766 and Science foundation Ireland (SFI) grant 18/ERCS/5746 (this manuscript reflects only the authors' view and the ERC is not responsible for any use that may be made of the information it contains).}} 
\author{Pierre-\'Etienne Meunier\thanks{Coturnix. \href{pe@coturnix.fr}{pe@coturnix.fr}}\\
  \and
Damien Regnault\thanks{IBISC, Université Évry, Université Paris-Saclay, 91025, Evry, France. \href{mailto:damien.regnault@univ-evry.fr}{damien.regnault@univ-evry.fr}}\\
}
\date{}
\begin{document}
\numberwithin{figure}{section} 
\maketitle


\section{Introduction}

We show here that a model called directed self-assembly at temperature 1 is unable to do complex computations like the ones of a Turing machine. Since this model can be seen as a generalization of finite automata to $2D$ languages, a logical approach is to proceed in two steps. The first one is to develop a $2D$ pumping lemma and the second one is to use this pumping lemma to classify the different types of possible computation.

In the directed tile assembly system at temperature $1$, tiles are positioned on a 2D discrete grid. The tiles are squares with some glues on their four sides. Two tiles can bind together if they are neighbors and have matching glues on their abuttal sides. There is only a finite number of tile types. Several tiles bounded together are called an assembly. A sequence of tiles such that each tile binds to the previous one is called a path. We consider initially a specific assembly called the seed and we study how this assembly grows when some new tiles bind on it. An assembly is called terminal when no more tile can bind to it. In the directed case considered here, a given set of tile types and a seed will always produce the same terminal assembly. 

In \cite{STOC2019}, the authors demonstrate a pumping lemma: for any path growing from the seed which is lengthy enough, some part of it can be copy-pasted in order to grow an infinite periodic path (note that this lemma states that the path could also be fragile which can occurs only in the non-directed case). The bound proven for this pumpable lemma is a tower of exponential according to the number of tile types and the size of the seed (a previous version of this result is still available on arXiv \cite{pumpabilityLargeBound}). Beforehand, Doty et al \cite{Doty-2011}, assuming the existence of a pumping lemma, classify the different types of terminal assembly. 
Thus the combination of these two papers solves the directed temperature 1 conjecture $\ldots$ but in an imperfect way. Indeed, since \cite{Doty-2011} is anterior to \cite{STOC2019}, the authors assumed a different and stronger pumping lemma where there is no mention of the seed: any path of a given length in the terminal assembly is pumpable. Nevertheless, all the demonstrations made in \cite{Doty-2011} still hold with the pumping lemma of \cite{STOC2019}.

The classification of Doty et al \cite{Doty-2011} leads to four kinds of terminal assembly: finite, infinite with/without comb, periodic with/without comb and bi-periodic. A comb is a periodic path growing on another periodic path. Combs are the most complex structure of the classification and cannot be dealt directly with the pumping lemma of \cite{STOC2019} since they seem to grow arbitrarily far away from the seed. Nevertheless, since a comb grows on a periodic path, there exists an infinity of copies of the comb growing on the same periodic path. Then, pumping one comb is enough to deal with all of its copies. Moreover, the authors of \cite{Doty-2011} have shown that these copies cannot intersect. Finally, if a comb grows far away enough from the seed, it will have to intersect one of its copy before intersecting with the seed which is not possible. Hence, the pumping lemma of \cite{STOC2019} can be used on a comb close to the seed. 

In this paper, we harmonize the notations between \cite{Doty-2011} and \cite{STOC2019} in order to clearly solve the directed temperature 1 conjecture, the lemma \ref{lem:magicindice} explains in details how to break combs with the reasoning of the previous paragraph. We are also able to show that the bi-periodic structures cannot reuse a tile type and thus we give an optimal description of these structures.  For a more detailed presentation of this model and the implications of this result, see the introduction of \cite{STOC2019}.

\section{Definitions (from \cite{STOC2019}) and main theorems}
\label{definitions}

As usual, let $\mathbb{R}$ be the set of real numbers, let $\mathbb{Z}$ be the integers, and let $\mathbb{N}$ be the natural numbers including 0.  
The domain of a function $f$ is denoted $\domain{f}$, and its range (or image) is denoted $f(\domain{f})$.


\subsection{Abstract tile assembly model}\label{sec:atam}

The abstract tile assembly model was introduced by Winfree~\cite{Winf98}. In this paper we study a restriction of the abstract tile assembly model called the directed temperature~1 abstract tile assembly model, or noncooperative abstract tile assembly model. For definitions of the full model, as well as intuitive explanations, see for example~\cite{RotWin00,Roth01}.

A \emph{tile type} is a unit square with four sides,
each consisting of a glue \emph{type} and a nonnegative integer \emph{strength}. Let  $T$  be a a finite set of tile types.
The sides of a tile type are respectively called  north, east, south, and west, as shown  in the following picture:
\begin{center}
\vspace{-1ex}
\begin{tikzpicture}[scale=0.8]
\draw(0,0)rectangle(1,1);
\draw(0,0.5)node[anchor=east]{West};
\draw(1,0.5)node[anchor=west]{East};
\draw(0.5,0)node[anchor=north]{South};
\draw(0.5,1)node[anchor=south]{North};
\end{tikzpicture}
\vspace{-1ex}\end{center}

An \emph{assembly} is a partial function $\alpha:\mathbb{Z}^2\dashrightarrow T$ where $T$ is a set of tile types and the domain of $\alpha$ (denoted $\domain{\alpha}$) is connected.\footnote{Intuitively, an assembly is a positioning of unit-sized tiles, each from some set of tile types $T$, so that their centers are placed on (some of) the elements of the discrete plane $\mathbb{Z}^2$ and such that those elements of $\mathbb{Z}^2$ form a connected set of points.} 
We let $\mathcal{A}^T$ denote the set of all assemblies over the set of tile types $T$. 
In this paper, two tile types in an assembly are said to  {\em bind} (or \emph{interact}, or are
\emph{stably attached}), if the glue types on their abutting sides are
equal, and have strength $\geq 1$.  An assembly $\alpha$ induces an undirected
weighted \emph{binding graph} $G_\alpha=(V,E)$, where $V=\domain{\alpha}$, and
there is an edge $\{ a,b \} \in E$ if and only if the tiles at positions $a$ and $b$ interact, and
this edge is weighted by the glue strength of that interaction.  The
assembly is said to be $\tau$-stable if every cut of $G_{\alpha}$ has weight at
least $\tau$.

A \emph{tile assembly system} is a triple $\mathcal{T}=(T,\sigma,\tau)$,
where $T$ is a finite set of tile types, $\sigma$ is a $\tau$-stable assembly called the \emph{seed}, and
$\tau \in \mathbb{N}$ is the \emph{temperature}.
Throughout this paper,  $\tau=1$.

Given two $\tau$-stable assemblies $\alpha$ and $\beta$, we say that $\alpha$ is a
\emph{subassembly} of $\beta$, and write $\alpha\sqsubseteq\beta$, if
$\domain{\alpha}\subseteq \domain{\beta}$ and for all $p\in \domain{\alpha}$,
$\alpha(p)=\beta(p)$.
We also write
$\alpha\rightarrow_1^{\mathcal{T}}\beta$ if we can obtain $\beta$ from
$\alpha$ by the binding of a single tile type, that is:  $\alpha\sqsubseteq \beta$, $|\domain{\beta}\setminus\domain{\alpha}|=1$ and the tile type at the position $\domain{\beta}\setminus\domain{\alpha}$ stably binds to $\alpha$ at that position.  We say that~$\gamma$ is
\emph{producible} from $\alpha$, and write
$\alpha\rightarrow^{\mathcal{T}}\gamma$ if there is a (possibly empty)
sequence $\alpha_1,\alpha_2,\ldots,\alpha_n$ where $n \in \N \cup \{ \infty \} $, $\alpha= \alpha_1$ and $\alpha_n =\gamma$, such that
$\alpha_1\rightarrow_1^{\mathcal{T}}\alpha_2\rightarrow_1^{\mathcal{T}}\ldots\rightarrow_1^{\mathcal{T}}\alpha_n$. 
A sequence of $n\in\mathbb{Z}^+ \cup \{\infty\}$ assemblies
$\alpha_0,\alpha_1,\ldots$ over $\mathcal{A}^T$ is a
\emph{$\mathcal{T}$-assembly sequence} if, for all $1 \leq i < n$,
$\alpha_{i-1} \to_1^\mathcal{T} \alpha_{i}$.

Given two $\tau$-stable assemblies $\alpha$ and $\beta$, the union of $\alpha$ and $\beta$, write $\alpha\cup\beta$,  is an assembly defined if and only if and for all $p\in \domain{\alpha}\cap\domain{\beta}$, $\alpha(p)=\beta(p)$ and either at least one tile of $\alpha$ binds with a tile of $\beta$ or $\domain{\alpha}\cap\domain{\beta}\neq \emptyset$. Then, for all $p\in \domain{\alpha}$, we have $(\alpha\cup \beta)(p)=\alpha(p)$ and for all $p\in \domain{\beta}$, we have $(\alpha \cup \beta)(p)=\beta(p)$.

The set of \emph{productions}, or \emph{producible assemblies}, of a tile assembly system $\mathcal{T}=(T,\sigma,\tau)$ is the set of all assemblies producible
from the seed assembly $\sigma$ and is written~$\prodasm{\mathcal{T}}$. An assembly $\alpha$ is called \emph{terminal} if there is no $\beta$ such that $\alpha\rightarrow_1^{\mathcal{T}}\beta$. The set of all terminal assemblies of $\mathcal{T}$ is denoted~$\termasm{\mathcal{T}}$. 
If there is a unique terminal assembly, \emph{i.e.} $|\termasm{\mathcal{T}}|=1$, then $\mathcal{T}$ is \emph{directed}. Along the article, this unique terminal assembly is denoted $\uniterm$. Note that we do not assume that the seed could be reduced to a single tile as explained in details in appendix \ref{app:seed}.

The translation of an assembly $\alpha$ by a vector $\vect{v}$, written $\alpha+\vect{v}$, is the assembly $\beta$ defined for all $(x,y)\in(\domain\alpha+\vect{v})$ as $\beta(x,y)=\alpha((x,y)-\vect{v})$. An assembly $\alpha$ is \emph{$\vect{v}$-periodic} if and only if $\vect{v}$ is not the null vector $\vect{0}$ and $\alpha+\vect{v}=\alpha$. The periodicity of the terminal assembly determine its complexity: the three main kinds of terminal assembly can be identified according to this parameter. An assembly $\alpha$ is \emph{bi-periodic} if there exist two non-collinear vectors $\vect{u}$ and $\vect{v}$ such that $\alpha$ is $\vect{u}$-periodic and $\vect{v}$-periodic. An assembly is \emph{simply periodic} if it is not bi-periodic and if there exists a vector $\vect{v}$ such that $\alpha$ is $\vect{v}$-periodic. Otherwise, an assembly is aperiodic. We define formally the complexity of an assembly and state the three main theorems.

\begin{definition}
The complexity of a finite assembly is $0$. For $i\geq 1$, the complexity of an assembly $\alpha$ is $i$ if 
$\alpha=\bigcup_{\ell \in \N} (\beta+\ell\vect{v})$ where the complexity of $\beta$ is $i-1$,
or if it is a finite union of assembly of level less than $i$.
\end{definition}

\begin{theorem}[Description of the bi-periodic terminal assemblies]
\label{main:biperiodic}
Consider a directed tile assembly system $\mathcal{T}=(T,\sigma,1)$ whose terminal assembly $\uniterm$ is bi-periodic. Then there exist two non-collinear vectors $\vect{u}$ and $\vect{v}$ and an assembly $\ass$ of complexity $0$ whose size is bounded by $|T|^2$ such that $$\uniterm=\bigcup_{\ell,\ell' \in \Z} (\ass+\ell\vect{u}+\ell'\vect{v}).$$
\end{theorem}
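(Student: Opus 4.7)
The plan is to exploit the directedness of $\mathcal{T}$ to establish a local determinism principle: for every tile type $t$ appearing in $\uniterm$ and every cardinal direction $\vect{d}$, either no instance of $t$ in $\uniterm$ has a neighbor in direction $\vect{d}$, or every instance does and these neighbors all carry a single tile type depending only on $t$ and $\vect{d}$. To justify this, assume $t$ sits at some $q_1$ with neighbor $t_1$ at $q_1+\vect{d}$, and that $t$ also sits at $q_2$. Because $\uniterm$ is the unique terminal of $\mathcal{T}$, we can build a finite producible assembly $\gamma$ that already contains $t$ at $q_2$ by taking a long enough prefix of any assembly sequence whose limit is $\uniterm$. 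Extending $\gamma$ by the tile $t_1$ at $q_2+\vect{d}$ is a valid temperature-$1$ step since $t_1$'s glue opposite $\vect{d}$ already matches the $\vect{d}$-glue of $t$, so the extended assembly is producible and therefore a subassembly of $\uniterm$; this forces $\uniterm(q_2+\vect{d})=t_1$.

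From local determinism and the connectedness of $\domain{\uniterm}$ I would then derive the key lemma: whenever $\uniterm(q_1)=\uniterm(q_2)$, the vector $\vect{w}=q_2-q_1$ is a period of $\uniterm$. Indeed, for any $p\in\domain{\uniterm}$ pick a path $q_1=r_0,r_1,\ldots,r_k=p$ inside $\domain{\uniterm}$ and iterate local determinism: the same sequence of unit moves started from $q_2$ stays inside $\domain{\uniterm}$, visits exactly the same tile types, and ends at $p+\vect{w}$. Hence $\uniterm(p+\vect{w})=\uniterm(p)$ for every $p\in\domain{\uniterm}$, meaning $\uniterm$ is invariant under translation by $\vect{w}$.

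The theorem itself then follows quickly. Let $\Lambda\subseteq\Z^2$ be the group of all periods of $\uniterm$; by bi-periodicity $\Lambda$ is a rank-$2$ lattice. Pick any basis $(\vect{u},\vect{v})$ of $\Lambda$, let $F$ be a fundamental domain, and set $\ass$ to be the restriction of $\uniterm$ to $F$. By definition of $\Lambda$ we have $\uniterm=\bigcup_{\ell,\ell'\in\Z}(\ass+\ell\vect{u}+\ell'\vect{v})$. Two distinct positions of $F\cap\domain{\uniterm}$ carrying the same tile type would, by the lemma, produce a period of $\uniterm$ outside $\Lambda$, contradicting the maximality of $\Lambda$. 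Hence tile types are pairwise distinct inside $\ass$, giving the even stronger bound $|\ass|\le|T|\le|T|^2$ promised in the introduction.

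The main obstacle I anticipate lies in the first step: one must carefully invoke directedness and the subassembly property to guarantee that every tile of the (infinite) $\uniterm$ genuinely appears in some finite producible assembly, so that the exchange argument for local determinism is available at an arbitrary position. This should be routine given the basic theory of directed temperature-$1$ assemblies and the union operator recalled in Section~\ref{definitions}, but needs to be phrased with care since $\uniterm$ is infinite and the seed is not assumed to reduce to a single tile.
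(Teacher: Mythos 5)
The crux of your argument is the ``local determinism principle,'' and its justification has a genuine gap: attaching $t_1$ at $q_2+\vect{d}$ is a legal temperature-$1$ step only if that position is \emph{empty} in the producible assembly $\gamma$, and you cannot in general arrange this. The seed itself, or the very path along which $t$ was grown to reach $q_2$, may already occupy $q_2+\vect{d}$ with a tile type other than $t_1$; then no assembly sequence ever has the opportunity to place $t_1$ there, and directedness yields no contradiction. Concretely, take an L-shaped seed occupying $(0,0),(0,1),(1,1)$ whose tile at $(0,0)$ exposes an east glue $g$; let $t$ carry west and east glues $g$ and a north glue $h$, and let $t_1$ carry a south glue $h$. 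The system is directed, $t$ fills $(1,0),(2,0),\ldots$, the copy at $(2,0)$ acquires $t_1$ above it, but the copy at $(1,0)$ sits below a seed tile of a different type: local determinism fails. Worse, your proofs of the principle and of its consequence (``a repeated tile type yields a period'') use only directedness, never bi-periodicity, so they would show that every infinite directed terminal assembly -- which necessarily repeats a tile type -- is periodic; this is refuted by the aperiodic directed system exhibited in Appendix \ref{app:C}.

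What you describe as a routine obstacle is in fact the main content of the paper's treatment of the bi-periodic case. To obtain a correct translation principle one must first extract a bi-pumpable path (Lemma \ref{lem:exist:bipump}), show that the seed can be relocated onto any tile of that path (Lemma \ref{lem:bipump:seed}) so that blocking by finite producible assemblies can be escaped by translating along the period vector, and then combine the conflict case analysis of Lemma \ref{biperiodic:translation} with the Jordan-curve argument of Lemma \ref{biperiodic:notcol} to conclude that a repeated tile type forces a translation of the whole bi-infinite path, hence a period. Your final packaging via the lattice of periods and a fundamental domain (together with the sharper bound $|\ass|\le|T|$) is an attractive alternative to the explicit cycle construction of Lemma \ref{alpha:type1} and would follow from a correct version of your key lemma; but as written the proposal assumes exactly the statement that needs to be proved.
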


\begin{theorem}[Description of the simply periodic terminal assemblies]
\label{main:periodic}
Consider a directed tile assembly system $\mathcal{T}=(T,\sigma,1)$ whose terminal assembly $\uniterm$ is simply periodic. Then there exists a vector $\vect{v}$ and an assembly $\ass$ of complexity $1$ such that $$\uniterm=\bigcup_{\ell\in \Z} (\ass+\ell\vect{v}).$$
\end{theorem}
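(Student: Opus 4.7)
Fix a primitive non-zero vector $\vect{v}$ with $\uniterm+\vect{v}=\uniterm$; the plan is to take $\ass$ as a fundamental domain for the translation action of $\Z\vect{v}$ on $\uniterm$ and to show it has complexity $1$. First I would apply the classification of Doty et al.~\cite{Doty-2011}, which is available in our setting thanks to the pumping lemma of~\cite{STOC2019} together with Lemma~\ref{lem:magicindice} (the magic-indice argument restricting attention to combs sufficiently close to the seed, as recalled in the introduction). Since $\uniterm$ is simply periodic and not bi-periodic, this classification produces a decomposition of $\uniterm$ into a bounded region around $\sigma$, a $\vect{v}$-periodic backbone, and at most finitely many comb structures (up to $\vect{v}$-translation) whose teeth are one-sided infinite periodic paths in a direction $\vect{w}$ not parallel to $\vect{v}$: a two-sided tooth, or one with $\vect{w}$ parallel to $\vect{v}$, would supply a second independent period and contradict the simply-periodic hypothesis.

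Next I would carve $\ass$ out by including exactly one $\vect{v}$-period of the backbone, the complete teeth anchored in that period, and any bounded decoration attached to that slab. A deterministic lexicographic rule on positions makes the carving unambiguous so that $\uniterm=\bigcup_{\ell\in\Z}(\ass+\ell\vect{v})$. Each tooth writes as $\bigcup_{\ell\in\N}(\gamma+\ell\vect{w})$ with $\gamma$ finite, and therefore is of complexity $1$. A geometric argument, using that each tooth is $\vect{v}$-replicated infinitely often, then forces all teeth inside the slab to share a common direction $\vect{w}$: otherwise the $\vect{v}$-translates of two teeth in distinct directions would collide somewhere in the plane, which is impossible inside a single consistent assembly. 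It follows that $\ass$ itself writes as $\bigcup_{\ell\in\N}(\gamma'+\ell\vect{w})$ where $\gamma'$ finitely packages the backbone slab together with one period of each tooth, so $\ass$ has complexity $1$. In the degenerate ``periodic without comb'' sub-case, $\ass$ is already finite and a fortiori of complexity $1$.

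The decisive ingredient, and the main obstacle I expect, is the geometric argument that all teeth inside a single slab must share the direction $\vect{w}$: two teeth pointing in different non-parallel directions, both replicated by $\vect{v}$, would cross each other somewhere in $\Z^2$, which is forbidden because $\uniterm$ is a function. This is a close cousin of the non-intersection statements for comb copies proved in~\cite{Doty-2011}, but it must be re-examined in the simply-periodic setting so as not to silently appeal to bi-periodicity (the very hypothesis we are excluding). A secondary, more bookkeeping-level obstacle is defining the carving of $\ass$ so that no tile is lost or double-counted across $\vect{v}$-orbits, which is handled by a deterministic orbit-representative rule; once these two points are settled, the complexity-$1$ presentation of $\ass$ and the identity $\uniterm=\bigcup_{\ell\in\Z}(\ass+\ell\vect{v})$ follow immediately.
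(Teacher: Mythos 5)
There is a genuine gap: your first step ``apply the classification of Doty et al.'' imports as a black box essentially the statement you are asked to prove. The decomposition of a simply periodic $\uniterm$ into a $\vect{v}$-periodic backbone with finitely many combs attached \emph{is} the content of Theorem~\ref{main:periodic}, and the paper's purpose is to establish it from the pumping lemma rather than cite it. The paper's actual route builds the backbone from scratch: it orders the bi-pumpable paths (Lemmas~\ref{lem:arcOnBiperiodic} and~\ref{lem:Comparaison}), extracts maximum and minimum bi-pumpable paths $P^+$ and $P^-$ without redundancy sharing the period $\vect{v}$ (Lemma~\ref{lem:max}), and shows via Lemma~\ref{lem:firsthalf} that the strip $\rightside{\bipump{P^+}}\cap\leftside{\bipump{P^-}}$ is exactly $\bigcup_{\ell\in\Z}(\ass+\ell\vect{v})$ for a finite $\ass$ of size at most $|T|^2$, while every path growing outside the strip is $\vect{v}$-self-avoiding and supports no arc. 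None of this ordering machinery appears in your sketch, yet it is where the work lies; in particular the finiteness (up to $\vect{v}$-translation) of the comb family and the fact that each comb base has bounded height/width come from the pumping lemma applied after Lemma~\ref{lem:firsthalf}, not from an external classification.

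Your ``decisive ingredient'' is also misdirected. By the paper's definition, complexity~$1$ is closed under finite unions, so you do not need all teeth to share one direction $\vect{w}$: the paper simply takes $\gamma=\ass^+\cup\ass\cup\ass^-$, a finite union of complexity-$\le 1$ pieces. Moreover the common-direction claim is false as stated, since the teeth on the left of $\bipump{P^+}$ and those on the right of $\bipump{P^-}$ point into opposite half-planes. Finally, your model of a tooth as a bare one-sided periodic path is too coarse: each pumped tooth carries further decorations, and showing those decorations are finite is exactly Lemma~\ref{lem:doubleavoid} (they are $(\vect{v},\vect{P_0P_{|P|-1}})$-self-avoiding) combined with Lemma~\ref{lem:compzero}, feeding into Lemma~\ref{lem:compun}; your proposal skips this layer entirely.
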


\begin{theorem}[Description of the aperiodic terminal assemblies]
\label{main:aperiodic}
Consider a directed tile assembly system $\mathcal{T}=(T,\sigma,1)$ whose terminal assembly $\uniterm$ is aperiodic, then the complexity of $\uniterm$ is $2$.
\end{theorem}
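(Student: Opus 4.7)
The plan is to combine the classification of terminal assemblies from Doty et al.\ \cite{Doty-2011} with the pumping lemma of \cite{STOC2019} (extended to combs via lemma \ref{lem:magicindice}) in order to decompose $\uniterm$ as the seed together with finitely many infinite ``arms'' of controlled type. The main content of the theorem is the upper bound ``complexity at most $2$'', since for aperiodic $\uniterm$ the matching lower bound is forced: a complexity-$1$ assembly is $\vect{v}$-periodic outside a bounded region and hence cannot be aperiodic once it is infinite, while the finite subcase already has complexity $0 \leq 2$.

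First I would invoke the pumping lemma of \cite{STOC2019} to fix a constant $B = \bound$ such that every path from $\sigma$ of length at least $B$ contains a pumpable factor producing an infinite periodic ray inside $\uniterm$. Because $T$ and $\sigma$ are finite, only finitely many distinct pumping seeds can occur within the $B$-neighborhood of $\sigma$. Each resulting infinite structure is either a simply periodic path, giving an assembly of the form $\bigcup_{\ell \in \N}(\gamma + \ell\vect{v})$ with $\gamma$ finite (complexity $1$), or a comb, i.e.\ a second-order pump where periodic paths attach regularly along a periodic base, giving an assembly of the form $\bigcup_{\ell \in \N}(\delta + \ell\vect{u})$ with $\delta$ of complexity $1$ (complexity $2$). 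This is exactly where lemma \ref{lem:magicindice} intervenes: although a comb may appear arbitrarily far from the seed, the non-intersection of its translated copies along the base path forces a pumpable comb to also appear close to $\sigma$, bringing the situation within reach of the pumping lemma.

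Next I would assemble the global decomposition $\uniterm = \sigma \cup A_1 \cup \cdots \cup A_k$, where each arm $A_i$ is either a complexity-$1$ simply periodic path or a complexity-$2$ comb, and where $\sigma$ is a finite assembly. Every tile of $\uniterm$ outside the bounded $B$-neighborhood of $\sigma$ belongs to one of the pumped arms by the pumping lemma, and the $B$-neighborhood can be absorbed into the seed contribution (which remains finite, hence of complexity $0$). The second clause of the complexity definition (``a finite union of assemblies of level less than $i$'') with $i = 2$ then gives complexity at most $2$, and, combined with the lower bound above, this value is attained as soon as $\uniterm$ is infinite.

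The main obstacle will be rigorously justifying that the finitely many pumped rays identified near $\sigma$ genuinely cover $\uniterm$ at infinity: one has to rule out any ``hidden'' unbounded region not captured by a pumping seed close to the seed, and to control how pumped combs may spawn further secondary growth without producing additional irreducible directions. This is precisely the geometric content attributed to lemma \ref{lem:magicindice} and to the non-intersection arguments of \cite{Doty-2011} that the present paper harmonizes with the pumping lemma of \cite{STOC2019}; once these pieces are in place, the theorem follows by routine bookkeeping through the complexity hierarchy.
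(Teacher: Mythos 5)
Your skeleton matches the paper's proof of Theorem \ref{last:theorem}: bound a finite core around $\sigma$ via Theorem \ref{theorem:pumping}, attach finitely many pumped structures, and invoke the finite-union clause of the complexity definition. But the load-bearing step is absent. The entire difficulty is proving that each ``arm'' has complexity at most $2$, i.e.\ that the recursion terminates: a path $Q$ growing on a simply pumpable path $P$ may itself be infinitely pumpable (a comb tooth), and one must rule out that a tooth in turn carries yet another infinite pumpable path, which would push the complexity to $3$ and beyond. In the paper this is the chain Lemma \ref{lem:noarc} $\rightarrow$ Corollary \ref{cor:selfavoid} $\rightarrow$ Lemma \ref{lem:doubleavoid} $\rightarrow$ Lemma \ref{lem:compzero}: no arc of width at least $|P|$ grows on $\pump{P}$, hence every path growing on $\pump{P}$ is $\vect{P_0P_{|P|-1}}$-self-avoiding, hence every path growing on a $\vect{v}$-self-avoiding tooth is $(\vect{u},\vect{v})$-self-avoiding, hence finite with bounded length --- and this is what makes Lemmas \ref{lem:compun} and \ref{lem:compdeux} work. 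None of this appears in your proposal; you instead attribute the termination of the hierarchy to Lemma \ref{lem:magicindice}, which does something different (it supplies an index beyond which growth on $\pump{P}$ is no longer obstructed by the finite assembly near the seed, so that the tail of the arm is genuinely translation-invariant). Without the self-avoidance chain, your ``routine bookkeeping through the complexity hierarchy'' has no reason to stop at level $2$, and the theorem is not proved.

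A secondary error: your ``matching lower bound'' is false. An assembly $\bigcup_{\ell \in \N}(\beta+\ell\vect{v})$ with $\beta$ finite is \emph{not} $\vect{v}$-periodic in the paper's sense, since $\alpha+\vect{v}=\alpha$ requires equality of domains and a one-way-infinite union is a proper superset of its translate; so an infinite complexity-$1$ assembly can perfectly well be aperiodic (a single pumped ray with no side growth is an example). The paper's own proof even allows $\uniterm$ to be finite, of complexity $0$. The statement must be read as an upper bound, and your attempt to force exactness does not work.
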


An assembly of complexity $0$ is finite while assemblies of higher complexity are infinite. According to these three theorems we only need to study assemblies of complexity less than $2$. Such assemblies could be represented by semi-linear sets. This technique was introduced in \cite{Doty-2011} and the Observation $4.2$ in the arXiv version of \cite{Doty-2011} argues that such assemblies are not able of complex computations.

\begin{definition}[linear and semilinear sets]
A set $X \subset \Z^2$ is \emph{linear} if there exists $p\in \Z^2$ and two vectors $\vect{u}$ and $\vect{v}$ such that $$X=\bigcup_{\ell,\ell' \in \N}\left\{p+\ell\vect{u}+\ell'\vect{v}\right\}.$$
A semilinear set is a finite union of linear sets.
\end{definition}

\begin{observation}
The domain of an assembly of complexity less than $2$ is a semilinear set.
\end{observation}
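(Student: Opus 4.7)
The plan is to prove this observation by induction on complexity, or equivalently by direct case analysis on the two levels that need to be handled, namely complexity $0$ and complexity $1$. The argument is essentially definitional: it matches the recursive structure of complexity against the definition of a semilinear set. As a convenient preliminary I would note (by an immediate unfolding of the definition) that the family of semilinear sets is closed under finite union and under translation by a fixed vector, so that these operations can be freely applied.

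For complexity $0$, the assembly is finite by definition, so $\domain{\alpha}$ is a finite subset of $\Z^2$. Writing it as a finite union of singletons $\{p\}$ suffices, since each singleton is a linear set obtained by taking both generating vectors to be $\vect{0}$ in the definition (so that $\bigcup_{\ell,\ell' \in \N}\{p + \ell\vect{0} + \ell'\vect{0}\} = \{p\}$).

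For complexity $1$, I would unfold the two sub-cases of the definition. If $\alpha$ is a finite union of assemblies of complexity $0$, then $\alpha$ itself is finite and the previous case applies. Otherwise $\alpha = \bigcup_{\ell \in \N}(\beta + \ell\vect{v})$ with $\beta$ of complexity $0$, hence finite, and
$$\domain{\alpha} = \bigcup_{p \in \domain{\beta}}\{p + \ell\vect{v} : \ell \in \N\}.$$
This is a finite union (indexed by the finite set $\domain{\beta}$) of half-line sets, each of which is linear by taking the two generating vectors of the definition to be $\vect{v}$ and $\vect{0}$. Hence $\domain{\alpha}$ is a finite union of linear sets, i.e.\ semilinear.

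I do not expect any substantive obstacle: the statement is a direct reformulation of the complexity definition in the language of semilinear sets. The only minor mismatch to keep in mind is that a linear set is defined using exactly two generating vectors, whereas the pump in the complexity definition contributes only one; this is harmlessly reconciled by padding with the zero vector, as above. Note also that this argument would fail at complexity $2$, where a nested pump $\bigcup_{\ell \in \N}(\beta' + \ell\vect{v})$ with $\beta'$ itself of complexity $1$ would require a third generating direction, explaining why the observation is stated only for complexity strictly less than $2$.
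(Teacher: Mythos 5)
Your treatment of complexities $0$ and $1$ is correct and is essentially the paper's own argument: a finite domain is a finite union of linear sets with both generators set to $\vect{0}$, and a pump by $\vect{v}$ is absorbed by replacing one null generator with $\vect{v}$ (the finite-union sub-case being handled by closure of semilinear sets under finite union). The one substantive point to correct is your closing remark that the argument ``would fail at complexity $2$'' for lack of a third generating direction: the paper's proof of this very observation continues, by the same padding trick, to complexity $2$. A complexity-$1$ description uses at most one non-null generator per linear set, so the outer pump of a complexity-$2$ assembly $\bigcup_{\ell'\in\N}(\alpha^{(1)}+\ell'\vect{v'})$ is absorbed by the second generator slot, turning each $\bigcup_{\ell\in\N}\{p+\ell\vect{u}\}$ into $\bigcup_{\ell,\ell'\in\N}\{p+\ell\vect{u}+\ell'\vect{v'}\}$, which is still linear. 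The two-generator definition of a linear set is calibrated precisely so that the argument goes through at complexity $2$ and first breaks at complexity $3$. This matters beyond pedantry: the paper's later results (e.g.\ Lemma~\ref{lem:compdeux} and the aperiodic case, whose terminal assembly has complexity $2$) rely on semilinearity at level $2$, and the phrase ``less than $2$'' in the statement is best read as the authors' slightly off wording for ``at most $2$'', which is what their proof actually establishes.
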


\begin{proof}
An assembly $\alpha^{(0)}$ of complexity $0$ is a finite assembly, then its domain can be described by a finite union of $|\domain{\alpha^{(0)}}|$ linear sets where $\vect{u}$ and $\vect{v}$ are both the null vector $\vect{0}$. If an assembly $\alpha^{(1)}$ is defined by as $\bigcup_{\ell \in \N} (\alpha^{(0)}+\ell\vect{u'})$ where the complexity of $\alpha^{(0)}$ is $0$, then for any linear set in the description of $\alpha^{(0)}$ we can set $\vect{u}=\vect{u'}$ instead of the null vector. The size of the semilinear set used to described $\domain{\alpha^{(1)}}$ is the same as the one used to describe $\domain{\alpha^{(0)}}$. If an assembly of complexity $1$ is the finite union of assembly of complexity less than $1$ then the domain of such an assembly is the finite union of the semilinear sets used to describe the domains of these assemblies. Similarly, If an assembly $\alpha^{(2)}$ is defined as $\bigcup_{\ell \in \N} (\alpha^{(1)}+\ell\vect{v'})$ where the complexity of $\alpha^{(1)}$ is $1$, then for any linear set in the description of $\alpha^{(1)}$ we can set $\vect{v}=\vect{v'}$ instead of the null vector. The size of the semilinear set used to described $\domain{\alpha^{(2)}}$ is the same as the one used to describe $\domain{\alpha^{(1)}}$. If an assembly of complexity $2$ is the finite union of assemblies of complexity less than $2$, then the domain of such an assembly is the finite union of the semilinear sets used to describe the domain of these assemblies.

\end{proof}

\subsection{Paths}\label{sec:defs-paths}

This section introduces quite a number of key definitions and concepts that will be used extensively throughout the paper. 

Let $T$ be a set of tile types. 
A {\em tile} is a pair $((x,y),t)$ where $(x,y) \in \mathbb{Z}^2$ is a position and $t\in T$ is a tile type. 
Intuitively, a path is a finite or one-way-infinite simple (non-self-intersecting) sequence of tiles placed on points of $\mathbb{Z}^2$ so that each tile in the sequence interacts with the previous one, or more precisely: 

\begin{definition}[Path]\label{def:path}
  A {\em path} is a (finite or infinite) sequence  $P = P_0 P_1 P_2 \ldots$  of tiles   $P_i = ((x_i,y_i),t_i) \in \mathbb{Z}^2 \times T$, such that:
\begin{itemize}
\item for all $P_j$ and $P_{j+1}$ defined on $P$ it is the case that~$t_{j}$ and~$t_{j+1}$ interact, and
\item for all $P_j,P_k$ such that $j\neq k$ it is the case that $ (x_j,y_j) \neq (x_k,y_k)$.
\end{itemize}
\end{definition}

By definition, paths are simple (or self-avoiding), and this fact will be repeatedly used throughout the paper.
For a tile $P_i$ on some path $P$, its x-coordinate is denoted~$\xcoord{P_i}$ and its y-coordinate is denoted~$\ycoord{P_i}$. 
The \emph{concatenation} of two paths $P$ and $Q$ is the concatenation $PQ$ of these two paths as sequences, and is a path if and only if (1) the last tile of $P$ interacts with the first tile of $Q$ and (2)  $P$ and $Q$ do not intersect each other.

For a path $P = P_0 \ldots  P_i P_{i+1} \ldots P_j  \ldots $, we define the notation $P_{i,i+1,\ldots,j} = P_i P_{i+1} \ldots P_j$, i.e.\  ``the subpath of $P$ between indices $i$ and $j$, inclusive''.
Whenever $P$ is finite, i.e. $P = P_0P_1P_2\ldots P_{n-1}$ for some $n\in\mathbb{N}$, $n$ is termed the {\em length} of $P$ and denoted by $|P|$. In the special case of a subpath where $i=0$, we say that $P_{0,1,\ldots,j}$ is a prefix of $P$ and when $j=|P|-1$, we say that $P_{i,\ldots, |P|-1}$ is a suffix of $P$.
For any path $P = P_0 P_1 P_2, \ldots$ and integer $i\geq 0$, we write $\pos{P_i} \in \mathbb{Z}^2$, or  $(x_{P_i},y_{P_i}) \in \mathbb{Z}^2$, for the position of $P_i$ and $\type{P_i}$ for the tile type of $P_i$. Hence if  $P_i = ((x_i,y_i),t_i) $ then $\pos{P_i} =  (\xcoord{P_i},\ycoord{P_i}) = (x_i,y_i) $ and $\type{P_i} = t_i$.
A ``\emph{position of}  $P$'' is an element of $\mathbb{Z}^2$ that appears in $P$ (and therefore appears exactly once), and an \emph{index} $i$ of a path $P$ of length $n\in \mathbb{N}$ is a natural number  $i \in \{0,1,\ldots,n-1\}$.
For a path $P=P_0P_1P_2\ldots$ we write $\pos{P}$ to mean ``the sequence of positions of tiles along $P$'', which is $\pos{P}=\pos{P_0}\pos{P_1}\pos{P_2}\ldots\;$. For a finite path $P=P_0P_1P_2\ldots P_{|P|-1}$, we define $\reverse{P}$ as the path $P_{|P|-1}P_{|P|-2}\ldots P_0$. The vertical height of a path $P$ is defined as $\max\{|y_{P_i}-y_{P_j}|: 0\leq i \leq j \leq |P|-1\}$ and its horizontal width is $\max\{|x_{P_i}-x_{P_j}|: 0\leq i \leq j \leq |P|-1\}$.

Although a path is not an assembly, we know that each adjacent pair of tiles in the path sequence interact implying that the set of path positions forms a connected set in $\Z^2$ and hence every path uniquely represents an assembly containing exactly the tiles of the path, more formally:
for a path $P = P_0 P_1 P_2 \ldots$ we define the set of tiles  $\pathassembly{P} = \{ P_0, P_1, P_2, \ldots\}$ which we observe is  an assembly\footnote{I.e.  $\pathassembly{P}$ is  a partial function from $\Z^2$ to tile types, and is defined on a connected set.} and  we call $\pathassembly{P}$ a {\em path assembly}.
Given a tile assembly system $\calT = (T,\sigma,1)$ the path $P$ is a {\em  producible path of $\calT$} if 
$\asm{P}$ does not intersect\footnote{Formally, non-intersection of a path $P = P_0 P_1, \ldots $ and a seed assembly $\sigma$ is defined as: $\forall t$ such that  $t \in \sigma$, $\nexists i $ such that $\pos{P_i} = \pos{t}$.} the seed $\sigma$
and the assembly $(\asm{P} \cup \sigma ) $ is producible by $\calT$, i.e.\ $(\asm{P} \cup \sigma ) \in \prodT$, and $P_0$ interacts with a tile of $\sigma$. Consider an assembly $\alpha$ (resp. a path $Q$), as a convenient abuse of notation we sometimes write $\sigma \cup P$ (resp. $P \cup Q$)  as a shorthand for $\sigma \cup \asm{P}$  (resp. $\asm{P} \cup \asm{Q}$).
Given  a tile assembly system $\calT = (T,\sigma,1)$
we define  the set of producible paths of $\calT$ to be:\footnote{Intuitively, although producible paths are not assemblies, any  producible path $P$ has the nice property that it encodes an unambiguous description of how to grow $\asm{P}$ from the seed $\sigma$, in path  ($P$) order, to produce  the assembly $  \asm{P}\cup \sigma$.}
$$\prodpathsT = \{ P  \mid P \textrm{ is a path that does not intersect } \sigma \textrm{ and } (\asm{P}\cup\sigma) \in \prodT \} $$
Given a directed tile assembly system $\calT = (T,\sigma,1)$ and its unique terminal assembly $\uniterm$, the path $P$ is {\em  a path of $\uniterm$} if 
$\asm{P}$ is a subassembly of $\uniterm$.
We define  the set of paths of $\uniterm$ to be:
$$\inuniterm = \{ P  \mid P \textrm{ is a path and $\asm{P}$ is a subassembly of } \uniterm \} $$

So far, we have defined paths of tiles (Definition~\ref{def:path}). In our proofs, we will also reason about (untiled) {\em binding paths} in the binding graph of an assembly.

\begin{definition}[Binding path]\label{def:binding graph}
Let $G=(V,E)$ be a binding graph. 
A {\em binding path} $q$ in $G$ is a sequence $q_{\range{0}{1}{|q|-1}}$ of vertices from $V$ such that 
for all $i \in \{\range{0}{1}{|q|-2} \}$,
$\{ q_i, q_{i+1} \} \in E$ ($q$~is connected) and no vertex appears twice in $q$ ($q$ is simple).
\end{definition}

The following observation was proven in \cite{STOC2019}.
\begin{observation}
Let $\calT = (T,\sigma,1)$ be a tile assembly system and let $\alpha \in \prodT$.
For any tile $((x,y),t) \in \alpha$ either $((x,y),t)$ is a tile of $\sigma$ or else there is a finite producible path $P \in \prodpathsT$ such that for some $j \in \N$ contains $P_j = ((x,y),t)$.
\end{observation}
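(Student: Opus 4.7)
The approach is to extract the required finite producible path by reversing a ``parent chain'' of tiles traced through an assembly sequence witnessing $\alpha \in \prodT$. First, I would fix any $\calT$-assembly sequence $\sigma = \alpha_0 \rightarrow_1^{\calT} \alpha_1 \rightarrow_1^{\calT} \ldots$ whose union contains $\alpha$, and for each non-seed tile $u$ record the step $\mathrm{step}(u) \geq 1$ at which $u$ is first placed (always finite, since at most one tile is added per step). By the definition of $\rightarrow_1^{\calT}$ at temperature $1$, whenever $u$ is placed at step $i$ there must exist at least one neighbor of $u$ already in $\alpha_{i-1}$ whose abutting glue matches $u$; I would pick any such neighbor and call it the parent $\pi(u)$, which is either a seed tile or a tile with $\mathrm{step}(\pi(u)) < \mathrm{step}(u)$.

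Starting from $u_0 = ((x,y),t)$, I would iterate $u_{j+1} = \pi(u_j)$, halting at the first index $m$ with $\pi(u_m) \in \sigma$. The strict descent of $\mathrm{step}(\cdot)$ along non-seed parents forces termination within at most $\mathrm{step}(u_0)$ iterations. Reversing, set $P := u_m u_{m-1} \ldots u_1 u_0$, so that $P_m = ((x,y),t)$.

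It then remains to verify the conditions defining a producible path: (i) the positions of the $P_i$ are pairwise distinct because $\alpha$ is a partial function and the $u_j$ are distinct tiles of it, so $P$ is simple; (ii) consecutive $P_i, P_{i+1}$ interact by the defining property of $\pi$; (iii) $\asm{P}$ does not intersect $\sigma$ because no $u_j$ is a seed tile; (iv) $P_0 = u_m$ interacts with $\sigma$ via its parent $\pi(u_m) \in \sigma$; and (v) the explicit assembly sequence $\sigma \rightarrow_1^{\calT} \sigma \cup \{P_0\} \rightarrow_1^{\calT} \sigma \cup \{P_0,P_1\} \rightarrow_1^{\calT} \ldots$ witnesses $\sigma \cup \asm{P} \in \prodT$, since at each step the freshly placed tile bonds with its parent, which was added one step earlier (or lies in $\sigma$, for $P_0$).

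The main thing requiring care is guaranteeing that the parent chain really does terminate in the seed rather than running forever or cycling; this is settled by the strict decrease of $\mathrm{step}(\cdot)$ along the chain, which remains valid even when the assembly sequence for $\alpha$ itself is infinite, because the single target tile $u_0$ is placed at a finite step. Beyond that, the argument is a direct unpacking of the definitions of producibility and of temperature-$1$ stable attachment, so I do not expect any other obstacle.
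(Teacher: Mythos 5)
Your proof is correct, and it is the standard argument for this fact. Note that the paper itself gives no proof of this observation (it is imported from \cite{STOC2019}); the parent-chain construction you describe --- tracing each non-seed tile back to a bonded predecessor along an assembly sequence, using the strict decrease of placement step to guarantee termination at the seed, and then reversing --- is essentially the argument used there, and your verification of simplicity, non-intersection with $\sigma$, interaction of $P_0$ with the seed, and producibility of $\sigma\cup\asm{P}$ covers all the clauses of the definition of $\prodpathsT$.
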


\begin{observation}
Let $\calT = (T,\sigma,1)$ be a directed tile assembly system whose terminal assembly is $\uniterm$.
For any tile $A \in \alpha$ there exists a finite producible assembly $\ass$ such that $A$ is a tile of $\ass$.
\end{observation}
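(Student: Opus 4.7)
The plan is to deduce this observation directly from the preceding one, which already does most of the work: it produces, for any non-seed tile of a producible assembly, a finite producible path arriving at that tile. I would first apply the preceding observation to $\uniterm$, which is a production of $\mathcal{T}$ and so lies in $\prodasm{\mathcal{T}}$. For any tile $A = ((x,y),t) \in \uniterm$ this yields a dichotomy: either $A$ is already a tile of the seed $\sigma$, or there exists a finite producible path $P \in \prodpathsT$ with $A = P_j$ for some $j \in \N$.

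In the first case, I would take $\ass = \sigma$ itself. The seed is producible via the empty $\mathcal{T}$-assembly sequence $\sigma = \alpha_1$ and it is finite under the paper's standing assumption on $\sigma$ (the parameter $|\sigma|$ appears explicitly in the pumping bound $\bound$ elsewhere, so finiteness of the seed is already in force). Thus $\ass = \sigma \in \prodasm{\mathcal{T}}$ is a finite producible assembly containing $A$.

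In the second case, I would set $\ass = \sigma \cup \asm{P}$. By the very definition of a producible path, $(\asm{P} \cup \sigma) \in \prodasm{\mathcal{T}}$, so $\ass$ is producible. Since $\sigma$ is finite and $P$ is a finite sequence of tiles, $\asm{P}$ is finite and hence so is $\ass$. Moreover $A = P_j \in \asm{P} \subseteq \ass$.

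The only subtlety is recognising that the statement tacitly requires $\sigma$ to be finite: if $\sigma$ were infinite then every producible assembly would contain $\sigma$ and hence be infinite, so the observation would be vacuous. Once this is noted, there is no real obstacle — the proof is essentially a two-line case split on the conclusion of the preceding observation.
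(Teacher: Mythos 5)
Your proof is correct and is essentially identical to the paper's: case-split on the preceding observation, taking $\ass=\sigma$ when $A$ is a seed tile and $\ass=\sigma\cup\asm{P}$ otherwise. Your added remark about the implicit finiteness of $\sigma$ is a fair observation but does not change the argument.
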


\begin{proof}
By the previous observation, if $A$ is a tile of $\sigma$ then $\ass=\sigma$ satisfies the observation. Otherwise, there exists a finite producible path $P \in \prodpathsT$ such that for some $j \in \N$ contains $P_j = A$. Then $P \cup \sigma$ satisfies the observation.
\end{proof}

\begin{observation}
Let $\calT = (T,\sigma,1)$ be a directed tile assembly system whose terminal assembly is $\uniterm$.
For any tiles $((x,y),t) \in \alpha$ and $((x',y'),t') \in \alpha$  there is a path $P \in \inuniterm$  such that for some $P_0 = ((x,y),t)$ and $P_{|P|-1} = ((x',y'),t')$.
\end{observation}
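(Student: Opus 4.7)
The plan is to apply the previous observation to each of the two tiles, unite the resulting finite producible assemblies, and extract the desired path from the binding graph of the union.

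First, I would invoke the previous observation twice to obtain finite producible assemblies $\ass_1, \ass_2 \in \prodT$ with $((x,y),t) \in \ass_1$ and $((x',y'),t') \in \ass_2$. The construction there ensures each $\ass_i$ contains the seed $\sigma$, and since $\calT$ is directed, both are subassemblies of $\uniterm$ and hence agree on any overlapping positions. I would then form the union $\ass = \ass_1 \cup \ass_2$, which is a well-defined assembly (nonempty and consistent overlap through $\sigma$). To argue that $\ass$ is producible, I would concatenate $\mathcal{T}$-assembly sequences: after a sequence producing $\ass_1$, append the tiles of a sequence producing $\ass_2$ in their original order, skipping any tile already placed. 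Each appended tile interacts with its immediate predecessor in the second sequence (which has already been placed), so the concatenated sequence is a valid $\mathcal{T}$-assembly sequence, and $\ass$ is producible and therefore $\tau$-stable.

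Next, because $\tau=1$, the $\tau$-stability of $\ass$ forces its binding graph $G_\ass$ to be connected: any partition into two nonempty pieces would exhibit a cut of weight $0<1$, contradicting stability. Connectedness then yields a binding path $q = q_0 q_1 \ldots q_{|q|-1}$ in $G_\ass$ from the vertex corresponding to $((x,y),t)$ to the one corresponding to $((x',y'),t')$ in the sense of Definition~\ref{def:binding graph}. Labeling each $q_i$ by its position and tile type in $\ass$ converts $q$ into a tile sequence $P$ satisfying Definition~\ref{def:path}: consecutive tiles interact since consecutive vertices of $q$ share a binding-graph edge (matching glue of strength $\geq 1$), and no position is repeated because $q$ is simple. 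Then $\asm{P} \subseteq \ass \subseteq \uniterm$ gives $P \in \inuniterm$, with $P_0 = ((x,y),t)$ and $P_{|P|-1} = ((x',y'),t')$ as required.

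The main subtlety will be verifying that $\ass_1 \cup \ass_2$ is producible (and thus $\tau$-stable); once this is in place, the translation between a binding path in a $\tau$-stable assembly and a tile path in the sense of Definition~\ref{def:path} is routine, as is the observation that any subassembly of $\uniterm$ yields paths in $\inuniterm$. An alternative that sidesteps the union step would be to extract producible paths $P_1, P_2$ reaching the two tiles and stitch $\reverse{P_1}$, a path across $\sigma$, and $P_2$ together while pruning repeated positions to preserve simplicity, but this is essentially the same connectedness argument carried out by hand.
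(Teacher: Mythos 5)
Your proof is correct, but it takes a different route from the paper's. The paper argues directly on the terminal assembly: since $\domain{\uniterm}$ is connected, there is a binding path $p$ in the binding graph of $\uniterm$ from $(x,y)$ to $(x',y')$, and labelling its vertices with the tiles of $\uniterm$ gives the desired $P\in\inuniterm$ in three lines. You instead pull back to two finite producible assemblies $\ass_1,\ass_2$ via the previous observation, show their union is producible by concatenating assembly sequences, deduce connectedness of the binding graph of the union from $\tau$-stability, and extract the binding path there. Your detour is longer but it patches a point the paper glosses over: connectedness of $\domain{\uniterm}$ as a subset of $\Z^2$ does not by itself yield a binding path, since adjacent tiles need not interact (the binding graph only has edges between tiles with matching positive-strength glues); what actually guarantees a connected binding graph is producibility (each tile attaches by binding to the existing assembly), which is exactly what your union argument supplies. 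The only mild caveat is that your producibility claim for $\ass_1\cup\ass_2$ deserves the one extra sentence you essentially give — each appended tile of the second sequence binds to its predecessor in that sequence, which is already present either in $\ass_1$ or earlier in the appended portion, and directedness rules out conflicts — after which everything goes through. Your closing alternative (stitching $\reverse{P_1}$, a seed path, and $P_2$) is also viable and is closer in spirit to what the paper does implicitly.
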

\begin{proof}
Since $\domain\alpha$ is a connected subset of $\Z^2$  there is an integer $n\geq 0$ and a binding path $p_{0,1,\ldots ,n}$ in the binding graph of $\uniterm$ where $p_0=(x,y)$ and $p_n = (x',y')$. We can then define $P $ as the path:
  $$P = P_{0,1,\ldots,n} = (p_{0},\alpha(p_{0}))(p_{1},\uniterm(p_1))\ldots(p_n,\uniterm(p_{n}))$$
  By definition of binding graph, for all $i \in \{0,i,\ldots,n-1 \}$, the tiles 
  $(p_{i},\alpha(p_{i}))$ and $(p_{i+1},\alpha(p_{i+1}))$ on $P$ are adjacent in $\mathbb{Z}^2$ and interact on their abutting sides, meaning that $P\in\inuniterm$, thus proving the statement. 
  \end{proof}

For $A,B\in\mathbb{Z}^2$, we define $\vect{AB} = B - A$ to be the vector from $A$ to $B$, and  for two \emph{tiles} $P_i = ((x_i,y_i),t_i)$ and $P_j = ((x_j,y_j),t_j)$ we define $\vect{P_i P_j} = \pos{P_j} - \pos{P_i}$ to mean the vector from $\pos{P_i}=(x_i,y_i)$ to $\pos{P_j}=(x_j,y_j)$.
The translation of a path $P$ by a vector $\vect{v} \in \mathbb{Z}^2$, written $P+\vect{v}$, is  the path $Q$ such that $|P|=|Q|$
and for all indices $i \in \{0,1,\ldots,|P|-1 \}$,
$\pos{Q_i}=\pos{P_i}+\vect{v}$ 
and 
$\type{Q_i}=\type{P_i}$. 

Let $P$ be a path, let $i\in\{1,2,\ldots,|P|-2\}$, and let $A\neq P_{i+1}$ be a tile such that $P_{0,1,\ldots,i}A$ is a path.
Let also $\rho$ be the clockwise rotation matrix defined as $\rho= \bigl(\begin{smallmatrix}0&1 \\ -1&0\end{smallmatrix} \bigr)$, and let
$\tau = (\rho\vect{P_iP_{i-1}}, \rho^2\vect{P_iP_{i-1}}, \rho^3\vect{P_iP_{i-1}})$ (intuitively, $\tau$ is the vector of possible steps after $P_i$, ordered clockwise).
We say that $P_{0,1,\ldots,i}A$ \emph{turns right} (\resp \emph{turns left}) from $P_{0,1,\ldots,i+1}$ if $\vect{P_{i}A}$ appears after (\resp before) $\vect{P_{i} P_{i+1}}$ in $\tau$.

\subsection{Intersections}

If two paths, or two assemblies, or a path and an assembly, share a common position we say that they {\em intersect} at that position. Furthermore, we say that two paths, or two assemblies, or a path and an assembly,  {\em agree} on a position if they both place the same tile type at that position and {\em conflict} if they place a different tile type at that position.
We say that a path $P$ is {\em fragile} to mean that there is a producible assembly $\alpha$ that conflicts with $P$. Intuitively, if we grow $\alpha$ first, then there is at least one tile that $P$ cannot place:

\begin{definition}[Fragile]\label{def:fragile}
  Let $\calT = (T,\sigma,1)$ be a tile assembly system and $P\in\prodpathsT$. We say that $P$ is fragile if there exists a producible assembly $\alpha \in \prodasm{\calT}$ and a position $(x,y)\in(\domain\alpha\cap\domain{\asm P})$
  such that $\alpha((x,y))\neq\asm{P}((x,y))$.\footnote{
    Here, it might be the case that $\alpha$ and $P$ conflict at only one position by placing two different tile types $t$ and $t'$, but that $t$ and $t'$ may place the same glues along $P$. In this case $P$ is not producible when starting from the assembly  $\alpha$ because one of the tiles along the positions of $P$ is of the wrong type.
  }
\end{definition}


In a directed tile assembly system, there exists not fragile path, thus if a path conflicts with a producible assembly then this path is not a path of the unique terminal assembly.

\begin{observation}
Let $\calT = (T,\sigma,1)$ be a directed tile assembly system whose terminal assembly is $\uniterm$, a producible assembly $\ass$ and a path $P$ which intersects with $\ass$. If all intersections are agreements then $P\in \inuniterm$, otherwise $P\not \in \inuniterm$.
\end{observation}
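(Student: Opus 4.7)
The observation is effectively a biconditional between ``all intersections of $P$ with $\ass$ agree'' and ``$P\in\inuniterm$'', so I would prove the two directions separately. For the conflict direction, I would invoke the standard fact that in a directed system every producible assembly is a subassembly of the unique terminal assembly, so $\ass\sqsubseteq\uniterm$. If $P$ and $\ass$ disagree at some position $(x,y)$, then $\asm{P}((x,y))\neq\ass((x,y))=\uniterm((x,y))$, so $\asm{P}$ is not a subassembly of $\uniterm$ and $P\notin\inuniterm$.

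For the agreement direction, the plan is to prove the stronger statement that $\ass\cup\asm{P}$ is itself a producible assembly of $\calT$; directedness then yields $\ass\cup\asm{P}\sqsubseteq\uniterm$, so $\asm{P}\sqsubseteq\uniterm$ and $P\in\inuniterm$. To construct a producing sequence, I would pick any intersection index $i_0$ (so $\pos{P_{i_0}}\in\domain{\ass}$ with $\ass(\pos{P_{i_0}})=\type{P_{i_0}}$ by agreement), and starting from $\ass$ add the tiles $P_{i_0+1},P_{i_0+2},\ldots,P_{|P|-1}$ one by one, then $P_{i_0-1},P_{i_0-2},\ldots,P_0$ one by one. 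At each step, when attaching $P_i$ to the current producible assembly $\alpha$: if $\pos{P_i}\notin\domain{\alpha}$, then $P_i$ stably attaches via its temperature-$1$ interaction with the already-placed neighbor $P_{i\pm 1}$; if $\pos{P_i}\in\domain{\alpha}$, then simplicity of $P$ forces $\pos{P_i}\in\domain{\ass}$, whereupon the agreement hypothesis gives $\alpha(\pos{P_i})=\ass(\pos{P_i})=\type{P_i}$ and no new tile needs to be placed.

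The main obstacle is ensuring that the agreement hypothesis, which is stated only for intersections of $P$ with $\ass$, remains usable as $\alpha$ grows past $\ass$. This reduces to the observation that simplicity of $P$ rules out self-intersections between backward-added and forward-added (or previously backward-added) tiles, so the only intersections of $P$ with $\alpha$ that ever occur are intersections with $\ass$ itself. Once this point is settled, the forward-then-backward induction on the number of added tiles is routine, and combined with the contrapositive direction above it yields the full observation.
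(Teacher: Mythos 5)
Your proof is correct, and in fact the paper states this observation with no proof at all (it is offered as an immediate consequence of the preceding remark that a directed system has no fragile paths), so there is nothing to compare against: both directions you give are the intended justification. The conflict direction via $\ass\sqsubseteq\uniterm$ is exactly the ``no fragile paths'' remark made precise, and the agreement direction via producibility of $\ass\cup\asm{P}$ at temperature $1$ is the standard attachment argument; your handling of the case $\pos{P_i}\in\domain{\alpha}$ through simplicity of $P$ is the right way to keep the induction honest. The only polish I would suggest is to order the construction for possibly infinite $P$ (the paper's Definition~\ref{def:path} allows one-way-infinite paths): attach the finite backward segment $P_{i_0-1},\ldots,P_0$ first and then the forward segment, which may be an infinite assembly sequence --- permitted since the paper allows $n=\infty$ in $\mathcal{T}$-assembly sequences --- and to note explicitly that the ``standard fact'' $\ass\sqsubseteq\uniterm$ rests on every producible assembly extending to a terminal one.
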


Consider two paths $P$ and $Q$, we say that $Q$ \emph{grows} on $P$ at index $i$, if the only intersection between $Q$ and $P$ occurs at $\pos{Q_0}=\pos{P_i}$ and is an agreement. Also, $Q$ is an \emph{arc} of $P$ between indices $i \neq j$ if and only if there are exactly two intersections between $Q$ and $P$ which occurs at $\pos{Q_0}=\pos{P_i}$ and $\pos{Q_{|Q|-1}}=\pos{P_j}$ and both are agreement, moreover for the special case where $|Q|=2$, we should also have $j\neq i+1$ and $j\neq i-1$. The width of an arc $Q$ of $P$ is defined by $|j-i|$. Note that, if $P\in \inuniterm$, then the path $Q$ or the arc $A$ do not necessarily belong to $\inuniterm$ since they can conflict with the seed. 

\subsection{Pumping a path}\label{sec:Pumping a path}

Next, for a path $P$, we define sequences of points and tile types (not necessarily a path) called the \emph{pumping of $P$} or the \emph{bi-pumping of $P$}:
\begin{definition}[Pumpings of $P$]
  \label{def:pumpingP}
Let $\calT = (T,\sigma,1)$ be a tile assembly system and a path $P$ such that $\type{P_0}=\type{P_{|P|-1}}$.
We say that the \emph{``pumping of $P$''}, denoted by $\pump{P}$, is the infinite sequence $\olq$ of elements from $\Z^2\times T$ defined by:

$$\olq_k = P_{k \mod (|P|-1)} + \left\lfloor \frac{k}{|P|-1}  \right\rfloor \vect{P_0P_{|P|-1}} \textrm{ for } k \in \N.$$

Whereas, we say that the \emph{``bi-pumping of $P$''}, denoted by $\bipump{P}$, is the bi-infinite sequence $\olq$ of elements from $\Z^2\times T$ defined by:

$$\olq_k = P_{k \mod (|P|-1)} + \left\lfloor \frac{k}{|P|-1}  \right\rfloor \vect{P_0P_{|P|-1}} \textrm{ for } k \in \Z.$$
\end{definition}

We will always consider cases where $\olq$ is self-avoiding and that in particular, for any $s< t $, if the path $P+s\vect{P_0P_{|P|-1}}$ intersects with the path $P+t\vect{P_iP_j}$, then $t=s+1$ and the only intersection is an agreement between $P_0+t\vect{P_0P_{|P|-1}}$ and $P_{|P|-1}+s\vect{P_0P_{|P|-1}}$. A sufficient condition for this is that the only intersection between $P$ and $P+\vect{P_0P_{|P|-1}}$ is an agreement between $P_0+\vect{P_0P_{|P|-1}}$ and $P_{|P|-1}$ (see Lemma~\ref{lem:precious}). If this condition is satisfied then $P$ is called a \emph{good candidate} and $\pump{P}$ and $\bipump{P}$ are both paths. Remark that, for all $k \in \N$ (resp. $k \in \Z$), we have $\pump{P}_{k+|P|-1}=\pump{P}_{k}+\vect{P_0P_{|P|-1}}$ (resp. $\bipump{P}_{k+|P|-1}=\bipump{P}_{k}+\vect{P_0P_{|P|-1}})$.

\begin{definition}[Pumpable path]\label{def:pumpable path}
Let $\calT = (T,\sigma,1)$ be a directed tile assembly system and let $\uniterm$ be its unique terminal assembly.  We say that a good candidate $P$ is {\em pumpable} if $\pump{P} \in \inuniterm$ and {\em bi-pumpable} if $\bipump{P} \in \inuniterm$. A good candidate which is pumpable but not bi-pumpable is called simply pumpable.
\end{definition}

This definition of pumping does not take into account the position of the seed. In order to use the pumping lemma of \cite{STOC2019}, we need a special definition of pumping for producible paths. For a producible path $P$ and two indices $i,j$ on $P$, we define a sequence of points and tile types (not necessarily a path) called the \emph{pumping of $P$ between $i$ and $j$}:
\begin{definition}[Pumping of a producible path $P$ between $i$ and $j$]
  \label{def:pumpingPbetweeniandj}
Let $\calT = (T,\sigma,1)$ be a tile assembly system and $P\in\prodpathsT$.
We say that the \emph{``pumping of $P$ between $i$ and $j$''} is the sequence $\olq$ of elements from $\Z^2\times T$ defined by:

\begin{equation*}
\olq_k =
\begin{cases}
 P_k &\qquad \textrm{for } 0\leq k \leq i \\
 \torture P k i j &\qquad \textrm{for }  i < k  ,
\end{cases}
\end{equation*}
\end{definition}

Intuitively, $\olq$ is the concatenation of a finite path $P_{0,1,\ldots,i}$ and an infinite periodic sequence of tile types and positions (possibly intersecting $\sigma\cup P_{0,1,\ldots,i}$, and possibly intersecting itself). The following definition gives the notion of pumpable path used in the pumping lemma \ref{theorem:pumping:original} which follows. 

\begin{definition}[Producible pumpable path]\label{def:pumpable path}
  Let $\calT = (T,\sigma,1)$ be a tile assembly system.  We say that a producible path $P \in\prodpathsT$, is {\em infinitely pumpable}, if there are two integers $i<j$ such that the pumping of $P$ between $i$ and $j$ is  an infinite  producible path, i.e.\ formally: $ \olq \in\prodpaths{\mathcal{T}}$.

  In this case, we say that the \emph{pumping vector} of $\olq$ is $\vect{P_iP_j}$, and that $P$ is \emph{pumpable with pumping vector $\vect{P_iP_j}$}.
\end{definition}

The following pumping lemma is the main result of \cite{STOC2019}. 

\begin{theorem}
\label{theorem:pumping:original}
Let $\mathcal T=(T,\sigma,1)$ be any tile assembly system and let $P$ be a path producible by  $\mathcal T$. 
If $P$ has vertical height or horizontal width at least \bound, then $P$ is infinitely pumpable or fragile.
\end{theorem}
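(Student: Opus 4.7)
The proof I would attempt follows a pigeonhole-then-repair strategy. First I would fix the direction in which the producible path $P$ is long, say its vertical height exceeds the bound $\bound$. I would then slice $P$ into horizontal strips of bounded width and use a pigeonhole argument on the crossing tiles: since there are only $|T|$ tile types (and a constant number of possible ``crossing directions'' at a horizontal line), once the height exceeds roughly $|T|$ times the seed-related term, some tile type must re-appear along $P$ with the same local orientation, producing a candidate subpath $P_{i,\ldots,j}$ with $\type{P_i}=\type{P_j}$ and a nonzero pumping vector $\vect{P_i P_j}$.

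The bulk of the work is then to turn some such candidate into one whose pumping (in the producible sense of Definition~\ref{def:pumpable path}) really lies in $\prodpaths{\calT}$. Fix one copy of the candidate and iterate it: the issue is whether the resulting infinite sequence $\olq$ is (i) self-avoiding, (ii) non-intersecting with the prefix $P_{0,\ldots,i}$, and (iii) non-intersecting with the seed $\sigma$. Obstruction (iii) is easy to quantify: the set of translates of the candidate that ever touch $\sigma$ has size bounded linearly in $|\sigma|$ plus the candidate's diameter, which accounts for the factor $5|\sigma|+6$ in the bound. Obstruction (ii) can be dealt with by pushing the starting index $i$ forward along $P$: each potential collision with the prefix consumes at most a constant amount of ``room,'' and because $P$ is itself simple, there are only a bounded number of such bad shifts.

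The main obstacle, and where I would expect to spend most of the energy, is obstruction (i): a naively chosen candidate may be such that $\asm{P}+\vect{P_iP_j}$ hits $\asm{P}$, so iterating it produces a self-intersecting sequence. The plan is to run a pigeonhole \emph{inside} the candidate: if the minimal translate creates a collision, the colliding tile is itself a repeated type with a shorter repetition vector, and one may either (a) shorten the candidate, peeling off a smaller pumpable unit whose pumping vector divides the original, or (b) show that iterating forces the path to turn and then cross itself, producing a conflict with a producible assembly so that the original $P$ must be fragile. Formalising this dichotomy requires carefully tracking the clockwise/counterclockwise turning behaviour of the path (using the $\tau$ rotation from the definitions) and using the lemma that a good candidate's pumping is a path as soon as the single-shift intersection is just the gluing tile. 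Iterating this shrinking process at most $O(|T|)$ times exhausts all collisions, which is precisely where the exponent $4|T|+1$ in $(8|T|)^{4|T|+1}$ comes from.

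Finally, to assemble the argument: starting from $P$ of height at least $\bound$, the iterated pigeonhole produces either a good candidate whose pumping avoids $\sigma$, the prefix, and itself, yielding infinite pumpability in the sense of Definition~\ref{def:pumpable path}; or at some stage the geometric constraints force a pumped copy to conflict with a producible assembly, in which case $P$ itself conflicts with that assembly and is therefore fragile by Definition~\ref{def:fragile}. The horizontal case is symmetric. The delicate point throughout is that fragility is a statement about \emph{some} producible assembly, so whenever a shrinking step fails, one must exhibit an actual producible assembly realising the conflict; this is handled by growing the pumped portion from the seed first and then attempting to extend $P$, which is always legal because $P$ was producible to begin with.
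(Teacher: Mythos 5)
First, a point of context: the paper does not prove this theorem at all --- it is imported verbatim as the main result of \cite{STOC2019}, so there is no ``paper's own proof'' to compare against here. What you have sketched is an attempt at the full pumping lemma itself, which is the content of an entire (long and technical) separate paper.

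Your sketch follows the natural pigeonhole-then-repair strategy, and this is precisely the approach whose failure made the temperature-1 pumping problem open for roughly a decade. The genuine gap is in your treatment of obstruction (i). You claim that if the minimal translate of a candidate collides with the candidate, then ``the colliding tile is itself a repeated type with a shorter repetition vector, and one may either (a) shorten the candidate \ldots or (b) show that iterating forces the path to turn and then cross itself, producing a conflict.'' Neither branch goes through as stated. For (a), a collision between $Q$ and $Q+\vect{Q_0Q_{|Q|-1}}$ at interior positions need not be an agreement of tile types, and even when it is, the subpath between the two occurrences of the repeated type is a new candidate whose own translate can collide somewhere else entirely; the ``shrinking'' is not monotone in any quantity you control, so iterating it $O(|T|)$ times does not terminate the case analysis, and it certainly does not yield the bound \bound. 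For (b), exhibiting fragility requires producing an actual producible assembly that places a \emph{different} tile type at some position of $P$; a self-crossing of the pumped sequence, or growing a translate of a portion of $P$ from the seed, gives you overlapping copies of the \emph{same} tiles, which agree rather than conflict. The actual proof in \cite{STOC2019} has to introduce machinery absent from your sketch --- a notion of glues visible from one side, an ordering on producible paths and a canonical ``right-priority'' path, and a delicate argument that the failure of every candidate forces two genuinely conflicting producible assemblies --- and the exponential bound arises from the nested structure of that argument, not from an $O(|T|)$-step shrinking loop. As written, your proposal would stall exactly at the point where every known naive argument stalls.
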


In our context, we will use the following version of this theorem where there are no fragile path and  the bound is replaced by a generic function $\funpump{x}{y}$ where $x$ is the number of tile types and $y$ is the size of the seed (since the bound computed in \cite{STOC2019} is probably not optimal and could be improved independently of the results presented here). Note that, unlike in \cite{Doty-2011}, we do not consider that the size of the seed could be reduced $1$ as explained in appendix \ref{app:seed}.

\begin{theorem}
\label{theorem:pumping}
Let $\mathcal T=(T,\sigma,1)$ be any directed tile assembly system and let $P$ be a path producible by  $\mathcal T$. 
If $P$ has vertical height or horizontal width at least $\funpump{|T|}{|\sigma|}$, then $P$ is infinitely pumpable.
\end{theorem}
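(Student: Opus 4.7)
The plan is to obtain Theorem~\ref{theorem:pumping} as a direct corollary of Theorem~\ref{theorem:pumping:original}, simply by defining $\funpump{|T|}{|\sigma|} := \bound$ (or any upper bound obtained by future refinements of the pumping lemma) and ruling out the ``fragile'' alternative in the disjunction, using the fact that $\mathcal{T}$ is directed.

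Concretely, let $P \in \prodpathsT$ be a producible path with vertical height or horizontal width at least $\funpump{|T|}{|\sigma|}$. Theorem~\ref{theorem:pumping:original} applied to $P$ yields that $P$ is infinitely pumpable or fragile, so it suffices to show that under the directedness hypothesis no producible path is fragile. Suppose for contradiction that $P$ is fragile. By Definition~\ref{def:fragile} there exists a producible assembly $\alpha \in \prodasm{\mathcal{T}}$ and a position $(x,y) \in \domain{\alpha} \cap \domain{\asm{P}}$ with $\alpha((x,y)) \neq \asm{P}((x,y))$. On the other hand, since $P$ is producible, $\sigma \cup \asm{P} \in \prodasm{\mathcal{T}}$, so at the position $(x,y)$ we have two producible assemblies placing two distinct tile types.

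Now I invoke the standard consequence of directedness: every producible assembly is a subassembly of the unique terminal assembly $\uniterm$. For finite producible assemblies this is immediate (extend greedily until terminal; by $|\termasm{\mathcal{T}}|=1$ the result must be $\uniterm$). For the infinite case one may either appeal to compactness or, more directly, observe that every tile of a producible assembly $\gamma$ can already be reached from $\sigma$ by a finite producible assembly $\gamma'\sqsubseteq\gamma$, and $\gamma'$ extends to $\uniterm$ by the finite case. Consequently $\alpha(x,y) = \uniterm(x,y) = \asm{P}(x,y)$, contradicting $\alpha((x,y)) \neq \asm{P}((x,y))$.

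Therefore no producible path of a directed tile assembly system is fragile, and the disjunction provided by Theorem~\ref{theorem:pumping:original} collapses to infinite pumpability. The only step that required attention beyond quoting the existing pumping lemma is the subassembly fact for possibly infinite producible assemblies, but this is a routine consequence of the definition of $\prodasm{\mathcal{T}}$ through single-tile additions from $\sigma$. No genuine obstacle is expected.
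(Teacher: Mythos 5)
Your proposal is correct and matches what the paper does: the paper simply restates Theorem~\ref{theorem:pumping:original} with the fragile alternative removed, justified by the (unproved) observation that directed systems admit no fragile paths. You supply exactly that missing justification — every producible assembly is a subassembly of the unique terminal assembly, so two producible assemblies cannot conflict — which is the standard and intended argument.
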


\subsection{2D plane}


\subsubsection{Curves}
\label{subsubsec:defs:curves}
A curve $c : I \rightarrow \mathbb{R}^2$ is a function from an interval $I \subset \mathbb{R}$ to $\mathbb{R}^2$, where $I$ is one of a closed, open, or half-open. 
All the curves in this paper are polygonal, i.e. unions of line segments and rays.

For a finite path $P$, we call the \emph{embedding $\embed{P}$ of $P$} the curve defined for all $t\in[0,|P|-1]\subset\R$ by:
$$\embed{P}(t) = \pos{P_{\lfloor t\rfloor}} + (t-\lfloor t\rfloor)\vect{P_{\lfloor t\rfloor}P_{\lfloor t\rfloor+1}}$$
Similarly, for a finite binding path $p$, the \emph{embedding $\embed p$ of $p$} is the curve defined for all
$t\in[0,|p|-1]\subset\R$ by:
$$\embed{p}(t) = p_{\lfloor t\rfloor} + (t-\lfloor t\rfloor)\vect{p_{\lfloor t\rfloor}p_{\lfloor t\rfloor+1}}$$

The ray of vector $\vect v$ {\em from} (or, {\em that starts at}) point $A\in\mathbb{R}$ is defined as the curve $r:[0, +\infty[\rightarrow \R^2$ such that $r(t) = A + t\vect v$.

If $C$ is a curve defined on some real interval of the form $[a, b]$ or $]a, b]$, and $D$ is a curve defined on some real interval of the form $[c, d]$ or $[c, d[$, and moreover $C(b) = D(c)$, then the \emph{concatenation $\concat{ C,  D}$ of $ C$ and $D$} is the curve defined on $\domain{ C}\cup(\domain{ D} - (c-b))$~by:\footnote{$\domain{ D} - (c-b)$ means $[b, d-(c-b)]$ if $\domain{ D}=[c, d]$, and $[b, d-(c-b)[$ if $\domain{ D}=[c, d[$}
$$\concat{ C,  D}\!(t) = \begin{cases} C(t)\text{ if }t\leq b\\ D(t+(c-b))\text{ otherwise}\end{cases}$$

A curve $c$ is said to be \emph{simple} or \emph{self-avoiding} if all its points are distinct, i.e. if for all $x, y\in\domain c$, $c(x) = c(y)\Rightarrow x=y$.

The reverse $\reverse c$ of a curve $c$ defined on some interval $[a, b]$ (\resp $[a, b[$, $]a, b]$, $]a, b[$\,) is the curve defined on $[-b, -a]$ (\resp $]\!-b,-a]$, $[-b, -a[$, $]\!-b,-a[$\,) as $\reverse c(t) = c(-t)$.

If $A=(x_a,y_a)\in\R^2$ and $B=(x_b,y_b)\in\R^2$, the \emph{segment $\gs A B$} is defined to be the curve $s:[0,1]\rightarrow\R^2$ such that for all $t\in[0,1]$,  $s(t) = ((1-t)x_a+t x_b, (1-t)y_a+t y_b)$. We sometimes abuse notation and write $\gs{A}{B}$ even if $A$ or $B$ (or both) is a \emph{tile}, in which case we mean the position of that tile instead of the tile itself.

For a curve  $c:\mathbb{R} \rightarrow \mathbb{R}^2$ we write $c(\mathbb{R})$ to denote the range of $c$ (whenever we use this notation the curve $c$ has all of $\mathbb{R}$ as its domain). When it is clear from the context, we sometimes write $c$ to mean $c(\mathbb{R})$, for example 


\subsubsection{Cutting the plane with curves; left and right turns}\label{subsubsec:defs:cuting}
In this paper we use finite and periodic infinite polygonal curves to cut the $\mathbb{R}^2$ plane into two pieces. 
The finite polygonal curves we use consist of a finite number of concatenations of vertical and horizontal segments of length 1. 
If the curve is simple and closed we may apply the Jordan Curve Theorem
to cut the plane into connected components. 
\begin{theorem}[Jordan Curve Theorem]
  \label{thm:jordan}
  Let $c$ be a simple closed curve, then $c$ cuts $\R^2$ into two connected components.\end{theorem}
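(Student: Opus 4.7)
The plan is to prove the Jordan Curve Theorem in the restricted polygonal setting used by the paper, in which $c$ is the image of a simple closed curve formed from finitely many axis-aligned unit segments. I would use a parity (ray-casting) argument, which makes both the classification and the connectedness of the two pieces combinatorial rather than genuinely topological.

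First, I would define a parity function $\pi \colon \R^2 \setminus c(\R) \to \{0,1\}$ by counting, modulo $2$, the number of transverse crossings of $c$ by the horizontal ray that starts at $p \in \R^2 \setminus c(\R)$ and has direction $(1,0)$. The only subtlety is when this ray passes through a vertex of $c$ or runs along a horizontal edge of $c$. I would resolve these degeneracies by a small vertical perturbation $\varepsilon \to 0$: a vertex of $c$ contributes $0$ to the count when the two incident edges of $c$ lie on the same side of the ray, and $1$ when they lie on opposite sides; a horizontal edge of $c$ traversed by the ray contributes according to the relative vertical positions of its two non-horizontal neighbors. A short case analysis shows that any two sufficiently small perturbations yield the same parity, so $\pi$ is well-defined.

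Second, I would show that $\pi$ is locally constant on $\R^2 \setminus c(\R)$: every $p \notin c(\R)$ has a small open disk disjoint from $c(\R)$ in which the set of ray--curve crossings varies continuously and never changes cardinality modulo $2$. This makes each fiber $\pi^{-1}(0)$ and $\pi^{-1}(1)$ open, so $\R^2\setminus c(\R)$ has at most two components. Both fibers are non-empty: any point strictly outside the axis-aligned bounding box of $c$ has parity $0$, while walking across a single unit edge of $c$ perpendicularly flips the parity, producing a point of parity $1$.

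Third, I would prove each fiber is path-connected. The complement $\R^2 \setminus c(\R)$ decomposes naturally according to the unit grid refined by $c$; two open grid cells sharing an open edge that does not meet $c(\R)$ share the same parity and are connected by the segment through that shared edge. A flood-fill induction then shows that every parity class is a union of such cells and is path-connected. The main obstacle is making the perturbation argument of the first step fully airtight at vertices and along horizontal edges of $c$, since those are precisely the places where the count of ray crossings is not continuous; once this combinatorial accounting is locked in, the remaining openness, non-emptiness, and connectedness steps are routine for axis-aligned polygonal curves.
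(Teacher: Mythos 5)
For context: the paper does not actually prove this statement. It invokes the Jordan Curve Theorem as a known classical result, remarking that only the (easier) polygonal version is needed for its purposes, and the only Jordan-type argument it writes out in full is Theorem~\ref{thm:infinite-jordan} in Appendix~B, which treats simple \emph{bi-infinite periodic} polygonal curves by exactly the ray-crossing parity technique you propose. So your restriction to axis-aligned polygonal curves is appropriate, and your first two steps (the parity function, the perturbation rules at vertices and along horizontal runs, and local constancy off the curve) are the standard, correct ingredients and mirror the paper's Appendix~B argument.

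There is, however, a genuine gap in your third step. Showing that each parity class is a union of open grid cells, and that two cells sharing an open edge disjoint from $c(\R)$ have the same parity, only proves that every connected component of the complement is contained in a single parity class; it does not prove the converse, namely that all cells of the same parity lie in one component. A priori the parity-$0$ set could split into several components, and a flood-fill from one cell only enumerates \emph{its} component — this is precisely the step that carries the topological content of the theorem, and your sketch asserts it rather than proves it. The paper's proof of the infinite analogue handles this step by an explicit construction: from each of two points in the same class, follow the ray to its first intersection with $c$, travel along $c$ between the two hit points, and return along the reversed ray (the paper may travel on $c$ itself because it defines each side to \emph{include} $c(\R)$). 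In your setting, where you want the two open components of $\R^2\setminus c(\R)$, you need the analogous walk parallel to the polygon at a small offset on the appropriate side, together with a check that this offset walk stays in the complement and in the same parity class. Separately, a small logical slip: openness of the two fibers does not give ``at most two components'' — openness plus non-emptiness of both fibers gives \emph{at least} two; it is the connectedness of each fiber (the step above) that caps the count at two.
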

Here, we have stated the theorem in its general form, although for our results the (easier to prove) polygonal version suffices. Moreover, one of two components is finite and we call the restriction of $\Z$ to this finite area, the \emph{interior} of the cycle. The interior also includes the positions of the curve which are in $\Z$.

The second kind of curve we use is periodic ones. Indeed, we will consider a good candidate path $P$ and we cut the plan using the embedding of $\bipump{P}$. For such infinite polygonal curves we also state and prove a slightly different version of the polygonal Jordan Curve Theorem, as Theorem~\ref{thm:infinite-jordan}.

\begin{theorem}[Jordan Curve Theorem]
  \label{thm:jordan}
  Let $c$ be a periodic bi-infinite simple curve, then $c$ cuts $\R^2$ into two connected components.\end{theorem}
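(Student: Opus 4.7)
The plan is to reduce to the finite polygonal Jordan Curve Theorem (Theorem~\ref{thm:jordan}) by using the periodicity of $c$ to confine it to a bounded strip, then separating the plane using that strip. Let $\vect{v}\neq \vect{0}$ be the period vector of $c$, so that $c(t+T) = c(t) + \vect{v}$ for some $T>0$, and let $\vect{w}$ be a unit vector perpendicular to $\vect{v}$. The map $t \mapsto c(t) \cdot \vect{w}$ has period $T$ and is continuous, so $c(\R) \subseteq S := \{p \in \R^2 : |p \cdot \vect{w}| \leq M\}$ for some $M > 0$. The half-planes $H^\pm := \{p \in \R^2 : \pm\, p \cdot \vect{w} > M\}$ are open, connected, and disjoint from $c$, so each lies inside some connected component of $\R^2 \setminus c$. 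To show there are at least two components, I would prove that no polygonal path $\gamma$ from $H^+$ to $H^-$ can avoid $c$: choose $N$ large enough that the projection of $\gamma$ onto the line $\R\vect{v}$ is strictly contained inside the projection of $c|_{[-NT, NT]}$, then close $c|_{[-NT, NT]}$ into a simple closed curve $\gamma_N$ by appending a polygonal arc that leaves $S$ in direction $+\vect{w}$ at $c(NT)$, travels parallel to $-\vect{v}$ at perpendicular distance $>M$, and descends into $c(-NT)$. This closing arc lies in $H^+$, hence avoids $c$; applying Theorem~\ref{thm:jordan} to $\gamma_N$, with the closing arc placed on the $H^+$ side, $H^-$ and those points of $H^+$ below the closing arc fall into opposite components, forcing $\gamma$ to cross $\gamma_N$ on the $c|_{[-NT, NT]}$ portion.

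Next, for at most two components, I would show any $p \in \R^2 \setminus c$ connects to $H^+$ or $H^-$. Consider the perpendicular line $L_p := p + \R\vect{w}$. Every point on $L_p$ has the same $\vect{v}$-coordinate as $p$; since $c$ is $\vect{v}$-periodic with nontrivial $\vect{v}$-step, within any single period only finitely many parameter values $t$ give $c(t)$ that particular $\vect{v}$-coordinate, so $L_p \cap c$ is finite. Following $L_p$ from $p$ in direction $+\vect{w}$ with small detours around the (finitely many) transversal crossings of $c$ yields a polygonal path from $p$ into $H^+$ that remains entirely inside $\R^2 \setminus c$ (tilting $L_p$ slightly handles degenerate cases where $L_p$ runs along a segment of $c$).

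The main obstacle is the construction of $\gamma_N$ and the verification that $H^+$ and $H^-$ truly fall on opposite sides: the closing arc must stay outside $S$ so as not to hit $c$, and its placement must guarantee that $H^-$ lies on one side of $\gamma_N$ while (almost all of) $H^+$ lies on the other. A heavier but conceptually cleaner alternative is to descend to the quotient cylinder $\R^2 / \vect{v}\,\Z$, apply a Jordan curve theorem there (itself reducible to the planar one via the exponential map sending the cylinder onto the punctured plane), and lift back using a covering-space monodromy argument; but this introduces topological machinery beyond the elementary polygonal style used elsewhere in the paper.
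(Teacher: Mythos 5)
Your ``at most two components'' step contains a genuine error. You claim that starting from any $p\in\R^2\setminus c$ and following the perpendicular line $L_p$ in the $+\vect{w}$ direction ``with small detours around the (finitely many) transversal crossings of $c$'' produces a path from $p$ into $H^+$ avoiding $c$. A small detour around a point where a bi-infinite curve crosses $L_p$ transversally must itself cross that curve: $c$ continues on both sides of the crossing point, so any arc joining the two sides of $c$ inside a small disk around that point meets $c$. If your detour argument were valid it would connect \emph{every} point of $\R^2\setminus c$ to $H^+$, proving the complement is connected and contradicting the very theorem you are proving (test it on $c=$ the $x$-axis and $p=(0,-1)$). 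What is actually needed here is a parity argument, which is exactly what the paper's proof (Theorem~\ref{thm:infinite-jordan}) supplies: classify each point by the parity of the number of times the perpendicular rays $\ell^+(x,y)$ and $\ell^-(x,y)$ cross $c$, observe that the full perpendicular line crosses $c$ an odd number of times so every point falls into one of the two classes, and connect two points of the same class by running up the ray to the \emph{first} intersection with $c$, travelling along $c$, and coming back down the other ray. Your proposal has no substitute for this mechanism, so the ``at most two'' half does not go through as written.

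Your ``at least two components'' half is, by contrast, a legitimately different route from the paper's: instead of counting ray crossings you close up a long fundamental piece $c|_{[-NT,NT]}$ into a simple closed curve $\gamma_N$ and invoke the finite polygonal Jordan Curve Theorem. This can be made to work, but one more condition on the closing arc is needed: it must be placed not merely outside the strip $S$ but at a $\vect{w}$-height exceeding the maximum height attained by the competing path $\gamma$ (and the $\vect{v}$-range of $\gamma$ must avoid the two short connector segments). Otherwise $\gamma$ may cross $\gamma_N$ an odd number of times entirely on the closing arc and never touch $c$. With that choice $\gamma$ is disjoint from the closing arc, your interior/exterior computation for $p^+$ and $p^-$ is correct, and every crossing of $\gamma_N$ is a crossing of $c$. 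So the first half is salvageable and arguably more self-contained than the paper's; the second half needs to be replaced by the parity classification.
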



In Section~\ref{appB:curves} we define what it means for one curve to turn left or right from another, as well as left hand side and right hand side of a cut of the real plane.  We define the \emph{left side} $\leftside{\bipump{P}}$ and \emph{right side} $\rightside{\bipump{P}}$ of a good candidate path $P$ as the restriction of $\Z$ to the left hand side and right side of the curve obtained by $\bipump{P}$. Note that both the left side and right side contains $\pos{\bipump{P}}$.
Consider two good candidate paths $P$ and $Q$, we say $P$ is greater than $Q$, denoted by $P\geq Q$, if and only if $\bipump{P}$ is inside  the left side of $\bipump{Q}$, \emph{i.e.} $\pos{\bipump{P}}\subset \bipump{\leftside{Q}}$. Moreover, if $\bipump{P}\neq \bipump{Q}$, we say that $P$ is strictly greater than $Q$, denoted by $P>Q$.  If we consider a good candidate path $P$ and a vector $\vect{v}$ such that $\bipump{P}$ and $\bipump{P}+\vect{v}$ do not intersect  and  $P>P+\vect{v}$ then there exists $\ell$ such that for any $(x,y)\in Z^2$, $(x,y) +\ell\vect{v} \in \rightside{\bipump{P}}\cap \leftside{\bipump{P}+\vect{v}}$ (see \ref{appB:curves} for more details). 

Consider an arc $A$ or a path $P$ which grow on $\bipump{P}$ then their positions are  either all in the left side of $P$ or all in the right side of $P$. Moreover if the width of the arc $A$ is as least $|P|$ then the arc $A$ must intersect with $A+\vect{P_0{P_{|P-1|}}}$ and this intersect does not occurs at its extremities, \emph{i.e.} it occurs at $\pos{A_j}$ with $0<j<|A|-1$.

\begin{observation}
\label{obs:grow}
Consider a directed tile assembly system $\mathcal{T}=(T,\sigma,1)$ whose terminal assembly is $\uniterm$, a simply pumpable path $P$ in $\inuniterm$ and a path $Q$ which grows on $\pump{P}$ at index $i \geq |P|-1$ then either $Q$ intersects with $Q+\vect{P_0P_{|P|-1}}$ or for all $\ell \in \N$, $Q+\ell\vect{P_0P_{|P|-1}}$ grows on $\pump{P}$.
\end{observation}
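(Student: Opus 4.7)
The plan is to proceed by induction on $\ell$, with $\ell=0$ being given. Write $\vect{v}=\vect{P_0P_{|P|-1}}$ for brevity, and set $Q^{(\ell)} = Q + \ell\vect{v}$. The induction hypothesis will be that $Q^{(\ell)}$ grows on $\pump{P}$ at the index $i_\ell := i + \ell(|P|-1) \geq |P|-1$; translating the non-intersection hypothesis by $\ell\vect{v}$ gives $Q^{(\ell)} \cap (Q^{(\ell)}+\vect{v}) = \emptyset$. Thus the induction step reduces to showing: under these conditions, $Q^{(\ell)}+\vect{v}$ grows on $\pump{P}$ at index $i_\ell + (|P|-1)$.

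Agreement at the starting tile of $Q^{(\ell)}+\vect{v}$ is automatic from the periodicity of tile types along $\pump{P}$ and the fact that $Q^{(\ell)}$ already grows on $\pump{P}$ at $i_\ell$. Suppose for contradiction that there is another intersection with $\pump{P}$, at some index $k \geq 1$ of $Q^{(\ell)}+\vect{v}$. Translating by $-\vect{v}$ then places $\pos{Q^{(\ell)}_k}$ on $\bipump{P}$; since $Q^{(\ell)}$ grows on $\pump{P}$, this position cannot lie on $\pump{P}$ itself, so it must lie on the ``backward extension'' of $\pump{P}$ inside $\bipump{P}$, at an index $-m$ of $\bipump{P}$ with $m \in \{1,\ldots,|P|-1\}$ (the upper bound comes from requiring the original position on $\pump{P}$ to be a valid nonnegative index).

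Choosing the smallest such $k$ ensures that the subpath $Q^{(\ell)}_{0,1,\ldots,k}$ meets $\bipump{P}$ in exactly two positions, namely its own endpoints, and hence forms an arc of $\bipump{P}$ of width $i_\ell + m \geq (|P|-1)+1 = |P|$. Invoking the arc-width result quoted just before the observation---that any arc of $\bipump{P}$ of width at least $|P|$ must intersect its own translate by $\vect{v}$ at a non-extremity index---we obtain that $Q^{(\ell)}_{0,\ldots,k}$ and $Q^{(\ell)}_{0,\ldots,k}+\vect{v}$ share a position. Since this subpath sits inside $Q^{(\ell)}$, we conclude $Q^{(\ell)}\cap(Q^{(\ell)}+\vect{v})\neq\emptyset$, contradicting the hypothesis and completing the induction.

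The main subtlety I expect is confirming that the subpath $Q^{(\ell)}_{0,\ldots,k}$ really qualifies as an arc in the sense of the paper's definition, i.e.\ that the tile types (not only the positions) agree at both endpoints. Agreement at index $0$ is immediate by the growing hypothesis; agreement at $\pos{\bipump{P}_{-m}}$ follows from the periodicity of tile types along $\bipump{P}$ together with the implicit assumption that $Q$ (and hence each $Q^{(\ell)}$) is a path of $\uniterm$, so that $\type{Q^{(\ell)}_k}$ is forced to equal the tile type dictated by $\uniterm$ at $\pos{\bipump{P}_{-m}}$, which coincides with $\type{\bipump{P}_{-m}}$.
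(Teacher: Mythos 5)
Your argument is essentially the paper's: an extra intersection of a translate of $Q$ with $\pump{P}$ pulls back to (the binding path of) an arc of $\bipump{P}$ of width at least $|P|$ (using $i\geq |P|-1$), which by the remark preceding the observation must meet its own translate by $\vect{P_0P_{|P|-1}}$, contradicting the non-intersection hypothesis; the paper does this in one shot for arbitrary $\ell$ whereas you phrase it as an induction on $\ell$, but the content is identical. One small remark on your final paragraph: the statement does not assume $Q\in\inuniterm$, but no tile-type agreement at the arc's far endpoint is actually needed, because the arc-translate intersection fact is purely positional (the paper applies it to the binding path), so the concern you raise is moot rather than a gap to be filled.
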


\begin{proof}
If some $\ell\in\N$, the path $Q+\ell\vect{P_0P_{|P|-1}}$ does not grow on $\pump{P}$ then there exists $0<j\leq |Q|-1$ such that $\pos{Q_j}+\ell\vect{P_0P_{|P|-1}} \in \pos{\pump{P}}$. Since $Q$ grows on $\pump{P}$ then $\pos{Q_j} \in \bipump{P}$. The binding path of $Q_{0,1,\ldots, j}$ is the binding path of an arc of $\bipump{P}$ whose width is at least $|P|$ (since $i \geq |P|-1$) and thus $Q_{0,1,\ldots, j}$ intersects with $Q_{0,1,\ldots, j}+\vect{P_0P_{|P|-1}}$.
\end{proof}

\section{Proof of the main theorems}

From now on, all tile assembly systems considered here are directed.

\subsection{Study of the bi-pumpable paths}

\subsubsection{Role of the seed}

In this subsection, corollary \ref{cor:periodic} and lemma \ref{lem:exist:bipump} show the equivalence between $\alpha$ is $\vect{v}$-periodic and there exists a bi-pumpable path $P$ where $\vect{v}=\vect{P_0P_{|P|-1}}$. To prove this result, we show that the seed can be reduced to any single tile of $\bipump{P}$ (the arguments used in this proof are the ones used in lemma $4.8$ of \cite{Doty-2011}). Later, this lemma will allow us to grow and pump paths easily.

\begin{lemma}
\label{lem:bipump:seed}
Let $\mathcal T=(T,\sigma,1)$ be a directed tile assembly system and $\uniterm$ its unique terminal assembly. If a path $P\in \inuniterm$ is bi-pumpable then for any $i \in \mathbb{Z}$, $(T,\bipump{P}_i,1)$ is directed and its terminal assembly is $\uniterm$.
\end{lemma}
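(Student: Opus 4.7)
The plan is to prove two complementary statements: (I) $\uniterm$ is producible in the system $(T,\bipump{P}_i,1)$, and (II) every producible assembly of $(T,\bipump{P}_i,1)$ is a subassembly of $\uniterm$. Together these yield the conclusion: $\uniterm$ is a terminal of $(T,\bipump{P}_i,1)$ because the tile set is the same as in $\calT$ and $\uniterm$ admits no new tile attachments there; and any other terminal $\uniterm'$ would by (II) satisfy $\uniterm'\subseteq\uniterm$, but a strict subassembly of $\uniterm$ cannot be terminal since $\domain{\uniterm}$ is connected and any tile of $\uniterm$ adjacent to such a subassembly would bind and extend it.

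For (I), I would describe an explicit (possibly infinite) assembly sequence in $(T,\bipump{P}_i,1)$. Starting from $\{\bipump{P}_i\}$, grow the bi-pump $\bipump{P}$ by repeatedly appending $\bipump{P}_{i+1},\bipump{P}_{i+2},\ldots$ and $\bipump{P}_{i-1},\bipump{P}_{i-2},\ldots$; each tile binds to its predecessor along the path, and each lies in $\uniterm$ because $\bipump{P}\in\inuniterm$. This sequence places $\bipump{P}_0=P_0$ at some step, and by definition of producible path $P_0$ interacts with a tile of $\sigma$; attach that tile and extend tile-by-tile through the connected domain of $\sigma$. The current assembly then contains $\sigma\cup\asm{\bipump{P}}\subseteq\uniterm$. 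To finish, simulate any assembly sequence of $\calT=(T,\sigma,1)$ that produces $\uniterm$ from $\sigma$: every attachment in that sequence is still valid in $(T,\bipump{P}_i,1)$ because the necessary binding neighbor is already present.

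For (II), I would induct on the length of an assembly sequence $\{\bipump{P}_i\}=\gamma_0\to\gamma_1\to\cdots$ in $(T,\bipump{P}_i,1)$. The base case $\gamma_0\subseteq\uniterm$ is immediate. For the inductive step, suppose $\gamma_k\subseteq\uniterm$ and $\gamma_{k+1}=\gamma_k\cup\{t\}$ where $t$ at position $p$ binds to $\gamma_k(q)=\uniterm(q)$. Using the construction of (I), I extend $\gamma_k$ inside $(T,\bipump{P}_i,1)$ to an assembly $\gamma_k^*\supseteq\gamma_k\cup\sigma\cup\asm{\bipump{P}}$; the inductive hypothesis ensures that no conflict arises during this extension and that $\gamma_k^*\subseteq\uniterm$. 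Because $\gamma_k^*\supseteq\sigma$ and $\gamma_k^*\subseteq\uniterm$, it is also producible in $\calT$ by growing tile-by-tile through $\uniterm$ from $\sigma$. When $p\notin\domain{\gamma_k^*}$, attaching $t$ at $p$ is then a valid step in $\calT$ as well, and the directedness of $\calT$ forces $t=\uniterm(p)$, hence $\gamma_{k+1}\subseteq\uniterm$.

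The principal obstacle is the sub-case $p\in\domain{\gamma_k^*}$, which (since $p\notin\gamma_k$) means $p\in\sigma\cup\asm{\bipump{P}}$ and $\gamma_k^*$ already places $\uniterm(p)$ at $p$, so the previous $\calT$-argument cannot be applied directly. To rule out $t\neq\uniterm(p)$ in this sub-case I would follow the translation-based argument of Lemma~4.8 of \cite{Doty-2011}: the periodicity of $\bipump{P}$ under $\vect{P_0P_{|P|-1}}$ permits translating the hypothesized attachment to a copy of $p$ on $\bipump{P}$ at an index on the side of $0$ opposite to $i$, so that this translated copy is reachable from $\sigma$ along $\bipump{P}$ without passing through the shifted binding neighbor; the translated attachment then yields a producible assembly in $\calT$ that conflicts with $\uniterm$, contradicting directedness of $\calT$ unless $t=\uniterm(p)$. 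Making this translation precise, using only the self-avoidance and periodicity of $\bipump{P}$ together with the finiteness of $\sigma$, is the technical heart of the proof.
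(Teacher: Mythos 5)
Your overall architecture (show that every assembly producible by $(T,\bipump{P}_i,1)$ is a subassembly of $\uniterm$, then deduce directedness and identify the terminal assembly) matches the paper's, and the translation idea you point to for the hard sub-case is indeed the engine of the paper's proof. Two points need fixing, however. First, in step (I) you claim that $P_0$ interacts with a tile of $\sigma$ ``by definition of producible path''. But $P$ is only assumed to lie in $\inuniterm$, i.e.\ $\asm{P}$ is a subassembly of $\uniterm$; it need not be a producible path, and $P_0$ need not touch $\sigma$. The paper instead uses the observation that every tile of $\uniterm$ belongs to some finite producible assembly $\ass$ containing $\sigma$, and carries this $\ass$ through the whole argument; you need the same device to connect $\bipump{P}$ to the seed.

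Second, and more seriously, the sub-case you rightly call the technical heart is left unproven, and the sketch you give covers only half of it. Translating the offending attachment ``to a copy of $p$ on $\bipump{P}$'' only makes sense when the conflicting position $p$ lies on $\bipump{P}$; when $p$ lies in $\sigma$ (or, more generally, in the finite producible assembly $\ass$ linking $\sigma$ to $\bipump{P}_i$), there is no such copy and the argument does not apply. The paper handles this case with a different translation: choose $\ell$ so that both $\ass+\ell\vect{P_0P_{|P|-1}}$ and $R+\ell\vect{P_0P_{|P|-1}}$ (where $R$ is the offending path grown from $\bipump{P}_i$) avoid $\ass$; by $\vect{P_0P_{|P|-1}}$-periodicity of $\bipump{P}$ neither translate conflicts with $\bipump{P}$, so the two assemblies $\ass\cup\bipump{P}\cup(\ass+\ell\vect{P_0P_{|P|-1}})$ and $\ass\cup\bipump{P}\cup(R+\ell\vect{P_0P_{|P|-1}})$ are both producible by $(T,\sigma,1)$, yet they conflict with each other, contradicting directedness. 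Without this second argument your induction cannot close on seed conflicts, so as written the proposal has a genuine gap rather than merely an omitted computation.
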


\begin{proof}
Since $\bipump{P}$ is in $\inuniterm$ then we can consider a finite producible assembly $\ass$ of $(T,\sigma,1)$ such that $\bipump{P}_i$ is a tile of $\ass$. By hypothesis, $\bipump{P}\cup \ass$ is producible by $(T,\sigma,1)$. Consider a path $R$ producible by $(T,\bipump{P}_i,1)$, if $R$ does not conflict with $\ass$ or $\bipump{P}$ then $(R\cup \bipump{P}\cup \ass)$ is producible by $(T,\sigma,1)$ and $R$ is in $\inuniterm$. For the sake of contradiction suppose that such a conflict exists. We assume that $R_{|R|-1}$ is the only conflict between $R$ and either $\ass$ or $\bipump{P}$ (by eventually replacing $R$ with one of its prefix), see Figure \ref{fig;bipump:seed} for an illustration of the following cases:
\begin{itemize} 
\item If this conflict occurs with $\bipump{P}_j$ for $j>i$ (the case $j<i$ is symmetric), since $\ass$ and $R$ are finite and $\vect{P_0P_{|P|-1}}$ is not null, there exists $\ell\in \N$ such that $R+\ell\vect{P_0P_{|P|-1}}$ does not intersect with $\ass$. Remark that $R_0+\ell\vect{P_0P_{|P|-1}}=\bipump{P}_{i+\ell(|P|-1)}$ and that $R_{|R|-1}+\ell\vect{P_0P_{|P|-1}}$ is in conflicts with $\bipump{P}_{j+\ell(|P|-1)}$. By definition of $\ass$ and $P$, the assembly $\gamma=\ass \cup \bipump{P}_{i,\ldots,i+\ell(|P|-1)}$ is producible by $(T,\sigma,1)$. By definition of $\ell$, the tile $R_{|R|-1}+\ell\vect{P_0P_{|P|-1}}$ is not a tile of $\ass$ and since $j>i$, we have $i+\ell(|P|-1)<j+\ell(|P|-1)$ thus $R_{|R|-1}+\ell\vect{P_0P_{|P|-1}}$ is not a tile of $\gamma$. Then the assembly $\gamma \cup (R+\ell\vect{P_0P_{|P|-1}})$ is producible by $(T,\sigma,1)$ and is in conflict with $\bipump{P}\in \inuniterm$ which is a contradiction.
\item Otherwise, this conflict occurs with $\ass$. Since $\ass$ and $R$ are finite and $\vect{P_0P_{|P|-1}}$ is not null, there exists $\ell \in \mathbb{N}$ such that $\ass+\ell\vect{P_0P_{|P|-1}}$ and $R+\ell\vect{P_0P_{|P|-1}}$ do no intersect with $\ass$. Since $\bipump{P}$ is $\vect{P_0P_{|P|-1}}$-periodic then $\bipump{P}$ does not conflict with $\ass+\ell\vect{P_0P_{|P|-1}}$ or $R+\ell\vect{P_0P_{|P|-1}}$. Then the two assemblies $(\ass \cup \bipump{P} \cup (\ass+\ell\vect{P_0P_{|P|-1}}))$ and $(\ass \cup \bipump{P} \cup (R+\ell\vect{P_0P_{|P|-1}}))$ are both producible by $(T,\sigma,1)$, but these two assemblies are in conflict which contradicts the fact that $(T,\sigma,1)$ is directed.
\end{itemize} 
Thus, any path producible by $(T,\bipump{P}_i,1)$ is producible by $(T,\sigma,1)$. Then creating a conflict in $(T,\bipump{P}_i,1)$ would create a conflict in $(T,\sigma,1)$ and since $(T,\sigma,1)$ is directed then $(T,\bipump{P}_i,1)$ is directed. To conclude, since $(T,\bipump{P}_i,1)$ is directed and $\uniterm$ is a terminal assembly which contains $\bipump{P}_i$, then $\uniterm$ is the unique terminal assembly of $(T,\bipump{P}_i,1)$.
\end{proof}

\begin{figure}
\centering

\begin{minipage}{0.4\linewidth}
\begin{tikzpicture}[x=0.33cm,y=0.33cm]

\draw[very thick] (0,0.5) -| (1.5,3.5)  -| (3.5,2.5) -| (6.5,5.5)  -| (8.5,4.5) -| (11.5,7.5)  -| (13.5,6.5) -| (15,6.5);
\draw (11.4, 8.5) node {$\bipump{P}_i$};

\draws{7}{9}
\draws{7}{10}
\draws{8}{9}
\draws{8}{10}
\draws{9}{10}
\draws{9}{11}
\draws{10}{10}
\draws{10}{11}
\draws{11}{10}
\draws{11}{11}

\drawt{0}{0}
\drawt{1}{0}
\drawt{1}{1}
\drawt{1}{2}
\drawt{1}{3}
\drawt{2}{3}
\drawt{3}{3}
\drawt{3}{2}
\drawt{3}{2}
\drawt{4}{2}
\drawt{5}{2}
\drawt{6}{2}
\drawt{6}{3}
\drawt{6}{4}
\drawt{6}{5}
\drawt{7}{5}
\drawt{8}{5}
\drawt{8}{4}
\drawt{9}{4}
\drawt{10}{4}
\drawt{11}{4}
\drawt{11}{5}
\drawt{11}{6}
\drawt{11}{7}
\drawt{12}{7}
\drawt{13}{7}
\drawt{13}{6}
\drawt{14}{6}

\draw[very thick] (7.5,10.5) -| (11.5,11.5)  -| (9.5,11.5);
\draw[very thick] (7.5,10.5) |- (8.5,9.5)  -| (8.5,9);
\draw[very thick,blue] (8.5,9) |- (11.5,4.5)  -| (11.5,7.5);

\drawb{8}{8}
\drawb{8}{7}
\drawb{8}{6}
\drawb{8}{5}
\drawb{8}{4}
\drawb{9}{4}
\drawb{10}{4}
\drawb{11}{4}
\drawb{11}{5}
\drawb{11}{6}
\drawb{11}{7}

\path [dotted, draw, thin] (0,0) grid[step=0.33cm] (15,13);
\end{tikzpicture}

a) $\bipump{P}$ is in white, the seed is in black and $\ass$ is the union of the seed and the blue path.
\end{minipage} \hfill
\begin{minipage}{0.4\linewidth}
\begin{tikzpicture}[x=0.33cm,y=0.33cm]

\draw (11.4, 6.3) node {$\bipump{P}_i$};

\draw[very thick,purple!50!white] (11.5,7.5) |- (13.5,10.5)  -| (13.5,7.5);
\draw[very thick,red!50!white] (11.5,7.5) -| (7.5,5.5);

\draw (6.5, 6.5) node {$R$};
\draw (12.5, 11.5) node {$R'$};

\drawt{11}{7}
\drawr{10}{7}
\drawr{9}{7}
\drawr{8}{7}
\drawr{7}{7}
\drawr{7}{6}
\drawr{7}{5}

\drawp{11}{8}
\drawp{11}{9}
\drawp{11}{10}
\drawp{12}{10}
\drawp{13}{10}
\drawp{13}{9}
\drawp{13}{8}
\drawp{13}{7}


\path [dotted, draw, thin] (0,0) grid[step=0.33cm] (15,13);
\end{tikzpicture}

b) Two paths $R$ and $R'$ which are producible by $(T,\bipump{P}_i,1)$.
\end{minipage} \hfill 

\vspace{+1em}

\begin{tikzpicture}[x=0.33cm,y=0.33cm]

\draw[very thick] (3,3.5)  -| (3.5,2.5) -| (6.5,5.5)  -| (8.5,4.5) -| (11.5,7.5)  -| (13.5,6.5) -| (16.5,9.5) -| (18.5,8.5) -| (21.5,11.5) -| (23.5,10.5) -| (26.5,13.5)  -| (28,13.5);

\draws{7}{9}
\draws{7}{10}
\draws{8}{9}
\draws{8}{10}
\draws{9}{10}
\draws{9}{11}
\draws{10}{10}
\draws{10}{11}
\draws{11}{10}
\draws{11}{11}

\drawt{3}{3}
\drawt{3}{2}
\drawt{3}{2}
\drawt{4}{2}
\drawt{5}{2}
\drawt{6}{2}
\drawt{6}{3}
\drawt{6}{4}
\drawt{6}{5}
\drawt{7}{5}
\drawt{8}{5}
\drawt{8}{4}
\drawt{9}{4}
\drawt{10}{4}
\drawt{11}{4}
\drawt{11}{5}
\drawt{11}{6}
\drawt{11}{7}
\drawt{12}{7}
\drawt{13}{7}
\drawt{13}{6}
\drawt{14}{6}
\drawt{15}{6}
\drawt{16}{6}
\drawt{16}{7}
\drawt{16}{8}
\drawt{16}{9}
\drawt{17}{9}
\drawt{18}{9}
\drawt{18}{8}
\drawt{19}{8}
\drawt{20}{8}
\drawt{21}{8}
\drawt{21}{9}
\drawt{21}{10}
\drawt{21}{11}
\drawt{22}{11}
\drawt{23}{11}
\drawt{23}{10}
\drawt{24}{10}
\drawt{25}{10}
\drawt{26}{10}
\drawt{26}{11}
\drawt{26}{12}
\drawt{26}{13}
\drawt{27}{13}

\draw (11.4, 8.5) node {$\bipump{P}_i$};

\draw[very thick] (7.5,10.5) -| (11.5,11.5)  -| (9.5,11.5);
\draw[very thick] (7.5,10.5) |- (8.5,9.5)  -| (8.5,9);
\draw[very thick,blue] (8.5,9) |- (11.5,4.5)  -| (11.5,7.5);
\draw[very thick,blue] (18.5,13) |- (21.5,8.5)  -| (21.5,11.5);

\draw[very thick] (18.5,13) |- (17.5,13.5) |- (21,14.5);

\draw[very thick,purple!50!white] (21.5,11.5) |- (21.5,14);
\draw[very thick,purple!50!white] (26.5,13.5) |- (28.5,16.5)  -| (28.5,13.5);
\draw[very thick,red!50!white] (6.5,5.5) -| (2.5,3.5);

\draws{17}{13}
\draws{17}{14}
\draws{18}{13}
\draws{18}{14}
\draws{19}{14}
\draws{20}{14}

\drawb{8}{8}
\drawb{8}{7}
\drawb{8}{6}
\drawb{8}{5}
\drawb{8}{4}
\drawb{9}{4}
\drawb{10}{4}
\drawb{11}{4}
\drawb{11}{5}
\drawb{11}{6}
\drawb{11}{7}

\drawb{18}{12}
\drawb{18}{11}
\drawb{18}{10}
\drawb{18}{9}
\drawb{18}{8}
\drawb{19}{8}
\drawb{20}{8}
\drawb{21}{8}
\drawb{21}{9}
\drawb{21}{10}
\drawb{21}{11}

\drawr{6}{5}
\drawr{5}{5}
\drawr{4}{5}
\drawr{3}{5}
\drawr{2}{5}
\drawr{2}{4}
\drawr{2}{3}

\drawp{21}{12}
\drawp{21}{13}

\drawp{26}{13}
\drawp{26}{14}
\drawp{26}{15}
\drawp{26}{16}
\drawp{27}{16}
\drawp{28}{16}
\drawp{28}{15}
\drawp{28}{14}
\drawp{28}{13}

\draw (21, 16.5) node {conflict with $\ass$};
\draw (25, 1.5) node {conflicts with $\bipump{P}$};
\draw[->] (28.5,2.5) -- (28.5,13.5);
\draw[->] (20,1.5) -| (2.5,3.5);
\draw[->] (21.5,16) -- (21.5,14.5);

\draw (1.5, 4.5) node {$R$};
\draw (22.5, 13.5) node {$R'$};
\draw (27.5, 17.5) node {$R'$};

\draw[red,->] (11.5,7.5) -- (6.5,5.5);
\draw[purple,->] (11.5,7.6) -- (21.5,11.6);
\draw[purple,->] (11.5,7.4) -- (26.5,13.4);

\path [dotted, draw, thin] (0,0) grid[step=0.33cm] (30,20);
\end{tikzpicture}

c) The three different ways to create a conflict in $(T,\sigma,1)$ by using translations of $R$ and $R'$.

\caption{Illustrations of proof of lemma \ref{lem:bipump:seed}.}
\label{fig;bipump:seed}
\end{figure}

As a corollary of this result, any path $Q$ which grows on $\bipump{P}$ is in $\inuniterm$. Moreover, since for any $\ell \in \Z$, $\bipump{P}+\ell\vect{P_0P_{|P|-1}}=\bipump{P}$ then $\bipump{Q}+\ell\vect{P_0P_{|P|-1}}$ also grows in $\alpha$ and is in $\inuniterm$ which leads to the following corollary: 

\begin{corollary}
\label{cor:periodic}
Let $\mathcal T=(T,\sigma,1)$ be a directed tile assembly system and $\uniterm$ its unique terminal assembly. If $P\in \inuniterm$ is bi-pumpable then $\alpha$ is $\vect{P_{0}P_{|P|-1}}$-periodic.
\end{corollary}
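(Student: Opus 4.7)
The plan is to prove $\alpha+\vect v=\alpha$ with $\vect v=\vect{P_0P_{|P|-1}}$ by showing that every tile $A=((x,y),t)\in\alpha$ yields a translated tile $((x,y)+\vect v,t)\in\alpha$; the opposite inclusion follows by the same argument with $-\vect v$, since bi-pumpability makes $\bipump P$ periodic with both $\vect v$ and $-\vect v$.

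First, using the observation that any two tiles of $\alpha$ are connected by a path in $\inuniterm$, I obtain $R\in\inuniterm$ with $R_0=\bipump P_0$ and $R_{|R|-1}=A$. Let $j$ be the largest index such that $\pos{R_j}$ is a position of $\bipump P$, and set $Q=R_{j,j+1,\ldots,|R|-1}$. By the maximality of $j$ together with simplicity of $R$, $Q$ meets $\bipump P$ only at $Q_0$; and since both $R$ and $\bipump P$ lie in $\inuniterm$, this single intersection must be an agreement. Thus $Q$ grows on $\bipump P$ at some index $i$ and still satisfies $Q_{|Q|-1}=A$, so by the remark stated immediately after Lemma~\ref{lem:bipump:seed}, $Q\in\inuniterm$.

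Next I translate. The equality $\bipump P+\vect v=\bipump P$ as a sequence of positions and tile types gives $\bipump P_i+\vect v=\bipump P_{i+|P|-1}$, so $Q+\vect v$ is a path that grows on $\bipump P$ at index $i+|P|-1$. Reapplying the same remark forces $Q+\vect v\in\inuniterm$, and in particular $(Q+\vect v)_{|Q|-1}=((x,y)+\vect v,t)$ is a tile of $\uniterm$. This establishes $\alpha+\vect v\subseteq\alpha$, and with the symmetric inclusion $\alpha$ is $\vect v$-periodic.

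The only real bookkeeping is the truncation step: taking the \emph{last} intersection of $R$ with $\bipump P$ is exactly what forces $Q$ to grow on $\bipump P$ at a single index, which is the hypothesis needed to invoke the consequence of Lemma~\ref{lem:bipump:seed}. Everything else reduces to a short translation argument that exploits the $\vect v$-periodicity of $\bipump P$.
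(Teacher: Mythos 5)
Your proposal is correct and follows essentially the same route as the paper: the paper's (very terse) argument is precisely that any path growing on $\bipump{P}$ lies in $\inuniterm$ by Lemma~\ref{lem:bipump:seed}, and that its translates by $\ell\vect{P_0P_{|P|-1}}$ again grow on the translation-invariant $\bipump{P}$ and hence also lie in $\inuniterm$. Your truncation of the connecting path $R$ at its last intersection with $\bipump{P}$ just makes explicit the step the paper leaves implicit, namely that every tile of $\uniterm$ is reached by some path growing on $\bipump{P}$.
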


\begin{lemma}
\label{lem:exist:bipump}
Let $\mathcal T=(T,\sigma,1)$ be a directed tile assembly system and $\uniterm$ its unique terminal assembly. If $\uniterm$ is periodic then there exists a path $P\in \inuniterm$ which is bi-pumpable.
\end{lemma}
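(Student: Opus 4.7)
My plan is to apply Theorem~\ref{theorem:pumping} to a long producible path, obtain a pumpable good candidate $Q$, and then use $\vect v$-periodicity of $\uniterm$ to convert the forward pump into a bi-pump.

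Since $\uniterm$ is $\vect v$-periodic with $\vect v\neq\vect 0$, every orbit $\{p_0+k\vect v:k\in\Z\}$ of a tile position lies in $\domain\uniterm$, so $\uniterm$ is infinite and by the first observation of Section~\ref{sec:defs-paths} there exists a producible path $P\in\prodpathsT$ of width or height at least $\funpump{|T|}{|\sigma|}$. By Theorem~\ref{theorem:pumping}, there are indices $i<j$ such that the pumping of $P$ between $i$ and $j$ is producible; setting $Q=P_{i,i+1,\ldots,j}$ and $\vect w=\vect{Q_0Q_{|Q|-1}}$, the sub-path $Q$ is a good candidate and $\pump Q\in\inuniterm$.

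To show $Q$ is bi-pumpable, I first note that self-avoidingness of $\bipump Q$ follows from that of $\pump Q$: any collision $\pos{Q_k}+m\vect w=\pos{Q_{k'}}+m'\vect w$ with $(k,m)\neq(k',m')$ and $m,m'\in\Z$ can be translated by enough copies of $\vect w$ to give a forbidden collision on $\pump Q$. For the tile types, the inclusion $\pump Q\in\inuniterm$ combined with $\vect v$-periodicity of $\uniterm$ determines the tile at every position $\pos{Q_k}+m\vect w+j\vect v$ (with $m\geq 0,j\in\Z$) to be of type $\type{Q_k}$. When $\vect w$ is a rational multiple of $\vect v$, every backward position $\pos{Q_k}-\ell\vect w$ (with $\ell\geq 1$) can be rewritten in this form by solving $-\ell\vect w=m\vect w+j\vect v$ with $m\geq 0,j\in\Z$, which gives $\bipump Q\in\inuniterm$.

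The hard part is the case $\vect w\not\parallel\vect v$: the above equation has no integer solution and the backward tile types are a priori unknown from the forward ones alone. The plan is to sidestep this case by constructing a \emph{different} good candidate whose pumping vector is a positive integer multiple of $\vect v$, to which the previous paragraph's argument applies. By $\vect v$-periodicity, any tile type $t_0$ occurring at $p_0\in\domain\uniterm$ also occurs at every $p_0+k\vect v$, so by the first observation there is a producible path $P'$ from $\sigma$ to $(p_0+N\vect v,t_0)$ for every $N$. For $N$ large enough, applying Theorem~\ref{theorem:pumping} to $P'$ and using pigeonhole (noting that any two tiles on $P'$ at positions differing by a nonzero multiple of $\vect v$ automatically share a tile type, so that for a suitably chosen $P'$ such a pair of indices can be selected as the pumping pair) extracts a sub-path that is a good candidate with pumping vector a nonzero multiple of $\vect v$, and thus bi-pumpable.
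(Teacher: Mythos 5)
Your overall strategy is considerably heavier than what the statement needs, and it breaks down precisely in the case you flag as "the hard part." The notion of bi-pumpable in this paper only asks that $\bipump{Q}\in\inuniterm$, i.e.\ that $\asm{\bipump Q}$ be a subassembly of $\uniterm$ — producibility from the seed is not required. So once you have a good candidate $Q$ whose pumping vector is $s\vect v$ for some $s\in\{-1,1\}$, bi-pumpability is immediate: $\uniterm$ is $\vect v$-periodic, hence $Q+\ell s\vect v\in\inuniterm$ for every $\ell\in\Z$, and $\bipump Q$ is exactly the union of these translates. The paper exploits this directly: take any tile $A$ of $\uniterm$, note $A+\vect v$ is also a tile of $\uniterm$, use the connectivity observation to get a path of $\inuniterm$ joining them, and take a \emph{shortest} path $Q\in\inuniterm$ with $Q_0+s\vect v=Q_{|Q|-1}$; any intersection of $Q$ with $Q+\vect v$ other than the endpoint agreement would yield a strictly shorter such path, so minimality makes $Q$ a good candidate. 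No appeal to Theorem~\ref{theorem:pumping} is needed anywhere.

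The genuine gap in your proposal is the non-collinear case. You need a good candidate whose pumping vector is a nonzero multiple of $\vect v$, and you try to get one by "using pigeonhole" so that "such a pair of indices can be selected as the pumping pair." But Theorem~\ref{theorem:pumping} is purely existential: it asserts that \emph{some} pair $i<j$ makes the pumping of $P'$ between $i$ and $j$ a producible infinite path, and it gives you no control whatsoever over which pair that is. Knowing that $P'$ contains many pairs of tiles at positions differing by multiples of $\vect v$ (and hence of equal type) does not let you conclude that the pumping between any one of those pairs is producible, nor even that the associated subpath is a good candidate — the translate could collide with an earlier part of $P'$ or with the seed. Nothing in your sketch closes this, and it is exactly the difficulty the paper's minimality argument is designed to avoid: by working inside $\inuniterm$ with period vector $\vect v$ from the start, the tile types of the backward translates come for free from $\vect v$-periodicity, and only simplicity of the bi-infinite sequence remains to be checked.
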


\begin{proof}
By hypothesis, there exists a non null vector $\vect{v}$, such that $\uniterm$ is $\vect{v}$-periodic. Consider a tile $A$ of $\uniterm$ then $A+\vect{v}$ is also tile of $\uniterm$ and there exists a finite path $P\in \inuniterm$ such that $P_0=A$ and $P_{|P|-1}=A+\vect{v}$. Consider the shortest path $Q$ of $\inuniterm$ such that $Q_0+s\vect{v}=Q_{|Q|-1}$ for some $s\in \{-1,1\}$. The path $Q$ is correctly defined since $P$ satisfies this criterion. Since $\uniterm$ is $\vect{v}$-periodic, then for $\ell \in \Z$, $Q+\ell\vect{v}$ is in $\inuniterm$. If $Q$ and $Q+\vect{v}$ intersect only at their extremities then $Q$ is a good candidate and $\bipump{Q}$ is correctly defined, is in $\inuniterm$ and thus $Q$ is bi-pumpable. Otherwise, there exists another intersection between $Q$ and $Q+\vect{v}$, \emph{i.e.} there exists $0\leq i, j \leq |Q|-1$ such that $Q_i=Q_j+\vect{v}$ and if $j > i $ (resp. $i > j$) then $(i,j)\neq (0,|Q|-1)$ (resp. $(j,i)\neq (0,|Q|-1)$). Thus, $Q_{i,\ldots, j}$ (resp. $Q_{j,\ldots, i}$) contradicts the minimality of $Q$.
\end{proof}

\subsubsection{Paths without redundancy}

Now, we show that we can restrict our study to bi-pumpable paths whose two extremities are the only tile to share a common tile type.

\begin{definition}
A path $P$ is without redundancy if for all $0\leq i <j \leq |P|-1$, $\type{P_i}=\type{P_j}$ implies that $i=0$ and $j=|P|-1$. 
\end{definition}

Of course the length of a path without redundancy is bounded by $|T|+1$. Moreover, if two tiles of a path shares the same tiles type then a path without redundancy can be extracted from it. 

\begin{lemma}
\label{lem:redun:exists}
For any path $P$ where there exists $0\leq i < j \leq |P|-1$ such that $\type{P_{i}}=\type{P_{j}}$, there exists $i\leq i' < j' \leq j$ such that $P_{i',i'+1,\ldots, j'}$ is without redundancy. 
\end{lemma}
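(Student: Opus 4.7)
The plan is to extract $P_{i',\ldots,j'}$ as a minimal-length witness of redundancy inside the interval $[i,j]$. Formally, I would consider the set
\[
S = \{(a,b) : i \leq a < b \leq j,\ \type{P_a} = \type{P_b}\}.
\]
The hypothesis guarantees $(i,j) \in S$, so $S$ is nonempty and finite. I would then pick $(i', j') \in S$ that minimizes the quantity $j' - i'$.

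The claim is that $P_{i',i'+1,\ldots,j'}$ is without redundancy. To verify this, I would take any indices $0 \leq a < b \leq j' - i'$ with $\type{P_{i'+a}} = \type{P_{i'+b}}$ and show $a = 0$ and $b = j' - i'$. Indeed, $(i'+a, i'+b) \in S$ since $i \leq i' \leq i'+a < i'+b \leq j' \leq j$, so by minimality $(i'+b) - (i'+a) \geq j' - i'$, i.e.\ $b - a \geq j' - i'$. Combined with $b - a \leq j' - i'$ (which holds because $0 \leq a$ and $b \leq j' - i'$), we get $b - a = j' - i'$, forcing $a = 0$ and $b = j' - i'$, as required.

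There is no real obstacle here: the argument is a one-step minimality argument using the pigeonhole-style observation that among all same-tile-type pairs in $[i,j]$, the shortest one automatically has no internal repetition. The only thing to be slightly careful about is the definition of ``without redundancy'': the endpoints of the extracted subpath are allowed to share a tile type (in fact they do, by construction), so the condition to verify is really ``no \emph{other} pair of indices in $[i',j']$ shares a tile type,'' which is exactly what the minimality of $j' - i'$ delivers.
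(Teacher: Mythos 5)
Your proof is correct and follows essentially the same extremal argument as the paper: the paper picks the pair with the smallest right endpoint $j'$, whereas you pick the pair minimizing the gap $j'-i'$, and in both cases minimality immediately rules out any internal repetition. The verification you give (that any matching pair inside $[i',j']$ would itself lie in $S$ and thus have gap at least $j'-i'$) is complete and, if anything, spelled out more carefully than the paper's one-line version.
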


\begin{proof}
Consider $j'=\min\{i<j'\leq j:$ there exists $i \leq i'<j'$ such that $\type{P_{i'}}=\type{P_{j'}}\}$. Since $\type{P_{i}}=\type{P_{j}}$, $j'$ is correctly defined and by its definition there exists $i\leq i'<j'$ such that  $P_{i',i'+1,\ldots, j'}$ is without redundancy.
\end{proof}

The two following lemmas show that paths without redundancy are easy to pump in both direction.

\begin{lemma}
\label{lem:redun:good}
Let $\mathcal T=(T,\sigma,1)$ be a directed tile assembly system and $\uniterm$ its unique terminal assembly. For any path without redundancy $P\in\inuniterm$  if $P+\vect{P_0P_{|P|-1}}$ is in $\inuniterm$ then $P$ is a good candidate. 
\end{lemma}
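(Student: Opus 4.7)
The plan is to verify the good-candidate condition from Definition~\ref{def:pumpingP} directly: the only intersection between $P$ and $P+\vect{P_0P_{|P|-1}}$ is an agreement between $P_{|P|-1}$ and $P_0+\vect{P_0P_{|P|-1}}$. The key initial observation is that because both $P$ and $P+\vect{P_0P_{|P|-1}}$ lie in $\inuniterm$, every tile they place is dictated by $\uniterm$; hence any coincidence of positions between the two paths is automatically an agreement. In particular, at the position $\pos{P_0}+\vect{P_0P_{|P|-1}}=\pos{P_{|P|-1}}$ both paths place a tile, so $\type{P_0}=\type{P_{|P|-1}}$, which is the hypothesis needed for the pumping of $P$ to be well defined in the first place. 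It remains only to rule out any \emph{other} intersection.

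Suppose, towards a contradiction, that $\pos{P_i}=\pos{P_j}+\vect{P_0P_{|P|-1}}$ for some indices $i,j$. Reading off $\uniterm$ at this common position gives $\type{P_i}=\type{P_j}$. If $i=j$, then $\vect{P_0P_{|P|-1}}=\vec 0$; but $P$ is a simple path of length at least two, so $\pos{P_0}\neq\pos{P_{|P|-1}}$, a contradiction. Hence $i\neq j$, and the no-redundancy assumption on $P$ forces $\{i,j\}=\{0,|P|-1\}$.

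Now I distinguish the two orientations. In the case $i=0$, $j=|P|-1$, the equation $\pos{P_0}=\pos{P_{|P|-1}}+\vect{P_0P_{|P|-1}}$ rewrites as $\pos{P_0}-\pos{P_{|P|-1}}=\pos{P_{|P|-1}}-\pos{P_0}$, forcing $\pos{P_0}=\pos{P_{|P|-1}}$ and again contradicting the simplicity of $P$. Only the case $i=|P|-1$, $j=0$ survives, and it yields exactly the allowed intersection $\pos{P_{|P|-1}}=\pos{P_0}+\vect{P_0P_{|P|-1}}$. Combined with the first paragraph, this shows $P$ is a good candidate.

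The argument is entirely bookkeeping and involves no deep idea; the only ``obstacle'' is recognising that the no-redundancy condition is exactly tailored to eliminate precisely those index pairs $(i,j)$ that could threaten the good-candidate property, which is presumably why the notion is introduced here.
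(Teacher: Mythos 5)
Your proof is correct and follows essentially the same argument as the paper's: any intersection $\pos{P_i}=\pos{P_j}+\vect{P_0P_{|P|-1}}$ forces $\type{P_i}=\type{P_j}$ because both paths lie in $\inuniterm$, then non-nullity of the vector gives $i\neq j$ and the no-redundancy hypothesis pins $\{i,j\}$ to $\{0,|P|-1\}$. Your extra checks (deriving $\type{P_0}=\type{P_{|P|-1}}$ and ruling out the orientation $i=0$, $j=|P|-1$) are details the paper leaves implicit, but the route is the same.
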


\begin{proof}
If  $P$ and $P+\vect{P_0P_{|P|-1}}$ intersect then there exist $0\leq i,j \leq |P|-1$ such that $P_i=P_j+\vect{P_0P_{|P|-1}}$. Since $\vect{P_0P_{|P|-1}}$ is not null then $i \neq j$. Moreover, since $P$ and $P+\vect{P_0P_{|P|-1}}$ are both in $\inuniterm$ then $\type{P_i}=\type{P_j}$ and thus $i=|P|-1$ and $j=0$.
\end{proof}

The intuition of the following lemma is that a bi-pumpable path without redundancy could be extracted from a path $Q$ if both extremities of $Q$ belong to bi-pumpable paths and have the same tile types. 

\begin{lemma}
\label{lem:redun:pump}
Let $\mathcal T=(T,\sigma,1)$ be a directed tile assembly system, $\uniterm$ its unique terminal assembly and two paths $P,P'$ of $\inuniterm$ which are bi-pumpable. If there exists a path $Q$ such that $\type{Q_0}=\type{Q_{|Q|-1}}$ and $Q_0$ (resp. $Q_{|Q|-1}$) is a tile of $\bipump{P}$ (resp. $\bipump{P'}$) then there exist $0\leq i < j \leq |Q|-1$ such that the path $Q_{i,\ldots,j}$ is a bi-pumpable path without redundancy of $\inuniterm$.
\end{lemma}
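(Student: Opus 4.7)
The plan is, in three stages: lift $Q$ and its translate into $\inuniterm$ via Lemma~\ref{lem:bipump:seed}, peel off a redundancy-free subpath $R$ with Lemma~\ref{lem:redun:exists}, and then bootstrap the bi-pumpability of $R$ through a further seed-replacement.

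I begin by applying Lemma~\ref{lem:bipump:seed} to $P$ and to $P'$: since $Q_0$ is a tile of $\bipump{P}$ and $Q_{|Q|-1}$ of $\bipump{P'}$, both $(T,Q_0,1)$ and $(T,Q_{|Q|-1},1)$ are directed tile assembly systems whose terminal assembly is $\uniterm$. In either re-rooted system, any tile sequence that starts at the seed and whose consecutive tiles bind is a producible path, hence lies in $\inuniterm$. Building $Q$ tile by tile from seed $Q_0$ therefore gives $Q\in\inuniterm$. Setting $\vect{w}=\vect{Q_0Q_{|Q|-1}}$, the first tile of $Q+\vect{w}$ agrees in both position and type with the seed $Q_{|Q|-1}$ (using $\type{Q_0}=\type{Q_{|Q|-1}}$), so the analogous construction yields $Q+\vect{w}\in\inuniterm$.

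Next, applying Lemma~\ref{lem:redun:exists} to $Q$ with $i=0$ and $j=|Q|-1$ gives indices $0\leq i'<j'\leq |Q|-1$ such that $R=Q_{i',\ldots,j'}$ is without redundancy and $\type{R_0}=\type{R_{|R|-1}}$; as a subpath of $Q$, the path $R$ lies in $\inuniterm$. Set $\vect{w_R}=\vect{R_0R_{|R|-1}}$. It remains to show $R$ is bi-pumpable. By Lemma~\ref{lem:redun:good}, the good-candidate condition for $R$ follows from $R+\vect{w_R}\in\inuniterm$, while $\bipump{R}\in\inuniterm$ then follows from $R+k\vect{w_R}\in\inuniterm$ for every $k\in\mathbb{Z}$. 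I would prove both by adapting the argument of Lemma~\ref{lem:bipump:seed}: Corollary~\ref{cor:periodic} applied to $P$ tells us that $\uniterm$ is $\vect{P_0P_{|P|-1}}$-periodic, and this ambient periodicity is the only property of $\bipump{P}$ used in that proof to translate hypothetical conflicts arbitrarily far from the original seed $\sigma$ and from any chosen finite producible witness. Invoking this periodicity with $R_{|R|-1}$ in the role of $\bipump{P}_i$ yields $R+\vect{w_R}\in\inuniterm$, and an induction on $|k|$, re-rooting at $R_{|R|-1}+(k-1)\vect{w_R}$ each time, extends this to all $k\in\mathbb{Z}$.

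The main obstacle is the bootstrap in the previous paragraph: Lemma~\ref{lem:bipump:seed} is formally stated only for tiles lying on a bi-pumpable path, whereas $R_{|R|-1}=Q_{j'}$ need not a priori lie on any such path. The resolution is to observe that the proof of that lemma really uses only the global periodicity of $\uniterm$ together with a producible path from $\sigma$ to the prospective new seed; the periodicity is supplied by Corollary~\ref{cor:periodic} and the producible path by the subpath $Q_{0,\ldots,j'}$ of $Q$, so the conflict-propagation argument transfers verbatim with $R_{|R|-1}$ in place of $\bipump{P}_i$.
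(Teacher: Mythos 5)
Your setup is fine and your extraction of the redundancy-free subpath $R=Q_{i',\ldots,j'}$ matches the paper's, but the step you yourself flag as the ``main obstacle'' is where the proof genuinely breaks, and your proposed resolution does not repair it. The proof of Lemma~\ref{lem:bipump:seed} does \emph{not} transfer verbatim to the seed $R_{|R|-1}=Q_{j'}$. In that proof, a hypothetical conflict is translated by $\ell\vect{P_0P_{|P|-1}}$ and the translated path is re-attached because the translated seed position is \emph{again a tile of the periodic path} $\bipump{P}$, reachable by simply growing $\bipump{P}$ further. If you re-root at $Q_{j'}$, the only access route from $\sigma$ to the (translated) seed is through a translate of $Q_{0,\ldots,j'}$ itself; a conflict between a path producible from $Q_{j'}$ and a tile $Q_k$ with $k<j'$ translates into a conflict with $Q_k+\ell\vect{v}$, which lies \emph{on the access path you must grow first}, so the translation trick buys nothing. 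Concretely, the danger the lemma must exclude is that $R+\vect{R_0R_{|R|-1}}$, or some further translate, conflicts with $Q$ — and that conflict cannot be moved away from $Q$, because each translate of $R$ is anchored to a tile of (a translate of) $Q$. Your induction ``re-rooting at $R_{|R|-1}+(k-1)\vect{w_R}$'' has the same flaw: none of those tiles is known to lie on a bi-pumpable path, so Lemma~\ref{lem:bipump:seed} never applies to them.

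This is exactly what the paper's proof spends most of its effort on, and it is why the hypothesis requires \emph{both} endpoints of $Q$ to sit on bi-pumpable paths. The paper splits into two cases: if no translate of $R$ conflicts with $Q$, then $Q\cup(R+\vect{R_0R_{|R|-1}})$ (and then $Q\cup\pump{R}$) is producible from the seed $Q_0$, and Lemma~\ref{lem:bipump:seed} applied at $Q_0\in\bipump{P}$ puts everything in $\inuniterm$. If some translate does conflict with $Q$, one extracts a path $R'$ starting at $Q_{j'}$ whose unique conflict is with some $Q_k$, $k\neq j'$, and grows only the prefix $Q_{0,\ldots,j'}$ (when $k>j'$, rooted at $Q_0\in\bipump{P}$) or only the suffix $Q_{j',\ldots,|Q|-1}$ (when $k<j'$, rooted at $Q_{|Q|-1}\in\bipump{P'}$) together with $R'$; the omitted half of $Q$ is where the conflict lives, so this assembly is producible in a directed system whose terminal assembly is $\uniterm$, forcing $R'\in\inuniterm$ while $R'$ conflicts with $Q\in\inuniterm$ — a contradiction. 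Your argument never confronts a conflict between the translates of $R$ and $Q$, and it makes no essential use of $P'$, which is a strong sign the case analysis is missing. (Your side observation that $Q+\vect{Q_0Q_{|Q|-1}}\in\inuniterm$ is correct but is not used and does not close the gap.)
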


\begin{proof}
Since $\type{Q_0}=\type{Q_{|Q|-1}}$ then by lemma \ref{lem:redun:exists}, there exists $0\leq i < j \leq |Q|-1$ such that the path $R=Q_{i,\ldots,j}$ is without redundancy. Note that we just need to prove than $R$ is in $\inuniterm$ and pumpable because $\reverse{R}$ satisfies the same hypothesis as $R$ and would also be pumpable. If both $R$ and $\reverse{R}$ are pumpable then $R$ is bi-pumpable.

If there is no conflict between $Q$ and $R+\ell\vect{R_0R_{|R|-1}}$ for any $\ell \in \N$, then in particular, there are no conflict between $Q$ and $R+\vect{R_0R_{|R|-1}}$ and thus the assembly $Q\cup(R+\vect{R_0R_{|R|-1}})$ is correctly defined. Then $Q\cup(R+\vect{R_0R_{|R|-1}})$ is producible by $(T,Q_0,1)$.  By lemma \ref{lem:bipump:seed} and since $Q_0$ is a tile of $\bipump{P}$, the tile assembly system $(T,Q_0,1)$ is directed and its terminal assembly is also $\alpha$, then $R$ and $R+\vect{R_0R_{|R|-1}}$ are both in $\inuniterm$ and by lemma \ref{lem:redun:good}, $R$ is a good candidate. Then $\pump{R}$ is correctly defined. Remind that we are in a case where $\pump{R}$ does not conflict with $Q$ then $Q\cup \pump{R}$ is producible by $(T,Q_0,1)$ and $\pump{R}$ is in $\inuniterm$.

Otherwise, suppose for the sake of contradiction that there is a conflict between $Q$ and $R+\ell\vect{R_0R_{|R|-1}}$ for some $\ell \in \N$, let $$\ell=\min\{\ell'\in \N: R+\ell'\vect{R_0R_{|R|-1}}\text{ conflicts with }Q\} \text{ and let }$$  $$m=\min\{0\leq k \leq |R|-1: R_{k}+\ell\vect{R_0R_{|R|-1}}\text{ conflicts with }Q\}$$ and $0\leq k \leq |Q|-1$ such that $\pos{Q_{k}}=\pos{R_{m}}$. If $\ell=1$, we defined $R'$ as $(R_{0,\ldots,m}+\vect{P_0P_{|P|-1}})$ and since $R'_0=R_0+\vect{P_0P_{|P|-1}}=R_{|R|-1}=Q_j$, we have $m>0$ and the conflict between $R'$ and $Q$ does not occur at $\pos{Q_j}$ since $R'$ is simple, \emph{i.e.} $k\neq j$. If $\ell>1$ as explained in the previous paragraph $\pump{R}$ is correctly defined and we defined $R'$ as $\pump{R}_{|R|-1,\ldots,(\ell-1)(|R|-1)+m}$ in this case. We also have $R'_0=Q_j$ and $R'_{|R'|-1}$ conflicts with $Q$ and this conflicts does not occur at $\pos{Q_j}$ since $R'$ is simple, \emph{i.e.} $k\neq j$. In both cases, $k\neq j$ and there is exactly one conflict between $R'$ and $Q$ with $Q_j\neq R'_{|R'|-1}$.
If $k>j$ (resp. $k<j$) then the assembly $Q_{0,\ldots, j} \cup R'$ (resp. $Q_{j,\ldots, |Q|-1} \cup R'$) is producible by $(T,Q_0,1)$ (resp. $(T,Q_{|Q|-1},1)$). By lemma \ref{lem:bipump:seed} and since $Q_0$ (resp. $Q_{|Q|-1}$) is a tile of $\bipump{P}$ (resp. $\bipump{P'}$), the tile assembly system $(T,Q_0,1)$ (resp. $(T,Q_{|Q|-1},1)$) is directed and its terminal assembly is also $\uniterm$, then $R'$ is in $\inuniterm$ but it conflicts with $Q$ which is producible by $(T,Q_0,1)$ (resp. $(T,Q_{|Q|-1},1)$) and thus also in $\inuniterm$. 

\end{proof}

Note that, we can use the previous lemma with $P=P'$ which implies that if there is a bi-pumpable path, there is a bi-pumpable path without redundancy.

\begin{corollary}
\label{cor:exists:without}
Let $\mathcal T=(T,\sigma,1)$ be a directed tile assembly system and $\uniterm$ be its unique terminal assembly. If there is a bi-pumpable path $P \in \alpha$ then there is a bi-pumpable path $Q$ without redundancy in $\alpha$. 
\end{corollary}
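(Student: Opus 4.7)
The plan is to deduce this corollary directly from Lemma~\ref{lem:redun:pump} by a single, well-chosen instantiation. Lemma~\ref{lem:redun:pump} takes as input two bi-pumpable paths in $\inuniterm$, call them $P$ and $P'$, together with a third path $Q$ whose first and last tiles share the same tile type and live on $\bipump{P}$ and $\bipump{P'}$ respectively, and produces a bi-pumpable sub-path of $Q$ without redundancy. The trick is to notice that nothing prevents us from feeding the same bi-pumpable path three times into the hypothesis.

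\textbf{Instantiation.} Let $P\in\inuniterm$ be the given bi-pumpable path. I will apply Lemma~\ref{lem:redun:pump} with the choice $P'=P$ and $Q=P$. For this to be legal I need to verify the hypotheses: $P$ and $P$ are bi-pumpable paths of $\inuniterm$ (by assumption); $\type{Q_0}=\type{P_0}=\type{P_{|P|-1}}=\type{Q_{|Q|-1}}$, where the middle equality holds because $P$ is bi-pumpable and hence $\bipump{P}$ is well defined, which requires $\type{P_0}=\type{P_{|P|-1}}$ by Definition~\ref{def:pumpingP}; and finally $Q_0=P_0$ and $Q_{|Q|-1}=P_{|P|-1}$ are both tiles of $\bipump{P}$ trivially.

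\textbf{Conclusion.} Lemma~\ref{lem:redun:pump} then yields indices $0\leq i<j\leq |P|-1$ such that $Q=P_{i,\ldots,j}$ is a bi-pumpable path without redundancy in $\inuniterm$, which is exactly the conclusion of the corollary.

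\textbf{Potential obstacle.} There is essentially no obstacle here; the only subtlety is checking that the type equality $\type{P_0}=\type{P_{|P|-1}}$ really is built into the definition of bi-pumpability (it is, via Definition~\ref{def:pumpingP} governing when $\bipump{P}$ is defined). If one wanted to avoid appealing to Lemma~\ref{lem:redun:pump} directly, the alternative would be to re-run its argument with $Q=P$, but this merely reproduces the same proof, so the short application above is the cleanest route.
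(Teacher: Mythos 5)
Your proof is correct and follows essentially the same route as the paper: the paper also reduces the corollary to Lemma~\ref{lem:redun:pump} with $P=P'$ (it even says so in the remark preceding the corollary), merely pre-extracting the redundancy-free subpath via Lemma~\ref{lem:redun:exists} before invoking Lemma~\ref{lem:redun:pump}, a step your direct instantiation with $Q=P$ absorbs into the lemma itself. Your verification that $\type{P_0}=\type{P_{|P|-1}}$ follows from bi-pumpability (via the good-candidate requirement in Definition~\ref{def:pumpingP}) is exactly the point the paper also checks.
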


\begin{proof}
If a path $P$ of $\inuniterm$ is bi-pumpable then by definition of bi-pumpable, we have $\type{P_0}=\type{P_{|P|-1}}$. Then, by lemma \ref{lem:redun:exists}, there exists $0\leq i <j \leq |P|-1$ such that $Q=P_{i,\ldots,j}$ is without redundancy. Since both extremities of $Q$ agree with $P$ (and are in $\bipump{P}$), then by lemma \ref{lem:redun:pump}, $Q$ is bi-pumpable.
\end{proof}

The following lemma will be useful to partitioned the discrete plan in the next section (as explained in section \ref{subsubsec:defs:cuting}). Indeed, if a vector $\vect{v}$ is not collinear with $\vect{P_0P_{|P|-1}}$ where $P$ is a bi-pumpable path without redundancy then $P$ and $P+\vect{v}$ cannot intersect.
 
\begin{lemma}
\label{lem:redun:nointer}
Let $\mathcal T=(T,\sigma,1)$ be a directed tile assembly system and $\uniterm$ be its unique terminal assembly. Consider a path $P$ without redundancy and some vector $\vect{v} \in \Z^2$ such that both $P$ and $P+\vect{v}$ are in $\inuniterm$ and bi-pumpable. If $\bipump{P}$ and $\bipump{P}+\vect{v}$ intersect, we have $\vect{v}=\ell\vect{P_0P_{|P|-1}}$ for some $\ell \in \Z$ and $\bipump{P}=\bipump{P}+\vect{v}$. 
\end{lemma}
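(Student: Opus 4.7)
The plan is to reduce the intersection hypothesis to an equality of tile types at two positions along one period of $P$, and then exploit the absence of redundancy to pin down the translation. Write $n=|P|-1$ to keep formulas short.

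First I would observe that $\bipump{P+\vect{v}}=\bipump{P}+\vect{v}$ holds directly from Definition~\ref{def:pumpingP}, since translating $P$ by $\vect{v}$ leaves the endpoint vector $\vect{P_0P_{|P|-1}}$ unchanged. Because $P+\vect{v}$ is bi-pumpable, this places both $\bipump{P}$ and $\bipump{P}+\vect{v}$ inside $\inuniterm$, i.e.\ both are subassemblies of $\uniterm$.

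Next, at any position shared by these two subassemblies of the unique terminal assembly $\uniterm$, the tile types must coincide---both equal whatever tile type $\uniterm$ places there---so any intersection is automatically an agreement. I would therefore pick $k,k'\in\Z$ with $\pos{\bipump{P}_k}=\pos{\bipump{P}_{k'}}+\vect{v}$ and $\type{\bipump{P}_k}=\type{\bipump{P}_{k'}}$. By Definition~\ref{def:pumpingP}, $\type{\bipump{P}_k}=\type{P_{k \bmod n}}$ and similarly for $k'$, with both residues in $\{0,1,\ldots,n-1\}$. The no-redundancy hypothesis then forces $k\equiv k'\pmod n$: the tile types $\type{P_0},\ldots,\type{P_{n-1}}$ are pairwise distinct.

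From there the conclusion is bookkeeping: substituting $k\equiv k'\pmod n$ into the position formula of Definition~\ref{def:pumpingP} cancels the $P_{k\bmod n}$ contribution and leaves $\vect{v}=\bigl(\lfloor k/n\rfloor-\lfloor k'/n\rfloor\bigr)\vect{P_0P_{|P|-1}}=\ell\,\vect{P_0P_{|P|-1}}$ with $\ell\in\Z$. The $\vect{P_0P_{|P|-1}}$-periodicity of $\bipump{P}$ then gives $\bipump{P}+\vect{v}=\bipump{P}+\ell\vect{P_0P_{|P|-1}}=\bipump{P}$. I do not foresee any real obstacle; the only non-routine ingredient is the ``intersect implies agree'' step, and that is just directedness applied to two subassemblies of $\uniterm$.
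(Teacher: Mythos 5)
Your proposal is correct and follows essentially the same route as the paper's proof: an intersection of two subassemblies of $\uniterm$ is an agreement, the without-redundancy hypothesis forces the two intersecting tiles to be copies of the same tile of $P$ in different periods, so $\vect{v}$ is an integer multiple of $\vect{P_0P_{|P|-1}}$, and periodicity of $\bipump{P}$ finishes the argument. Your version is slightly more explicit than the paper's (in particular the index bookkeeping modulo $|P|-1$), but there is no substantive difference.
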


\begin{proof}
If $\bipump{P}$ and $\bipump{P}+\vect{v}$ intersect then this intersection is not a conflict since $\bipump{P}$ and $\bipump{P}+\vect{v}$ are both in $\inuniterm$. Thus there exists $i,j \in \Z$ such that $\bipump{P}_i=\bipump{P}_j+\vect{v}$. Since $P$ is without redundancy then for all $k \in \Z$ such that $\type{\bipump{P}_k}=\type{\bipump{P}_i}$, there exists $\ell$ such that $P_k=P_i+\ell\vect{P_0P_{|P|-1}}$ and thus there exists $\ell \in \Z$ such that $\vect{v}=\ell\vect{P_0P_{|P|-1}}$ for some $\ell \in \Z$. Finally, since $\bipump{P}$ is $\vect{P_0P_{|P|-1}}$-periodic then $\bipump{P}=\bipump{P}+\ell\vect{P_0P_{|P|-1}}=\bipump{P}+\vect{v}$.
\end{proof}

\subsubsection{Reusing a tile type}

Suppose that a tile $A$ of $\alpha$ shares a common tile type with a tile used in a bi-pumpable path $P$ then either $A$ is a tile of $\bipump{P}$ or not. In the second case, we show in this section that $\alpha$ is bi-periodic. This remark will allow us to describe efficiently the bi-periodic terminal assemblies and prove one of the main theorem. We proceed in four steps, first we show that $\bipump{P}+\vect{P_iA}$ can grow from $A$ (lemma \ref{biperiodic:translation}), then we show that $\vect{P_iA}$ cannot be collinear with $\vect{P_0P_{|P|-1}}$ (lemma \ref{biperiodic:notcol} and lemma \ref{biperiodic:sameperiod}), afterwards we can find a bi-pumpable $Q$ without redundancy such that $\vect{Q_0Q_{|Q|-1}}$ is not collinear with $\vect{P_0P_{|P|-1}}$ (see lemma \ref{biperiodic:otherpath}) and we conclude by proving one of the main theorem \ref{the:MT1bis} (using lemma \ref{alpha:type1}) by assembling a simple cycle, doing any possible binding in the interior of this cycle and concluding that $\alpha$ is obtained by tiling the $2D$ plane with translations of this assembly.

\begin{lemma}
\label{biperiodic:translation}
Let $\mathcal T=(T,\sigma,1)$ be a directed tile assembly system, $\uniterm$ its unique terminal assembly and a bi-pumpable path $P\in \inuniterm$ without redondancy. If there exist $i \in Z$ and a tile $A$ in $\alpha$ such that $type(A)=type(P_i)$ then $\bipump{P}+\vect{P_iA}$ is in $\inuniterm$. 
\end{lemma}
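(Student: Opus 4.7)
The argument follows the pattern of Lemma~\ref{lem:bipump:seed}: I would suppose $\bipump{P}+\vect{P_iA}\notin\inuniterm$ and derive a contradiction with directedness, using $P_i$ as an auxiliary seed.

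First, apply Lemma~\ref{lem:bipump:seed} to $P$: the system $(T,P_i,1)$ is directed with terminal $\uniterm$. Since $A\in\uniterm$, the observations on producible paths yield a finite producible path $S$ of $(T,P_i,1)$ from $P_i$ to $A$, and by Lemma~\ref{lem:bipump:seed} again, $S\in\inuniterm$. Now set $B=\{k\in\Z:\bipump{P}_k+\vect{P_iA}\text{ disagrees with }\uniterm\}$, noting $i\notin B$ since $\bipump{P}_i+\vect{P_iA}=A$. By $\vect{P_0P_{|P|-1}}$-periodicity of both $\uniterm$ (Corollary~\ref{cor:periodic}) and $\bipump{P}$, the set $B$ is $(|P|-1)$-periodic; so if $B\neq\emptyset$ then (after possibly replacing $P$ by its reverse) there exists $k_0=\min\{k>i:k\in B\}$ with $k_0-i<|P|$, while the prefix $R=\bipump{P}_{i,i+1,\ldots,k_0-1}+\vect{P_iA}$ is a finite path lying entirely in $\inuniterm$.

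I would then construct a finite producible assembly $\beta$ of $(T,P_i,1)$ (equivalently, by Lemma~\ref{lem:bipump:seed}, of $(T,\sigma,1)$, hence $\beta\sqsubseteq\uniterm$) containing $S$ followed by the tiles of $R$, placed one by one starting from $A$. Each new tile binds to its predecessor by translation-invariance of glues; if the target position is already in the growing assembly, the existing tile type matches automatically by $\beta\sqsubseteq\uniterm$ together with the minimality of $k_0$. Once $\beta$ contains the tile $\bipump{P}_{k_0-1}+\vect{P_iA}$, the tile $\bipump{P}_{k_0}$ would bind at $p_{\mathrm{bad}}=\pos{\bipump{P}_{k_0}}+\vect{P_iA}$ through the translated glue. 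Placing it yields a producible $\beta'$ with $\beta'(p_{\mathrm{bad}})=\type{\bipump{P}_{k_0}}$, which disagrees with $\uniterm$ at $p_{\mathrm{bad}}$: either $p_{\mathrm{bad}}\notin\domain{\uniterm}$, contradicting the terminality of $\uniterm$, or $\uniterm(p_{\mathrm{bad}})\neq\type{\bipump{P}_{k_0}}$, so $\beta'\not\sqsubseteq\uniterm$, contradicting directedness.

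The delicate step is guaranteeing $p_{\mathrm{bad}}\notin\domain{\beta}$ so that $\bipump{P}_{k_0}$ can actually be placed. The prefix $R$ avoids $p_{\mathrm{bad}}$ because $\bipump{P}$ is self-avoiding, so only $S$ could obstruct. I would exploit the fact that, by periodicity, the statement is invariant under replacing $A$ by $A+\ell\vect{P_0P_{|P|-1}}$ for any $\ell\in\Z$ (the tile is still in $\uniterm$ with the same type, and $\bipump{P}+\vect{P_iA}$ is unchanged as a set of positioned tiles). Taking $\ell$ large, one builds $S$ as a concatenation of $\ell$ translated copies of a path in $\uniterm$ connecting successive translates of $A$, routing around the final translate of $p_{\mathrm{bad}}$; this is the main technical point, and its justification follows the same ``shift by $\vect{P_0P_{|P|-1}}$'' mechanism used throughout the proof of Lemma~\ref{lem:bipump:seed}.
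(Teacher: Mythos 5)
Your overall strategy matches the paper's: reduce to the seed $P_i$ via Lemma~\ref{lem:bipump:seed}, connect $P_i$ to $A$ by a path lying in $\uniterm$, and then force the translated bi-pumping to agree with $\uniterm$ by deriving a contradiction with directedness at the first disagreement. You also correctly isolate the one delicate point. But your resolution of that point is a genuine gap. Shifting $A$ to $A+\ell\vect{P_0P_{|P|-1}}$ does not achieve what the analogous shift achieves inside Lemma~\ref{lem:bipump:seed}: there, a finite path is translated away from a \emph{fixed} finite assembly, whereas here the obstructing position is replaced by $p_{\mathrm{bad}}+\ell\vect{P_0P_{|P|-1}}$, i.e.\ it moves together with the endpoint you are trying to reach, so no separation is gained. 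Moreover the proposed connecting path --- a concatenation of $\ell$ translated copies of a path joining consecutive translates of $A$ --- need not be simple, and ``routing around'' the final translate of $p_{\mathrm{bad}}$ is asserted without any argument that such a detour exists in $\uniterm$.

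The fix is available inside your own setup, and it is exactly the move the paper makes. Suppose the connecting path $S$ conflicts with $\bipump{P}+\vect{P_iA}$ at some position (this covers in particular the case where $S$ occupies $p_{\mathrm{bad}}$ with the tile type $\uniterm(p_{\mathrm{bad}})$ rather than $\type{\bipump{P}_{k_0}}$). Translate back by $-\vect{P_iA}$: the path $S-\vect{P_iA}$ ends at $A-\vect{P_iA}=P_i$, so its reverse is producible by $(T,P_i,1)$ and hence, by Lemma~\ref{lem:bipump:seed}, lies in $\inuniterm$; but it conflicts with $\bipump{P}\in\inuniterm$, an immediate contradiction. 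Therefore $S$ cannot conflict with the translated path at all, every intersection between them is an agreement, and $S\cup(\bipump{P}+\vect{P_iA})$ is producible from the seed $P_i$ (already-occupied positions are simply skipped), which yields $\bipump{P}+\vect{P_iA}\in\inuniterm$ directly --- no comparison set $B$, no placement of a conflicting tile, and no routing construction are needed.
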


\begin{proof}
First, if there exists $j\in \Z$ such that $A$ is the tile $\bipump{P}_j$ then $\bipump{P}+\vect{P_iA}$ is producible by $(T,\bipump{P}_j,1)$  and by lemma \ref{lem:bipump:seed}, $(T,\bipump{P}_j,1)$ is directed and its terminal assembly is also $\alpha$. Then $\bipump{P}+\vect{P_iA}$ is in $\inuniterm$ in this case. Otherwise, since $A$ is a tile of $\alpha$ and $\bipump{P}$ is in $\inuniterm$, there exists a path $Q\in \inuniterm$ and $j\in \Z$ such that $Q$ grows on $\bipump{P}$ at index $j$ and $Q_{|Q|-1}=A$. If $Q$ does not conflict with $\bipump{P}+\vect{P_iA}$ then $Q\cup (\bipump{P}+\vect{P_iA})$ is an assembly producible by $(T,\bipump{P}_j,1)$ and by lemma \ref{lem:bipump:seed}, $(T,\bipump{P}_j,1)$ is directed and its terminal assembly is also $\alpha$. Thus $\bipump{P}+\vect{P_iA}$ is in $\inuniterm$. Otherwise, for the sake of contradiction suppose that $Q$ conflicts with $\bipump{P}+\vect{P_iA}$ then $Q-\vect{P_iA}$ conflicts with $\bipump{P}$. Moreover, since $Q_{|Q|-1}-\vect{P_iA}=A-\vect{P_iA}=P_i$ then $Q-\vect{P_iA}$ is producible by $(T,\bipump{P}_i,1)$ and by lemma \ref{lem:bipump:seed}, $(T,\bipump{P}_i,1)$ is directed and its terminal assembly is also $\alpha$. In this case, $Q-\vect{P_iA}$ is in $\inuniterm$ which is a contradiction since it conflicts with $\bipump{P}\in \inuniterm$.
\end{proof}

\begin{lemma}
\label{biperiodic:notcol}
Let $\mathcal T=(T,\sigma,1)$ be a directed tile assembly system, $\uniterm$ its unique terminal assembly and a bi-pumpable path $P\in \inuniterm$ without redondancy. If there exists $0\leq i \leq |P|-1$ and a tile $A$ of $\uniterm$ such that $type(A)=type(P_i)$ and $\vect{P_iA}$ is collinear with $\vect{P_0P_{|P|-1}}$ then $A$ is a tile of $\bipump{P}$.
\end{lemma}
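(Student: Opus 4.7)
Setting $\vect u = \vect{P_0 P_{|P|-1}}$ and $\vect v = \vect{P_i A}$, the hypothesis is that $\vect v$ is collinear with $\vect u$; since both are integer vectors, $\vect v$ is a rational multiple of $\vect u$. The goal is to show $\vect v$ is in fact an integer multiple of $\vect u$, because once $\vect v = \ell \vect u$ for some $\ell \in \mathbb{Z}$, the position $\pos A = \pos{P_i} + \ell \vect u$ coincides with the position of the tile $\bipump P_{i+\ell(|P|-1)}$, and since both $\bipump P \in \inuniterm$ and $A \in \uniterm$ agree with the terminal assembly, this forces $A$ to be exactly that tile of $\bipump P$.

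The first step is to apply lemma \ref{biperiodic:translation} to $P$ and $A$ (whose type matches $\type{P_i}$) to obtain $\bipump P + \vect v \in \inuniterm$. The translated path $P+\vect v$ is again without redundancy (being a translate of $P$) and bi-pumpable, since $\bipump{P+\vect v}=\bipump P+\vect v\in\inuniterm$. Hence lemma \ref{lem:redun:nointer} applies: any intersection between $\bipump P$ and $\bipump P + \vect v$ forces $\vect v = \ell \vect u$, which is exactly the conclusion we want. The whole proof therefore reduces to producing a single shared position between $\bipump P$ and $\bipump P + \vect v$.

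To exhibit such a position I focus on $\pos A = \pos{P_i}+\vect v$: it already lies on $\bipump P + \vect v$ and carries tile type $\type{P_i}$, so if $\bipump P$ happens to have a tile at $\pos A$ we are immediately done (that tile would agree with $\uniterm$, hence have type $\type{P_i}$, and the intersection is produced). Otherwise I use connectedness of $\uniterm$ to choose a binding path in $\uniterm$ from $P_i$ to $A$; its two endpoints share the tile type $\type{P_i}$, so lemma \ref{lem:redun:exists} extracts from it a subpath without redundancy with matching-type endpoints, and lemma \ref{lem:redun:pump}, applied with the two bi-pumpable paths $P$ and $P+\vect v$, promotes that subpath into a bi-pumpable path $R$ without redundancy in $\inuniterm$. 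Corollary \ref{cor:periodic} then makes $\uniterm$ periodic along $\vect{R_0 R_{|R|-1}}$; combining this periodicity with the $\vect u$-periodicity coming from $\bipump P$ places a tile of $\bipump P$ at position $\pos A$, giving the needed intersection and finishing the proof.

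The main obstacle is precisely this last step: arguing that the period $\vect{R_0 R_{|R|-1}}$ extracted from the connecting path, together with the collinearity of $\vect v$ with $\vect u$ and the without-redundancy of $P$, is enough to force $\pos A \in \pos{\bipump P}$. The without-redundancy of $P$ is essential here because it is what pins down the $\vect u$-lattice $\{\pos{P_i}+\ell\vect u:\ell\in\mathbb{Z}\}$ as the complete set of positions of type $\type{P_i}$ inside $\bipump P$, so that once intersection is produced, lemma \ref{lem:redun:nointer} immediately delivers the required quantization of $\vect v$ along $\vect u$.
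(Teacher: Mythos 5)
Your reduction is the right one, and it matches the paper's strategy up to the critical step: apply Lemma~\ref{biperiodic:translation} to get $\bipump{P}+\vect{P_iA}\in\inuniterm$, exhibit a single common position between $\bipump{P}$ and $\bipump{P}+\vect{P_iA}$, and then let Lemma~\ref{lem:redun:nointer} force $\vect{P_iA}=\ell\vect{P_0P_{|P|-1}}$ and $\bipump{P}=\bipump{P}+\vect{P_iA}$. The gap is in how you propose to produce that common position. Extracting a bi-pumpable path $R$ without redundancy from a connecting path $Q$ (via Lemmas~\ref{lem:redun:exists} and~\ref{lem:redun:pump}) only tells you, through Corollary~\ref{cor:periodic}, that $\uniterm$ is $\vect{R_0R_{|R|-1}}$-periodic for some vector that has no controlled relation to $\vect{P_iA}$ or to $\vect{P_0P_{|P|-1}}$. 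Periodicity of the \emph{assembly} $\uniterm$ says nothing about which positions the \emph{specific path} $\bipump{P}$ occupies: if, say, $\vect{P_0P_{|P|-1}}=(2,0)$ and $\vect{P_iA}=(1,0)$, the position $\pos{A}=\pos{P_i}+(1,0)$ can perfectly well fail to lie on $\bipump{P}$ no matter how many periods $\uniterm$ has, so your final ``combining'' step does not go through. Note also that you cannot patch this by invoking Lemma~\ref{biperiodic:sameperiod} to quantize the period of $R$ against that of $P$: the proof of that lemma itself uses Lemma~\ref{biperiodic:notcol}, so the argument would be circular.

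The missing idea is geometric, not combinatorial, and this is exactly what the paper supplies. Because $\vect{P_iA}$ is collinear with $\vect{P_0P_{|P|-1}}$, translating by $\vect{P_iA}$ preserves the height function associated with the direction $\vect{P_0P_{|P|-1}}$ (the function $\height$ of Appendix~\ref{appB:curves}); hence $\bipump{P}+\vect{P_iA}$ lives in the same strip between the two tangent lines $\ell^+$ and $\ell^-$ of $\embed{\bipump{P}}$. Taking $P_j$ on $\ell^+$ and $P_k$ on $\ell^-$, the translated points $P_j+\vect{P_iA}$ and $P_k+\vect{P_iA}$ lie respectively in $\leftside{\bipump{P}}$ and $\rightside{\bipump{P}}$ (checked via the perpendicular rays $r^{\pm}$, which still avoid $\bipump{P}$ after this translation). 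A connected curve with a point on each side of the cut given by Theorem~\ref{thm:infinite-jordan} must meet $\bipump{P}$, and that intersection is what feeds Lemma~\ref{lem:redun:nointer}. Without some version of this strip-and-crossing argument, your proof does not establish the intersection it needs.
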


\begin{proof}
See Figure \ref{fig:geo} for an illustation of this reasoning and appendix \ref{appB:curves} for more technical details. Consider the line $\ell^+$ (resp. $\ell^-$) of direction $\vect{P_0P_{|P|-1}}$ included into $\mathcal{L}({\bipump{P}})$ (resp. $\mathcal{R}({\bipump{P}})$) which is tangent to $\embed{\bipump{P}}$. Then there exists $0\leq j \leq |P|-1$ (resp. $0\leq k \leq |P|-1$), such that $\ell^+$ (resp. $\ell^-)$ passes by $\pos{P_j}$ (resp. $\pos{P_k}$). Let $v^+$ (resp. $v^-$) be $\vect{P_0P_{|P|-1}}$ rotated by $\pi/2$ (resp. $-\pi/2$). Then the ray $r^+$ (resp. $r^-$) starting in $\pos{P_j}$ (resp. $\pos{P_k}$) of direction $v^+$ (resp. $v^-$) intersects $\bipump{P}$ only at $\pos{P_j}$ (resp. $\pos{P_k}$) and is included into $\mathcal{L}(\bipump{P})$ (resp. $\mathcal{R}(\bipump{P})$). By hypothesis, $P_i+\vect{P_iA}=A$ is a tile of $\uniterm$ and by lemma \ref{biperiodic:translation}, $\bipump{P}+\vect{P_iA}$ is in $\inuniterm$. Thus $P_j+\vect{P_iA}$ and $P_k+\vect{P_iA}$ are both tile of $\uniterm$. Since $\vect{P_0P_{|P|-1}}$ and $\vect{P_iA}$ are collinear, the ray $r^++\vect{P_iA}$ (resp. $r^-+\vect{P_iA}$) still belongs to $\mathcal{L}(\bipump{P})$ (resp. $\mathcal{R}(\bipump{P})$) and may intersect $\bipump{P}$ only at $\pos{P_j}+\vect{P_iA}$ (resp. $\pos{P_k}+\vect{P_iA}$) thus $\pos{P_j}+\vect{P_iA}$ is in $\leftside{\bipump{P}}$ (resp. $\pos{P_k}+\vect{P_iA}$ is in $\rightside{\bipump{P}}$). Then there is an intersection between $\bipump{P}$ and $\bipump{P}+\vect{P_iA}$ which means by lemma \ref{lem:redun:nointer} that $\bipump{P}=\bipump{P}+\vect{P_iA}$. Finally, $A=P_i+\vect{P_iA}$ is thus a tile of $\bipump{P}$. 

\begin{figure}
\centering
\begin{tikzpicture}[x=0.3cm,y=0.3cm]


\fill[red!10!white] (0,0) |- (36,21) |- (30.5,16.5) |- (28.5,15.5) |- (25.5,14.5) |- (24.5,12.5) |- (18.5,10.5) |-(16.5,9.5) |- (13.5,8.5) |- (12.5,6.5) |- (6.5,4.5) |- (4.5,3.5) |- (1.5,2.5) |- (0,0);
\fill[blue!10!white] (0,0) -| (36,18) |- (30.5,16.5) |- (28.5,15.5) |- (25.5,14.5) |- (24.5,12.5) |- (18.5,10.5) |-(16.5,9.5) |- (13.5,8.5) |- (12.5,6.5) |- (6.5,4.5) |- (4.5,3.5) |- (1.5,2.5) |- (0.5,0.5) |- (0,0);

\draw[very thick] (12.5,6.5) -| (13.5,8.5)  -| (16.5,9.5) -| (18.5,10.5)  -| (24.5,12.5);
\draw[thick,dotted] (0.5,0) -| (0.5,0.5) -| (1.5,2.5)  -| (4.5,3.5) -| (6.5,4.5)  -| (12.5,6.5);
\draw[thick,dotted] (24.5,12.5) -| (25.5,14.5)  -| (28.5,15.5) -| (30.5,16.5)  -| (36,16.5);

\draw[red] (0,1.75) -- (36,19.75);
\draw[blue] (3.75,0) -- (36,16.25);
\draw (1,3) node {$\ell^+$};
\draw (8, 1) node {$\ell^-$};


\draw[thick,->] (7,8)--(19,14);
\draw[thick,->] (7,8)--(1,20);
\draw (15, 14) node {$\vect{P_0P_{|P|-1}}$};
\draw (29.5, 11.25) node {$\vect{P_iA}$};
\draw (5.5, 14) node {$\vect{v^+}$};

\draw (10.8,6) node {$P_0$};
\draw (13.5,9.5) node {$P_j$};
\draw (26.8,12.8) node {$P_{|P|-1}$};
\draw (25.5,10) node {$P_{k}$};

\drawt{12}{6}
\drawt{13}{6}
\drawt{13}{7}
\drawt{13}{8}
\drawt{14}{8}
\drawt{15}{8}
\drawt{16}{8}
\drawt{16}{9}
\drawt{17}{9}
\drawt{18}{9}
\drawt{18}{10}
\drawt{19}{10}
\drawt{20}{10}
\drawt{21}{10}
\drawt{22}{10}
\drawt{23}{10}
\drawt{24}{10}
\drawt{24}{11}
\drawt{24}{12}

\draw[dashed,blue] (24.5,10.5) -- (30.75,0);
\draw[dashed,blue] (32.5,14.5) -- (36,7.5);
\draw[dashed,red] (13.5,8.5) -- (7.25,21);
\draw[dashed,red] (21.5,12.5) -- (17.25,21);
\draw[thick,->] (13.5,8.5) -- (21.5,12.5);
\draw[thick,->] (24.5,10.5) -- (32.5,14.5);
\end{tikzpicture}
\caption{Illustration of the proof of lemma \ref{biperiodic:notcol}. We consider a path $P$ such that that $\bipump{P}$ cuts the $2D$ plane in two parts. The translation of $P_j$ by $\vect{P_iA}$ is in the left side (in red) and the translation of $P_k$ by $\vect{P_iA}$ is in the right side (in blue). Then $\bipump{P}$ intersects with its translation by $\vect{P_iA}$.}
\label{fig:geo}
\end{figure}

\end{proof}

\begin{lemma}
\label{biperiodic:sameperiod}
Let $\mathcal T=(T,\sigma,1)$ be a directed tile assembly system, $\uniterm$ its unique terminal assembly and two bi-pumpable paths $P,Q\in \inuniterm$ without redondancy. If $\vect{P_0P_{|P|-1}}$ is collinear with $\vect{Q_0Q_{|Q|-1}}$ then $\vect{P_0P_{|P|-1}}=s\vect{Q_0Q_{|Q|-1}}$ for some $s\in\{-1,1\}$.
\end{lemma}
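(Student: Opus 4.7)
The plan is to combine Corollary~\ref{cor:periodic}, which yields that $\uniterm$ is $\vect{q}$-periodic with $\vect{q}:=\vect{Q_0Q_{|Q|-1}}$, with Lemma~\ref{biperiodic:notcol}, which forces any tile of $\uniterm$ whose tile type matches that of some $P_i$ to lie on $\bipump{P}$ provided its offset from $P_i$ is collinear with $\vect{p}:=\vect{P_0P_{|P|-1}}$. Together these two facts should pin $\vect{q}$ to be an integer multiple of $\vect{p}$; a symmetric argument will pin $\vect{p}$ to be an integer multiple of $\vect{q}$, and the integer constraint will force $\vect{p}=\pm\vect{q}$.

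Concretely, I first consider the tile $A:=P_0+\vect{q}$. By Corollary~\ref{cor:periodic} applied to the bi-pumpable path $Q$, the terminal assembly $\uniterm$ is $\vect{q}$-periodic, so $A$ is a tile of $\uniterm$ with $\type{A}=\type{P_0}$. Since $\vect{P_0A}=\vect{q}$ is collinear with $\vect{p}$ by hypothesis, Lemma~\ref{biperiodic:notcol} (applied with $i=0$) yields that $A$ is a tile of $\bipump{P}$.

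Next, I will exploit the without-redundancy hypothesis on $P$: inside a single period $P_{0,1,\ldots,|P|-1}$, the only pair of distinct indices sharing the tile type $\type{P_0}$ is $\{0,|P|-1\}$. Since $\bipump{P}$ is $\vect{p}$-periodic with period $|P|-1$, the tiles of $\bipump{P}$ whose tile type is $\type{P_0}$ are exactly $\{P_0+k\vect{p}:k\in\Z\}$. Hence there exists $k\in\Z$ with $\vect{q}=k\vect{p}$, and $k\neq 0$ since $\vect{q}\neq\vect{0}$ (a consequence of $\bipump{Q}$ being self-avoiding). Exchanging the roles of $P$ and $Q$, the symmetric argument produces an integer $m\neq 0$ with $\vect{p}=m\vect{q}$. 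Hence $mk=1$ in $\Z$, so $m=k\in\{-1,1\}$, which gives $\vect{P_0P_{|P|-1}}=\pm\vect{Q_0Q_{|Q|-1}}$ as required.

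The whole argument is just a combination of the two preceding lemmas together with the elementary integer fact that $mk=1$ in $\Z$ forces $m=k=\pm 1$. The only point that deserves slight care is verifying that Lemma~\ref{biperiodic:notcol} actually applies to $A$, but both of its hypotheses (the tile being in $\uniterm$ and the offset being collinear with $\vect{p}$) follow directly from the setup; I do not anticipate a substantive obstacle.
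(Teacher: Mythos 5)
Your proposal is correct and follows essentially the same route as the paper's proof: apply Corollary~\ref{cor:periodic} to $Q$ to place the tile $P_0+\vect{Q_0Q_{|Q|-1}}$ in $\uniterm$, invoke Lemma~\ref{biperiodic:notcol} together with the without-redundancy hypothesis to force $\vect{Q_0Q_{|Q|-1}}=\ell\,\vect{P_0P_{|P|-1}}$ with $\ell\in\Z^*$, then symmetrize and use $\ell\ell'=1$. Your explicit justification that the tiles of $\bipump{P}$ of type $\type{P_0}$ are exactly $\{P_0+k\vect{P_0P_{|P|-1}}:k\in\Z\}$ is a small but welcome elaboration of a step the paper leaves implicit.
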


\begin{proof}
By corollary \ref{cor:periodic}, $\uniterm$ is $\vect{Q_0Q_{|Q|-1}}$-periodic which means that $P_0+\vect{Q_0Q_{|Q|-1}}$ is a tile of $\uniterm$. By lemma \ref{biperiodic:notcol}, since $\vect{P_0P_{|P|-1}}$ and $\vect{Q_0Q_{|Q|-1}}$ are collinear and since $P$ is without redundancy then $P_0+\vect{Q_0Q_{|Q|-1}}$ is in $\bipump{P}$. Since $\vect{P_0P_{|P|-1}}$ and $\vect{Q_0Q_{|Q|-1}}$ are not null then $\vect{Q_0Q_{|Q|-1}}=\ell\vect{P_0P_{|P|-1}}$ for some $\ell \in \Z^*$. Similarly, $\uniterm$ is $\vect{P_0P_{|P|-1}}$-periodic and $\vect{P_0P_{|P|-1}}=\ell'\vect{Q_0Q_{|Q|-1}}$ for some $\ell' \in \Z^*$. Then either $\vect{P_0P_{|P|-1}}=\vect{Q_0Q_{|Q|-1}}$ or $\vect{P_0P_{|P|-1}}=-\vect{Q_0Q_{|Q|-1}}$.
\end{proof}

\begin{lemma}
\label{biperiodic:otherpath}
Let $\mathcal T=(T,\sigma,1)$ be a directed tile assembly system, $\uniterm$ its unique terminal assembly and a bi-pumpable path $P\in \inuniterm$ without redundancy.
If there is a tile $A$ of $\uniterm$ such that $type(A)=type(P_i)$ and $\vect{P_iA}$ is not collinear with $\vect{P_0P_{|P|-1}}$ then there exists a bi-pumpable path $Q$ without redundancy in $\inuniterm$ such that $\vect{P_0P_{|P|-1}}$ is not collinear with $\vect{Q_0Q_{|Q|-1}}$.
\end{lemma}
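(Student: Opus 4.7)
The plan is to first upgrade the hypothesis to the stronger statement that $\uniterm$ is $\vect{P_iA}$-periodic, and then strengthen the minimality argument from lemma~\ref{lem:exist:bipump} to produce a bi-pumpable path $Q$ without redundancy whose period is not collinear with $\vect{P_0P_{|P|-1}}$.

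First, I would establish that $\uniterm$ is $\vect{P_iA}$-periodic. By lemma~\ref{biperiodic:translation}, $P^{(A)}:=P+\vect{P_iA}$ is bi-pumpable, so applying lemma~\ref{lem:bipump:seed} to $P^{(A)}$ at index $i$ shows that $(T,A,1)$ is directed with terminal $\uniterm$. For any tile $B$ of $\uniterm$, connectedness supplies a path $S\in\inuniterm$ from $P_i$ to $B$; since $(T,P_i,1)$ has terminal $\uniterm$, $S$ is producible from seed $P_i$, and by translation invariance of the tile assembly model $S+\vect{P_iA}$ is producible from seed $A$ in $(T,A,1)$. Directedness of $(T,A,1)$ then forces $S+\vect{P_iA}\in\inuniterm$, and in particular $B+\vect{P_iA}\in\uniterm$ has the same tile type as $B$.

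Next, let $\mathcal{S}$ be the set of pairs $(\vect{w},R)$ such that $\vect{w}$ is not collinear with $\vect{P_0P_{|P|-1}}$, $\uniterm$ is $\vect{w}$-periodic, and $R\in\inuniterm$ satisfies $R_0+s\vect{w}=R_{|R|-1}$ for some $s\in\{-1,1\}$. The previous step ensures $\mathcal{S}$ is nonempty (take $\vect{w}=\vect{P_iA}$), so we may pick $(\vect{v},Q)\in\mathcal{S}$ with $|Q|$ minimal. Exactly as in the proof of lemma~\ref{lem:exist:bipump}, any intersection of $Q$ with $Q+\vect{v}$ outside the natural junction would yield a strictly shorter element of $\mathcal{S}$ (for the same $\vect{v}$), so $Q$ is a good candidate. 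Since $\uniterm$ is $\vect{v}$-periodic we have $\bipump{Q}\in\inuniterm$, whence $Q$ is bi-pumpable with $\vect{Q_0Q_{|Q|-1}}=\pm\vect{v}$ not collinear with $\vect{P_0P_{|P|-1}}$.

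The main obstacle is to show that this minimal $Q$ is without redundancy. Suppose towards contradiction there are $0\le a<b\le |Q|-1$ with $(a,b)\ne(0,|Q|-1)$ and $\type{Q_a}=\type{Q_b}$. Lemma~\ref{lem:bipump:seed} applied to $Q$ makes both $(T,Q_a,1)$ and $(T,Q_b,1)$ directed with terminal $\uniterm$, and the translation argument above (now using $\vect{Q_aQ_b}$) shows that $\uniterm$ is also $\vect{Q_aQ_b}$-periodic. If $\vect{Q_aQ_b}$ is not collinear with $\vect{P_0P_{|P|-1}}$ then $(\vect{Q_aQ_b},Q_{a,\ldots,b})\in\mathcal{S}$ has length $b-a+1<|Q|$, contradicting minimality. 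Otherwise $\vect{Q_aQ_b}=c\vect{P_0P_{|P|-1}}$ for some $c\ne 0$; among all pairs with this same shift I would replace $(a,b)$ by one maximizing $b-a$ and form the shortcut $Q'$ obtained by concatenating $Q_{0,\ldots,a}$ with the translated tail $(Q_{b+1},\ldots,Q_{|Q|-1})-c\vect{P_0P_{|P|-1}}$. The $\vect{P_0P_{|P|-1}}$-periodicity of $\uniterm$ places each tile of $Q'$ in $\uniterm$ with the correct tile type and preserves the interaction at the junction (identical to that between $Q_b$ and $Q_{b+1}$); the maximality of $b-a$ rules out the only remaining type of self-intersection, since any such intersection would produce another pair with the same shift but strictly larger width. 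Hence $Q'\in\inuniterm$ has matching endpoint tile types, displacement $\pm\vect{v}-c\vect{P_0P_{|P|-1}}$ still not collinear with $\vect{P_0P_{|P|-1}}$, and length $|Q|-(b-a)<|Q|$, once more contradicting minimality. Both cases being excluded, $Q$ is without redundancy, completing the proof.
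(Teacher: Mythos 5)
Your proof is correct, but it follows a genuinely different route from the paper's. The paper keeps the argument local: it takes a minimal-length path $Q$ joining $P_i$ to a tile of the same type on the non-intersecting translate $\bipump{P}+\vect{P_iA}$, extracts a without-redundancy bi-pumpable subpath $R$ via Lemmas~\ref{lem:redun:exists} and~\ref{lem:redun:pump}, and, if $\vect{R_0R_{|R|-1}}$ happens to be collinear with $\vect{P_0P_{|P|-1}}$, shortcuts $Q$ across $R$ to contradict minimality. You instead first upgrade the hypothesis to the global statement that $\uniterm$ is $\vect{P_iA}$-periodic (by swapping the seeds $P_i$ and $A$, both of which yield directed systems with terminal assembly $\uniterm$ by Lemmas~\ref{biperiodic:translation} and~\ref{lem:bipump:seed}), and then run a single minimization over all witness loops for all non-collinear periods, re-deriving bi-pumpability of the minimal witness directly from periodicity rather than through Lemma~\ref{lem:redun:pump}. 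What your approach buys is the intermediate fact that $\uniterm$ is $\vect{P_iA}$-periodic — which the paper only obtains downstream, via Lemma~\ref{alpha:type1} — and a more careful treatment of the shortcut: your choice of a redundancy pair of maximal width among those with the same shift is precisely what guarantees the shortcut path is simple, a point the paper's own shortcut $Q_{0,\ldots,i'}(Q_{j'+1,\ldots,|Q|-1}-\vect{R_0R_{|R|-1}})$ leaves implicit. The price is some duplication of work already encapsulated in Lemma~\ref{lem:redun:pump}. Two small points you should make explicit: the translation argument as written only gives $\uniterm+\vect{P_iA}\sqsubseteq\uniterm$, and the symmetric argument (translating by $-\vect{P_iA}$ from seed $A$ back to seed $P_i$) is needed for genuine $\vect{P_iA}$-periodicity; and the shift of your maximal pair cannot equal $\vect{Q_0Q_{|Q|-1}}$ (the latter is non-collinear with $\vect{P_0P_{|P|-1}}$ by construction), which is what guarantees $b-a\leq|Q|-2$ and hence that the shortcut is strictly shorter and of length at least $2$.
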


\begin{proof}
See Figure \ref{fig:trans} for an illustration of this reasoning.
By lemma \ref{biperiodic:translation}, $\bipump{P}+\vect{P_iA}$ is in $\inuniterm$. Since $\vect{P_0P_{|P|-1}}$ is not collinear with $\vect{Q_0Q_{|Q|-1}}$ then by lemma \ref{lem:redun:nointer}, $\bipump{P}$ and $\bipump{P}+\vect{P_iA}$ do not intersect. Nevertheless, both paths are in $\inuniterm$ and thus there exists a path $Q$ in $\inuniterm$ such that $Q_0=P_i$, $\type{Q_{|Q|-1}}=\type{Q_0}=\type{A}$ and $Q_{|Q|-1}$ is a tile of $\bipump{P}+\vect{P_iA}$. Without loss of generality, we suppose that $Q$ is a path of minimal length which satisfies these three hypothesis. From lemma \ref{lem:redun:pump}, there exist $0\leq i' \leq j' \leq |Q|-1$ such that the path $R=Q_{i',\ldots,j'}$ is without redundancy, is in $\inuniterm$ and is bi-pumpable. If $\vect{R_0R_{|R|-1}}$ is not collinear with $\vect{P_0P_{|P|-1}}$ then the lemma is true. Suppose otherwise for the sake of contradiction. By corollary \ref{cor:periodic}, $\alpha=\alpha-\vect{R_0R_{|R|-1}}$ and since $Q_{i'}=R_0=R_{|R|-1}-\vect{R_0R_{|R|-1}}=Q_{j'}-\vect{R_0R_{|R|-1}}$, then the path $Q'=Q_{0,\ldots i'}(Q_{j'+1,\ldots, |Q|-1}-\vect{R_0R_{|R|-1}})$ is in $\inuniterm$. By definition of $Q$, we have $Q'_0=P_i$, $\type{Q'_{|Q'|-1}}=\type{Q_{|Q|-1}}=\type{A}$. Moreover by lemma \ref{biperiodic:sameperiod}, $\vect{R_0R_{|R|-1}}=s\vect{P_0P_{|P|-1}}$ with $s \in \{-1,1\}$ and then $Q'_{|Q'|-1}$ is in $\bipump{P}+\vect{P_iA}$. Since $|Q'|<|Q|$, this is a contradiction of the hypothesis that $Q$ is of minimal length. 

\begin{figure}
\centering

\begin{tikzpicture}[x=0.33cm,y=0.33cm]


 \draw[very thick] (9,1.5) -| (9.5,3.5) -| (12.5,1.5) -| (19,1.5);
\drawpurple{9}{1}{100}
\drawpurple{9}{2}{95}
\drawpurple{9}{3}{90}
\drawpurple{10}{3}{85}
\drawpurple{11}{3}{80}
\drawpurple{12}{3}{75}
\drawpurple{12}{2}{70}
\drawpurple{12}{1}{65}
\drawpurple{13}{1}{60}
\drawpurple{14}{1}{55}
\drawpurple{15}{1}{50}
\drawpurple{16}{1}{45}
\drawpurple{17}{1}{40}
\drawpurple{18}{1}{35}

 \draw[very thick] (0,3.5) -| (2.5,1.5) -| (9,1.5);
\drawpurple{0}{3}{85}
\drawpurple{1}{3}{80}
\drawpurple{2}{3}{75}
\drawpurple{2}{2}{70}
\drawpurple{2}{1}{65}
\drawpurple{3}{1}{60}
\drawpurple{4}{1}{55}
\drawpurple{5}{1}{50}
\drawpurple{6}{1}{45}
\drawpurple{7}{1}{40}
\drawpurple{8}{1}{35}

 \draw[very thick] (19,1.5) -| (19.5,3.5) -| (22.5,1.5) -| (29,1.5);
\drawpurple{19}{1}{100}
\drawpurple{19}{2}{95}
\drawpurple{19}{3}{90}
\drawpurple{20}{3}{85}
\drawpurple{21}{3}{80}
\drawpurple{22}{3}{75}
\drawpurple{22}{2}{70}
\drawpurple{22}{1}{65}
\drawpurple{23}{1}{60}
\drawpurple{24}{1}{55}
\drawpurple{25}{1}{50}
\drawpurple{26}{1}{45}
\drawpurple{27}{1}{40}
\drawpurple{28}{1}{35}

 \draw[very thick] (29,1.5) -| (29.5,3.5) -| (32.5,1.5) -| (34,1.5);
\drawpurple{29}{1}{100}
\drawpurple{29}{2}{95}
\drawpurple{29}{3}{90}
\drawpurple{30}{3}{85}
\drawpurple{31}{3}{80}
\drawpurple{32}{3}{75}
\drawpurple{32}{2}{70}
\drawpurple{32}{1}{65}
\drawpurple{33}{1}{60}

\drawpurple{27}{9}{55}

\draw (10.5, 0.5) node {\small $P_0$};
\fill[black] (9.5,1.5) circle (0.2);

\draw (21.2, 0.5) node {\small $P_{|P|-1}$};
\fill[black] (19.5,1.5) circle (0.2);

\draw (14.5, 2.6) node {\small $P_i$};
\fill[black] (14.5,1.5) circle (0.2);

\draw[very thick,->] (14.5,1.5)--(27.5,9.5);
\draw (23.5, 5.5) node {$\vect{P_iA}$};

\draw (27.5, 10.5) node {$A$};
\path [dotted, draw, thin] (0,0) grid[step=0.33cm] (34,13);
\end{tikzpicture}

a)The bi-pumpable path without redundancy $P$ and the tile $A$.
\vspace{+0.5em}

\begin{tikzpicture}[x=0.33cm,y=0.33cm]

 \draw[very thick] (14.5,1.5) |- (18.5,7.5) |- (24.5,5.5) |- (27.5,7.5) |- (27.5,9.5);
\drawt{14}{2}
\drawt{14}{3}
\drawyellow{14}{4}{100}
\drawyellow{14}{5}{95}
\drawyellow{14}{6}{90}
\drawyellow{14}{7}{85}
\drawyellow{15}{7}{80}
\drawyellow{16}{7}{75}
\drawyellow{17}{7}{70}
\drawyellow{18}{7}{65}
\drawyellow{18}{6}{60}
\drawyellow{18}{5}{60}
\drawyellow{19}{5}{55}
\drawyellow{20}{5}{50}
\drawyellow{21}{5}{45}
\drawyellow{22}{5}{40}
\drawyellow{23}{5}{35}
\drawyellow{24}{5}{30}
\drawyellow{24}{6}{25}
\drawyellow{24}{7}{100}
\drawt{25}{7}
\drawt{26}{7}
\drawt{27}{7}
\drawt{27}{8}

 \draw[very thick] (9,1.5) -| (9.5,3.5) -| (12.5,1.5) -| (19,1.5);
\drawpurple{9}{1}{100}
\drawpurple{9}{2}{95}
\drawpurple{9}{3}{90}
\drawpurple{10}{3}{85}
\drawpurple{11}{3}{80}
\drawpurple{12}{3}{75}
\drawpurple{12}{2}{70}
\drawpurple{12}{1}{65}
\drawpurple{13}{1}{60}
\drawpurple{14}{1}{55}
\drawpurple{15}{1}{50}
\drawpurple{16}{1}{45}
\drawpurple{17}{1}{40}
\drawpurple{18}{1}{35}

 \draw[very thick] (0,3.5) -| (2.5,1.5) -| (9,1.5);
\drawpurple{0}{3}{85}
\drawpurple{1}{3}{80}
\drawpurple{2}{3}{75}
\drawpurple{2}{2}{70}
\drawpurple{2}{1}{65}
\drawpurple{3}{1}{60}
\drawpurple{4}{1}{55}
\drawpurple{5}{1}{50}
\drawpurple{6}{1}{45}
\drawpurple{7}{1}{40}
\drawpurple{8}{1}{35}

 \draw[very thick] (19,1.5) -| (19.5,3.5) -| (22.5,1.5) -| (29,1.5);
\drawpurple{19}{1}{100}
\drawpurple{19}{2}{95}
\drawpurple{19}{3}{90}
\drawpurple{20}{3}{85}
\drawpurple{21}{3}{80}
\drawpurple{22}{3}{75}
\drawpurple{22}{2}{70}
\drawpurple{22}{1}{65}
\drawpurple{23}{1}{60}
\drawpurple{24}{1}{55}
\drawpurple{25}{1}{50}
\drawpurple{26}{1}{45}
\drawpurple{27}{1}{40}
\drawpurple{28}{1}{35}

 \draw[very thick] (29,1.5) -| (29.5,3.5) -| (32.5,1.5) -| (34,1.5);
\drawpurple{29}{1}{100}
\drawpurple{29}{2}{95}
\drawpurple{29}{3}{90}
\drawpurple{30}{3}{85}
\drawpurple{31}{3}{80}
\drawpurple{32}{3}{75}
\drawpurple{32}{2}{70}
\drawpurple{32}{1}{65}
\drawpurple{33}{1}{60}

 \draw[very thick] (0,9.5) -| (2,9.5);
\drawpurple{0}{9}{40}
\drawpurple{1}{9}{35}

 \draw[very thick] (2,9.5) -| (2.5,11.5) -| (5.5,9.5) -| (12,9.5);
\drawpurple{2}{9}{100}
\drawpurple{2}{10}{95}
\drawpurple{2}{11}{90}
\drawpurple{3}{11}{85}
\drawpurple{4}{11}{80}
\drawpurple{5}{11}{75}
\drawpurple{5}{10}{70}
\drawpurple{5}{9}{65}
\drawpurple{6}{9}{60}
\drawpurple{7}{9}{55}
\drawpurple{8}{9}{50}
\drawpurple{9}{9}{45}
\drawpurple{10}{9}{40}
\drawpurple{11}{9}{35}

 \draw[very thick] (12,9.5) -| (12.5,11.5) -| (15.5,9.5) -| (22,9.5);
\drawpurple{12}{9}{100}
\drawpurple{12}{10}{95}
\drawpurple{12}{11}{90}
\drawpurple{13}{11}{85}
\drawpurple{14}{11}{80}
\drawpurple{15}{11}{75}
\drawpurple{15}{10}{70}
\drawpurple{15}{9}{65}
\drawpurple{16}{9}{60}
\drawpurple{17}{9}{55}
\drawpurple{18}{9}{50}
\drawpurple{19}{9}{45}
\drawpurple{20}{9}{40}
\drawpurple{21}{9}{35}

 \draw[very thick] (22,9.5) -| (22.5,11.5) -| (25.5,9.5) -| (32,9.5);
\drawpurple{22}{9}{100}
\drawpurple{22}{10}{95}
\drawpurple{22}{11}{90}
\drawpurple{23}{11}{85}
\drawpurple{24}{11}{80}
\drawpurple{25}{11}{75}
\drawpurple{25}{10}{70}
\drawpurple{25}{9}{65}
\drawpurple{26}{9}{60}
\drawpurple{27}{9}{55}
\drawpurple{28}{9}{50}
\drawpurple{29}{9}{45}
\drawpurple{30}{9}{40}
\drawpurple{31}{9}{35}

 \draw[very thick] (32,9.5) -| (32.5,11.5) -| (34,11.5);
\drawpurple{32}{9}{100}
\drawpurple{32}{10}{95}
\drawpurple{32}{11}{90}
\drawpurple{33}{11}{85}

\draw (13, 4.5) node {\small $R_0$};
\fill[black] (14.5,4.5) circle (0.2);

\draw (22.1, 7.8) node {\small $R_{|R|-1}$};
\fill[black] (24.5,7.5) circle (0.2);

\draw (10.5, 0.5) node {\small $P_0$};
\fill[black] (9.5,1.5) circle (0.2);

\draw (21.2, 0.5) node {\small $P_{|P|-1}$};
\fill[black] (19.5,1.5) circle (0.2);

\draw (15.8, 2.6) node {\small $P_i$};
\fill[black] (14.5,1.5) circle (0.2);

\draw (27.5, 10.5) node {$A$};
\fill[black] (27.5,9.5) circle (0.2);
\path [dotted, draw, thin] (0,0) grid[step=0.33cm] (34,13);
\draw[very thick,->] (14.5,4.5)--(24.5,7.5);

\end{tikzpicture}

b) The path $\bipump{P}+\vect{P_iA}$ can be assembled and we can consider a path $Q$ from $P_i$ to a tile of the same tile type. This path contains a bi-pumpable path without redundancy  $R$ (which satisfies the hypothesis of the lemma in this case).
\vspace{+0.5em}

\begin{tikzpicture}[x=0.33cm,y=0.33cm]

 \draw[very thick] (14.5,1.5) |- (18.5,7.5) |- (24.5,5.5) |- (27.5,7.5) |- (27.5,9.5);
\drawt{14}{2}
\drawt{14}{3}
\drawt{14}{4}
\drawt{14}{5}
\drawyellow{14}{6}{100}
\drawyellow{14}{7}{95}
\drawyellow{15}{7}{90}
\drawyellow{16}{7}{85}
\drawyellow{17}{7}{80}
\drawyellow{18}{7}{75}
\drawyellow{18}{6}{70}
\drawyellow{18}{5}{65}
\drawyellow{19}{5}{60}
\drawyellow{20}{5}{55}
\drawyellow{21}{5}{50}
\drawyellow{22}{5}{45}
\drawyellow{23}{5}{40}
\drawyellow{24}{5}{35}
\drawyellow{24}{6}{100}
\drawt{24}{7}
\drawt{25}{7}
\drawt{26}{7}
\drawt{27}{7}
\drawt{27}{8}

 \draw[very thick] (4.5,1.5) |- (7.5,7.5) |- (7.5,9.5);
\drawt{4}{2}
\drawt{4}{3}
\drawt{4}{4}
\drawt{4}{5}
\drawyellow{4}{6}{100}
\drawt{4}{7}
\drawt{5}{7}
\drawt{6}{7}
\drawt{7}{7}
\drawt{7}{8}

 \draw[very thick] (9,1.5) -| (9.5,3.5) -| (12.5,1.5) -| (19,1.5);
\drawpurple{9}{1}{100}
\drawpurple{9}{2}{95}
\drawpurple{9}{3}{90}
\drawpurple{10}{3}{85}
\drawpurple{11}{3}{80}
\drawpurple{12}{3}{75}
\drawpurple{12}{2}{70}
\drawpurple{12}{1}{65}
\drawpurple{13}{1}{60}
\drawpurple{14}{1}{55}
\drawpurple{15}{1}{50}
\drawpurple{16}{1}{45}
\drawpurple{17}{1}{40}
\drawpurple{18}{1}{35}

 \draw[very thick] (0,3.5) -| (2.5,1.5) -| (9,1.5);
\drawpurple{0}{3}{85}
\drawpurple{1}{3}{80}
\drawpurple{2}{3}{75}
\drawpurple{2}{2}{70}
\drawpurple{2}{1}{65}
\drawpurple{3}{1}{60}
\drawpurple{4}{1}{55}
\drawpurple{5}{1}{50}
\drawpurple{6}{1}{45}
\drawpurple{7}{1}{40}
\drawpurple{8}{1}{35}

 \draw[very thick] (19,1.5) -| (19.5,3.5) -| (22.5,1.5) -| (29,1.5);
\drawpurple{19}{1}{100}
\drawpurple{19}{2}{95}
\drawpurple{19}{3}{90}
\drawpurple{20}{3}{85}
\drawpurple{21}{3}{80}
\drawpurple{22}{3}{75}
\drawpurple{22}{2}{70}
\drawpurple{22}{1}{65}
\drawpurple{23}{1}{60}
\drawpurple{24}{1}{55}
\drawpurple{25}{1}{50}
\drawpurple{26}{1}{45}
\drawpurple{27}{1}{40}
\drawpurple{28}{1}{35}

 \draw[very thick] (29,1.5) -| (29.5,3.5) -| (32.5,1.5) -| (34,1.5);
\drawpurple{29}{1}{100}
\drawpurple{29}{2}{95}
\drawpurple{29}{3}{90}
\drawpurple{30}{3}{85}
\drawpurple{31}{3}{80}
\drawpurple{32}{3}{75}
\drawpurple{32}{2}{70}
\drawpurple{32}{1}{65}
\drawpurple{33}{1}{60}

 \draw[very thick] (0,9.5) -| (2,9.5);
\drawpurple{0}{9}{40}
\drawpurple{1}{9}{35}

 \draw[very thick] (2,9.5) -| (2.5,11.5) -| (5.5,9.5) -| (12,9.5);
\drawpurple{2}{9}{100}
\drawpurple{2}{10}{95}
\drawpurple{2}{11}{90}
\drawpurple{3}{11}{85}
\drawpurple{4}{11}{80}
\drawpurple{5}{11}{75}
\drawpurple{5}{10}{70}
\drawpurple{5}{9}{65}
\drawpurple{6}{9}{60}
\drawpurple{7}{9}{55}
\drawpurple{8}{9}{50}
\drawpurple{9}{9}{45}
\drawpurple{10}{9}{40}
\drawpurple{11}{9}{35}

 \draw[very thick] (12,9.5) -| (12.5,11.5) -| (15.5,9.5) -| (22,9.5);
\drawpurple{12}{9}{100}
\drawpurple{12}{10}{95}
\drawpurple{12}{11}{90}
\drawpurple{13}{11}{85}
\drawpurple{14}{11}{80}
\drawpurple{15}{11}{75}
\drawpurple{15}{10}{70}
\drawpurple{15}{9}{65}
\drawpurple{16}{9}{60}
\drawpurple{17}{9}{55}
\drawpurple{18}{9}{50}
\drawpurple{19}{9}{45}
\drawpurple{20}{9}{40}
\drawpurple{21}{9}{35}

 \draw[very thick] (22,9.5) -| (22.5,11.5) -| (25.5,9.5) -| (32,9.5);
\drawpurple{22}{9}{100}
\drawpurple{22}{10}{95}
\drawpurple{22}{11}{90}
\drawpurple{23}{11}{85}
\drawpurple{24}{11}{80}
\drawpurple{25}{11}{75}
\drawpurple{25}{10}{70}
\drawpurple{25}{9}{65}
\drawpurple{26}{9}{60}
\drawpurple{27}{9}{55}
\drawpurple{28}{9}{50}
\drawpurple{29}{9}{45}
\drawpurple{30}{9}{40}
\drawpurple{31}{9}{35}

 \draw[very thick] (32,9.5) -| (32.5,11.5) -| (34,11.5);
\drawpurple{32}{9}{100}
\drawpurple{32}{10}{95}
\drawpurple{32}{11}{90}
\drawpurple{33}{11}{85}

\draw (13, 6.5) node {\small $R_0$};
\fill[black] (14.5,6.5) circle (0.2);

\draw (22.1, 6.8) node {\small $R_{|R|-1}$};
\fill[black] (24.5,6.5) circle (0.2);

\draw (10.5, 0.5) node {\small $P_0$};
\fill[black] (9.5,1.5) circle (0.2);

\draw (21.2, 0.5) node {\small $P_{|P|-1}$};
\fill[black] (19.5,1.5) circle (0.2);

\draw (15.8, 2.6) node {\small $P_i$};
\fill[black] (14.5,1.5) circle (0.2);



\draw (31, 8.5) node {\small $-2\vect{R_0R_{|R|-1}}$};
\draw[very thick,->] (27.5,8.5)--(7.5,8.5);
\draw (8.5, 5.5) node {\small $-\vect{R_0R_{|R|-1}}$};
\draw[very thick,->] (14.5,4.5)--(4.5,4.5);
\draw (27.5, 10.5) node {$A$};
\fill[black] (27.5,9.5) circle (0.2);
\path [dotted, draw, thin] (0,0) grid[step=0.33cm] (34,13);
\end{tikzpicture}

c)If $\vect{P_0P_{|P|-1}}$ is collinear with $\vect{R_0R_{|R|-1}}$ then we can build a new shorter path with the same properties as $Q$.
 
\caption{Illustration of the proof of lemma \ref{biperiodic:otherpath}.Consider a bi-pumpable path $P$ without redundancy such that a tile $A$ has the same tile type as a tile of $P$, then we can find another bi-pumpable path $R$ without redundancy such that $\vect{P_0P_{|P|-1}}$ is not collinear with $\vect{R_0R_{|R|-1}}$.}
\label{fig:trans}
\end{figure}

\end{proof}

\begin{lemma}
\label{alpha:type1}
Let $\mathcal T=(T,\sigma,1)$ be a directed tile assembly system, $\uniterm$ its unique terminal assembly and two bi-pumpable paths $P,Q\in \inuniterm$ without redundancy such that  $\vect{P_0P_{|P|-1}}$ is not collinear with $\vect{Q_0Q_{|Q|-1}}$ then there exists an assembly $\ass$ and two vectors $\vect{u}$ and $\vect{v}$ such that $|\ass| \leq |T|^2$ and $\alpha=\bigcup_{\ell,\ell' \in \Z} (\ass+\ell\vect{u}+\ell'\vect{v})$.
\end{lemma}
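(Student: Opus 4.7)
Set $\vect{u}:=\vect{P_0P_{|P|-1}}$ and $\vect{v}:=\vect{Q_0Q_{|Q|-1}}$. By Corollary~\ref{cor:periodic}, $\uniterm$ is both $\vect{u}$-periodic and $\vect{v}$-periodic, and since $\vect{u}$ and $\vect{v}$ are non-collinear by hypothesis, they generate a rank-$2$ sublattice $L:=\Z\vect{u}+\Z\vect{v}$ of $\Z^2$. The idea is to take $\ass$ to be the restriction of $\uniterm$ to a connected fundamental-domain-like region of $L$; bi-periodicity will then give $\uniterm=\bigcup_{\ell,\ell'\in\Z}(\ass+\ell\vect{u}+\ell'\vect{v})$ immediately, and a lattice-point count will give $|\ass|\leq|T|^2$.

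To carve out this region, I would build a simple closed polygonal cycle enclosing a fundamental parallelogram of $L$. Fix a tile $A$ of $\uniterm$; by $\vect{u}$- and $\vect{v}$-periodicity the tiles $A+\vect{u}$, $A+\vect{v}$, and $A+\vect{u}+\vect{v}$ are all in $\uniterm$. Using Lemma~\ref{biperiodic:translation} to translate $\bipump{P}$ through these corners and $\bipump{Q}$ similarly, and using connectivity of $\uniterm$ to join them, I obtain four producible path segments forming a closed walk $A\to A+\vect{u}\to A+\vect{u}+\vect{v}\to A+\vect{v}\to A$ in $\inuniterm$. Simplicity of the concatenated curve follows from Lemma~\ref{lem:redun:nointer}: since $P$ and $Q$ are without redundancy and $\vect{u}$ is not collinear with $\vect{v}$, the translates $\bipump{P}$ and $\bipump{P}+\vect{v}$ are disjoint, and similarly $\bipump{Q}$ and $\bipump{Q}+\vect{u}$, so opposite sides of the parallelogram never cross. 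The polygonal Jordan Curve Theorem~\ref{thm:jordan} then yields a bounded interior.

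I then define $\ass$ as $\uniterm$ restricted to the tiles lying in the closed interior, with two adjacent sides of the cycle retained and the two opposite sides dropped, so that each $L$-coset has exactly one representative among the chosen lattice points. The retained south-and-west \textsc{L}-shape of the cycle plus all strictly interior tiles of $\uniterm$ form a connected domain (the cycle pieces are binding paths, and every strictly interior tile connects to them through other interior tiles), so $\ass$ is a valid assembly. The equality $\uniterm=\bigcup_{\ell,\ell'\in\Z}(\ass+\ell\vect{u}+\ell'\vect{v})$ is now immediate: every position of $\Z^2$ is $L$-equivalent to exactly one position in the chosen half-open fundamental region, so $\vect{u}$- and $\vect{v}$-periodicity of $\uniterm$ identifies every tile of $\uniterm$ with a translated tile of $\ass$.

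Finally, for the size bound, the half-open fundamental parallelogram with sides $\vect{u},\vect{v}$ contains exactly $|\det(\vect{u},\vect{v})|$ lattice points. Writing $\vect{u}=(u_x,u_y)$ and $\vect{v}=(v_x,v_y)$, the inequality $|u_xv_y-u_yv_x|\leq(|u_x|+|u_y|)(|v_x|+|v_y|)=\|\vect{u}\|_1\|\vect{v}\|_1$, combined with $\|\vect{u}\|_1\leq|P|-1$ and $\|\vect{v}\|_1\leq|Q|-1$ (each edge of a path changes one coordinate by $1$) and $|P|,|Q|\leq|T|+1$ (since $P$ and $Q$ are without redundancy, their first $|P|-1$ and $|Q|-1$ tile types are respectively all distinct), yields $|\ass|\leq|\det(\vect{u},\vect{v})|\leq|T|^2$. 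The main obstacle is to verify that the half-open coset choice inside the closed cycle still produces a connected $\domain{\ass}$ without needing extra boundary tiles that would break the $|T|^2$ bound; this amounts to checking that the south-and-west \textsc{L}-shape of the cycle suffices to anchor all strictly interior tiles of $\uniterm$ into a single connected piece.
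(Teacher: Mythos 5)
Your overall plan (take $\vect{u}=\vect{P_0P_{|P|-1}}$, $\vect{v}=\vect{Q_0Q_{|Q|-1}}$, enclose a fundamental domain of the lattice $\Z\vect{u}+\Z\vect{v}$ by a cycle made of translates of the two periodic paths, and count lattice points by $|\det(\vect u,\vect v)|\leq\|\vect u\|_1\|\vect v\|_1\leq|T|^2$) is attractive and the arithmetic of the final bound is fine, but there is a genuine gap at the step where you invoke the Jordan Curve Theorem: you only verify that \emph{opposite} sides of your ``parallelogram'' are disjoint (via Lemma~\ref{lem:redun:nointer}), not that \emph{adjacent} sides are. The side running along $\bipump{P}$ from the corner $A$ and the side running along $\bipump{Q}$ into the same corner are full periods of two bi-infinite paths that are only known to agree wherever they meet; nothing prevents them from crossing at positions other than the corners (each such crossing is just a shared tile type between $P$ and $Q$ at a non-extremal index, which your hypotheses do not exclude). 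Without simplicity of the concatenated curve there is no well-defined interior, and the whole construction of $\ass$ collapses. This is exactly the difficulty the paper's proof is engineered around: it does \emph{not} use $\vect{Q_0Q_{|Q|-1}}$ as the second lattice vector, but instead truncates $Q$ at the first index $i>0$ whose tile type already occurs on $P$ (say as $\type{P_j}$), sets $\vect{v}=\vect{P_jQ_i}$, and takes the corner to be the last intersection of $P$ and $Q$; by construction the short prefix $Q'=Q_{0,\ldots,i}$ then meets $P$ only at $Q'_0=P_0$ and meets $P'+\vect v$ only at its last tile, so the four sides genuinely form a simple cycle of perimeter at most $2|T|$, whose area the paper bounds by $|T|^2$ directly.

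A second, smaller gap is the one you flag yourself: discarding two sides of the cycle to get a half-open fundamental domain may disconnect $\domain{\ass}$ (an interior tile of $\uniterm$ may reach the rest of $\uniterm$ only through positions on the dropped sides), and an assembly must have connected domain. The paper avoids this by keeping the entire closed interior, accepting that consecutive translates of $\ass$ overlap along whole sides (harmless, since they agree there, and it is precisely this overlap that makes $\bigcup_{\ell,\ell'}(\ass+\ell\vect u+\ell'\vect v)$ connected). If you want to salvage your determinant count, you would need either to prove connectivity of the half-open domain or to switch to the closed domain and redo the count; with the closed domain and your vectors the count can reach $|P||Q|\approx(|T|+1)^2$, which overshoots $|T|^2$, so the exact-fundamental-domain device is not easily repaired. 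In short: the lattice-index idea is a genuinely different and elegant way to get the cardinality bound, but as written the proof is missing the argument that the bounding cycle is simple, and that argument is the technical heart of the paper's proof.
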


\begin{proof}
See Figure \ref{fig:MT1} for an illustration of this reasoning.
Since $\vect{P_0P_{|P|-1}}$ and $\vect{Q_0Q_{|Q|-1}}$ are not collinear then $\bipump{P}$ and $\bipump{Q}$ intersect at least one time and only a finite number of times. Without loss of generality, suppose that $P_0=Q_0$ is the only intersection between $\bipump{P}$ and $\pump{Q}$, \emph{i.e.} the last intersection between $P$ and $Q$ according to $Q$. Let $\vect{u}=\vect{P_0P_{|P|-1}}$. Now consider $i=\min\{j>0: \type{Q_j}=\type{P_k}\text{ for some } 0\leq k \leq |P|-1\}$ (note that $i$ is correctly defined since $\type{Q_{|Q|-1}}=\type{Q_0}=\type{P_0}$). Let $0\leq j <|P|-1$ such that the type of $P_j$ is the same as $Q_i$ and let $\vect{v}=\vect{P_jQ_i}$. By definition of $Q_0$ and since $i>0$, $Q_i$ is not in $\bipump{P}$ and by lemma \ref{biperiodic:notcol}, $\vect{u}$ and $\vect{v}$ are not collinear. Moreover, by definition of $i$ and $j$, we have $\type{Q_i}=\type{P_j}$ and by lemma \ref{biperiodic:translation}, $\bipump{P}+\vect{v}$ is in $\inuniterm$ and by lemma \ref{lem:redun:nointer}, it does not intersect with $\bipump{P}$. Consider the path $$R=P_j,P_{j-1},\ldots,P_0,Q_1,Q_2,\ldots, Q_i.$$ By definition of $i$ and $j$, the path $R$ is without redundancy and satisfies the hypothesis of lemma \ref{lem:redun:pump}, then $R\in \inuniterm$ is bi-pumpable. Remind that $\vect{v}=\vect{R_0R_{|R|-1}}$ and then by corollary \ref{cor:periodic}, $\alpha$ is $\vect{v}$-periodic. Let $$Q'=Q_{0,1,\ldots,i}.$$ Let $P'$ be the subpath of $\bipump{P}$ such that $P'_0=P_j=Q_i-\vect{v}$, $P'_{|P'|-1}=P_j+\vect{u}=Q_i-\vect{v}+\vect{u}$.
Since $\bipump{P}+\vect{v}$ does not intersect with $\bipump{P}$, then $P$ does not intersect with $P'+\vect{v}$. Similarly, since $\vect{u}$ and $\vect{Q_{0}Q_{|Q|-1}}$ are not collinear, $Q$ does not intersect with $Q+\vect{u}$ which implies that $Q'$ does not intersect with $Q'+\vect{u}$. 
By definition of $Q'$, the only intersection between $Q'$ and $P$ is $Q'_0=P_0$. Similarly the only intersection between $Q'+\vect{u}$ and $P$ is $Q'_0+\vect{u}=P_{|P|-1}$. The only intersection between $Q'$ and $P'+\vect{v}$ is $Q'_{|Q'|-1}=P'_{0}+\vect{v}$ by definition of $Q'_{|Q'|-1}=Q_i$ (and since $P'$ and $Q$ shares only one common tile type) and similarly, the only intersection between $Q'+\vect{u}$ and $P'+\vect{v}$ is $Q'_{|Q'|-1}+\vect{u}=P'_{|P'|-1}+\vect{v}$. Then the binding paths of the four paths $Q'$, $P$, $Q'+\vect{u}$ and $P'+\vect{v}$ form a simple cycle $C$ and let $\ass$ be the finite assembly which is the restriction of $\uniterm$ to the interior of this cycle. Since $Q'$ and $P$ are without redundancy and share only one common tile type then $|Q'|+|T'|$ is bounded by $|T|+2$ and thus the length of $C$ is bounded by $2|T|$ and its area is bounded by $|T|^2$. Then the size of $\ass$ is bounded by $|T|^2$. 
Consider the assembly $\bigcup_{\ell,\ell' \in \Z} (\ass+\ell\vect{u}+\ell'\vect{v})$, this assembly is correctly defined: there are no conflict for any $\ell,\ell' \in Z$ between $\ass$ and $\ass+\ell\vect{u}+\ell'\vect{v}$ since $\uniterm$ is $\vect{u}$-periodic and $\vect{v}$-periodic and this assembly is connected since $Q+\vect{u}$ is in both $\ass$ and $\ass+\vect{u}$ and $P'_0+\vect{v}$ is in both $\ass$ and $\ass+\vect{v}$. Then $\bigcup_{\ell,\ell' \in \Z} (\ass+\ell\vect{u}+\ell'\vect{v})$ is a subassembly of $\uniterm$. Now, for any $(x,y) \in \Z^2$ there exist $\ell$ and $\ell'$ such that $(x,y)+\ell\vect{u}+\ell'\vect{v}$ is in the interior of the simple cycle $C$ and thus for any tile $A$ of $\uniterm$, there are $\ell,\ell'$ and a tile $B$ of $\ass$ such that $B=A+\ell{u}+\ell'{v}$. Thus, $\bigcup_{\ell,\ell' \in \Z} (\ass+\ell\vect{u}+\ell'\vect{v})=\uniterm$.

\begin{figure}
\centering

\begin{minipage}{0.45\linewidth}
\begin{tikzpicture}[x=0.33cm,y=0.33cm]

 \draw[very thick] (5.5,5.5) -| (6.5,3.5) -| (3.5,1.5) -| (9.5,5.5) -| (13.5,5.5);
 \draw[thick,dashed] (0,1.5) -| (1.5,5.5) -| (5.5,5.5);
 \draw[thick,dashed] (13.5,5.5) -| (14.5,3.5) -| (11.5,1.5) -| (16,1.5);

 \draw[very thick] (5.5,5.5) -| (5.5,11.5);
 \draw[thick,dashed] (5.5,11.5) -| (5.5,13);
 \draw[thick,dashed] (5.5,5.5) -| (5.5,0);

\drawpurple{5}{5}{100}
\drawpurple{6}{5}{96}
\drawpurple{6}{4}{92}
\drawpurple{6}{3}{88}
\drawpurple{5}{3}{84}
\drawpurple{4}{3}{80}
\drawpurple{3}{3}{76}
\drawpurple{3}{2}{72}
\drawpurple{3}{1}{68}
\drawpurple{4}{1}{64}
\drawpurple{5}{1}{60}
\drawpurple{6}{1}{56}
\drawpurple{7}{1}{52}
\drawpurple{8}{1}{48}
\drawpurple{9}{1}{44}
\drawpurple{9}{2}{40}
\drawpurple{9}{3}{36}
\drawpurple{9}{4}{32}
\drawpurple{9}{5}{28}
\drawpurple{10}{5}{24}
\drawpurple{11}{5}{20}
\drawpurple{12}{5}{16}
\drawpurple{13}{5}{100}

\drawyellow{5}{6}{100}
\drawpurple{5}{7}{60}
\drawyellow{5}{8}{87}
\drawpurple{5}{9}{84}
\drawyellow{5}{10}{75}
\drawpurple{5}{11}{100}

\draw (3, 6) node {\small $P_0=Q_0$};
\fill[black] (5.5,5.5) circle (0.2);
\draw (3.5, 11.5) node {\small $Q_{|Q|-1}$};
\fill[black] (5.5,11.5) circle (0.2);
\draw (13.5, 6.5) node {\small $P_{|P|-1}$};
\fill[black] (13.5,5.5) circle (0.2);

\draw[very thick,->] (0.5,0.5)--(8.5,0.5);
\draw (12, 0.5) node {$\vect{u}=\vect{P_0P_{|P|-1}}$};

\path [dotted, draw, thin] (0,0) grid[step=0.33cm] (16,13);
\end{tikzpicture}

a) The two bi-pumpable paths without redundancy $P$ and $Q$.
\end{minipage} \hfill
\begin{minipage}{0.48\linewidth}
\begin{tikzpicture}[x=0.33cm,y=0.33cm]

 \draw[very thick] (5.5,5.5) -| (6.5,3.5) -| (3.5,1.5) -| (9.5,5.5) -| (13.5,5.5);

 \draw[very thick] (5.5,5.5) -| (5.5,7.5);

\drawpurple{5}{5}{100}
\drawpurple{6}{5}{96}
\drawpurple{6}{4}{92}
\drawpurple{6}{3}{88}
\drawpurple{5}{3}{84}
\drawpurple{4}{3}{80}
\drawpurple{3}{3}{76}
\drawpurple{3}{2}{72}
\drawpurple{3}{1}{68}
\drawpurple{4}{1}{64}
\drawpurple{5}{1}{60}
\drawpurple{6}{1}{56}
\drawpurple{7}{1}{52}
\drawpurple{8}{1}{48}
\drawpurple{9}{1}{44}
\drawpurple{9}{2}{40}
\drawpurple{9}{3}{36}
\drawpurple{9}{4}{32}
\drawpurple{9}{5}{28}
\drawpurple{10}{5}{24}
\drawpurple{11}{5}{20}
\drawpurple{12}{5}{16}
\drawpurple{13}{5}{100}

\drawyellow{5}{6}{100}
\drawpurple{5}{7}{60}

\draw (3, 5.5) node {\small $P_0=Q'_0$};
\fill[black] (5.5,5.5) circle (0.2);
\draw (5.5, 8.8) node {\small $Q'_{|Q'|-1}$};
\fill[black] (5.5,7.5) circle (0.2);
\draw (13.5, 6.5) node {\small $P_{|P|-1}$};
\fill[black] (13.5,5.5) circle (0.2);

\draw[very thick,->] (0.5,0.5)--(8.5,0.5);
\draw (12, 0.5) node {$\vect{u}=\vect{P_0P_{|P|-1}}$};

\path [dotted, draw, thin] (0,0) grid[step=0.33cm] (16,13);
\end{tikzpicture}

b) $Q'$ is a prefix of $Q$ which ends where a tile of $Q$ has the same tile type than a tile of $P$ for the first time.
\end{minipage}

\vspace{+0.5em}
\begin{minipage}{0.48\linewidth}
\begin{tikzpicture}[x=0.33cm,y=0.33cm]

 \draw[very thick] (5.5,5.5) -| (6.5,3.5) -| (3.5,1.5) -| (9.5,5.5) -| (13.5,5.5);

 \draw[very thick] (5.5,5.5) -| (5.5,7.5);

 \draw[very thick] (5.5,7.5) -| (9.5,11.5) -| (14.5,9.5) -| (11.5,7.5) -| (13.5,7.5);

\drawpurple{5}{5}{100}
\drawpurple{6}{5}{96}
\drawpurple{6}{4}{92}
\drawpurple{6}{3}{88}
\drawpurple{5}{3}{84}
\drawpurple{4}{3}{80}
\drawpurple{3}{3}{76}
\drawpurple{3}{2}{72}
\drawpurple{3}{1}{68}
\drawpurple{4}{1}{64}
\drawpurple{5}{1}{60}
\drawpurple{6}{1}{56}
\drawpurple{7}{1}{52}
\drawpurple{8}{1}{48}
\drawpurple{9}{1}{44}
\drawpurple{9}{2}{40}
\drawpurple{9}{3}{36}
\drawpurple{9}{4}{32}
\drawpurple{9}{5}{28}
\drawpurple{10}{5}{24}
\drawpurple{11}{5}{20}
\drawpurple{12}{5}{16}
\drawpurple{13}{5}{100}

\drawpurple{5}{7}{60}
\drawpurple{6}{7}{56}
\drawpurple{7}{7}{52}
\drawpurple{8}{7}{48}
\drawpurple{9}{7}{44}
\drawpurple{9}{8}{40}
\drawpurple{9}{9}{36}
\drawpurple{9}{10}{32}
\drawpurple{9}{11}{28}
\drawpurple{10}{11}{24}
\drawpurple{11}{11}{20}
\drawpurple{12}{11}{16}
\drawpurple{13}{11}{100}
\drawpurple{14}{11}{96}
\drawpurple{14}{10}{92}
\drawpurple{14}{9}{88}
\drawpurple{13}{9}{84}
\drawpurple{12}{9}{80}
\drawpurple{11}{9}{76}
\drawpurple{11}{8}{72}
\drawpurple{11}{7}{68}
\drawpurple{12}{7}{64}
\drawpurple{13}{7}{60}

\drawyellow{5}{6}{100}

\draw (3, 5.5) node {\small $P_0=Q'_0$};
\fill[black] (5.5,5.5) circle (0.2);
\draw (5.5, 8.8) node {\small $P'_0=Q'_{|Q'|-1}$};
\fill[black] (5.5,7.5) circle (0.2);
\draw (13.5, 4.2) node {\small $P_{|P|-1}$};
\fill[black] (13.5,5.5) circle (0.2);
\draw (13.5, 8.5) node {\small $P'_{|P'|-1}$};
\fill[black] (13.5,7.5) circle (0.2);

\draw[very thick,->] (0.5,0.5)--(8.5,0.5);
\draw (9.5, 0.5) node {$\vect{u}$};
\draw[very thick,->] (0.5,0.5)--(0.5,6.5);
\draw (0.5, 7.5) node {$\vect{v}$};

\path [dotted, draw, thin] (0,0) grid[step=0.33cm] (16,13);
\end{tikzpicture}

c) We grow$\bipump{P}+\vect{v}$ from $Q'_{|Q'|-1}$ to $Q'_{|Q'|-1}+\vect{u}$ and we obtain a path called $P'$.
\end{minipage} \hfill 
\begin{minipage}{0.48\linewidth}
\begin{tikzpicture}[x=0.33cm,y=0.33cm]

 \fill[red!10!white] (5.5,5.5) -| (6.5,3.5) -| (3.5,1.5) -| (9.5,5.5) -| (13.5,7.5) -| (11.5,9.5) -| (14.5,11.5) -| (9.5,7.5) -| (5.5,5.5);

 \draw[very thick] (5.5,5.5) -| (6.5,3.5) -| (3.5,1.5) -| (9.5,5.5) -| (13.5,5.5);

 \draw[very thick] (5.5,5.5) -| (5.5,7.5);

 \draw[very thick] (5.5,7.5) -| (9.5,11.5) -| (14.5,9.5) -| (11.5,7.5) -| (13.5,5.5);

 \draw[very thick] (13.5,5.5) -| (13.5,7.5);

\drawpurple{5}{5}{100}
\drawpurple{6}{5}{96}
\drawpurple{6}{4}{92}
\drawpurple{6}{3}{88}
\drawpurple{5}{3}{84}
\drawpurple{4}{3}{80}
\drawpurple{3}{3}{76}
\drawpurple{3}{2}{72}
\drawpurple{3}{1}{68}
\drawpurple{4}{1}{64}
\drawpurple{5}{1}{60}
\drawpurple{6}{1}{56}
\drawpurple{7}{1}{52}
\drawpurple{8}{1}{48}
\drawpurple{9}{1}{44}
\drawpurple{9}{2}{40}
\drawpurple{9}{3}{36}
\drawpurple{9}{4}{32}
\drawpurple{9}{5}{28}
\drawpurple{10}{5}{24}
\drawpurple{11}{5}{20}
\drawpurple{12}{5}{16}
\drawpurple{13}{5}{100}

\drawyellow{5}{6}{100}

\drawpurple{5}{7}{60}
\drawpurple{6}{7}{56}
\drawpurple{7}{7}{52}
\drawpurple{8}{7}{48}
\drawpurple{9}{7}{44}
\drawpurple{9}{8}{40}
\drawpurple{9}{9}{36}
\drawpurple{9}{10}{32}
\drawpurple{9}{11}{28}
\drawpurple{10}{11}{24}
\drawpurple{11}{11}{20}
\drawpurple{12}{11}{16}
\drawpurple{13}{11}{100}
\drawpurple{14}{11}{96}
\drawpurple{14}{10}{92}
\drawpurple{14}{9}{88}
\drawpurple{13}{9}{84}
\drawpurple{12}{9}{80}
\drawpurple{11}{9}{76}
\drawpurple{11}{8}{72}
\drawpurple{11}{7}{68}
\drawpurple{12}{7}{64}
\drawpurple{13}{7}{60}

\drawyellow{13}{6}{100}

\draw (3, 5.5) node {\small $P_0=Q'_0$};
\fill[black] (5.5,5.5) circle (0.2);

\draw[very thick,->] (0.5,0.5)--(8.5,0.5);
\draw (9.5, 0.5) node {$\vect{u}$};
\draw[very thick,->] (0.5,0.5)--(0.5,6.5);
\draw (0.5, 7.5) node {$\vect{v}$};

\path [dotted, draw, thin] (0,0) grid[step=0.33cm] (16,13);
\end{tikzpicture}

d) $Q'+\vect{u}$ binds to $P_{|P|-1}$ and $P'_{|P'|-1}+\vect{v}$ and we obtain a simple cycle which delimits a finite area of the $2D$ plane.
\end{minipage}

\vspace{+0.5em}

\begin{tikzpicture}[x=0.33cm,y=0.33cm]

 \fill[red!10!white] (14.5,5.5) -| (15.5,3.5) -| (12.5,1.5) -| (18.5,5.5) -| (22.5,7.5) -| (20.5,9.5) -| (23.5,11.5) -| (18.5,7.5) -| (14.5,5.5);

\draw[very thick] (6.5,0) -| (6.5,1.5);
\drawyellow{6}{0}{100}

\draw[very thick] (14.5,0) -| (14.5,1.5);
\drawyellow{14}{0}{100}

\draw[very thick] (22.5,0) -| (22.5,1.5);
\drawyellow{22}{0}{100}

\draw[very thick] (30.5,0) -| (30.5,1.5);
\drawyellow{30}{0}{100}

\draw[very thick] (6.5,11.5) -| (6.5,13);
\drawyellow{6}{12}{100}

\draw[very thick] (14.5,11.5) -| (14.5,13);
\drawyellow{14}{12}{100}

\draw[very thick] (22.5,11.5) -| (22.5,13);
\drawyellow{22}{12}{100}

\draw[very thick] (30.5,11.5) -| (30.5,13);
\drawyellow{30}{12}{100}

 \draw[very thick] (14.5,5.5) -| (15.5,3.5) -| (12.5,1.5) -| (18.5,5.5) -| (22.5,5.5);
 \draw[very thick] (14.5,5.5) -| (14.5,7.5);
 \draw[very thick] (14.5,7.5) -| (18.5,11.5) -| (23.5,9.5) -| (20.5,7.5) -| (22.5,5.5);
 \draw[very thick] (22.5,5.5) -| (22.5,7.5);
\drawpurple{14}{5}{100}
\drawpurple{15}{5}{96}
\drawpurple{15}{4}{92}
\drawpurple{15}{3}{88}
\drawpurple{14}{3}{84}
\drawpurple{13}{3}{80}
\drawpurple{12}{3}{76}
\drawpurple{12}{2}{72}
\drawpurple{12}{1}{68}
\drawpurple{13}{1}{64}
\drawpurple{14}{1}{60}
\drawpurple{15}{1}{56}
\drawpurple{16}{1}{52}
\drawpurple{17}{1}{48}
\drawpurple{18}{1}{44}
\drawpurple{18}{2}{40}
\drawpurple{18}{3}{36}
\drawpurple{18}{4}{32}
\drawpurple{18}{5}{28}
\drawpurple{19}{5}{24}
\drawpurple{20}{5}{20}
\drawpurple{21}{5}{16}
\drawpurple{22}{5}{100}
\drawyellow{14}{6}{100}
\drawpurple{14}{7}{60}
\drawpurple{15}{7}{56}
\drawpurple{16}{7}{52}
\drawpurple{17}{7}{48}
\drawpurple{18}{7}{44}
\drawpurple{18}{8}{40}
\drawpurple{18}{9}{36}
\drawpurple{18}{10}{32}
\drawpurple{18}{11}{28}
\drawpurple{19}{11}{24}
\drawpurple{20}{11}{20}
\drawpurple{21}{11}{16}
\drawpurple{22}{11}{100}
\drawpurple{23}{11}{96}
\drawpurple{23}{10}{92}
\drawpurple{23}{9}{88}
\drawpurple{22}{9}{84}
\drawpurple{21}{9}{80}
\drawpurple{20}{9}{76}
\drawpurple{20}{8}{72}
\drawpurple{20}{7}{68}
\drawpurple{21}{7}{64}
\drawpurple{22}{7}{60}
\drawyellow{22}{6}{100}

 \draw[very thick] (6.5,5.5) -| (7.5,3.5) -| (4.5,1.5) -| (10.5,5.5) -| (14.5,5.5);
 \draw[very thick] (6.5,5.5) -| (6.5,7.5);
 \draw[very thick] (6.5,7.5) -| (10.5,11.5) -| (15.5,9.5) -| (12.5,7.5) -| (14.5,7.5);
\drawpurple{6}{5}{100}
\drawpurple{7}{5}{96}
\drawpurple{7}{4}{92}
\drawpurple{7}{3}{88}
\drawpurple{6}{3}{84}
\drawpurple{5}{3}{80}
\drawpurple{4}{3}{76}
\drawpurple{4}{2}{72}
\drawpurple{4}{1}{68}
\drawpurple{5}{1}{64}
\drawpurple{6}{1}{60}
\drawpurple{7}{1}{56}
\drawpurple{8}{1}{52}
\drawpurple{9}{1}{48}
\drawpurple{10}{1}{44}
\drawpurple{10}{2}{40}
\drawpurple{10}{3}{36}
\drawpurple{10}{4}{32}
\drawpurple{10}{5}{28}
\drawpurple{11}{5}{24}
\drawpurple{12}{5}{20}
\drawpurple{13}{5}{16}
\drawpurple{14}{5}{100}
\drawpurple{6}{7}{60}
\drawpurple{7}{7}{56}
\drawpurple{8}{7}{52}
\drawpurple{9}{7}{48}
\drawpurple{10}{7}{44}
\drawpurple{10}{8}{40}
\drawpurple{10}{9}{36}
\drawpurple{10}{10}{32}
\drawpurple{10}{11}{28}
\drawpurple{11}{11}{24}
\drawpurple{12}{11}{20}
\drawpurple{13}{11}{16}
\drawpurple{14}{11}{100}
\drawpurple{15}{11}{96}
\drawpurple{15}{10}{92}
\drawpurple{15}{9}{88}
\drawpurple{14}{9}{84}
\drawpurple{13}{9}{80}
\drawpurple{12}{9}{76}
\drawpurple{12}{8}{72}
\drawpurple{12}{7}{68}
\drawpurple{13}{7}{64}
\drawpurple{14}{7}{60}
\drawyellow{6}{6}{100}

 \draw[very thick] (0,1.5) -| (2.5,5.5) -| (6.5,5.5);
 \draw[very thick] (0,7.5) -| (2.5,11.5) -| (7.5,9.5) -| (4.5,7.5) -| (6.5,7.5);
\drawpurple{0}{1}{52}
\drawpurple{1}{1}{48}
\drawpurple{2}{1}{44}
\drawpurple{2}{2}{40}
\drawpurple{2}{3}{36}
\drawpurple{2}{4}{32}
\drawpurple{2}{5}{28}
\drawpurple{3}{5}{24}
\drawpurple{4}{5}{20}
\drawpurple{5}{5}{16}
\drawpurple{6}{5}{100}
\drawpurple{0}{7}{52}
\drawpurple{1}{7}{48}
\drawpurple{2}{7}{44}
\drawpurple{2}{8}{40}
\drawpurple{2}{9}{36}
\drawpurple{2}{10}{32}
\drawpurple{2}{11}{28}
\drawpurple{3}{11}{24}
\drawpurple{4}{11}{20}
\drawpurple{5}{11}{16}
\drawpurple{6}{11}{100}
\drawpurple{7}{11}{96}
\drawpurple{7}{10}{92}
\drawpurple{7}{9}{88}
\drawpurple{6}{9}{84}
\drawpurple{5}{9}{80}
\drawpurple{4}{9}{76}
\drawpurple{4}{8}{72}
\drawpurple{4}{7}{68}
\drawpurple{5}{7}{64}
\drawpurple{6}{7}{60}

 \draw[very thick] (22.5,5.5) -| (23.5,3.5) -| (20.5,1.5) -| (26.5,5.5) -| (30.5,5.5);
 \draw[very thick] (22.5,7.5) -| (26.5,11.5) -| (31.5,9.5) -| (28.5,7.5) -| (30.5,7.5);
 \draw[very thick] (30.5,5.5) -| (30.5,7.5);
\drawpurple{22}{5}{100}
\drawpurple{23}{5}{96}
\drawpurple{23}{4}{92}
\drawpurple{23}{3}{88}
\drawpurple{22}{3}{84}
\drawpurple{21}{3}{80}
\drawpurple{20}{3}{76}
\drawpurple{20}{2}{72}
\drawpurple{20}{1}{68}
\drawpurple{21}{1}{64}
\drawpurple{22}{1}{60}
\drawpurple{23}{1}{56}
\drawpurple{24}{1}{52}
\drawpurple{25}{1}{48}
\drawpurple{26}{1}{44}
\drawpurple{26}{2}{40}
\drawpurple{26}{3}{36}
\drawpurple{26}{4}{32}
\drawpurple{26}{5}{28}
\drawpurple{27}{5}{24}
\drawpurple{28}{5}{20}
\drawpurple{29}{5}{16}
\drawpurple{30}{5}{100}
\drawpurple{22}{7}{60}
\drawpurple{23}{7}{56}
\drawpurple{24}{7}{52}
\drawpurple{25}{7}{48}
\drawpurple{26}{7}{44}
\drawpurple{26}{8}{40}
\drawpurple{26}{9}{36}
\drawpurple{26}{10}{32}
\drawpurple{26}{11}{28}
\drawpurple{27}{11}{24}
\drawpurple{28}{11}{20}
\drawpurple{29}{11}{16}
\drawpurple{30}{11}{100}
\drawpurple{31}{11}{96}
\drawpurple{31}{10}{92}
\drawpurple{31}{9}{88}
\drawpurple{30}{9}{84}
\drawpurple{29}{9}{80}
\drawpurple{28}{9}{76}
\drawpurple{28}{8}{72}
\drawpurple{28}{7}{68}
\drawpurple{29}{7}{64}
\drawpurple{30}{7}{60}
\drawyellow{30}{6}{100}

 \draw[very thick] (30.5,5.5) -| (31.5,3.5) -| (28.5,1.5) -| (34,1.5);
 \draw[very thick] (30.5,7.5) -| (34,7.5);
\drawpurple{30}{5}{100}
\drawpurple{31}{5}{96}
\drawpurple{31}{4}{92}
\drawpurple{31}{3}{88}
\drawpurple{30}{3}{84}
\drawpurple{29}{3}{80}
\drawpurple{28}{3}{76}
\drawpurple{28}{2}{72}
\drawpurple{28}{1}{68}
\drawpurple{29}{1}{64}
\drawpurple{30}{1}{60}
\drawpurple{31}{1}{56}
\drawpurple{32}{1}{52}
\drawpurple{33}{1}{48}
\drawpurple{30}{7}{60}
\drawpurple{31}{7}{56}
\drawpurple{32}{7}{52}
\drawpurple{33}{7}{48}

\path [dotted, draw, thin] (0,0) grid[step=0.33cm] (34,13);
\end{tikzpicture}

e)This cycle tiles the $2D$ plane.

\caption{Illustration of the proof of lemma \ref{alpha:type1}. We consider two bi-pumpable paths without redundancy $P$ and $Q$ such that $\vect{P_0P_{|P|-1}}$ is not collinear with $\vect{Q_0Q_{|Q|-1}}$ and we show that the $2D$ plane can be filled by these paths.}
\label{fig:MT1}
\end{figure}

\end{proof}

Now we can prove the main theorem which describes the bi-periodic terminal assemblies.

\begin{theorem}[Description of the bi-periodic terminal assemblies]
\label{the:MT1bis}
Consider a tile assembly system $\mathcal{T}=(T,\sigma,1)$ whose terminal assembly $\uniterm$ is bi-periodic. Then there exist two non-collinear vectors $\vect{u}$ and $\vect{v}$ and an assembly $\ass$ of complexity $0$ whose size is bounded by $|T|^2$ such that $$\uniterm=\bigcup_{\ell,\ell' \in \Z} (\ass+\ell\vect{u}+\ell'\vect{v}).$$
\end{theorem}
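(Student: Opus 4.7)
The plan is to chain together the lemmas developed in this section, since each one is essentially a step toward this theorem. The existence of a bi-pumpable path without redundancy follows for free from the bi-periodicity hypothesis, the non-collinearity of the two pumping vectors will come from the second period direction together with Lemma~\ref{biperiodic:otherpath}, and the final description then follows directly from Lemma~\ref{alpha:type1}.

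More concretely, first I would fix two non-collinear vectors $\vect{v_1}$ and $\vect{v_2}$ witnessing the bi-periodicity of $\uniterm$. By Lemma~\ref{lem:exist:bipump} applied to the simply periodic direction $\vect{v_1}$, there is a bi-pumpable path in $\inuniterm$, and by Corollary~\ref{cor:exists:without}, a bi-pumpable path $P\in\inuniterm$ without redundancy exists. Its pumping vector $\vect{P_0P_{|P|-1}}$ is necessarily collinear with one of the two periodicity directions, but the key point is that $\vect{v_1}$ and $\vect{v_2}$ are not collinear, so at least one of the two, say $\vect{v_2}$, is not collinear with $\vect{P_0P_{|P|-1}}$.

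Next, since $\uniterm$ is $\vect{v_2}$-periodic, the tile $A=P_0+\vect{v_2}$ belongs to $\uniterm$ and satisfies $\type{A}=\type{P_0}$, while the vector $\vect{P_0A}=\vect{v_2}$ is not collinear with $\vect{P_0P_{|P|-1}}$. Applying Lemma~\ref{biperiodic:otherpath} with $i=0$ and this tile $A$ yields a second bi-pumpable path $Q\in\inuniterm$ without redundancy whose pumping vector $\vect{Q_0Q_{|Q|-1}}$ is not collinear with $\vect{P_0P_{|P|-1}}$. We are now in the exact setting of Lemma~\ref{alpha:type1}, which provides non-collinear vectors $\vect{u}$ and $\vect{v}$ and a finite assembly $\ass$ of size at most $|T|^2$ such that $\uniterm=\bigcup_{\ell,\ell'\in\Z}(\ass+\ell\vect{u}+\ell'\vect{v})$, and $\ass$ being a finite assembly has complexity $0$ by definition, which is exactly the conclusion.

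I do not expect any real obstacle here since all the technical work (the non-intersection of translated bi-pumpings, the construction of the fundamental cycle and its interior, and the bound $|T|^2$) has already been carried out in the preceding lemmas; the only thing to be careful about is stating cleanly why the hypotheses of Lemma~\ref{biperiodic:otherpath} are met, namely that bi-periodicity forces the existence of a period direction transverse to the first pumping vector.
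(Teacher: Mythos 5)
Your proposal is correct and follows essentially the same route as the paper's own proof: obtain a bi-pumpable path without redundancy via Lemma~\ref{lem:exist:bipump} and Corollary~\ref{cor:exists:without}, use a periodicity direction transverse to its pumping vector to produce the tile $A$ feeding Lemma~\ref{biperiodic:otherpath}, and conclude with Lemma~\ref{alpha:type1}. The only cosmetic difference is that you justify the existence of the transverse period direction by noting that two non-collinear periods cannot both be collinear with the pumping vector, which is exactly the (implicit) argument in the paper.
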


\begin{proof}
Since $\uniterm$ is bi-periodic by lemma \ref{lem:exist:bipump}, there exists $P\in \inuniterm$ which is bi-pumpable. By corollary \ref{cor:exists:without}, there exists a path without redundancy $P'\in \inuniterm$ which is bi-pumpable. Since $\uniterm$ is bi-pumpable there exists a vector $\vect{w}$ such that $\uniterm$ is $\vect{w}$-periodic and $\vect{w}$ is not collinear with $\vect{P'_0P'_{|P'|-1}}$. Thus $P_0+\vect{w}$ is a tile of $\uniterm$ and is not a tile of $\bipump{P}$ and by lemma \ref{biperiodic:otherpath}, there exists a path without redundancy $Q\in \inuniterm$ which is bi-pumpable and such that $\vect{P'_0P'_{|P'|-1}}$ and $\vect{Q_0Q_{|Q|-1}}$ are not collinear. By lemma \ref{alpha:type1}, there exists a finite assembly $\ass$ and two vectors $\vect{u}$ and $\vect{v}$ such that $\alpha=\bigcup_{\ell,\ell' \in \Z} (\ass+\ell\vect{u}+\ell'\vect{v})$. Moreover the size of $\ass$ is bounded by $|T|^2$ and the complexity of a finite assembly is $0$ by definition.
\end{proof}

\subsubsection{Ordering the bi-pumpable paths}

%
\begin{lemma}
\label{lem:arcOnBiperiodic}
Consider a tile assembly system $\mathcal{T}=(T,\sigma,1)$ whose terminal assembly $\uniterm$ is simply periodic and a bi-pumpable path $P\in \inuniterm$ without redundancy. If an an arc $Q$ grows in the left side of $P$ then there exists a bi-pumpable path $R\in\inuniterm$ without redundancy such that $R>P$.
\end{lemma}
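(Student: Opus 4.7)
My plan is to construct a bi-pumpable without-redundancy path $R$ every tile of which lies in $\leftside{\bipump{P}}$ and at least one of which lies strictly off $\bipump{P}$; this immediately gives $R>P$. Throughout I set $\vect{u}=\vect{P_0P_{|P|-1}}$, and Corollary~\ref{cor:periodic} gives that $\uniterm$ is $\vect{u}$-periodic.

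First I observe that, because $\uniterm$ is simply periodic, any tile of $\uniterm$ whose type appears on $\bipump{P}$ must itself lie on $\bipump{P}$. Indeed, if a tile $A$ of $\uniterm$ had type $\type{P_k}$ but lay off $\bipump{P}$, then Lemma~\ref{biperiodic:notcol} would force $\vect{P_kA}$ to be non-collinear with $\vect{u}$, Lemma~\ref{biperiodic:otherpath} would extract a bi-pumpable path without redundancy whose pumping vector is not collinear with $\vect{u}$, and Lemma~\ref{alpha:type1} would then make $\uniterm$ bi-periodic. Consequently every interior tile of the arc $Q$, which lies strictly in the left side, carries a tile type disjoint from the tile types of $\bipump{P}$.

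Let $i_0<j_0$ be the indices on $\bipump{P}$ of $Q_0$ and $Q_{|Q|-1}$. I will build a path $S$ from $\bipump{P}_{i_0}$ to $\bipump{P}_{i_0+|P|-1}$ whose interior contains at least one tile strictly off $\bipump{P}$. If $j_0-i_0<|P|-1$, I concatenate $Q$ with the finite subpath $\bipump{P}_{j_0,\ldots,i_0+|P|-1}$. Otherwise the observation preceding the lemma (arcs of width at least $|P|$ meet their $\vect{u}$-translate in their interior, and the boundary case $j_0-i_0=|P|-1$ gives $Q_{|Q|-1}=Q_0+\vect{u}$) yields an intersection $Q_a=Q_b+\vect{u}$; picking the smallest such $a$, I take $S$ to be $Q_{0,\ldots,a}$ followed by the reverse of $(Q+\vect{u})_{0,\ldots,b}$, which degenerates to $S=Q$ in the extreme case $a=|Q|-1$, $b=0$. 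In every subcase, $S_0$ and $S_{|S|-1}$ are both tiles of $\bipump{P}$ with the common type $\type{P_{i_0\mod(|P|-1)}}$, and the interior of $S$ contains at least one tile strictly in the left side.

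I then apply Lemma~\ref{lem:redun:pump} with both bi-pumpable reference paths equal to $P$ and $S$ playing the role of the lemma's $Q$, producing a bi-pumpable without-redundancy subpath $R$ of $S$, extracted via Lemma~\ref{lem:redun:exists} using the smallest index along $S$ at which a tile type is repeated. Thanks to the type-disjointness from the second paragraph, the only possible type repetitions along $S$ are between $S_0$ and $S_{|S|-1}$ or among the strict-left-side interior tiles; so the extracted $R$ is either $S$ itself (and contains $S$'s interior tiles) or is entirely contained in $S$'s interior. Either way $R$ contains a tile of $\leftside{\bipump{P}}\setminus\pos{\bipump{P}}$. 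Since $R$ is bi-pumpable without redundancy and $\uniterm$ is simply periodic, Lemma~\ref{alpha:type1} forces $\vect{R_0R_{|R|-1}}$ to be collinear with $\vect{u}$, and Lemma~\ref{biperiodic:sameperiod} then gives $\vect{R_0R_{|R|-1}}=\pm\vect{u}$. Because $\bipump{P}$ is $\vect{u}$-periodic, $\leftside{\bipump{P}}$ is $\vect{u}$-invariant, so every translate $R+\ell\vect{u}$ stays in $\leftside{\bipump{P}}$; hence $\pos{\bipump{R}}\subset\leftside{\bipump{P}}$, and the strict-left-side tile of $R$ forces $\bipump{R}\neq\bipump{P}$, i.e.\ $R>P$. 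The main technical obstacle will be the case analysis for constructing $S$, in particular verifying, when $Q$ and $Q+\vect{u}$ intersect in the interior, that my concatenation is a genuinely simple path with both endpoints on $\bipump{P}$.
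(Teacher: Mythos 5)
Your argument is correct and rests on the same two pillars as the paper's proof: the observation that an interior tile of the arc cannot reuse a tile type of $\bipump{P}$ without making $\uniterm$ bi-periodic (via Lemmas~\ref{biperiodic:notcol} and~\ref{biperiodic:otherpath} and Corollary~\ref{cor:periodic}), and the reduction to Lemma~\ref{lem:redun:pump} by manufacturing a path whose two extremities sit on $\bipump{P}$ and carry the same tile type. Where you differ is the case split. The paper distinguishes ``$\type{Q_0}=\type{Q_{|Q|-1}}$'' (apply Lemma~\ref{lem:redun:pump} to $Q$ directly) from ``$Q$ has no repeated tile type at all'' (width $<|P|-1$, splice the arc into one period of $P$), and dispatches the intermediate situation --- a wide arc whose only repetitions are between interior tiles --- with the remark that one ``would be in the previous case,'' which is loose since those interior tiles are not known to lie on a bi-pumpable path. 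Your case~(b) construction, which splices $Q_{0,\ldots,a}$ with the reverse of $(Q+\vect{u})_{0,\ldots,b}$ at the first self-intersection of $Q$ with $Q+\vect{u}$, handles exactly that situation explicitly and always delivers a path with both endpoints on $\bipump{P}$ of equal type; the minimality of $a$ does give simplicity of $S$ as you anticipated. Two cosmetic points: the collinearity of $\vect{R_0R_{|R|-1}}$ with $\vect{u}$ follows already from Corollary~\ref{cor:periodic} (no need to invoke Lemma~\ref{alpha:type1}), and in the degenerate sub-case where the extracted $R$ happens to lie entirely on positions of $\bipump{P}$ (e.g.\ a length-two arc), strictness of $R>P$ should be argued from $\bipump{R}\neq\bipump{P}$ as paths rather than from a strictly-left-side tile --- the same point the paper settles by noting that equality would contradict the definition of an arc.
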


\begin{proof}
Firstly, if $Q$ satisfies the hypothesis of lemma \ref{lem:redun:pump}  (note that by definition of an arc both extremities of $Q$ are in $\bipump{P}$) then there exists $0\leq i <j \leq |Q|-1$ such that $R=Q_{i,i+1,\ldots,j }$ is a path without redundancy which is bi-pumpable. If $\vect{P_0P_{|P|-1}}$ and $\vect{R_0R_{|R|-1}}$ are not collinear then $\uniterm$ is bi-periodic (by corollary \ref{cor:periodic}) which contradicts the hypothesis of the lemma. By lemma \ref{biperiodic:sameperiod}, $\vect{P_0P_{|P|-1}}=s\vect{R_0R_{|R|-1}}$ with $s \in \{-1,1\}$ then since $Q$ grows in the left side of $\bipump{P}$ then $R$ and $\bipump{R}$ are in $\leftside{\bipump{P}}$ and then $R\geq P$ Moreover, $\bipump{P}= \bipump{R}$ would contradicts the definition of an arc and then $R>P$. 

Otherwise, $Q$ does not contains a path without redundancy. Then $Q$ and $Q+\vect{P_0P_{|P|-1}}$ cannot intersect otherwise then would exist $0\leq i,j \leq |Q|-1$ such that $Q_i=Q_j+\vect{P_0P_{|P|-1}}$ with $i\neq j $ (since $\vect{P_0P_{|P|-1}}$ is not null). Then, $\type{Q_i}=\type{Q_j}$ and $Q$ would satisfy the hypothesis of lemma \ref{lem:redun:pump} and we would be in the previous case. Since $Q$ does not intersect $Q+\vect{P_0P_{|P|-1}}$ then the width of $Q$ is less than $|P|-1$. Then, without loss of generality we can suppose that there exists $0\leq i <j \leq |P|-1$ such that $Q$ starts in $P_i$ and ends in $P_j$. Consider the path 
$$R=P_{0,1,\ldots,i}Q_{1,2, \ldots, |Q|-2}P_{j,j+1,\ldots |P|-1}.$$ 
By definition of an arc $R\neq P$. If $R$ is not without redundancy then there exist $0\leq k \leq |P|-1$ and $1\leq  k' \leq |Q|-2$ such that $\type{P_k}=\type{Q_{k'}}$. By definition of an arc $Q_{k'}$ is not a tile of $\bipump{P}$ and by lemma \ref{biperiodic:otherpath} and corollary \ref{cor:periodic}, $\uniterm$ is bi-periodic which contradicts the hypothesis of the lemma. By lemma \ref{lem:redun:pump}, $R$ is bi-pumpable and by definition of $R$ and $Q$, $R$ is in $\leftside{\bipump{P}}$ and $\vect{R_0R_{|R|-1}}=\vect{P_0P_{|P|-1}}$ then $\bipump{R}$ is in $\leftside{\bipump{P}}$ and thus $R\geq P$. Finally, since $R\neq P$ then $R>Q$.
\end{proof}

\begin{lemma}
\label{lem:Comparaison}
Consider a tile assembly system $\mathcal{T}=(T,\sigma,1)$ whose terminal assembly $\uniterm$ is simply periodic and two bi-pumpable paths $P$ and $Q$ of 
$\inuniterm$ such that $P$ is without redundancy then either:
\begin{itemize}
\item $P\geq Q$;
\item there exists a bi-pumpable path $R\in \inuniterm$ without redundancy such that $R>P$.
\end{itemize}
\end{lemma}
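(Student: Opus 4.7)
The plan is to assume $P\not\geq Q$ and construct a bi-pumpable path $R$ without redundancy such that $R>P$. Since $\uniterm$ is simply periodic and both $P,Q$ are bi-pumpable, Corollary~\ref{cor:periodic} makes $\uniterm$ both $\vect{P_0P_{|P|-1}}$- and $\vect{Q_0Q_{|Q|-1}}$-periodic; these vectors are therefore collinear (else $\uniterm$ would be bi-periodic), so $\bipump{P}$ and $\bipump{Q}$ admit a common period $\vect{w}$ that is an integer multiple of both.

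From $P\not\geq Q$ I pick $p^*\in\pos{\bipump{P}}\cap(\rightside{\bipump{Q}}\setminus\pos{\bipump{Q}})$, and I first argue that there must exist $q^*\in\pos{\bipump{Q}}\cap(\leftside{\bipump{P}}\setminus\pos{\bipump{P}})$. Indeed, if instead $\pos{\bipump{Q}}\subset\rightside{\bipump{P}}$, then applying Theorem~\ref{thm:infinite-jordan} to the two collinear periodic bi-infinite simple curves $\bipump{P}$ and $\bipump{Q}$ would force the dual inclusion $\pos{\bipump{P}}\subset\leftside{\bipump{Q}}$, contradicting the existence of $p^*$.

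I then split on whether the two bi-pumpings intersect. If $\pos{\bipump{P}}\cap\pos{\bipump{Q}}=\emptyset$, then $\bipump{Q}$ is a connected curve avoiding $\bipump{P}$ and, by Theorem~\ref{thm:infinite-jordan} together with the witness $q^*$, it lies entirely in $\leftside{\bipump{P}}\setminus\pos{\bipump{P}}$. Using $\type{Q_0}=\type{Q_{|Q|-1}}$, Lemma~\ref{lem:redun:exists} extracts a subpath $Q'$ of $Q$ without redundancy, and Lemma~\ref{lem:redun:pump} applied with $P'=Q$ certifies that $Q'$ is bi-pumpable in $\inuniterm$. Lemma~\ref{biperiodic:sameperiod} gives $\vect{Q'_0Q'_{|Q'|-1}}=\pm\vect{P_0P_{|P|-1}}$; since $\pos{Q'}\subset\pos{\bipump{Q}}$ sits in strict left of $\bipump{P}$ and the set $\leftside{\bipump{P}}\setminus\pos{\bipump{P}}$ is $\vect{P_0P_{|P|-1}}$-invariant, $\pos{\bipump{Q'}}$ stays in strict left as well, so $Q'>P$ and one takes $R:=Q'$.

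Otherwise $\pos{\bipump{P}}\cap\pos{\bipump{Q}}\neq\emptyset$, and the common period $\vect{w}$ makes this intersection set invariant under $\vect{w}$-translation, hence infinite in both directions along $\bipump{Q}$. The connected component of the curve $\bipump{Q}\setminus\pos{\bipump{P}}$ containing $q^*$ is therefore bounded between two consecutive intersections of $\bipump{Q}$ with $\bipump{P}$, and forms a finite subpath starting and ending on $\bipump{P}$; by Theorem~\ref{thm:infinite-jordan} its interior, which contains $q^*$, is entirely in $\leftside{\bipump{P}}\setminus\pos{\bipump{P}}$. This is precisely an arc of $\bipump{P}$ growing in the left side of $P$, so Lemma~\ref{lem:arcOnBiperiodic} yields the required $R>P$ bi-pumpable without redundancy. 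The main obstacle is the geometric duality used to produce $q^*$, together with the finiteness argument for the component containing $q^*$ in the second case: both depend on a careful application of Theorem~\ref{thm:infinite-jordan} to collinear $\vect{v}$-periodic polygonal curves and on the invariance of strict sides under the relevant periodicity vectors.
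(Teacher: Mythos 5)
Your proof is correct and follows essentially the same route as the paper: the same case split on whether $\bipump{P}$ and $\bipump{Q}$ intersect, with Lemmas \ref{lem:redun:exists}, \ref{lem:redun:pump} and \ref{biperiodic:sameperiod} handling the disjoint case and Lemma \ref{lem:arcOnBiperiodic} applied to an arc of $\bipump{Q}$ in the left side of $\bipump{P}$ handling the intersecting case. The only difference is presentational: you make explicit, via the witnesses $p^*$ and $q^*$ and the side-duality for collinear periodic curves, the dichotomies that the paper states without elaboration, which is at the same level of geometric rigor as the original.
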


\begin{proof} 
If $\bipump{P}$ does not intersect $\bipump{Q}$ then either $P>Q$ or $P<Q$. In the second case by lemma \ref{lem:redun:exists} and \ref{lem:redun:pump}, there exist $0\leq i <j \leq |Q|$ such that $R=Q_{i,i+1,\ldots,j}$ is bi-pumpable without redundancy. By corollary \ref{cor:periodic}, $\vect{P_0P_{|P|-1}}$ and $\vect{Q_iQ_{j}}$ are collinear otherwise $\alpha$ would not be simply periodic. Moreover, by lemma \ref{biperiodic:sameperiod}, $\vect{P_0P_{|P|-1}}=s\vect{Q_0Q_{|Q|-1}}$ for some $s\in \{-1,1\}$. Since $\bipump{P}$ does not intersect $\bipump{Q}$ then $\bipump{R}$ is in the left side of $\bipump{P}$ (since $P<Q$) and the lemma is true.

Otherwise, $\bipump{P}$ and $\bipump{Q}$ intersect. By corollary \ref{cor:periodic}, $\vect{P_0P_{|P|-1}}$ and $\vect{Q_0Q_{|Q|-1}}$ are collinear otherwise $\alpha$ would not be simply periodic. Moreover, since $P$ is without redundancy then by lemma~\ref{biperiodic:notcol}, $\vect{Q_0Q_{|Q|-1}}=\ell\vect{P_0P_{|P|-1}}$ for some $\ell \in \N^*$. Thus, there are an infinity of intersections between $\bipump{P}$ and $\bipump{Q}$. Then either $P \geq Q$ or then there exist $ i < j$ such that $\bipump{Q}_{i,i+1,\ldots,j}$ is an arc in the left side of $\bipump{P}$ and by lemma \ref{lem:arcOnBiperiodic} there exists a bi-pumpable path $R>P$ of $\inuniterm$.
\end{proof}

\begin{lemma}
\label{lem:max} 
Consider a tile assembly system $\mathcal{T}=(T,\sigma,1)$ whose terminal assembly $\uniterm$ is simply periodic. Then, there exist a path $P^+$ (resp. $P^-$) without redundancy which is bi-pumpable and maximum (resp. minimum). Moreover, $\vect{P^+_0P^+_{|P^+|-1}}=\vect{P^-_0P^-_{|P^-|-1}}$.
\end{lemma}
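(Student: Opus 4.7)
The plan is to reduce the problem to a finite partial order, so that Lemma~\ref{lem:Comparaison} forces extremes to exist.

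First, Lemma~\ref{lem:exist:bipump} together with Corollary~\ref{cor:exists:without} supplies some bi-pumpable path $P\in\inuniterm$ without redundancy. Set $\vect{v}:=\vect{P_0P_{|P|-1}}$; by Corollary~\ref{cor:periodic} the assembly $\uniterm$ is $\vect{v}$-periodic. If any other bi-pumpable path $Q$ without redundancy had $\vect{Q_0Q_{|Q|-1}}$ not collinear with $\vect{v}$, then Corollary~\ref{cor:periodic} applied to $Q$ together with Lemma~\ref{alpha:type1} would render $\uniterm$ bi-periodic, contradicting the hypothesis; so Lemma~\ref{biperiodic:sameperiod} gives $\vect{Q_0Q_{|Q|-1}}=\pm\vect{v}$. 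Let $\mathcal{S}$ be the set of bi-pumpable paths without redundancy in $\inuniterm$ with pumping vector $+\vect{v}$; reversal bijects $\mathcal{S}$ with the collection of those with pumping vector $-\vect{v}$ while preserving the bi-pumping as a point set.

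The key claim, and what I expect to be the main obstacle, is that $\{\bipump{Q}:Q\in\mathcal{S}\}$ is finite. For this I would show that $\bipump{Q}$ is entirely determined by $T_Q:=\{\type{Q_i}:0\leq i\leq|Q|-1\}\subseteq T$: given a tile $A$ of $\uniterm$ with $\type{A}=\type{Q_i}$ for some $i$, if $\vect{Q_iA}$ were not collinear with $\vect{v}$, then Lemma~\ref{biperiodic:otherpath} would produce a bi-pumpable path without redundancy whose pumping vector is not collinear with $\vect{v}$, contradicting the simply periodic hypothesis exactly as above; hence $\vect{Q_iA}$ is collinear with $\vect{v}$, and Lemma~\ref{biperiodic:notcol} places $A$ on $\bipump{Q}$. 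Together with the trivial converse, $\bipump{Q}$ is exactly the set of tiles of $\uniterm$ whose type lies in $T_Q$, so there are at most $2^{|T|}$ possibilities.

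With finiteness in hand, the maximum is produced by a Noetherian induction: starting from any $Q_0\in\mathcal{S}$, whenever $Q_0$ fails to dominate some bi-pumpable path, Lemma~\ref{lem:Comparaison} supplies a bi-pumpable path $R$ without redundancy with $R>Q_0$, which I would replace by its reverse if needed to keep it in $\mathcal{S}$ without altering $\bipump{R}$ or the strict inequality. The resulting strictly increasing sequence of pairwise distinct bi-pumpings lives in the finite set above, so it must terminate at the maximum $P^+$. The minimum $P^-$ follows by the identical argument with left and right interchanged throughout the proof of Lemma~\ref{lem:arcOnBiperiodic}, which yields a symmetric descending analogue of Lemma~\ref{lem:Comparaison} that lets one iteratively descend from any starting point of $\mathcal{S}$ to a minimum. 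Both $P^+$ and $P^-$ lie in $\mathcal{S}$ by construction, so $\vect{P^+_0P^+_{|P^+|-1}}=\vect{P^-_0P^-_{|P^-|-1}}=\vect{v}$.
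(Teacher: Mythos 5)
Your proposal is correct and follows the same overall skeleton as the paper's proof: produce one bi-pumpable path without redundancy, iterate Lemma~\ref{lem:Comparaison} to climb strictly upward, argue the chain must terminate, conclude the terminal element is maximum, treat the minimum symmetrically, and reconcile the two pumping vectors via Lemma~\ref{biperiodic:sameperiod} and reversal. Where you genuinely diverge is in the termination argument, which is the heart of the lemma. The paper bounds the strictly increasing chain by $|T|^2$ via a pigeonhole on tile types spread across the chain (finding a tile of a later path, strictly left of an earlier bi-pumping, sharing a type with that earlier path, then invoking Lemmas~\ref{biperiodic:notcol} and~\ref{biperiodic:otherpath} to force bi-periodicity). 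You instead prove the cleaner structural fact that, under the simply-periodic hypothesis, $\bipump{Q}$ is exactly the set of tiles of $\uniterm$ whose type lies in $T_Q$ --- the same two lemmas, but applied once and for all rather than inside the chain --- so the bi-pumpings range over a finite set and any strictly increasing chain of pairwise-distinct bi-pumpings terminates. Your route buys a more transparent argument (and in fact shows more: two bi-pumpings sharing even one tile type coincide, so there are at most $|T|$ of them, sharper than either your $2^{|T|}$ or the paper's $|T|^2$), at the cost of making slightly more explicit use of the antisymmetry of the order $\geq$ to get pairwise distinctness along the chain; the paper relies on the same geometric facts implicitly, so this is not a gap relative to its level of rigor. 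Your observation that reversal preserves the point set of the bi-pumping and hence the strict inequality $R>P$ is also correct and handles the $s=-1$ case that the paper defers to the last line.
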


\begin{proof}
Since $\uniterm$ is simply periodic then by lemma \ref{lem:exist:bipump}, there exists a bi-pumpable path $P^{(0)}$ in $\inuniterm$ and by corollary \ref{cor:exists:without}, we can assume that $P^{(0)}$ is without redundancy. By lemma \ref{lem:Comparaison}, either $P^{(0)}$ is maximum or there exists another bi-pumpable path $P^{(1)}$  in $\inuniterm$ without redundancy such that $P^{(1)}>P^{(0)}$. Two cases may occur: either $|P^{(1)}|< |P^{(0)}|$ or $|P^{(1)}|\geq |P^{(0)}|$. In the second case, since $\bipump{P^{(1)}} \neq \bipump{P^{(0)}}$ there exists $0\leq i_1 \leq |P^{(1)}|-1$ such that $P^{(1)}_{i_1}$ is not a tile of  $P^{(0)}$. For the sake of contradiction, suppose that there more than $|T|^2$ bi-pumpable paths without redundancy of $\inuniterm$ such that $P^{(0)}<P^{(1)}<P^{(2)}<...<P^{(|T|^2)}$. 
Since the length of a path without redundancy is at least $2$ and less than $|T|+1$ then the case where for some $0\leq i \leq |T|^2$, $P^{(i+1)}<P^{(i)}$ occurs at most $|T|$ times consecutively. Since we have $|T|^2$ paths, there exist $i<j$ such that $P^{(i)}\leq P^{(j-1)} < P^{(j)}$ and there exists $0\leq k \leq |P^{(j)}|$ such that the tile $P^{(j)}_k$ is in the left side of $\bipump{P^{(j-1)}}$, is not a tile of $\bipump{P^{(j-1)}}$ and has the same tile type as a tile of $P^{(i)}$. Since $\bipump{P^{(i)}}$ is in the right side of $\bipump{P^{(j-1)}}$, then the tile $P^{(j)}_k$ is also on the left side of $\bipump{P^{(i)}}$ and is not a tile of $\bipump{P^{(i)}}$. Thus by lemmas \ref{biperiodic:notcol} and \ref{biperiodic:otherpath} and corollary \ref{cor:exists:without}, $\uniterm$ is bi-periodic which contradicts the hypothesis of the lemma. Thus there exists a maximal bi-pumpable path $P^+$ without redundancy and by lemma \ref{lem:Comparaison}, $P^+$ is also maximum. A similar reasoning shows that there is a minimum path $P^-$. By lemma \ref{cor:periodic}, $\vect{P^+_0P^+_{|P^+|-1}}$ and $\vect{P^-_0P^-_{|P^-|-1}}$ are collinear otherwise $\uniterm$ would not be simply periodic. By lemma \ref{biperiodic:sameperiod}, $\vect{P^+_0P^+_{|P^+|-1}}=s\vect{P^-_0P^-_{|P^-|-1}}$ for some $s\in \{-1,1\}$. If $s=-1$ then we can consider $\reverse{P^+}$ and the lemma is true.

\end{proof}

We introduce now the notion of $\vect{v}$-self-avoiding path, such a path cannot intersect with its translation by $\ell\vect{v}$ for some $\ell \in \N^*$. This notion will be useful in the next section in order to study the simply pumpable path (these paths are called combs in \cite{Doty-2011}).
.

\begin{definition}
Consider a directed tile assembly system $\mathcal{T}=(T,\sigma,1)$ whose terminal assembly is $\uniterm$ and a non null vector $\vect{v}$. A path $P$ is $\vect{v}$-self-avoiding if for any $\ell \in \N^*$, $P$ does not intersect with $P+\ell\vect{v}$ and there exists $L\in \N$ such that for all $\ell\geq L$, $P+\ell\vect{v} \in \inuniterm$.
\end{definition}

\begin{definition}
Consider a directed tile assembly system $\mathcal{T}=(T,\sigma,1)$ and two non null, non collinear vectors $\vect{u}$ and $\vect{v}$. A path $P$ is $(\vect{u},\vect{v})$-self-avoiding if for any $\ell,\ell' \in \Z$ such that $(\ell,\ell')\neq (0,0)$, $P$ does not intersect with $P+\ell\vect{u}+\ell'\vect{v}$.

\end{definition}

We conclude this section, with the first half of the second main theorem \ref{main:periodic}.

\begin{lemma}
\label{lem:firsthalf}
Consider a tile assembly system $\mathcal{T}=(T,\sigma,1)$ whose terminal assembly $\uniterm$ is simply periodic. There exist a finite assembly $\ass$ and two bi-pumpable paths $P^+$ and $P^-$ such that $\vect{v}=\vect{P^+_0P^+_{|P^+|-1}}=\vect{P^-_0P^-_{|P^-|-1}}$ and the restriction of $\uniterm$ to $\rightside{\bipump{P^+}}\cap \leftside{\bipump{P^-}}$ is $$\bigcup_{\ell\in Z}(\ass+\ell\vect{v}).$$ Moreover, $\asm{P^+}$ and $\asm{P^-}$ are subassembly of $\ass$, the size of $\ass$ is bounded by $|T|^2$ and no arc grows on the left (resp. right) side of $\bipump{P^+}$ (resp. $\bipump{P^-}$) and any path growing on the left (resp. right) side of $\bipump{P^+}$ (resp. $\bipump{P^-}$) is $\vect{v}$-self-avoiding.
\end{lemma}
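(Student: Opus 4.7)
The plan is to derive the structural properties of the ``strip'' between the two extremal bi-pumpable paths, and then to carve $\ass$ out of one fundamental domain. By Lemma~\ref{lem:max} there exist bi-pumpable paths $P^+,P^-\in\inuniterm$ without redundancy that are respectively the maximum and the minimum among bi-pumpable paths of $\inuniterm$, with common pumping vector $\vect{v}=\vect{P^+_0P^+_{|P^+|-1}}=\vect{P^-_0P^-_{|P^-|-1}}$. The ``no arcs'' clause falls out by maximality: if an arc $Q$ grew on the left side of $\bipump{P^+}$, Lemma~\ref{lem:arcOnBiperiodic} would produce a bi-pumpable path without redundancy $R>P^+$, contradicting the maximality of $P^+$; the case of the right side of $\bipump{P^-}$ is symmetric, via the minimality of $P^-$.

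Next I would establish the $\vect{v}$-self-avoiding clause. Fix a path $Q$ growing on the left side of $\bipump{P^+}$. By the corollary following Lemma~\ref{lem:bipump:seed}, $Q\in\inuniterm$, and since $\uniterm$ is $\vect{v}$-periodic (Corollary~\ref{cor:periodic}) while $\bipump{P^+}$ is itself $\vect{v}$-invariant, every translate $Q+\ell\vect{v}$ for $\ell\in\Z$ still grows on $\bipump{P^+}$ and therefore lies in $\inuniterm$; this immediately yields the required $L=0$. Suppose now, for contradiction, that $Q$ meets $Q+\ell\vect{v}$ for some $\ell\in\N^*$, and pick the smallest index $i$ with $Q_i=Q_j+\ell\vect{v}$ for some $j$. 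By minimality of $i$ the concatenation of $Q_{0,\ldots,i}$ with the reverse of $(Q+\ell\vect{v})_{0,\ldots,j}$ is a simple path whose only intersections with $\bipump{P^+}$ are the two agreements at $Q_0$ and $Q_0+\ell\vect{v}$. This is an arc of $\bipump{P^+}$ lying on its left side (since $Q$ grew on that side), contradicting the no-arcs clause just proved.

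It remains to construct $\ass$ and to bound its size. The strip $S:=\rightside{\bipump{P^+}}\cap\leftside{\bipump{P^-}}$ is $\vect{v}$-invariant and the restriction of $\uniterm$ to $S$ is $\vect{v}$-periodic, so I would define $\ass$ as this restriction intersected with any fundamental domain of the $\vect{v}$-action on $S$ chosen to contain $\asm{P^+}$ and $\asm{P^-}$ entirely; the identity $\uniterm|_S=\bigcup_{\ell\in\Z}(\ass+\ell\vect{v})$ is then immediate. For the bound $|\ass|\leq|T|^2$ I would imitate the cycle construction used in Lemma~\ref{alpha:type1}: pick a shortest binding path $W$ in $\uniterm|_S$ from $\asm{P^+}$ to $\asm{P^-}$ and form the simple closed polygonal curve made of one period of $\bipump{P^+}$, then $W+\vect{v}$, then one period of $\bipump{P^-}$ traversed backwards, then $\reverse{W}$. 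Lemmas~\ref{biperiodic:notcol} and~\ref{biperiodic:otherpath} combined with the simply-periodic hypothesis force that no interior tile of $W$ shares a type with any tile of $\bipump{P^+}\cup\bipump{P^-}$ and that any two interior tiles of $W$ sharing a type must be related by a multiple of $\vect{v}$ (else $\uniterm$ would be bi-periodic), so shortness of $W$ makes it without redundancy and hence $|W|\leq|T|+1$; combined with $|P^+|,|P^-|\leq|T|+1$, the Jordan curve theorem of Section~\ref{subsubsec:defs:cuting} bounds the enclosed area, and hence $|\ass|$, by $|T|^2$. The main obstacle will be precisely this last step: making rigorous the no-redundancy argument for $W$ (in particular handling the possibility that a shortest $W$ leaves and re-enters a single period in the $\vect{v}$-direction) and verifying that closing the four pieces yields a simple polygonal curve.
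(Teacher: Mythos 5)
Your proposal follows essentially the same route as the paper's proof: take the extremal paths from Lemma~\ref{lem:max}, get the no-arc clause from Lemma~\ref{lem:arcOnBiperiodic} plus maximality/minimality, get $\vect{v}$-self-avoidance by turning an intersection of $\vect{v}$-translates into a forbidden arc, and bound $\ass$ by the area of a simple cycle made of one period of each of $\bipump{P^+}$ and $\bipump{P^-}$ together with a shortest connecting path and its translate by $\vect{v}$. The obstacle you flag at the end is resolved in the paper exactly as you sketch it --- a type repeat on the connecting path yields, via Lemmas~\ref{lem:redun:pump} and~\ref{biperiodic:sameperiod}, a bi-pumpable subpath with vector $\pm\vect{v}$, so a $\vect{v}$-shortcut contradicts minimality of the connecting path --- and the only point you omit is the paper's preliminary case split for when $\bipump{P^+}$ and $\bipump{P^-}$ coincide or intersect, where the four-piece cycle construction does not apply directly.
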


\begin{proof}
Since $\uniterm$ is simply periodic, consider the paths $P^+$ and $P^-$ which satisfy the hypothesis of lemma \ref{lem:max} and let $\vect{v}=\vect{P^+_0P^+_{|P^+|-1}}=\vect{P^-_0P^-_{|P^-|-1}}$. 
Consider a path $P$ which grows on $\bipump{P^+}$ in its left side. By lemma \ref{lem:bipump:seed}, for all $\ell \in \Z$ the path $P+\ell\vect{v}$ is in $\inuniterm$ (we can consider that $P_0+\ell\vect{v}$ is the seed). If for some $\ell \in \N^*$ $P$ and $P+\ell\vect{v}$ intersect then they have to agree since they are both in $\inuniterm$ and they assemble an arc in the left side of $\bipump{P^+}$ which by lemma \ref{lem:arcOnBiperiodic} contradicts the maximality of $P^+$. Then, $P$ is $\vect{v}$-self-avoiding. Similarly any path growing in the right side of $\bipump{P^-}$ is $\vect{v}$-self-avoiding.

Now, we need analyze the restrictions of $\uniterm$ to $\rightside{\bipump{P^+}}\cap \leftside{\bipump{P^-}}$. First, if $\bipump{P^+}=\bipump{P^-}$, then the lemma is true with $\ass=P^+$. Secondly, if $\bipump{P^+}$ and $\bipump{P^-}$ intersect then we define $\beta$ as the restriction of $\uniterm$ to the interior of $\bipump{P^+}$ and $\bipump{P^-}$ (see appendix \ref{appB:curves} for this definition in this special case). Otherwise, since $\bipump{P^+}$ and $\bipump{P^-}$ are in $\inuniterm$ then there exists a path $Q$ such that the only intersection between $P^+$ (resp. $P^-$) and $Q$ is $Q_0$ (resp. $Q_{|Q|}-1$). Without loss of generality we suppose that $Q$ is the path of minimal length which satisfies these hypothesis and that $Q_0=P^+_0$ and $Q_0=P^-_0$. Note that $Q$ is in $\rightside{\bipump{P^+}}\cap \leftside{\bipump{P^-}}$. For the sake of contradiction, suppose that there exists $0\leq i<j\leq |Q|-1$ such that $\type{Q_i}=\type{Q_j}$, then by lemma \ref{lem:redun:pump}, the path $R=Q_{i,i+1,\ldots,j}$ is in $\inuniterm$ and is bi-pumpable.  If $\vect{v}$ and $\vect{R_0R_{|R|-1}}$ are not collinear then by corollary \ref{cor:periodic}, $\uniterm$ is bi-periodic which is a contradiction of the hypothesis. By lemma \ref{biperiodic:sameperiod}, $\vect{R_0R_{|R|-1}}=s\vect{v}$ for some $s\in\{-1,1\}$, then the path $Q_{0,1,\ldots,i}(Q_{j+1,j+2,\ldots, |Q|-1}-s\vect{v})$ satisfies the same hypothesis has $Q$ contradicting its minimal length. If $Q$ and $Q+\vect{v}$ intersect then there exists $i\neq j$ such that $\type{Q_i}=\type{Q_j}$ which as previously leads to a contradiction. Then by definition of $Q$ and since $\uniterm$ is $\vect{v}$-periodic then the binding paths of $P^+$, $Q+\vect{v}$, $P^-$ and $Q$ form a simple cycle $C$. We define $\ass$ as the restriction of $\uniterm$ to the interior of this cycle. Moreover since $\uniterm$ is not bi-periodic then by lemma \ref{biperiodic:otherpath}, $P^+$ and $P^-$ cannot share a common tile type and the only common tile type between $P^+$ (resp. $P^-$) and $Q$ is $P^+_0=Q_0$ (resp. $P^-_0=Q_{|Q|-1}$). Then the length of the cycle $C$ is bounded by $2|T|$ and thus the size of $\ass$ is bounded by $|T|^2$. Consider the assembly $\bigcup_{\ell \in \Z} (\ass+\ell\vect{v})$, this assembly is correctly defined: there are no conflict for any $\ell\in Z$ between $\ass$ and $\ass+\ell\vect{v}$ since $\uniterm$ is $\vect{v}$-periodic and $\ass$ is connected since $Q+\vect{v}$ is in both $\ass$ and $\ass+\vect{v}$. Then $\bigcup_{\ell \in \Z} (\ass+\ell\vect{v})$ is a subassembly of $\alpha$. 
Finally, for any $(x,y) \in Z^2$ which is in $\rightside{\bipump{P^+}}\cap \leftside{\bipump{P^-}}$, there exists $\ell$ such that $(x,y)+\ell\vect{v}$ is in the interior of $C$. Thus for any tile $A$ of the restriction of $\uniterm$ to $\rightside{\bipump{P^+}}\cap \leftside{\bipump{P^-}}$, there are $\ell$ and a tile $B$ of $\ass$ such that $B=A+\ell\vect{v}$ (see appendix \ref{appB:curves}). Thus, $\bigcup_{\ell \in \Z} (\ass+\ell\vect{v})$ is equal to the restriction of $\uniterm$ to $\rightside{\bipump{P^+}}\cap \leftside{\bipump{P^-}}$.

\end{proof}

\subsection{Simply pumpable path}
\label{subsec:pumpablepath}

This section follows the key ideas of the arXiv version of \cite{Doty-2011}. We only give more details in lemma \ref{lem:magicindice} and theorem \ref{last:theorem} on how to deal with the combs (called here $\vect{v}$-self avoiding paths).

\subsubsection{Growing on a strictly pumpable path}

The three following lemmas shows that an arc of width at least $|P|$ cannot grow on a simply pumpable path $P$ which implies that any path growing on a simply pumpable path $P$ is $\vect{P_0P_{|P|-1}}$-self-avoiding. Figure~\ref{fig:simpleself} illustrates the reasonings of the three following lemmas.

\begin{figure}
\centering

\begin{tikzpicture}[x=0.33cm,y=0.33cm]

 \draw[very thick] (14.5,8.5) |- (12.5,5.5) |- (1.5,9.5) |- (5.5,1.5) |- (35,3.5);
 \draw[very thick] (10.5,3.5) |- (8.5,7.5) |- (3.5,5.5) |- (3.5,3.5);
 \draw[very thick] (22.5,3.5) |- (20.5,7.5) |- (18.5,5.5);
 \draw[very thick] (26.5,3.5) |- (24.5,7.5) |- (22.5,5.5); 
\draw[very thick] (30.5,3.5) |- (28.5,7.5) |- (26.5,5.5);

\draws{14}{8}
\draws{14}{7}
\draws{14}{6}
\draws{14}{5}
\draws{13}{5}
\draws{12}{5}
\draws{12}{6}
\draws{12}{7}
\draws{12}{8}
\draws{12}{9}
\draws{11}{9}
\draws{10}{9}
\draws{9}{9}
\draws{8}{9}
\drawblue{7}{9}
\drawblue{6}{9}
\drawblue{5}{9}
\drawblue{4}{9}
\drawblue{3}{9}
\drawblue{2}{9}
\drawblue{1}{9}
\drawblue{1}{8}
\drawblue{1}{7}
\drawblue{1}{6}
\drawblue{1}{5}
\drawblue{1}{4}
\drawblue{1}{3}
\drawblue{1}{2}
\drawblue{1}{1}
\drawblue{2}{1}
\drawblue{3}{1}
\drawblue{4}{1}
\drawblue{5}{1}
\drawblue{5}{2}
\drawgray{5}{3}
\drawgray{6}{3}
\drawgray{7}{3}
\drawgray{8}{3}
\drawt{9}{3}
\drawt{10}{3}
\drawt{11}{3}
\drawt{12}{3}
\drawt{13}{3}
\drawt{14}{3}
\drawt{15}{3}
\drawt{15}{3}
\drawt{16}{3}
\drawt{17}{3}
\drawt{18}{3}
\drawt{19}{3}
\drawt{20}{3}
\drawt{21}{3}
\drawt{22}{3}
\drawt{23}{3}
\drawt{24}{3}
\drawt{25}{3}
\drawt{26}{3}
\drawt{27}{3}
\drawt{28}{3}
\drawt{29}{3}
\drawt{30}{3}
\drawt{31}{3}
\drawt{32}{3}
\drawt{33}{3}
\drawt{34}{3}

\drawred{10}{3}
\drawred{10}{4}
\drawred{10}{5}
\drawred{10}{6}
\drawred{10}{7}
\drawred{9}{7}
\drawred{8}{7}
\drawred{8}{6}
\drawred{8}{5}
\drawred{7}{5}
\drawred{6}{5}
\drawred{5}{5}
\drawred{4}{5}
\drawred{3}{5}
\drawred{3}{4}
\drawred{3}{3}

\drawr{22}{3}
\drawr{22}{4}
\drawr{22}{5}
\drawr{22}{6}
\drawr{22}{7}
\drawr{21}{7}
\drawr{20}{7}
\drawr{20}{6}
\drawr{20}{5}
\drawr{19}{5}
\drawr{18}{5}

\drawr{26}{3}
\drawr{26}{4}
\drawr{26}{5}
\drawr{26}{6}
\drawr{26}{7}
\drawr{25}{7}
\drawr{24}{7}
\drawr{24}{6}
\drawr{24}{5}
\drawr{23}{5}
\drawr{22}{5}

\drawr{30}{3}
\drawr{30}{4}
\drawr{30}{5}
\drawr{30}{6}
\drawr{30}{7}
\drawr{29}{7}
\drawr{28}{7}
\drawr{28}{6}
\drawr{28}{5}
\drawr{27}{5}
\drawr{26}{5}

\draw (6.5, 6.8) node {$Q$};
\draw (5.5, 4.5) node {$P_0$};
\draw (9, 2.2) node {$P_{|P|-1}$};

\draw[->,thick] (10.5, 2.5) -- (22.5,2.5);
\draw (16.5, 1.2) node {3$\vect{P_0P_{|P|-1}}$};

\draw[->,thick] (22.5, 2.5) -- (26.5,2.5);
\draw (24.2, 1.2) node {$\vect{P_0P_{|P|-1}}$};

\draw[->,thick] (26.5, 2.5) -- (30.5,2.5);
\draw (28.8, 1.2) node {$\vect{P_0P_{|P|-1}}$};

\path [dotted, draw, thin] (0,0) grid[step=0.33cm] (35,13);
\end{tikzpicture}

a) The seed is in black, the simply pumpable path $P$ is in gray and its pumping is in white, a blue path binds with the seed and $P_0$ and a path $Q$ (in red) growing on $\pump{P}$ intersects with $\bipump{P}$. Then, it is possible to grow some translations of a prefix of $Q$ (in light red) which intersect.
\vspace{+0.5em}

\begin{tikzpicture}[x=0.33cm,y=0.33cm]

 \draw[very thick] (14.5,8.5) |- (12.5,5.5) |- (1.5,9.5) |- (5.5,1.5) |- (35,3.5);
 \draw[very thick] (22.5,3.5) |- (22.5,5.5);
 \draw[very thick] (26.5,5.5) |- (24.5,7.5) |- (22.5,5.5); 
\draw[very thick] (30.5,3.5) |- (28.5,7.5) |- (26.5,5.5);

\draws{14}{8}
\draws{14}{7}
\draws{14}{6}
\draws{14}{5}
\draws{13}{5}
\draws{12}{5}
\draws{12}{6}
\draws{12}{7}
\draws{12}{8}
\draws{12}{9}
\draws{11}{9}
\draws{10}{9}
\draws{9}{9}
\draws{8}{9}
\drawblue{7}{9}
\drawblue{6}{9}
\drawblue{5}{9}
\drawblue{4}{9}
\drawblue{3}{9}
\drawblue{2}{9}
\drawblue{1}{9}
\drawblue{1}{8}
\drawblue{1}{7}
\drawblue{1}{6}
\drawblue{1}{5}
\drawblue{1}{4}
\drawblue{1}{3}
\drawblue{1}{2}
\drawblue{1}{1}
\drawblue{2}{1}
\drawblue{3}{1}
\drawblue{4}{1}
\drawblue{5}{1}
\drawblue{5}{2}
\drawgray{5}{3}
\drawgray{6}{3}
\drawgray{7}{3}
\drawgray{8}{3}
\drawt{9}{3}
\drawt{10}{3}
\drawt{11}{3}
\drawt{12}{3}
\drawt{13}{3}
\drawt{14}{3}
\drawt{15}{3}
\drawt{15}{3}
\drawt{16}{3}
\drawt{17}{3}
\drawt{18}{3}
\drawt{19}{3}
\drawt{20}{3}
\drawt{21}{3}
\drawt{22}{3}
\drawt{23}{3}
\drawt{24}{3}
\drawt{25}{3}
\drawt{26}{3}
\drawt{27}{3}
\drawt{28}{3}
\drawt{29}{3}
\drawt{30}{3}
\drawt{31}{3}
\drawt{32}{3}
\drawt{33}{3}
\drawt{34}{3}

\drawr{22}{3}
\drawr{22}{4}

\drawr{26}{5}
\drawr{26}{6}
\drawr{26}{7}
\drawr{25}{7}
\drawr{24}{7}
\drawr{24}{6}
\drawr{24}{5}
\drawr{23}{5}
\drawr{22}{5}

\drawr{30}{3}
\drawr{30}{4}
\drawr{30}{5}
\drawr{30}{6}
\drawr{30}{7}
\drawr{29}{7}
\drawr{28}{7}
\drawr{28}{6}
\drawr{28}{5}
\drawr{27}{5}

\draw (5.5, 4.5) node {$P_0$};
\draw (9, 2.2) node {$P_{|P|-1}$};

\path [dotted, draw, thin] (0,0) grid[step=0.33cm] (35,13);
\end{tikzpicture}

b) Using the tiles in light red, it is possible to assemble a path of width $2(|P|-1)$.
\vspace{+0.5em}

\begin{tikzpicture}[x=0.33cm,y=0.33cm]

 \draw[very thick] (14.5,8.5) |- (12.5,5.5) |- (1.5,9.5) |- (5.5,1.5) |- (24,3.5);
 \draw[very thick] (25,3.5) |- (35,3.5);
 \draw[very thick] (22.5,3.5) |- (22.5,5.5);
 \draw[very thick] (26.5,5.5) |- (24.5,7.5) |- (22.5,5.5); 
\draw[very thick] (30.5,3.5) |- (28.5,7.5) |- (26.5,5.5);
\draw[very thick] (28.5,3.5) |- (24.5,1.5) |- (24.5,3.5);

\draws{14}{8}
\draws{14}{7}
\draws{14}{6}
\draws{14}{5}
\draws{13}{5}
\draws{12}{5}
\draws{12}{6}
\draws{12}{7}
\draws{12}{8}
\draws{12}{9}
\draws{11}{9}
\draws{10}{9}
\draws{9}{9}
\draws{8}{9}
\drawblue{7}{9}
\drawblue{6}{9}
\drawblue{5}{9}
\drawblue{4}{9}
\drawblue{3}{9}
\drawblue{2}{9}
\drawblue{1}{9}
\drawblue{1}{8}
\drawblue{1}{7}
\drawblue{1}{6}
\drawblue{1}{5}
\drawblue{1}{4}
\drawblue{1}{3}
\drawblue{1}{2}
\drawblue{1}{1}
\drawblue{2}{1}
\drawblue{3}{1}
\drawblue{4}{1}
\drawblue{5}{1}
\drawblue{5}{2}
\drawgray{5}{3}
\drawgray{6}{3}
\drawgray{7}{3}
\drawgray{8}{3}
\drawt{9}{3}
\drawt{10}{3}
\drawt{11}{3}
\drawt{12}{3}
\drawt{13}{3}
\drawt{14}{3}
\drawt{15}{3}
\drawt{15}{3}
\drawt{16}{3}
\drawt{17}{3}
\drawt{18}{3}
\drawt{19}{3}
\drawt{20}{3}
\drawt{21}{3}
\drawt{22}{3}
\drawt{23}{3}
\drawt{24}{3}
\drawt{25}{3}
\drawt{26}{3}
\drawt{27}{3}
\drawt{28}{3}
\drawt{29}{3}
\drawt{30}{3}
\drawt{31}{3}
\drawt{32}{3}
\drawt{33}{3}
\drawt{34}{3}

\drawb{24}{3}
\drawb{24}{2}
\drawb{24}{1}
\drawb{25}{1}
\drawb{26}{1}
\drawb{27}{1}
\drawb{28}{1}
\drawb{28}{2}

\drawr{22}{3}
\drawr{22}{4}

\drawr{26}{5}
\drawr{26}{6}
\drawr{26}{7}
\drawr{25}{7}
\drawr{24}{7}
\drawr{24}{6}
\drawr{24}{5}
\drawr{23}{5}
\drawr{22}{5}

\drawr{30}{3}
\drawr{30}{4}
\drawr{30}{5}
\drawr{30}{6}
\drawr{30}{7}
\drawr{29}{7}
\drawr{28}{7}
\drawr{28}{6}
\drawr{28}{5}
\drawr{27}{5}

\draw (5.5, 4.5) node {$P_0$};
\draw (9, 2.2) node {$P_{|P|-1}$};

\path [dotted, draw, thin] (0,0) grid[step=0.33cm] (35,13);
\end{tikzpicture}

c) We try to assemble a translation of the blue path and the seed (in light blue), the previous arc allows us to remove a tile of $\pump{P}$ which creates a conflict and a contradiction.

\caption{Consider a simple pumpable path $P$, if a path grows on $\pump{P}$ then it is $\vect{P_0P_{|P|-1}}$-self-avoiding (see lemmas \ref{lem:transpath},\ref{lem:transarc} and \ref{lem:noarc}).}
\label{fig:simpleself}
\end{figure}

\begin{lemma} 
\label{lem:transpath}
Consider a tile assembly system $\mathcal{T}=(T,\sigma,1)$ whose terminal assembly is $\uniterm$ and a simply pumpable path $P$ in $\inuniterm$. If there exists a path $Q$ growing on $\pump{P}$ at index $i\geq |P|-1$ such that $Q$ and $Q+j\vect{P_0P_{|P|-1}}$ intersect for some $j \in \N^*$, then there exists a path $R$ growing on $\pump{P}$ such that $R$ and $R+k\vect{P_0P_{|P|-1}}$ intersect for some $k \in \N^*$ and for all $\ell\in \N$, $R+\ell\vect{P_0P_{|P|-1}}$ is in $\inuniterm$.
\end{lemma}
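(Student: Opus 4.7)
The plan is to use Observation~\ref{obs:grow} as the main lever: a path that grows on $\pump{P}$ and whose $\vect v$-translate (with $\vect v := \vect{P_0P_{|P|-1}}$) does not intersect it has all its non-negative $\vect v$-translates growing on $\pump{P}$, and therefore lying in $\inuniterm$. If $Q$ itself already satisfies $Q \cap (Q+\vect v) = \emptyset$, then Observation~\ref{obs:grow} immediately places each $Q + \ell\vect v$ in $\inuniterm$, and I can take $R := Q$ and $k := j$.

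Otherwise $Q$ intersects $Q+\vect v$, and the strategy becomes to look at prefixes and extract the first place where any $\vect v$-shifted self-coincidence appears. Let $a^* := \min\{c : Q_{0,\ldots,c}$ intersects $Q_{0,\ldots,c} + k\vect v$ for some $k \in \N^*\}$, and $k^* := \min\{k \in \N^* : Q_{0,\ldots,a^*}$ intersects $Q_{0,\ldots,a^*} + k\vect v\}$. If $k^* > 1$, then the prefix $R := Q_{0,\ldots,a^*}$ cannot meet $R+\vect v$ (this would contradict minimality of $k^*$), so Observation~\ref{obs:grow} applies to $R$ and gives $R + \ell\vect v \in \inuniterm$ for every $\ell \in \N$, while the coincidence at index $a^*$ supplies the $k^*\vect v$-intersection we need.

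The delicate sub-case is $k^* = 1$: here $Q_{a^*} = Q_b + \vect v$ for some $b < a^*$ (or the symmetric $Q_b = Q_{a^*}+\vect v$) and truncating $Q$ destroys the required intersection. I would then isolate the subpath $S := Q_{b,b+1,\ldots,a^*}$, which satisfies $\vect{S_0 S_{|S|-1}} = \vect v$, and use the minimality of $a^*$ together with the self-avoidance of $Q$ to show that the only intersection between $S$ and $S+\vect v$ is the endpoint agreement $S_{|S|-1} = S_0 + \vect v$, making $S$ a good candidate in the sense of Section~\ref{sec:Pumping a path}.

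With $S$ in hand, my plan is to transplant the construction to a position on $\pump{P}$ that is far enough along in the $\vect v$-direction to steer clear of $\sigma$ and of the interaction with $\bipump{P}$ visible in Figure~3.1(a); concatenating a prefix of $\pump{P}$ with the pumping $\pump S$ (glued at the point where both visit the same type along the periodic backbone) produces the target path $R$, growing on $\pump{P}$ and carrying a self-intersection at $\vect v$-distance. The main obstacle is verifying that this $R$ indeed lies in $\inuniterm$ at every non-negative shift — i.e.\ that the pumping of $S$ does not conflict with $\uniterm$. I would handle it by slicing $R$ into a part that is $\vect v$-shift avoiding (to which Observation~\ref{obs:grow} applies directly) and a periodic tail whose tile types match $\uniterm$ by the $\type{Q_{a^*}} = \type{Q_b}$ constraint inherited from $Q \in \inuniterm$, together with Lemma~\ref{lem:bipump:seed}-style reasoning to move the producibility along $\pump P$ without creating fragilities.
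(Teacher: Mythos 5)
Your proposal has two genuine gaps. First, you misread Observation~\ref{obs:grow}: its conclusion is that every translate $Q+\ell\vect{P_0P_{|P|-1}}$ \emph{grows on} $\pump{P}$, not that it lies in $\inuniterm$. Growing on $\pump{P}$ only controls intersections with $\pump{P}$; the path (and in particular $Q$ itself, the case $\ell=0$) may still conflict with the seed or with a finite producible assembly $\ass$ containing $P_0$, and the lemma does \emph{not} assume $Q\in\inuniterm$. Since the conclusion demands $R+\ell\vect{P_0P_{|P|-1}}\in\inuniterm$ for all $\ell\in\N$, taking $R:=Q$ in your first case (and $R:=Q_{0,\ldots,a^*}$ in your $k^*>1$ case) fails. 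The missing idea — and the heart of the paper's proof — is to first translate by $L\vect{P_0P_{|P|-1}}$ with $L$ large enough that all further translates avoid $\ass$; only then does producibility of $\ass\cup\pump{P}\cup(Q+\ell\vect{P_0P_{|P|-1}})$ together with directedness place the translates in $\inuniterm$.

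Second, your $k^*=1$ sub-case does not go through. A path obtained by concatenating a prefix of $\pump{P}$ with $\pump{S}$ cannot grow on $\pump{P}$ (it would intersect $\pump{P}$ at every tile of that prefix, whereas growing permits exactly one intersection), and your argument that $S$ is a good candidate rests on $\type{Q_{a^*}}=\type{Q_b}$, which you justify by ``$Q\in\inuniterm$'' — again not a hypothesis. The paper's route avoids all of this: after translating by $L\vect{P_0P_{|P|-1}}$, either the translate still grows on $\pump{P}$ (take $R=Q+L\vect{P_0P_{|P|-1}}$ and $k=j$, done), or some further translate hits $\pump{P}$; truncating $Q$ just before the first such hit yields a path $R$ whose extension by the colliding tile is an arc of $\pump{P}$ of width at least $|P|$ (because $Q$ is rooted at index $i\geq|P|-1$ and has been pushed further along the pumping), and, by the geometric fact recalled just before Observation~\ref{obs:grow}, such an arc must meet its translate by $\vect{P_0P_{|P|-1}}$, giving $k=1$. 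No extraction or pumping of a sub-path of $Q$ is needed.
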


\begin{proof}
Since $P$ is in $\inuniterm$, there exists a finite producible subassembly $\ass$  ($\sigma$ is a subassembly of $\ass$) such that $P_0$ is a tile of $\ass$. Since $\ass$ is finite and $\vect{P_0P_{|P|-1}}$ is not null, there exists $L \in \N$ such that for all $\ell\geq L$, $Q+\ell\vect{P_0P_{|P|-1}}$ does not intersect with $\ass$. If for all $\ell\geq L$, $Q+\ell\vect{P_0P_{|P|-1}}$ does not intersect with $\pump{P}$ then $Q+\ell\vect{P_0P_{|P|-1}}$ is a path of $\inuniterm$ which grows on $P$. In this case, we set $R=Q+L\vect{P_0P_{|P|-1}}$ and $k=j$ and the lemma is true. Otherwise, we can define $$m=\min\{n>0:\textrm{there exists } \ell>L \textrm{ such that } \pos{Q}_{n}+\ell\vect{P_0P_{|P|-1}}\textrm{ is in } \pos{\pump{P}}\}.$$ 
By definition of $m$, there exists $L' \geq L$ such that $Q_m+L'\vect{P_0P_{|P|-1}}$ intersects with a tile $P_n$ of $\pump{P}$. In this case, we set $k=1$ and $R=Q_{0,1,\ldots,m-1}+L'\vect{P_0P_{|P|-1}}$. By definition of $L'$, $L$ and $m$, for all $\ell\geq 0$, $R+\ell\vect{P_0P_{|P|-1}}$ grows on $\pump{P}$ on tile $P_{i+(L'+\ell)(|P|-1)}$ and does not intersect with $\ass$ and thus is in $\inuniterm$.  Now, since by hypothesis $Q$ grows on $\pump{P}$ then $Q_m$ is not a tile of $\pump{P}$ and since  $i \geq |P|-1$, we have $(i+L'(|P|-1))-n>|P|-1$. The binding graph of $R(Q_m+L'\vect{P_0P_{|P|-1}})$ is the binding graph of an arc of width at least $|P|$ thus $R$ intersects with $R+\vect{P_0P_{|P|-1}}$. 
\end{proof}

\begin{lemma} 
\label{lem:transarc}
Consider a tile assembly system $\mathcal{T}=(T,\sigma,1)$ whose terminal assembly is $\uniterm$ and a simply pumpable path $P$ in $\inuniterm$. If there exists a path $Q$ growing on $\pump{P}$ such that $Q$ and $Q+i\vect{P_0P_{|P|-1}}$ intersect for some $i \in \N^*$ and for all $\ell\in \N$, $Q+\ell\vect{P_0P_{|P|-1}}$ is in $\inuniterm$, then there exists an arc $A$ of width at least $|P|$ growing on $\pump{P}$ such that for all $\ell\in\N$, $A+\ell\vect{P_0P_{|P|-1}}$ is in $\inuniterm$.
\end{lemma}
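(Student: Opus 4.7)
Let $\vect{v} = \vect{P_0P_{|P|-1}}$. Both $Q$ and each translate $Q + \ell\vect{v}$ with $\ell \in \N$ lie in $\inuniterm$, so any intersection between them is an agreement. My plan is to splice $Q$ with a suitable translate at such an intersection in order to produce a simple path whose two endpoints lie on $\pump{P}$.

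I first identify the earliest self-intersection: let $j$ be the smallest index such that $Q_j = Q_k + \ell\vect{v}$ for some $k$ and some $\ell \geq 1$; let $\ell^\star$ be the smallest such $\ell$ for this $j$; and let $k$ be the corresponding index (unique by simplicity of $Q$). Define
$$A = Q_0\,Q_1\,\cdots\,Q_j\,(Q+\ell^\star\vect{v})_{k-1}\,(Q+\ell^\star\vect{v})_{k-2}\,\cdots\,(Q+\ell^\star\vect{v})_0.$$
Since $Q$ grows on $\pump{P}$, its only tile on $\pump{P}$ is $Q_0 = \pump{P}_n$ for some $n$, which forces $k \geq 1$ (otherwise $Q_j = Q_0 + \ell^\star\vect{v}$ would itself be on $\pump{P}$ at the interior index $j$ of $Q$); analogously, $Q + \ell^\star\vect{v}$ meets $\pump{P}$ only at $\pump{P}_{n + \ell^\star(|P|-1)}$. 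Hence $A$ is an arc on $\pump{P}$ of width $\ell^\star(|P|-1)$. Simplicity of $A$ follows from the minimality of $j$, which forbids $Q_a \in Q + \ell\vect{v}$ for $a < j$, together with the uniqueness of $k$. If $\ell^\star \geq 2$, then $\ell^\star(|P|-1) \geq 2(|P|-1) \geq |P|$ and I am done.

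The main obstacle is the case $\ell^\star = 1$, where the basic arc has width only $|P|-1$. To reach width $2(|P|-1) \geq |P|$ I extend the chain by a middle segment along $Q + \vect{v}$ (running between indices $k$ and $j$ in whichever direction is appropriate) and a final segment along $Q + 2\vect{v}$, glued at the two shared tiles $Q_j = (Q+\vect{v})_k$ and $(Q+\vect{v})_j = (Q+2\vect{v})_k$. The new endpoint is $(Q+2\vect{v})_0 = \pump{P}_{n + 2(|P|-1)}$, giving width $2(|P|-1)$. Simplicity of this three-segment chain reduces again to the minimality of $j$ and $\ell^\star$ and the uniqueness of $k$ via a short case analysis on $k<j$ versus $k>j$: any offending overlap exposes a new relation $Q_a = Q_b + \vect{v}$, and after at worst restarting with a reoriented pair $(j',k') = (k,b)$ having $k' < j'$, or switching to $\ell^\star = 2$ in the event that $Q_j$ also happens to lie in $Q + 2\vect{v}$, one reaches the orientation in which simplicity is immediate from the minimality statements.

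Finally, the tiles of $A$ all lie in $Q + m\vect{v}$ for some $m \in \{0,1,\ldots,\max(\ell^\star,2)\}$. For each $\ell \geq 0$, the shifted path $A + \ell\vect{v}$ is simple, being a translate of the simple $A$, and its tiles lie in $Q + (m+\ell)\vect{v}$, which is in $\inuniterm$ by hypothesis since $m+\ell \geq 0$. Hence $A + \ell\vect{v} \in \inuniterm$ for every $\ell \in \N$, completing the proof.
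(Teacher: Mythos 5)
Your overall strategy is the right one and is close in spirit to the paper's: chain $Q$ together with translates of itself by multiples of $\vect{P_0P_{|P|-1}}$, using the fact that all intersections must be agreements, so as to travel from $\pump{P}_n$ to a tile of $\pump{P}$ at least $|P|$ indices further along. Your $\ell^\star\geq 2$ case and your final paragraph (every tile of $A+\ell\vect{v}$ lies in some $Q+m\vect{v}$ with $m\geq 0$, hence in $\uniterm$) are fine.

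The gap is in the $\ell^\star=1$ case, which you do not actually prove. Because you insist on writing the arc as an explicit concatenation of subpaths of $Q$, $Q+\vect{v}$ and $Q+2\vect{v}$, you incur a simplicity obligation that the minimality of $j$ does not discharge: minimality only controls indices $a<j$ of the \emph{first} copy, but says nothing about a second relation $Q_b=Q_c+\vect{v}$ with $b>j$, which is exactly what can make your middle segment $(Q+\vect{v})_{j,\ldots,k}$ collide with your last segment $(Q+2\vect{v})_{0,\ldots,k}$ when $k>j$. Your proposed repair (``restarting with a reoriented pair\ldots one reaches the orientation in which simplicity is immediate'') is not an argument; as stated there is no guarantee the process terminates in a configuration you have analysed. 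The paper sidesteps all of this: it forms the \emph{assembly} $\ass=Q\cup(Q_{1,\ldots,|Q|-1}+i\vect{v})\cup(Q+2i\vect{v})$, observes that it is a connected subassembly of $\uniterm$ whose only positions on $\pump{P}$ are $Q_0$ and $Q_0+2i\vect{v}$, and then simply \emph{extracts} a simple binding path between these two positions inside $\ass$ — existence of such a simple path is automatic from connectivity, so no case analysis on where the translates meet is ever needed, and the width $2i(|P|-1)\geq|P|$ comes for free. You can close your gap the same way: replace your spliced path by any simple binding path from $Q_0$ to $Q_0+2\ell^\star\vect{v}$ inside $\asm{Q}\cup\asm{Q_{1,\ldots,|Q|-1}+\ell^\star\vect{v}}\cup\asm{Q+2\ell^\star\vect{v}}$, which also makes your two cases collapse into one.
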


\begin{proof}
Since $Q+i\vect{P_0P_{|P|-1}}$ intersects with $Q$ and $Q+2i\vect{P_0P_{|P|-1}}$ and since all intersections are agreement (these three paths all belongs to $\inuniterm$), then we can define the following assembly:
$$\ass=Q \cup (Q_{1,2,\ldots, |Q|-1}+i\vect{P_0P_{|P|-1}}) \cup (Q+2i\vect{P_0P_{|P|-1}}).$$ Since for all $\ell\in \N$, $Q+\ell\vect{P_0P_{|P|-1}}$ is in $\inuniterm$, then for all $\ell\in \N$, $\ass+\ell\vect{P_0P_{|P|-1}}$ is a subassembly of $\uniterm$. By definition of growing, the only intersection between $\pump{P}$ and $\ass$ is $Q_0$ and $Q_0+2i\vect{P_0P_{|P|-1}}$. Thus there exists an arc $A$ growing on $\pump{P}$ of width $2i(|P|-1)>|P|$ such that $\asm{A}$ is a subassembly of $\ass$. Moreover, for all $\ell \in \N$, $A+\ell\vect{P_0P_{|P|-1}}$ is in $\inuniterm$.
\end{proof}

\begin{lemma} 
\label{lem:noarc}
Consider a tile assembly system $\mathcal{T}=(T,\sigma,1)$ whose terminal assembly is $\uniterm$ and a simply pumpable path $P$ in $\inuniterm$ then there exists no arc $A$ of width at least $|P|$ growing on $\pump{P}$.  
\end{lemma}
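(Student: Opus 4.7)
The plan is to argue by contradiction. Assume an arc $A$ of width at least $|P|$ grows on $\pump{P}$, with $A_0 = \pump{P}_i$ and $A_{|A|-1} = \pump{P}_j$ and $j - i \geq |P|$; without loss of generality $A$ lies on the left side of $\pump{P}$. The embeddings of $A$ and of $\pump{P}_{i,\ldots,j}$ together form a simple closed polygonal curve which, by the Jordan curve theorem, bounds a finite region $D$ on the left of $\pump{P}$.

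The key geometric step is to examine the translated path $A + \vect{P_0P_{|P|-1}}$: since $i < i + |P|-1 < j$, its first position $\pos{\pump{P}_{i+|P|-1}}$ sits on the $\pump{P}$-portion of the boundary of $D$, while its last position $\pos{\pump{P}_{j+|P|-1}}$ lies on $\pump{P}$ strictly past index $j$, hence outside $D$. Since $\pump{P}$ is invariant under translation by $\vect{P_0P_{|P|-1}}$, the interior of $A + \vect{P_0P_{|P|-1}}$ stays on the left of $\pump{P}$; so in order to leave $D$ the path must cross the $A$-portion of the boundary.

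To turn this crossing into an algebraic repetition on $A$, I would argue that no conflict arises while growing $A + \vect{P_0P_{|P|-1}}$ tile by tile from $\pump{P}_{i+|P|-1}$: otherwise the producible prefix up to the first conflict would be a fragile path, forbidden in a directed system. Thus $A + \vect{P_0P_{|P|-1}} \in \inuniterm$ and the crossing is an agreement, yielding two indices $k < k'$ on $A$ with $\type{A_k} = \type{A_{k'}}$ and $\pos{A_{k'}} - \pos{A_k} = \vect{P_0P_{|P|-1}}$. Applying lemma~\ref{lem:redun:exists} to $A_{k,\ldots,k'}$ gives a subpath $B$ without redundancy with $\vect{B_0B_{|B|-1}} = \vect{P_0P_{|P|-1}}$, and since $B + \vect{P_0P_{|P|-1}} \subseteq A + \vect{P_0P_{|P|-1}} \in \inuniterm$, lemma~\ref{lem:redun:good} shows that $B$ is a good candidate.

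The hardest step is lifting $B$ to a bi-pumpable path. Iterating the Jordan-curve argument on $A + \ell\vect{P_0P_{|P|-1}}$ for $\ell \geq 0$ shows that $\pump{B} \in \inuniterm$, so $B$ is at least pumpable; to also obtain $\bipump{B} \in \inuniterm$ I would mirror the argument in the negative direction, using the agreements produced by the positive translates to transport the arc structure backwards and conclude that $\uniterm$ is $\vect{P_0P_{|P|-1}}$-periodic. Once a bi-pumpable path of direction $\vect{P_0P_{|P|-1}}$ is obtained, corollary~\ref{cor:periodic} gives the periodicity of $\uniterm$ and hence $\bipump{P} \in \inuniterm$, contradicting the assumption that $P$ is only simply pumpable.
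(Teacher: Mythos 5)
Your proposal has a genuine gap at exactly the step you flag as hardest. Your strategy is to upgrade the arc to a bi-pumpable path and then invoke corollary~\ref{cor:periodic}, but the backward direction cannot be obtained by ``mirroring the argument'': the reason $P$ is simply pumpable and not bi-pumpable is precisely that $\bipump{P}$ is obstructed in the negative direction (it must conflict with some finite producible assembly $\ass$ containing $P_0$, since otherwise $P$ would be bi-pumpable). There is no symmetry to exploit --- the positive translates of the arc run off to infinity, while the negative ones run into $\ass$ --- so ``transporting the arc structure backwards'' has no mechanism behind it, and without $\bipump{B}\in\inuniterm$ you get neither periodicity of $\uniterm$ nor the final contradiction. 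The paper's proof does not go through periodicity at all; it turns the obstruction itself into the contradiction. It translates $\ass$ by $j\vect{P_0P_{|P|-1}}$ so that the conflict between $\ass$ and $\bipump{P}$ lands at a position $\pos{P_k}$ of $\pump{P}$ beyond the start of the arc, producing a path $R$ in $\inuniterm$ ending next to a tile $t$ with $\pos{t}=\pos{P_k}$ and $\type{t}\neq\type{P_k}$. The arc is then used as a \emph{bypass}: removing $P_k$ from $\ass\cup\pump{P}\cup(B+\ell\vect{P_0P_{|P|-1}})\cup R$ leaves the assembly connected, so one can produce an assembly placing $t$ at $\pos{P_k}$ instead of $P_k$, contradicting directedness. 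That tile-replacement idea is the essential content of the lemma and is absent from your proposal.

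Two smaller problems. First, your claim that growing $A+\vect{P_0P_{|P|-1}}$ cannot meet a conflict ``because fragile paths are forbidden'' conflates two things: in a directed system a path that conflicts with a producible assembly is simply not in $\inuniterm$; nothing prevents the translate of $A$ from colliding with $\ass$ or the seed. This is why the paper first pushes the arc far away via lemmas~\ref{lem:transpath} and~\ref{lem:transarc} before asserting that its translates are in $\inuniterm$. Second, lemma~\ref{lem:redun:exists} applied to $A_{k,\ldots,k'}$ yields a redundancy-free subpath whose endpoints share a tile type, but it does \emph{not} guarantee that the translation vector between those endpoints is $\vect{P_0P_{|P|-1}}$; the extracted subpath can sit anywhere strictly inside $A_{k,\ldots,k'}$, so your identification $\vect{B_0B_{|B|-1}}=\vect{P_0P_{|P|-1}}$ is unjustified as written.
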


\begin{proof}
For the sake of contradiction suppose that exists an arc $A$ of width at least $|P|$ growing on $\pump{P}$. Since $P$ is in $\inuniterm$, there exists a finite subassembly $\ass$ of $\uniterm$ such that $\ass$ is producible ($\sigma$ is a subassembly of $\ass$) and $P_0$ is a tile of $\ass$. Since $\ass$ is finite and $\vect{P_0P_{|P|-1}}$ is not null, there exists $L \in \N$ such that for all $\ell\geq L$, $A+\ell\vect{P_0P_{|P|-1}}$ does not intersect with $\ass$. If for all $\ell\geq L$, $A+\ell\vect{P_0P_{|P|-1}}$ does not intersect with $\pump{P}$ then $A+\ell\vect{P_0P_{|P|-1}}$ is a path of $\inuniterm$. In this case, we set $B=A+L\vect{P_0P_{|P|-1}}$. Otherwise, since the width of $A$ is at least $|P|$ we can suppose that $A_0=P_i$ with $i>|P|-1$ and there is an intersection between $A$ and $A+\vect{P_0P_{|P|-1}}$ which does not occur at $\pos{A_0}$ or $\pos{A_{|A|-1}}$. Let $Q=A_{0,1,\ldots,|A|-2}$, this path grows on $\pump{P}$ at index $i$ and intersects with $Q+\vect{P_0P_{|P|-1}}$. Then by lemmas \ref{lem:transpath} and \ref{lem:transarc}, there exists an arc $B$ of width at least $|P|$ growing on $\pump{P}$ such that for all $\ell\in\N$, $B+\ell\vect{P_0P_{|P|-1}}$ is in $\inuniterm$. 

Now, let $0\leq m <n $ such that the arc $B$ grows between $P_m$ and $P_n$ (then $n-m\geq |P|$). If there is no conflict between $\ass$ and $\bipump{P}$ then $P$ would be bi-pumpable which contradicts the hypothesis of the lemma. Then there exists a conflict between $\bipump{P}$ and $\ass$. Moreover, there exists $j\in \N$ such that $\ass+j\vect{P_0P_{|P|-1}}$ does not intersect with $\ass$ and there is at least one conflict between $\ass+j\vect{P_0P_{|P|-1}}$ and $\pump{P}$ at $\pos{P_k}$ for some $k>m$. Let $t$ be the tile such that $\pos{t}=\pos{P_k}$ and $\type{t}=(\ass+j\vect{P_0P_{|P|-1}})(\pos{P_k})$ then $\type{t}\neq \type{P_k}$. By definition of $\ass$, there is at least one agreement between $\ass+j\vect{P_0P_{|P|-1}}$ and $\pump{P}$ at $P_0+j\vect{P_0P_{|P|-1}}$. Thus, there exists a path $R$ which grows on $P$ such that $\asm{R}$ is a sub-assembly of $\ass+j\vect{P_0P_{|P|-1}}$ and the tile $R_{|R|-1}$ interacts with $t$. Moreover, since $\ass+j\vect{P_0P_{|P|-1}}$ does not intersect with $\ass$ then $R$ is in $\inuniterm$. Now since $k>m$ and $n-m\geq |P|$, there exists $\ell$ such that $m+\ell(|P|-1)<k<n+\ell(|P|-1)$. We remind that $\ass$ intersects with $\pump{P}$ by definition, that $B+\ell\vect{P_0P_{|P|-1}}$ and $R$ grow on $\pump{P}$ and are in $\inuniterm$, then the following assembly is correctly defined: $$\gamma=\ass \cup \pump{P} \cup (B+\ell\vect{P_0P_{|P|-1}}) \cup R.$$ Moreover since $\ass$ is producible then $\gamma$ is also producible. Removing the tile $P_k$ in $\pump{P}$ disconnects the path in two parts but adding $B+\ell\vect{P_0P_{|P|-1}}$ reconnects these two parts, then it is possible to remove the tile $P_k$ in $\gamma$ and since $\pos{t}=\pos{P_k}$ and $t$ interacts with $R_{|R|-1}$, it is possible to switch $P_k$ by $t$ in $\gamma$. Since $\type{P_k}\neq \type{t}$, $(T,\sigma,1)$ is not directed which contradicts the hypothesis.
\end{proof}

\begin{corollary} 
\label{cor:selfavoid}
Consider a tile assembly system $\mathcal{T}=(T,\sigma,1)$ whose terminal assembly is $\uniterm$, a simply pumpable path $P$ in $\inuniterm$ and a path $Q$ growing on $\pump{P}$ at index $i \geq |P|-1$ then $Q$ is $\vect{P_0P_{|P|-1}}$-self-avoiding and for all $\ell\in\N$ $Q+\ell\vect{P_0P_{|P|-1}}$ grows on $\pump{P}$.
\end{corollary}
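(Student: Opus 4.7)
The plan is to chain together Observation \ref{obs:grow} with Lemmas \ref{lem:transpath}, \ref{lem:transarc}, and \ref{lem:noarc} to rule out any self-intersection among translates of $Q$, and then to upgrade ``grows on $\pump{P}$'' to ``belongs to $\inuniterm$'' for sufficiently large shifts.

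First I would apply Observation \ref{obs:grow} to $Q$: either $Q$ intersects $Q+\vect{P_0P_{|P|-1}}$, or every translate $Q+\ell\vect{P_0P_{|P|-1}}$ with $\ell\in\N$ already grows on $\pump{P}$. To finish the non-intersection part of $\vect{v}$-self-avoidance, I would suppose for contradiction that $Q$ meets some translate $Q+j\vect{P_0P_{|P|-1}}$ with $j\in\N^*$. Feeding this hypothesis into Lemma \ref{lem:transpath} produces a path $R$ growing on $\pump{P}$ such that $R$ meets $R+k\vect{P_0P_{|P|-1}}$ for some $k\in\N^*$ and such that every translate $R+\ell\vect{P_0P_{|P|-1}}$ lies in $\inuniterm$. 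Plugging $R$ into Lemma \ref{lem:transarc} then yields an arc of $\pump{P}$ of width at least $|P|$, contradicting Lemma \ref{lem:noarc}. So $Q$ is disjoint from each nontrivial translate along $\vect{P_0P_{|P|-1}}$, and Observation \ref{obs:grow} gives that every $Q+\ell\vect{P_0P_{|P|-1}}$ with $\ell\in\N$ grows on $\pump{P}$, which is the second conclusion of the corollary.

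It remains to check the second clause in the definition of a $\vect{v}$-self-avoiding path, namely that $Q+\ell\vect{P_0P_{|P|-1}}\in\inuniterm$ for all sufficiently large $\ell$. Since $P\in\inuniterm$ and $P$ is simply pumpable we have $\pump{P}\in\inuniterm$, and there exists a finite producible assembly $\ass$ containing the seed and the tile $P_0$. As $\ass$ is finite and $\vect{P_0P_{|P|-1}}$ is nonzero, we can pick $L\in\N$ so that for every $\ell\geq L$ the translate $Q+\ell\vect{P_0P_{|P|-1}}$ is disjoint from $\ass$. Since $Q+\ell\vect{P_0P_{|P|-1}}$ grows on $\pump{P}$, the assembly $\ass\cup\pump{P}\cup(Q+\ell\vect{P_0P_{|P|-1}})$ is producible, and by directedness it is a subassembly of $\uniterm$, so $Q+\ell\vect{P_0P_{|P|-1}}\in\inuniterm$.

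I do not expect a real obstacle here: the corollary is essentially a bookkeeping step threading the three preceding lemmas through Observation \ref{obs:grow}. The only subtle point is justifying that a translate which merely \emph{grows on} $\pump{P}$ in fact lies in $\inuniterm$ for large $\ell$, and this is handled by extending a finite producible witness by $\pump{P}$ and then by the translate, which is legitimate precisely because translates avoid each other (by the first paragraph) and avoid the finite witness (by choice of $L$).
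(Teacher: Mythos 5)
Your proposal is correct and follows essentially the same route as the paper's proof: the intersection case is dispatched by chaining Lemmas \ref{lem:transpath} and \ref{lem:transarc} into a contradiction with Lemma \ref{lem:noarc}, Observation \ref{obs:grow} handles the ``grows on $\pump{P}$'' clause, and membership in $\inuniterm$ for large $\ell$ is obtained exactly as in the paper by extending a finite producible assembly $\ass$ containing $P_0$ by $\pump{P}$ and then by the translate once it clears $\ass$. The only difference is cosmetic ordering (you invoke Observation \ref{obs:grow} up front as a dichotomy rather than after ruling out intersections).
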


\begin{proof}
For the sake of contradiction, suppose that there exists a path $Q$ growing on $\pump{P}$ on a tile $P_i$ with $i \geq |P|-1$ which is not $\vect{P_0P_{|P|-1}}$-self-avoiding. If for some $j\in \N^*$, $Q$ intersects with $Q+j\vect{P_0P_{|P|-1}}$ then by lemmas \ref{lem:transpath} and \ref{lem:transarc}, there exists an arc $A$ growing on $P$ of width at least $|P|$ which is a contradiction of lemma \ref{lem:noarc}. By observation \ref{obs:grow}, for all $\ell\in\N$, $Q+\ell\vect{P_0P_{|P|-1}}$ grows on $\pump{P}$. Finally, consider a finite assembly $\ass$ producible by $(T,\sigma,1)$ such that $P_0$ is a tile of $\ass$, then $\ass \cup \pump{P}$ is producible by $(T,\sigma,1)$. Since $\ass$ is finite there exists $L\in \N$ such that $Q+L\vect{P_0P_{|P|-1}}$ does not intersect with $\ass$ and thus for all $\ell\geq L$, $\ass \cup \pump{P}\cup (Q+\ell\vect{P_0P_{|P|-1}})$ is producible by $(T,\sigma,1)$ and then $Q+\ell\vect{P_0P_{|P|-1}}$ is in $\inuniterm$.
\end{proof}

The following lemmas shows that any path growing on a $\vect{v}$-self-avoiding path $P$ is $(\vect{v},\vect{P_0P_{|P|-1}})$-self-avoiding.

\begin{lemma} 
\label{lem:doubleavoid}
Consider a tile assembly system $\mathcal{T}=(T,\sigma,1)$ whose terminal assembly is $\uniterm$, a simply pumpable (resp. bi-pumpable) path $P$ in $\inuniterm$ such that $\pump{P}$ (resp. $\bipump{P}$) is $\vect{v}$-self-avoiding (with $\vect{v}$ not collinear with $\vect{P_0P_{|P|-1}}$) then any path growing on $\pump{P}$ (resp. $\bipump{P}$) at index $i \geq |P|-1$ (resp. $i\in\Z$) is $(\vect{v},\vect{P_0P_{|P|-1}})$-self-avoiding. 
\end{lemma}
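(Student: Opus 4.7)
The plan is to split along whether the translation vector has a nonzero $\vect{v}$-component. When $\ell=0$ and $\ell'\neq 0$, the conclusion $Q \cap (Q + \ell'\vect{P_0P_{|P|-1}})=\emptyset$ is exactly Corollary~\ref{cor:selfavoid}, which provides $\vect{P_0P_{|P|-1}}$-self-avoidance of $Q$; the case $\ell'<0$ reduces to $\ell'>0$ by translating by $-\ell'\vect{P_0P_{|P|-1}}$ (intersection is symmetric).

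For the remaining case $\ell\neq 0$, I would argue by contradiction: suppose $\pos{Q_j}=\pos{Q_k}+\vect{w}$ with $\vect{w}=\ell\vect{v}+\ell'\vect{P_0P_{|P|-1}}$, WLOG $\ell>0$. The idea is to move far along the $\vect{v}$-axis to a region where the $\vect{v}$-self-avoidance of $\pump{P}$ (resp.~$\bipump{P}$) lets us place everything into $\inuniterm$. Concretely, for $M$ large the path $\pump{P}+M\vect{v}$ is in $\inuniterm$ and, mimicking the final producibility argument of Corollary~\ref{cor:selfavoid}, I choose a finite producible assembly containing both $P_0$ and $P_0+M\vect{v}$, grow $\pump{P}\cup(\pump{P}+M\vect{v})$, then append the translated combs; this puts both $Q+M\vect{v}$ and $Q+(M+\ell)\vect{v}+\ell'\vect{P_0P_{|P|-1}}$ into $\inuniterm$. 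Directedness of $\uniterm$ then forces the translated intersection at $q^*+M\vect{v}$ to be an agreement, giving $\type{Q_j}=\type{Q_k}$.

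From here I would convert the agreement into an object forbidden by a previous lemma. Concatenating the prefix of $Q+M\vect{v}$ from its root on $\pump{P}+M\vect{v}$ (at index $i$) up to the intersection, with a reversed suffix of $(Q+M\vect{v}+\ell'\vect{P_0P_{|P|-1}})$ (which differs from $Q+(M+\ell)\vect{v}+\ell'\vect{P_0P_{|P|-1}}$ by only $-\ell\vect{v}$ and therefore agrees at the intersection position after pulling back), and extracting a simple sub-path, produces an arc on $\pump{P}+M\vect{v}$ between the indices $i$ and $i+\ell'(|P|-1)$, of width $|\ell'|(|P|-1)$. When $|\ell'|\geq 2$ this width is at least $|P|$, contradicting Lemma~\ref{lem:noarc} applied to the simply pumpable path $P+M\vect{v}$.

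The main obstacle is the small-$|\ell'|$ cases, $\ell'\in\{-1,0,1\}$, where this arc is too narrow to feed into Lemma~\ref{lem:noarc}. My plan for these is to iterate the intersection along $\vect{w}$: by translation, $Q+M\vect{v}+n\vect{w}$ intersects $Q+M\vect{v}+(n+1)\vect{w}$ for every $n$, and chaining $n$ consecutive such overlaps yields, after taking a simple sub-path, an arc on $\pump{P}+M\vect{v}$ whose width grows linearly in $n$, ultimately exceeding $|P|$ and triggering Lemma~\ref{lem:noarc}. Alternatively, the sub-path $Q_{k,\ldots,j}$ has matching types at its extremities and translation vector $\vect{w}$ non-collinear with $\vect{P_0P_{|P|-1}}$, and Lemmas~\ref{lem:redun:exists}--\ref{lem:redun:pump} extract from it a bi-pumpable path without redundancy with pumping vector in the direction of $\vect{w}$; combined with the $\vect{P_0P_{|P|-1}}$-periodicity from Corollary~\ref{cor:periodic}, this gives bi-periodicity, which together with Lemma~\ref{biperiodic:notcol} traps $\pump{P}+\ell\vect{v}$ into $\pump{P}$ and contradicts the $\vect{v}$-self-avoidance hypothesis. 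The bi-pumpable case follows the same template with $\bipump{P}$ in place of $\pump{P}$, using Lemma~\ref{lem:bipump:seed} to legitimize the producibility steps at each translate.
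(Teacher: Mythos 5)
Your $\ell=0$ case matches the paper (Corollary~\ref{cor:selfavoid}), but the core case $\ell\neq 0$ does not go through as written, and the missing ingredient is precisely the idea the paper uses.

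First, the arc you build ``on $\pump{P}+M\vect v$ between the indices $i$ and $i+\ell'(|P|-1)$'' does not exist: the two paths you concatenate, $Q+M\vect v$ and $Q+M\vect v+\ell'\vect{P_0P_{|P|-1}}$, do \emph{not} intersect each other --- that is exactly what Corollary~\ref{cor:selfavoid} guarantees --- so the concatenation is disconnected. What actually meets $Q+M\vect v$ is $Q+(M+\ell)\vect v+\ell'\vect{P_0P_{|P|-1}}$, which is rooted on the \emph{different} spine $\pump{P}+(M+\ell)\vect v$; joining through that intersection gives a path between two distinct translates of the spine, which is not an arc of a single $\pump{P}$ and so cannot be fed to Lemma~\ref{lem:noarc}. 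Your iteration along $\vect w$ inherits the same defect (each $Q+n\vect w$ sits on the spine $\pump{P}+n\ell\vect v$, so the chain never returns to the spine it started from), and your fallback via Lemmas~\ref{lem:redun:exists}--\ref{lem:redun:pump} is not licensed: Lemma~\ref{lem:redun:pump} requires both extremities of the path to be tiles of bi-pumpable paths, whereas $Q_k$ and $Q_j$ are interior tiles of a comb; in the simply pumpable case there need not be any bi-pumpable path at all.

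The paper's argument avoids all of this with a Jordan-curve step that your proposal never takes. Since $\pump{P}$ is $\vect v$-self-avoiding, the curves $\bipump{P}$ and $\bipump{P}+\ell\vect v$ are disjoint; $Q$ grows on $\pump{P}$ and stays on one side of $\bipump{P}$, while $R=Q+\ell\vect v+\ell'\vect{P_0P_{|P|-1}}$ lies on the far side of $\bipump{P}+\ell\vect v$. Hence $Q$ must \emph{cross} the translated spine $\pump{P}+\ell\vect v$ to reach $R$. Then two period-translates $Q+L\vect{P_0P_{|P|-1}}$ and $Q+(L+2)\vect{P_0P_{|P|-1}}$, both growing on $\pump{P}$ and both hitting $\pump{P}+\ell\vect v$ (which, up to translation, is in $\inuniterm$), glued through the segment of $\pump{P}+\ell\vect v$ between their hit points, form an arc of $\pump{P}$ of width $2(|P|-1)\geq |P|$ --- contradicting Lemma~\ref{lem:noarc} regardless of the value of $\ell'$. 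So the case split you identify as the ``main obstacle'' (small $|\ell'|$) is an artifact of the wrong construction; you would need to add the crossing argument to repair the proof.
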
 

\begin{proof}
Consider a path $Q$ growing on $\pump{P}$ (resp. $\bipump{P}$) at index $i \geq |P|-1$ (resp. $i\in\Z$). 
For the sake of contradiction suppose that $Q$ is not $(\vect{v},\vect{P_0P_{|P|-1}})$-self-avoiding, then there exists $\ell,\ell' \in \Z$ such that $Q$ intersect with $R=Q+\ell\vect{v}+\ell'\vect{P_0P_{|P|-1}}$. See Figure \ref{fig:doubleself} for an illlustation of the following reasoning.
If $\ell=0$ then $Q$ is not $\vect{P_0P_{|P|-1}}$-self-avoiding which contradicts corollary \ref{cor:selfavoid}. 
Remark that $Q$ cannot intersect with $\bipump{P}$ except at $Q_0$ because $Q$ grows on $\pump{P}$ and if it intersects with $\bipump{P}$ then there exists $j \in \N$ such that $Q+j\vect{v}$ intersects with $\pump{P}$ and thus $Q+j\vect{v}$ would not grow on $\pump{P}$ which contradicts corollary~\ref{cor:selfavoid} (resp. $Q$ grows on $\bipump{P}$). 
Without loss of generality we suppose that $Q$ turns left of $\pump{P}$ (resp. $\bipump{P}$) and then $Q$ is in the left side of $\bipump{P}$ which implies that $R$ is in the left side of $\bipump{P}+\ell\vect{v}$. 
Paths $\bipump{P}+\ell\vect{v}$ and $\bipump{P}$ cannot intersect since $\pump{P}$ (resp. $\bipump{P}$) is $\vect{v}$-self-avoiding and if $\bipump{P}+\ell\vect{v}<\bipump{P}$, remark that there is also an intersection between $Q-\ell\vect{v}-\ell'\vect{P_0P_{|P|-1}}$ and $R-\ell\vect{v}-\ell'\vect{P_0P_{|P|-1}}=Q$ and that $\bipump{P}-\ell\vect{v}>\bipump{P}$ in this case. Then without loss of generality, we can suppose that $\bipump{P}+\ell\vect{v}>\bipump{P}$ and then $Q_0$ is in the right side of $\bipump{P}+\ell\vect{v}$ thus the path $Q$ must intersect with $\bipump{P}+\ell\vect{P_iP_j}$ to intersect with  $R$.
In this case, since $Q$ is $\vect{P_0P_{|P|-1}}$-self-avoiding (by corollary \ref{cor:selfavoid}) there exists $L \in \Z$ such that $Q+L\vect{P_iP_j}$ and $Q+(L+2)\vect{P_iP_j}$ are in $\inuniterm$, grow on $\pump{P}$ and intersect with $\pump{P}+\ell\vect{v}$. 
Moreover, we can suppose that $\pump{P}+\ell\vect{v}$ is in $\inuniterm$ (up to some translation since $\pump{P}$ is $\vect{v}$-self-avoiding) and then there exists an arc of $P$ of width $2(|P|-1)>|P|$ which is a subassembly of $Q+L\vect{P_iP_j}$, $Q+(L+2)\vect{P_iP_j}$ and $\pump{P}+\ell\vect{v}$. This arc contradicts lemma \ref{lem:noarc}.

\begin{figure}
\centering

\begin{tikzpicture}[x=0.33cm,y=0.33cm]


\draw[dashed,very thick] (0.5,6.5) |- (2.5,6.5)  |- (5.5,3.5);
\draw[very thick] (5.5,3.5) |- (8.5,6.5)  |- (11.5,3.5);
\drawgray{5}{3}
\drawgray{5}{4}
\drawgray{5}{5}
\drawgray{5}{6}
\drawgray{5}{6}
\drawgray{6}{6}
\drawgray{7}{6}
\drawgray{8}{6}
\drawgray{8}{5}
\drawgray{8}{4}
\drawgray{8}{3}
\drawgray{9}{3}
\drawgray{10}{3}

 \draw[very thick] (11.5,3.5) |- (14.5,6.5)  |- (17.5,3.5);
\drawgray{11}{3}
\drawt{11}{4}
\drawt{11}{5}
\drawt{11}{6}
\drawt{12}{6}
\drawt{13}{6}
\drawt{14}{6}
\drawt{14}{5}
\drawt{14}{4}
\drawt{14}{3}
\drawt{15}{3}
\drawt{16}{3}

 \draw[very thick] (17.5,3.5) |- (20.5,6.5)  |- (23.5,3.5);
\drawt{17}{3}
\drawt{17}{4}
\drawt{17}{5}
\drawt{17}{6}
\drawt{18}{6}
\drawt{19}{6}
\drawt{20}{6}
\drawt{20}{5}
\drawt{20}{4}
\drawt{20}{3}
\drawt{21}{3}
\drawt{22}{3}

 \draw[very thick] (23.5,3.5) |- (26.5,6.5)  |- (29.5,3.5);
\drawt{23}{3}
\drawt{23}{4}
\drawt{23}{5}
\drawt{23}{6}
\drawt{24}{6}
\drawt{25}{6}
\drawt{26}{6}
\drawt{26}{5}
\drawt{26}{4}
\drawt{26}{3}
\drawt{27}{3}
\drawt{28}{3}

 \draw[very thick] (29.5,3.5) |- (32.5,6.5)  |- (35,3.5);
\drawt{29}{3}
\drawt{29}{4}
\drawt{29}{5}
\drawt{29}{6}
\drawt{30}{6}
\drawt{31}{6}
\drawt{32}{6}
\drawt{32}{5}
\drawt{32}{4}
\drawt{32}{3}
\drawt{33}{3}
\drawt{34}{3}

\draw[dashed,very thick] (0.5,16.5) |- (2.5,16.5)  |- (5.5,13.5);
\draw[very thick] (5.5,13.5) |- (8.5,16.5)  |- (11.5,13.5);
 \draw[very thick] (11.5,13.5) |- (14.5,16.5)  |- (17.5,13.5);
 \draw[very thick] (17.5,13.5) |- (20.5,16.5)  |- (23.5,13.5);
 \draw[red,very thick] (9.5,13.5) -| (11.5,16.5) -| (14.5,13.5) -| (17.5,16.5) -| (20.5,13.5)  -| (21.5,13.5);
\drawt{5}{13}
\drawt{5}{14}
\drawt{5}{15}
\drawt{5}{16}
\drawt{5}{16}
\drawt{6}{16}
\drawt{7}{16}
\drawt{8}{16}
\drawt{8}{15}
\drawt{8}{14}
\drawt{8}{13}
\drawt{9}{13}
\drawt{10}{13}

\drawt{11}{13}
\drawt{11}{14}
\drawt{11}{15}
\drawt{11}{16}
\drawt{12}{16}
\drawt{13}{16}
\drawt{14}{16}
\drawt{14}{15}
\drawt{14}{14}
\drawt{14}{13}
\drawt{15}{13}
\drawt{16}{13}

\drawt{17}{13}
\drawt{17}{14}
\drawt{17}{15}
\drawt{17}{16}
\drawt{18}{16}
\drawt{19}{16}
\drawt{20}{16}
\drawt{20}{15}
\drawt{20}{14}
\drawt{20}{13}
\drawt{21}{13}
\drawt{22}{13}

 \draw[very thick] (23.5,13.5) |- (26.5,16.5)  |- (29.5,13.5);
\drawt{23}{13}
\drawt{23}{14}
\drawt{23}{15}
\drawt{23}{16}
\drawt{24}{16}
\drawt{25}{16}
\drawt{26}{16}
\drawt{26}{15}
\drawt{26}{14}
\drawt{26}{13}
\drawt{27}{13}
\drawt{28}{13}

 \draw[very thick] (29.5,13.5) |- (32.5,16.5)  |- (35,13.5);
\drawt{29}{13}
\drawt{29}{14}
\drawt{29}{15}
\drawt{29}{16}
\drawt{30}{16}
\drawt{31}{16}
\drawt{32}{16}
\drawt{32}{15}
\drawt{32}{14}
\drawt{32}{13}
\drawt{33}{13}
\drawt{34}{13}

\draw[dashed,very thick] (0.5,26.5) |- (2.5,26.5)  |- (5.5,23.5);
\draw[dashed,very thick] (5.5,23.5) |- (8.5,26.5)  |- (11.5,23.5);
\draw[dashed,very thick] (11.5,23.5) |- (14.5,26.5)  |- (17.5,23.5);
\draw[dashed,very thick] (17.5,23.5) |- (20.5,26.5)  |- (23.5,23.5);
\draw[dashed,very thick] (23.5,23.5) |- (26.5,26.5)  |- (29.5,23.5);
\draw[dashed,very thick] (29.5,23.5) |- (32.5,26.5)  |- (35,23.5);

 \draw[very thick] (11.5,6.5) |- (7.5,8.5)  |- (3.5,10.5) |- (3.5,20.5);
\drawred{11}{6}
\drawred{11}{7}
\drawred{11}{8}
\drawred{10}{8}
\drawred{9}{8}
\drawred{8}{8}
\drawred{7}{8}
\drawred{7}{9}
\drawred{7}{10}
\drawred{6}{10}
\drawred{5}{10}
\drawred{4}{10}
\drawred{3}{10}
\drawred{3}{11}
\drawred{3}{12}
\drawred{3}{13}
\drawred{3}{14}
\drawred{3}{15}
\drawred{3}{16}
\drawred{3}{17}
\drawred{3}{18}
\drawred{3}{19}
\drawred{3}{20}

 \draw[very thick] (11.5,16.5) |- (7.5,18.5)  |- (3.5,20.5) |- (3.5,30.5);
\drawr{11}{16}
\drawr{11}{17}
\drawr{11}{18}
\drawr{10}{18}
\drawr{9}{18}
\drawr{8}{18}
\drawr{7}{18}
\drawr{7}{19}
\drawr{7}{20}
\drawr{6}{20}
\drawr{5}{20}
\drawr{4}{20}
\drawr{3}{20}
\drawr{3}{21}
\drawr{3}{22}
\drawr{3}{23}
\drawr{3}{24}
\drawr{3}{25}
\drawr{3}{26}
\drawr{3}{27}
\drawr{3}{28}
\drawr{3}{29}
\drawr{3}{30}

\draw[red,very thick] (17.5,6.5) |- (13.5,8.5)  |- (9.5,10.5) |- (9.5,13.5);
\drawlr{17}{6}
\drawlr{17}{7}
\drawlr{17}{8}
\drawlr{16}{8}
\drawlr{15}{8}
\drawlr{14}{8}
\drawlr{13}{8}
\drawlr{13}{9}
\drawlr{13}{10}
\drawlr{12}{10}
\drawlr{11}{10}
\drawlr{10}{10}
\drawlr{9}{10}
\drawlr{9}{11}
\drawlr{9}{12}
\drawlr{9}{13}

\draw[red,very thick] (29.5,6.5) |- (25.5,8.5)  |- (21.5,10.5) |- (21.5,13.5);
\drawlr{29}{6}
\drawlr{29}{7}
\drawlr{29}{8}
\drawlr{28}{8}
\drawlr{27}{8}
\drawlr{26}{8}
\drawlr{25}{8}
\drawlr{25}{9}
\drawlr{25}{10}
\drawlr{24}{10}
\drawlr{23}{10}
\drawlr{22}{10}
\drawlr{21}{10}
\drawlr{21}{11}
\drawlr{21}{12}
\drawlr{21}{13}

\draw[thick,->] (31.5, 6.5) -- (31.5, 16.5);
\draw (32.5, 11.5) node {$\vect{v}$};

\draw[thick,->] (3.5, 11.5) -- (9.5, 11.5);
\draw (6.5, 12.5) node {$\vect{P_0P_{|P|-1}}$};

\draw[thick,->] (7.5, 9.5) -- (25.5, 9.5);
\draw (17.5, 10.5) node {$3\vect{P_0P_{|P|-1}}$};

\draw[very thick,->] (4.5, 21.5) -- (3.5, 20.5);
\draw (7.5, 22) node {intersection};

\draw (5.5, 2.3) node {$P_0$};
\draw (12, 2.2) node {$P_{|P|-1}$};
\draw (10.2, 6.5) node {$Q_0$};
\path [dotted, draw, thin] (0,0) grid[step=0.33cm] (35,32);
\end{tikzpicture}
\caption{Illustration of proof \ref{lem:doubleavoid}. Consider a simply pumpable path $P$ in gray which is $\vect{v}$-self-avoiding. Its pumping and its translation by $\vect{v}$ are in white. A path $Q$ (in dark red) grows on $\pump{P}$ and collides with $Q+\vect{v}$ (in red), then using two translations of $Q$ (in light red) by $\vect{P_0P_{|P|-1}}$ and $3\vect{P_0P_{|P|-1}}$, we assemble an arc of width $2(|P|-1)$ on $\pump{P}$ with the help of $\pump{P}+\vect{v}$.}
\label{fig:doubleself}
\end{figure}
\end{proof}

The following lemma shows that for any simple pumpable path $P$ there is an index $i$ such that any path growing on $\pump{P}$ after this index belongs to $\inuniterm$. This lemma is useful to study the terminal assembly since it implies that ultimately we do not care about what happens near the seed when analyzing a periodic path.

\begin{lemma} 
\label{lem:magicindice}
Consider a tile assembly system $\mathcal{T}=(T,\sigma,1)$ whose terminal assembly is $\uniterm$, a simply pumpable path $P$ in $\inuniterm$ and a producible finite assembly $\ass$ such that $P_0$ is a tile of $\ass$. There exists an index $i$ such that any path growing on $\pump{P}$ at index $j\geq i$ is in $\inuniterm$. Moreover, this index $i$ depends only of $|\ass|$, $P$ and $|T|$.
\end{lemma}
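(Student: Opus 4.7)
The plan is a proof by contradiction: I would choose $i$ large enough that any $Q$ growing on $\pump{P}$ at index $j\geq i$ would have to travel a very long way to interact with $\ass$, and show that this is incompatible with the self-avoidance provided by Corollary~\ref{cor:selfavoid}. First I would note that $Q\in\inuniterm$ is equivalent to $Q$ not conflicting with $\ass$: indeed $\ass\cup\asm{\pump{P}}$ is producible (grow $\ass$ from $\sigma$, then $\pump{P}$ from $P_0\in\ass$, using $\pump{P}\in\inuniterm$), and from it $Q$ can always be grown tile-by-tile since no conflict with $\pump{P}$ is possible by the growing hypothesis, so the only possible obstruction is a conflict with $\ass$; conversely any such conflict forces $Q\not\in\inuniterm$ since $\ass\subseteq\uniterm$. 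It therefore suffices to show that for $j$ large enough, no path $Q$ growing on $\pump{P}$ at index $j$ enters $\domain{\ass}$ at all.

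Let $D$ be the diameter of a bounding box containing $\domain{\sigma}\cup\domain{\ass}$ (which is bounded in terms of $|\ass|$) and let $B=\funpump{|T|}{|\sigma|}$. I would then choose $i\geq |P|-1$ so that for every $j\geq i$ the tile $\pump{P}_j$ lies at Manhattan distance strictly greater than $2(B+D)$ from $\domain{\ass}$; since $\vect{P_0P_{|P|-1}}\neq\vect{0}$, such an $i$ depends only on $|\ass|$, $P$ and $|T|$. Assume for contradiction that some $Q$ growing on $\pump{P}$ at index $j\geq i$ meets $\domain{\ass}$, and let $m$ be minimal with $Q_m\in\domain{\ass}$. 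The prefix $Q'=Q_{0,1,\ldots,m-1}$ then travels from $\pump{P}_j$ to within distance $1$ of $\ass$, so at least one of its horizontal width or vertical height exceeds $B$.

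To exploit this, I would promote $Q'$ into a genuine producible path of $\calT$ by concatenating a producible path from $\sigma$ to $P_0$ (which exists by the observations of Section~\ref{sec:defs-paths}) with $\pump{P}_{0,1,\ldots,j}$ followed by $Q'_{1,2,\ldots,m-1}$, and trimming redundant self-intersections along $\inuniterm$ to obtain a simple producible path $R$ whose $Q'$-portion retains horizontal or vertical extent at least $B$. Applying Theorem~\ref{theorem:pumping} to $R$ yields an infinitely pumpable subpath $U$; a short case analysis based on the fact that $P$ is only simply pumpable (so $\bipump{P}\not\in\inuniterm$) shows that the pumping vector of $U$ cannot be collinear with $\vect{P_0P_{|P|-1}}$, hence $U$ must meet the $Q'$-portion of $R$, and $\pump{U}$ is an infinite periodic path in $\inuniterm$ growing on $\pump{P}$ with pumping vector transverse to $\vect{P_0P_{|P|-1}}$.

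By Corollary~\ref{cor:selfavoid} applied to $U$, the path $\pump{U}$ is $\vect{P_0P_{|P|-1}}$-self-avoiding, so its translates $\pump{U}+\ell\vect{P_0P_{|P|-1}}$ for $\ell\in\N$ are pairwise disjoint infinite paths growing on $\pump{P}$ on the same side as $Q'$. These infinitely many disjoint teeth sweep across the bounded region $\domain{\ass}$ arbitrarily far along $\pump{P}$, so $Q'$, starting from $\pump{P}_j$ and terminating adjacent to $\domain{\ass}$, is forced to cross one of them, contradicting their pairwise disjointness via the Jordan-curve arguments of Section~\ref{subsubsec:defs:cuting}. The main obstacle is the geometric bookkeeping of this final step, together with guaranteeing that the pumpable subpath $U$ provided by Theorem~\ref{theorem:pumping} actually exits the $\pump{P}$-portion of $R$ rather than living entirely inside it.
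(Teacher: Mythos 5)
Your proposal shares the paper's overall shape (contradiction, a distance threshold calibrated to the pumping bound, an application of Theorem~\ref{theorem:pumping}, and a self-avoidance/Jordan-curve contradiction), but it has a genuine gap at exactly the point you flag as an ``obstacle'' and never resolve: when you apply the pumping theorem to the concatenated path $R$ (seed-to-$P_0$, then $\pump{P}_{0,\ldots,j}$, then $Q'$), nothing prevents the pumpable pair of indices from lying entirely inside the $\pump{P}$-portion, in which case you merely rediscover that $P$ is pumpable forward and get no transverse tooth at all; simple pumpability only forbids $\bipump{P}\in\inuniterm$, not forward pumping, so your ``short case analysis'' does not go through. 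The paper sidesteps this entirely with a translation-and-reseeding trick you are missing: it sets $\gamma=\ass\cup\pump{P}_{0,1,\ldots,j}$ with $j=(4|\ass|+2)(|P|-1)+1$, translates $Q$ \emph{backwards} by $\ell\vect{P_0P_{|P|-1}}$ for some $0\leq\ell\leq 4|\ass|$ (using that $\ass$ has at most $4|\ass|$ neighbouring positions, so if every such translate hit $\ass$ two of them would intersect first, contradicting Corollary~\ref{cor:selfavoid}), obtains a copy of $Q$ that avoids $\ass$ and is producible by $(T,\gamma,1)$, and only then applies the pumping theorem with bound $\funpump{|T|}{|\gamma|}$ to that copy alone --- so the pumped segment is forced to live in the transverse path.

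Your closing ``infinitely many disjoint teeth sweep across $\domain{\ass}$'' step is also too vague to certify: the translates $\pump{U}+\ell\vect{P_0P_{|P|-1}}$, $\ell\in\N$, move \emph{away} from $\ass$ along the pumping direction, so it is not clear they ever separate $\pump{P}_j$ from $\ass$, and you would still need each translate to be in $\inuniterm$. The paper's contradiction is sharper and local: it builds a finite simple cycle $C$ from $\pump{P}_{0,\ldots,k}$, $Q_{0,\ldots,m}$ and a binding path inside $\ass$, observes that the infinite pumping $S$ of the translated copy is producible by $(T,\gamma,1)$ (hence cannot touch $\gamma$) and cannot stay inside $C$, so it must cross either $Q_{0,\ldots,m-1}$ or a far segment of $\pump{P}$; either crossing yields an arc of width at least $|P|$ on $\pump{P}$, contradicting Lemma~\ref{lem:noarc}. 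To repair your proof you would need to import both the backward-translation/counting argument and a concrete escape-from-a-cycle argument in place of the sweeping-teeth picture.
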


\begin{proof}
Without loss of generality we suppose that $P_0$ is the only intersection between $\ass$ and $\pump{P}$. Let $j=(4|\ass|+2)(|P|-1)+1$ and consider the assembly $\gamma=\ass\cup \pump{P}_{0,1,\ldots,j}$, by definition of $\ass$, this assembly intersects with $P$ and all intersections are agreement, then $\gamma$ is producible by $=(T,\sigma,1)$. Any assembly producible by $(T,\gamma,1)$ is also producible by $(T,\sigma,1)$ then $(T,\gamma,1)$ is directed and its terminal assembly is also $\uniterm$ (since $\gamma$ is a sub-assembly of $\uniterm$). Since $\ass$ is finite and $\vect{P_0P_{|P|-1}}$ is not null there exists $i>j+(|P|-1)$ such that the distance between any tile of $\pump{P}_{i,i+1,..,+\infty}$ and any tile of $\ass$ is at least $\funpump{|T|}{|\gamma|}+1$ (see lemma \ref{theorem:pumping}). Remark that $i$ depends only of $|\ass|$, $P$ and $|T|$. See Figure \ref{fig:magicIndex} for an illustration of the following reasoning.

\begin{figure}
\centering

\begin{tikzpicture}[x=0.33cm,y=0.33cm]

 \draw[very thick] (10.5,12.5) -| (19.5,9.5) -| (24.5,12.5);
 \draw[very thick] (8.5,10.5) -| (5.5,12.5) -| (8.5,12.5);
 \draw[very thick] (2.5,9.5) -| (17.5,10.5) |- (10.5,10.5);
 \draw[very thick] (2.5,9.5) |- (1.5,5.5) |- (4.5,1.5) |- (29.5,3.5);
 \draw[very thick] (31.5,3.5) |- (35,3.5);
 \draw[very thick] (33.5,3.5) |- (31.5,11.5);
 \draw[very thick] (29.5,11.5) |- (25,11.5);

 \draw[very thick] (6.5,3.5) |- (6.5,4);
 \draw[very thick] (10.5,3.5) |- (10.5,4);
 \draw[very thick] (14.5,3.5) |- (14.5,4);

 \draw[very thick] (18.5,3.5) |- (18.5,4);

 \draw[very thick] (22.5,3.5) |- (22.5,4);

 \draw[very thick] (26.5,3.5) |- (26.5,4);
\draws{24}{12}
\draws{24}{11}
\draws{24}{10}
\draws{24}{9}
\draws{23}{9}
\draws{22}{9}
\draws{21}{9}
\draws{20}{9}
\draws{19}{9}
\draws{19}{10}
\draws{19}{11}
\draws{19}{12}
\draws{18}{12}
\draws{17}{12}
\draws{16}{12}
\draws{15}{12}
\draws{14}{12}
\draws{13}{12}
\draws{12}{12}
\draws{11}{12}

\draws{7}{12}
\draws{6}{12}
\draws{5}{12}
\draws{5}{11}
\draws{5}{10}
\draws{6}{10}
\draws{7}{10}

\draws{11}{10}
\draws{12}{10}
\draws{13}{10}
\draws{14}{10}
\draws{15}{10}
\draws{16}{10}
\draws{17}{10}
\draws{17}{9}
\draws{16}{9}
\draws{15}{9}
\draws{14}{9}
\draws{13}{9}
\draws{12}{9}
\draws{11}{9}
\draws{10}{9}
\draws{9}{9}
\draws{8}{9}
\drawblue{7}{9}
\drawblue{6}{9}
\drawblue{5}{9}
\drawblue{4}{9}
\drawblue{3}{9}
\drawblue{2}{9}
\drawblue{2}{8}
\drawblue{2}{7}
\drawblue{2}{6}
\drawblue{2}{5}
\drawblue{1}{5}
\drawblue{1}{4}
\drawblue{1}{3}
\drawblue{1}{2}
\drawblue{1}{1}
\drawblue{2}{1}
\drawblue{3}{1}
\drawblue{4}{1}
\drawblue{4}{2}
\drawdgray{4}{3}
\drawdgray{5}{3}
\drawdgray{6}{3}
\drawdgray{7}{3}
\drawgray{8}{3}
\drawgray{9}{3}
\drawgray{10}{3}
\drawgray{11}{3}
\drawgray{12}{3}
\drawgray{13}{3}
\drawgray{14}{3}
\drawgray{15}{3}
\drawgray{15}{3}
\drawgray{16}{3}
\drawgray{17}{3}
\drawgray{18}{3}
\drawgray{19}{3}
\drawgray{20}{3}
\drawgray{21}{3}
\drawgray{22}{3}
\drawgray{23}{3}
\drawgray{24}{3}
\drawgray{25}{3}
\drawgray{26}{3}
\drawgray{27}{3}
\drawt{28}{3}
\drawt{32}{3}
\drawt{33}{3}
\drawt{34}{3}

\drawred{33}{4}
\drawred{33}{5}
\drawred{33}{6}
\drawred{33}{7}
\drawred{33}{8}
\drawred{33}{9}
\drawred{33}{10}
\drawred{33}{11}
\drawred{32}{11}

\drawred{28}{11}
\drawred{27}{11}
\drawred{26}{11}
\drawred{25}{11}

\draw (4.5, 4.5) node {$P_0$};
\draw (8.5, 2.2) node {$P_{|P|-1}$};
\draw (27.5, 2.2) node {$P_{j}$};
\draw (32.5, 2.2) node {$P_{i}$};
\draw (30.5, 3.5) node {\Large $\ldots$};
\draw (9.5, 10.5) node {\Large $\ldots$};
\draw (9.5, 12.5) node {\Large $\ldots$};
\draw (30.5, 11.5) node {\Large $\ldots$};
\path [dotted, draw, thin] (0,0) grid[step=0.33cm] (35,13);
\end{tikzpicture}

a) The seed is in black, the simply pumpable path $P$ is in dark gray and its pumping is in light gray until index $j$ and in white afterwards, a blue path binds with the seed and $P_0$ and $Q$ (in red)  intersects with the seed.
\vspace{+0.5em}

\begin{tikzpicture}[x=0.33cm,y=0.33cm]

 \draw[very thick] (10.5,12.5) -| (19.5,9.5) -| (24.5,12.5);
 \draw[very thick] (8.5,10.5) -| (5.5,12.5) -| (8.5,12.5);
 \draw[very thick] (2.5,9.5) -| (17.5,10.5) |- (10.5,10.5);
 \draw[very thick] (2.5,9.5) |- (1.5,5.5) |- (4.5,1.5) |- (29.5,3.5);
 \draw[very thick] (31.5,3.5) |- (35,3.5);
 \draw[very thick] (33.5,3.5) |- (31.5,11.5);
 \draw[very thick] (29.5,11.5) |- (25,11.5);

 \draw[very thick] (6.5,3.5) |- (6.5,4);
 \draw[very thick] (10.5,3.5) |- (10.5,4);
 \draw[very thick] (14.5,3.5) |- (14.5,4);

 \draw[very thick] (18.5,3.5) |- (18.5,4);

 \draw[very thick] (22.5,3.5) |- (22.5,4);

 \draw[very thick] (26.5,3.5) |- (26.5,4);

\draw[orange!50!black,very thick] (18.5,3.5) |- (33.5,7.5) |- (33.5,3.5);
\draw[orange!50!black,very thick] (22.5,3.5) |- (30.5,5.5) |- (30.5,3.5);

\draws{24}{12}
\draws{24}{11}
\draws{24}{10}
\draws{24}{9}
\draws{23}{9}
\draws{22}{9}
\draws{21}{9}
\draws{20}{9}
\draws{19}{9}
\draws{19}{10}
\draws{19}{11}
\draws{19}{12}
\draws{18}{12}
\draws{17}{12}
\draws{16}{12}
\draws{15}{12}
\draws{14}{12}
\draws{13}{12}
\draws{12}{12}
\draws{11}{12}

\draws{7}{12}
\draws{6}{12}
\draws{5}{12}
\draws{5}{11}
\draws{5}{10}
\draws{6}{10}
\draws{7}{10}

\draws{11}{10}
\draws{12}{10}
\draws{13}{10}
\draws{14}{10}
\draws{15}{10}
\draws{16}{10}
\draws{17}{10}
\draws{17}{9}
\draws{16}{9}
\draws{15}{9}
\draws{14}{9}
\draws{13}{9}
\draws{12}{9}
\draws{11}{9}
\draws{10}{9}
\draws{9}{9}
\draws{8}{9}
\drawblue{7}{9}
\drawblue{6}{9}
\drawblue{5}{9}
\drawblue{4}{9}
\drawblue{3}{9}
\drawblue{2}{9}
\drawblue{2}{8}
\drawblue{2}{7}
\drawblue{2}{6}
\drawblue{2}{5}
\drawblue{1}{5}
\drawblue{1}{4}
\drawblue{1}{3}
\drawblue{1}{2}
\drawblue{1}{1}
\drawblue{2}{1}
\drawblue{3}{1}
\drawblue{4}{1}
\drawblue{4}{2}
\drawdgray{4}{3}
\drawdgray{5}{3}
\drawdgray{6}{3}
\drawdgray{7}{3}
\drawgray{8}{3}
\drawgray{9}{3}
\drawgray{10}{3}
\drawgray{11}{3}
\drawgray{12}{3}
\drawgray{13}{3}
\drawgray{14}{3}
\drawgray{15}{3}
\drawgray{15}{3}
\drawgray{16}{3}
\drawgray{17}{3}
\drawgray{18}{3}
\drawgray{19}{3}
\drawgray{20}{3}
\drawgray{21}{3}
\drawgray{22}{3}
\drawgray{23}{3}
\drawgray{24}{3}
\drawgray{25}{3}
\drawgray{26}{3}
\drawgray{27}{3}
\drawt{28}{3}
\drawt{32}{3}
\drawt{33}{3}
\drawt{34}{3}

\drawo{18}{4}
\drawo{18}{5}
\drawo{18}{6}
\drawo{18}{7}
\drawo{19}{7}
\drawo{20}{7}
\drawo{21}{7}
\drawo{22}{7}
\drawo{23}{7}
\drawo{24}{7}
\drawo{25}{7}
\drawo{26}{7}
\drawo{27}{7}
\drawo{28}{7}
\drawo{29}{7}
\drawo{30}{7}
\drawo{31}{7}
\drawo{32}{7}

\drawo{22}{4}
\drawo{22}{5}
\drawo{23}{5}
\drawo{24}{5}
\drawo{25}{5}
\drawo{26}{5}
\drawo{27}{5}
\drawo{28}{5}
\drawo{29}{5}
\drawo{30}{5}
\drawo{30}{4}

\drawred{33}{4}
\drawred{33}{5}
\drawred{33}{6}
\drawred{33}{7}
\drawred{33}{8}
\drawred{33}{9}
\drawred{33}{10}
\drawred{33}{11}
\drawred{32}{11}

\drawred{28}{11}
\drawred{27}{11}
\drawred{26}{11}
\drawred{25}{11}

\draw (4.5, 4.5) node {$P_0$};
\draw (8.5, 2.2) node {$P_{|P|-1}$};
\draw (27.5, 2.2) node {$P_{j}$};
\draw (32.5, 2.2) node {$P_{i}$};
\draw (30.5, 3.5) node {\Large $\ldots$};
\draw (9.5, 10.5) node {\Large $\ldots$};
\draw (9.5, 12.5) node {\Large $\ldots$};
\draw (30.5, 11.5) node {\Large $\ldots$};
\path [dotted, draw, thin] (0,0) grid[step=0.33cm] (35,13);
\end{tikzpicture}

b) If a path growing on the pumping between indices $|P|-1$ and $j-|P|-1$ intersects with $Q$ or the white part of the pumping then we have an arc of width at least $|P|$.
\vspace{+0.5em}

\begin{tikzpicture}[x=0.33cm,y=0.33cm]

 \fill[red!15!white] (1.5,1.5) -| (4.5,3.5) -| (33.5,11.5) -| (24.5,9.5) -| (19.5,12.5) -| (5.5,10.5) -| (17.5,9.5) -| (2.5,5.5) -| (1.5,1.5);

 \draw[very thick] (10.5,12.5) -| (19.5,9.5) -| (24.5,12.5);
 \draw[very thick] (8.5,10.5) -| (5.5,12.5) -| (8.5,12.5);
 \draw[very thick] (2.5,9.5) -| (17.5,10.5) |- (10.5,10.5);
 \draw[very thick] (2.5,9.5) |- (1.5,5.5) |- (4.5,1.5) |- (29.5,3.5);
 \draw[very thick] (31.5,3.5) |- (35,3.5);
 \draw[very thick] (33.5,3.5) |- (31.5,11.5);
 \draw[very thick] (29.5,11.5) |- (25,11.5);

 \draw[very thick] (6.5,3.5) |- (6.5,9);
 \draw[very thick] (10.5,3.5) |- (10.5,9);
 \draw[very thick] (14.5,3.5) |- (14.5,9);

 \draw[very thick] (18.5,3.5) |- (10.5,11.5);
 \draw[very thick] (6.5,11.5) |- (8.5,11.5);

 \draw[very thick] (22.5,3.5) |- (22.5,9);

\draws{24}{12}
\draws{24}{11}
\draws{24}{10}
\draws{24}{9}
\draws{23}{9}
\draws{22}{9}
\draws{21}{9}
\draws{20}{9}
\draws{19}{9}
\draws{19}{10}
\draws{19}{11}
\draws{19}{12}
\draws{18}{12}
\draws{17}{12}
\draws{16}{12}
\draws{15}{12}
\draws{14}{12}
\draws{13}{12}
\draws{12}{12}
\draws{11}{12}

\draws{7}{12}
\draws{6}{12}
\draws{5}{12}
\draws{5}{11}
\draws{5}{10}
\draws{6}{10}
\draws{7}{10}

\draws{11}{10}
\draws{12}{10}
\draws{13}{10}
\draws{14}{10}
\draws{15}{10}
\draws{16}{10}
\draws{17}{10}
\draws{17}{9}
\draws{16}{9}
\draws{15}{9}
\draws{14}{9}
\draws{13}{9}
\draws{12}{9}
\draws{11}{9}
\draws{10}{9}
\draws{9}{9}
\draws{8}{9}
\drawblue{7}{9}
\drawblue{6}{9}
\drawblue{5}{9}
\drawblue{4}{9}
\drawblue{3}{9}
\drawblue{2}{9}
\drawblue{2}{8}
\drawblue{2}{7}
\drawblue{2}{6}
\drawblue{2}{5}
\drawblue{1}{5}
\drawblue{1}{4}
\drawblue{1}{3}
\drawblue{1}{2}
\drawblue{1}{1}
\drawblue{2}{1}
\drawblue{3}{1}
\drawblue{4}{1}
\drawblue{4}{2}
\drawdgray{4}{3}
\drawdgray{5}{3}
\drawdgray{6}{3}
\drawdgray{7}{3}
\drawgray{8}{3}
\drawgray{9}{3}
\drawgray{10}{3}
\drawgray{11}{3}
\drawgray{12}{3}
\drawgray{13}{3}
\drawgray{14}{3}
\drawgray{15}{3}
\drawgray{15}{3}
\drawgray{16}{3}
\drawgray{17}{3}
\drawgray{18}{3}
\drawgray{19}{3}
\drawgray{20}{3}
\drawgray{21}{3}
\drawgray{22}{3}
\drawgray{23}{3}
\drawgray{24}{3}
\drawgray{25}{3}
\drawgray{26}{3}
\drawgray{27}{3}
\drawt{28}{3}
\drawt{32}{3}
\drawt{33}{3}
\drawt{34}{3}

\drawlr{6}{4}
\drawlr{6}{5}
\drawlr{6}{6}
\drawlr{6}{7}
\drawlr{6}{8}

\drawlr{10}{4}
\drawlr{10}{5}
\drawlr{10}{6}
\drawlr{10}{7}
\drawlr{10}{8}

\drawlr{14}{4}
\drawlr{14}{5}
\drawlr{14}{6}
\drawlr{14}{7}
\drawlr{14}{8}

\drawr{18}{4}
\drawr{18}{5}
\drawr{18}{6}
\drawr{18}{7}
\drawr{18}{8}
\drawr{18}{9}
\drawr{18}{10}
\drawr{18}{11}
\drawr{17}{11}
\drawr{16}{11}
\drawr{15}{11}
\drawr{14}{11}
\drawr{13}{11}
\drawr{12}{11}
\drawr{11}{11}
\drawr{7}{11}
\drawr{6}{11}

\drawlr{22}{4}
\drawlr{22}{5}
\drawlr{22}{6}
\drawlr{22}{7}
\drawlr{22}{8}

\drawred{33}{4}
\drawred{33}{5}
\drawred{33}{6}
\drawred{33}{7}
\drawred{33}{8}
\drawred{33}{9}
\drawred{33}{10}
\drawred{33}{11}
\drawred{32}{11}

\drawred{28}{11}
\drawred{27}{11}
\drawred{26}{11}
\drawred{25}{11}

\draw (4.5, 4.5) node {$P_0$};
\draw (8.5, 2.2) node {$P_{|P|-1}$};
\draw (27.5, 2.2) node {$P_{j}$};
\draw (32.5, 2.2) node {$P_{i}$};
\draw (30.5, 3.5) node {\Large $\ldots$};
\draw (9.5, 10.5) node {\Large $\ldots$};
\draw (9.5, 11.5) node {\Large $\ldots$};
\draw (9.5, 12.5) node {\Large $\ldots$};
\draw (30.5, 11.5) node {\Large $\ldots$};
\path [dotted, draw, thin] (0,0) grid[step=0.33cm] (35,13);
\end{tikzpicture}

c) All the translations of $Q$ growing on the light gray part of the pumping start in the red area of the grid. The seed and the blue path cannot block them all without creating intersections. One of the translations must fully grow and becomes pumpable. Its pumping must leave the finite red area.

\caption{Illustration of proof \ref{lem:magicindice}.}
\label{fig:magicIndex}
\end{figure}

Consider a path $Q$ which grows on $\pump{P}$ at position $P_k$ with $k\geq i$ and for the sake of contradiction, suppose that this path is not in $\inuniterm$ which implies that $Q$ conflicts with $\ass$ and by the definition of $i$ the vertical height or the horizontal width of $Q$ is at least $\funpump{|T|}{|\gamma|}+1$. Let $m=\min\{n:\pos{Q_n} \in \domain{\ass}\}$ and by definition of $m$, $Q_{0,1,\ldots,m-1}$ is in $\inuniterm$. Consider a finite simple cycle $C$ which is made of the binding path of $\pump{P}_{0,1,\ldots,k}$, the binding path of $Q_{0,1,\ldots, m}$ and a path in the binding graph of $\ass$ which links $\pos{P_0}$ to $\pos{Q_m}$. Let $R$ be the translation of $Q$ by $\ell\vect{P_0P_{|P|-1}}$ for some $\ell\in\N$ such that $R_0$ is a tile of $P+\vect{P_0P_{|P|-1}}$. Remark that either for all $\ell \in \N$, $R+\ell$ turns left of $\pump{P}$ or for all $\ell \in \N$, $R+\ell$ turns right of $\pump{P}$. In both cases, for all $\ell\in\N$, the tile $R_1+\ell\vect{P_0P_{|P|-1}}$ is in the interior of the cycle $C$. Remark that, there are at most $4|\ass|$ positions which are neighbors of a tile of $\ass$ which implies that if for all $0 \leq \ell \leq 4|\ass|$, the path $R+\ell\vect{P_0P_{|P|-1}}$ conflicts with $\ass$ then there exists $\ell \neq \ell' \in \N$  such that $R+\ell\vect{P_0P_{|P|-1}}$ and $R+\ell'\vect{P_0P_{|P|-1}}$ intersect before intersecting with $\ass$ which contradicts corollary \ref{cor:selfavoid}. 
Thus, there exists $0 \leq \ell \leq 4|\ass|$ such that $R+\ell\vect{P_0P_{|P|-1}}$ does not intersect with $\ass$. Moreover, by definition of $R$, $R_0+\ell\vect{P_0P_{|P|-1}}$ is a tile of $\pump{P}_{|P|-1,|P|,\ldots,(4|\ass|+1)(|P|-1)}$ and then the path $R_{1,2,\ldots,|R|-1}+\ell\vect{P_0P_{|P|-1}}$ is producible by $(T,\gamma,1)$ and is in $\inuniterm$. Since the vertical height or the horizontal width of $R$ is at least $\funpump{|T|}{|\gamma|}$, by the pumping lemma \ref{theorem:pumping}, $R_{1,2,\ldots,|R|-1}+\ell\vect{P_0P_{|P|-1}}$ is infinitely pumpable and let $S$ be its pumping (see definition \ref{def:pumpingPbetweeniandj}). By definition of a pumping $S_0=R_1+\ell\vect{P_0P_{|P|-1}}$ and $S$ is producible by $(T,\gamma,1)$ and thus cannot intersect with $\gamma$. Moreover, since $S$ is infinite, it cannot stay in the interior of $C$ and thus it must intersect with either $Q_{0,1,\ldots, m-1}$ or $\pump{P}_{j+1,j+1,\ldots,k}$. In the first case, tile $R_0+\ell\vect{P_0P_{|P|-1}}$, path $Q_{0,1,\ldots, m-1}$ and path $S$ assemble an arc of $\pump{P}$ of width at least $i-j\geq|P|$ (since $R_0+\ell\vect{P_0P_{|P|-1}}$ is a tile of $\pump{P}_{0,1,\ldots,j}$ and $k\geq i>j+(|P|-1)$) and in the second case, tile $R_0+\ell\vect{P_0P_{|P|-1}}$ and a prefix of $S$ assemble an arc of $\pump{P}$ of width at least $|P|$ (since $R_0+\ell\vect{P_0P_{|P|-1}}$ is a tile of $\pump{P}_{0,1,\ldots,(4|\ass|+1)(|P|-1)}$ and $j-(4|\ass|+1)(|P|-1)>|P|$). Both cases contradict lemma~\ref{lem:noarc}.
\end{proof}

\subsubsection{Proving the last main theorems}

We focus on the terminal assembly $\uniterm$ of an aperiodic deterministic tile assembly system $\mathcal{T}=(T,\sigma,1)$ (Appendix \ref{app:C} shows the analysis of the tile assembly system described in Figure \ref{fig:appC:step1} and its terminal assembly is shown in Figure \ref{fig:appC:step2}). Firstly, from the seed, we grow all the finite paths which are not infinitely pumpable and all the prefixes of the infinitely pumpable paths until the end of the first simply pumpable path which appears in these paths (excluding the last tile of the simply pumpable path) as explained in Theorem \ref{last:theorem} and shown in Figure \ref{fig:appC:step3}. The pumping lemma provides a bound on the length of these paths and thus we obtain a finite assembly (the gray one in Figure \ref{fig:appC:step3}). The complexity of the assemblies growing at the end of the simply pumpable paths is less than $2$ (the red, green, orange and blue ones in Figure \ref{fig:appC:step3}). Secondly, we study these assemblies of complexity~$2$ as explained in lemma \ref{lem:compdeux}. Lemma \ref{lem:magicindice} allow us to find an index where there is no more intersection with the previous assembly (the gray one in Figure \ref{fig:appC:step4}) after this index. The analysis of the assemblies of complexity $2$ is made in two parts according to this index: the assembly growing after this index is obtained by "pumping" an assembly of complexity less than $1$; before this index, the previous assembly (the gray one in Figure \ref{fig:appC:step4}) may block the grow of some paths, then we use a finite assembly to memorize what is happening in the finite area where these interactions may occur and only some assemblies of complexity $1$ may grow on this finite assembly (see the red and green assemblies in Figure \ref{fig:appC:step4}). We proceed similarly to analyze the assemblies of complexity $1$ as explained in lemma \ref{lem:compun} and shown in Figure \ref{fig:appC:step5}. No more pumpable paths can grow after this step as explained in lemma \ref{lem:compzero}. 

Note that the algorithm explained here is described in reverse order in the following lemmas where we start by analyzing the assemblies of complexity $0$, then $1$, then $2$ and finally we analyze the terminal assembly in the two last remaining theorem.

\begin{lemma} 
\label{lem:compzero}
Consider a tile assembly system $\mathcal{T}=(T,\sigma,1)$. If a path $P$ is $(\vect{u},\vect{v})$-self-avoiding, then $P$ is finite and its length is bounded by a function depending only of $||\vect{u}||$ and $||\vect{v}||$.
\end{lemma}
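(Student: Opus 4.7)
The plan is to reduce the statement to a pigeonhole argument on the quotient $\Z^2/L$, where $L=\Z\vect{u}+\Z\vect{v}$ is the sublattice of $\Z^2$ generated by $\vect{u}$ and $\vect{v}$. Since the definition of $(\vect{u},\vect{v})$-self-avoiding presupposes that $\vect{u}$ and $\vect{v}$ are non-null and non-collinear, $L$ has rank $2$ and finite index in $\Z^2$.

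The key observation I would prove first is that two distinct tiles of $P$ never have positions that differ by a non-trivial element of $L$. Indeed, suppose $0\leq i<j\leq |P|-1$ and $\pos{P_j}-\pos{P_i}=\ell\vect{u}+\ell'\vect{v}$ for some $(\ell,\ell')\neq (0,0)$. Then the translated path $P+\ell\vect{u}+\ell'\vect{v}$ places a tile at position $\pos{P_i}+\ell\vect{u}+\ell'\vect{v}=\pos{P_j}$, which is also a position of $P$; so $P$ intersects $P+\ell\vect{u}+\ell'\vect{v}$, contradicting $(\vect{u},\vect{v})$-self-avoidance. Consequently the map sending each index $i$ of $P$ to the coset $\pos{P_i}+L\in\Z^2/L$ is injective, whence $|P|\leq [\Z^2:L]$.

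Next I would identify this index with a determinant and bound it in terms of the given norms. Writing $\vect{u}=(u_1,u_2)$ and $\vect{v}=(v_1,v_2)$, the standard formula gives $[\Z^2:L]=|\det(\vect{u},\vect{v})|=|u_1v_2-u_2v_1|$, which is finite since $\vect{u}$ and $\vect{v}$ are non-collinear. Hadamard's inequality (or a direct application of Cauchy--Schwarz to the two rows of the matrix $(\vect{u}\mid\vect{v})$) then yields $|\det(\vect{u},\vect{v})|\leq\|\vect{u}\|\cdot\|\vect{v}\|$. Combining both inequalities gives $|P|\leq\|\vect{u}\|\cdot\|\vect{v}\|$, which is finite and depends only on $\|\vect{u}\|$ and $\|\vect{v}\|$ as required.

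There is no real obstacle here: the whole content of the lemma is the injectivity into $\Z^2/L$, and the bound $\|\vect{u}\|\cdot\|\vect{v}\|$ is an immediate consequence. The only thing to be slightly careful about is making explicit that non-collinearity of $\vect{u}$ and $\vect{v}$ (inherited from the definition of $(\vect{u},\vect{v})$-self-avoiding) is what prevents $[\Z^2:L]$ from being infinite; otherwise the pigeonhole step would collapse.
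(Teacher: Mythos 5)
Your proposal is correct and is essentially the paper's argument in algebraic dress: the paper also pigeonholes the positions of $P$ against a fundamental domain of the lattice $\Z\vect{u}+\Z\vect{v}$ (the parallelogram on $(0,0)$, $\vect{u}$, $\vect{u}+\vect{v}$, $\vect{v}$), using the same observation that a coincidence modulo the lattice would violate $(\vect{u},\vect{v})$-self-avoidance. Your version via $[\Z^2:L]=|\det(\vect{u},\vect{v})|\leq\|\vect{u}\|\,\|\vect{v}\|$ merely makes the bound explicit where the paper leaves it as an unspecified function of the norms.
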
 

\begin{proof}
We consider $\R^2$, the continuous $2D$ plane. Consider the line $L$ (resp. $L'$) passing by $(0,0)$ of direction $\vect{u}$ (resp. $\vect{v}$). Consider the finite cycle $C$ which is the polygon defined by the four points $(0,0)$, $(0,0)+\vect{u}$, $(0,0)+\vect{u}+\vect{v}$, $(0,0)+\vect{v}$. Let $t\in \N$ be the number of positions of $\Z^2$ which are inside the interior of $C$, note that $t$ is correctly defined since the area of the interior of $C$ is finite. Moreover, $t$ is bounded by a function depending only of $||\vect{u}||$ and $||\vect{v}||$. Without loss of generality, we can suppose that $L(\R)$ (resp. $L'(\R)$) is included into the left hand side of $L+\vect{v}$  (resp. $L'+\vect{u}$). Then, the interior of this cycle is the intersection between the right hand side of $L$, the left hand side of $L+\vect{v}$, the right hand side of $L'$ and the left hand side of $L'+\vect{u}$. Thus any for any $(x,y) \in \Z^2$ there exists $\ell \in Z$ and $\ell'\in Z$ such that $(x,y)+\ell\vect{u}+\ell'\vect{v}$ is in the interior of $C$. Finally, if $|P|$ is greater than $t$ then there exists $i,j$ and $\ell,\ell' \in \Z$ such that $\pos{P_i}=\pos{P_{j}}+\ell\vect{u}+\ell'\vect{v}$ and $P$ is not $(\vect{u},\vect{v})$-self-avoiding.
\end{proof}

\begin{lemma} 
\label{lem:compun}
Consider a tile assembly system $\mathcal{T}=(T,\sigma,1)$ whose terminal assembly is $\uniterm$, a simply pumpable path $P$ in $\inuniterm$ which is $\vect{v}$-self-avoiding (with $\vect{v}$ not collinear with $\vect{P_0P_{|P|-1}}$) and a producible finite assembly $\ass$ such that $P_0$ is a tile of $\ass$. Then the union of $\pump{P}$ and all paths of $\inuniterm$ growing on $\pump{P}$ on a tile $\pump{P}_i$ with $i\geq|P|-1$ is an assembly of complexity $1$ whose domain is a semilinear set whose size depends of $|\beta|$, $\vect{v}$, $P$ and $|T|$. 
\end{lemma}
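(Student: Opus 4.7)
The plan is to invoke Lemma~\ref{lem:magicindice} to pick an index $i^*$ past which every path growing on $\pump{P}$ is automatically in $\inuniterm$, then split the target union into a finite prefix and a $\vect{P_0P_{|P|-1}}$-periodic tail.

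First I would fix $i^*$ obtained from Lemma~\ref{lem:magicindice} applied to $P$ and $\beta$; so $i^*$ depends only on $|\beta|$, $P$ and $|T|$, and any path growing on $\pump{P}$ at an index $\geq i^*$ is automatically in $\inuniterm$. By Lemma~\ref{lem:doubleavoid}, every path growing on $\pump{P}$ at an index $\geq |P|-1$ is $(\vect{v},\vect{P_0P_{|P|-1}})$-self-avoiding, and by Lemma~\ref{lem:compzero} its length is bounded by a constant $L$ that depends only on $\|\vect{v}\|$ and $\|\vect{P_0P_{|P|-1}}\|$; in particular each such path visits at most $L$ positions, so the number of distinct growing paths starting at a given tile of $\pump{P}$ is bounded by $|T|^L$.

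Next I would define two finite assemblies. Let $\gamma_{\mathrm{fin}}$ be the union of $\pump{P}_{|P|-1,\ldots,i^*+|P|-2}$ together with every path of $\inuniterm$ that grows on $\pump{P}$ at some index in $\{|P|-1,\ldots,i^*-1\}$; by the length and multiplicity bounds of the previous paragraph $\gamma_{\mathrm{fin}}$ is finite, with size bounded by a function of $|\beta|$, $\vect{v}$, $P$, $|T|$. Let $\gamma_{\mathrm{per}}$ be the union of $\pump{P}_{i^*,\ldots,i^*+|P|-2}$ together with every path of $\inuniterm$ that grows on $\pump{P}$ at some index in $\{i^*,\ldots,i^*+|P|-2\}$; again finite and bounded by the same parameters. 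By Corollary~\ref{cor:selfavoid}, for every such growing path $Q$ starting at index $j\in\{i^*,\ldots,i^*+|P|-2\}$ and every $\ell\in\N$, the translate $Q+\ell\vect{P_0P_{|P|-1}}$ still grows on $\pump{P}$ (at index $j+\ell(|P|-1)$) and still lies in $\inuniterm$. Conversely any path growing on $\pump{P}$ at an index $k\geq i^*$ is obtained this way, since $k$ decomposes uniquely as $j+\ell(|P|-1)$ with $j\in\{i^*,\ldots,i^*+|P|-2\}$ and $\ell\in\N$, and $\pump{P}$ itself satisfies $\pump{P}_{k+|P|-1}=\pump{P}_k+\vect{P_0P_{|P|-1}}$.

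Combining these two observations, the union described in the statement equals
\[
\gamma_{\mathrm{fin}} \;\cup\; \bigcup_{\ell\in\N}\bigl(\gamma_{\mathrm{per}}+\ell\vect{P_0P_{|P|-1}}\bigr).
\]
By the definition of complexity, $\bigcup_{\ell\in\N}(\gamma_{\mathrm{per}}+\ell\vect{P_0P_{|P|-1}})$ has complexity~$1$, and its (finite) union with the complexity-$0$ assembly $\gamma_{\mathrm{fin}}$ still has complexity~$1$. The semilinear description of the domain has one linear set (with null generating vectors) per position of $\gamma_{\mathrm{fin}}$ together with one linear set (with generating vectors $\vect{P_0P_{|P|-1}}$ and $\vect{0}$) per position of $\gamma_{\mathrm{per}}$, so its size is at most $|\gamma_{\mathrm{fin}}|+|\gamma_{\mathrm{per}}|$, which is a function of $|\beta|$, $\vect{v}$, $P$, $|T|$. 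The main obstacle is purely bookkeeping: checking that the ``periodic window'' $\{i^*,\ldots,i^*+|P|-2\}$ really covers all growth indices $\geq i^*$ once translated, and that nothing is missed or double-counted; this is precisely what Corollary~\ref{cor:selfavoid} (translate-stability of growing paths) together with Lemma~\ref{lem:magicindice} (automatic membership past $i^*$) give us.
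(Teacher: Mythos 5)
Your proposal is correct and takes essentially the same route as the paper's proof: both split the union into a finite ``prefix'' assembly (paths attached at indices below the threshold given by Lemma~\ref{lem:magicindice}) and a periodic window of width $|P|-1$ whose translates by $\vect{P_0P_{|P|-1}}$ cover everything beyond, using Lemmas~\ref{lem:doubleavoid} and~\ref{lem:compzero} for finiteness and Corollary~\ref{cor:selfavoid} for translate-stability. The only nitpick is that $\gamma_{\mathrm{fin}}$ should also absorb the initial segment $\pump{P}_{0,\ldots,|P|-2}$ so that all of $\pump{P}$ is covered as the statement requires; this is immediate and does not affect the argument.
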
 

\begin{proof}
All paths growing on $\pump{P}$ on a tile $\pump{P}_i$ with $i\geq|P|-1$ are $(\vect{v},\vect{P_0P_{|P|-1}})$-self-avoiding (by lemma \ref{lem:doubleavoid}) and are thus finite and their length bounded by a function which depends only of $\vect{v}$ and $P$ (by lemma \ref{lem:compzero}). We remind that a finite path has a complexity of $0$. By lemma \ref{lem:magicindice}, there exists an index $j$ such that all paths growing on $\pump{P}$ at index $k>j$ are in $\inuniterm$, this index depends on $|\beta|$, $P$ and $|T|$. Remark, that there is only a finite number of different paths of $\inuniterm$ which can grow on $\pump{P}$ at index $|P|-1\leq k\leq j-1$ since their length is bounded by a function which depends only of $\vect{v}$ and $P$. Thus, we consider the assembly $\gamma_0$ which is the union of $\pump{P}_{0,\ldots,j}$ and all paths of $\inuniterm$ growing on $\pump{P}$ at index $|P|-1\leq k \leq  j-1$, this assembly is a finite union of finite paths and is thus finite (and its complexity is $0$), its size depends of $|\beta|$, $\vect{v}$, $P$ and $|T|$ and by its definition it is a subassembly of $\alpha$. Similarly, consider the assembly $\gamma_1$ which is of the union of $\pump{P}_{j,j+1,\ldots,j+|P|-1}$ and all paths of $\inuniterm$ growing on $\pump{P}$ at index $j\leq k \leq j+|P|-2$, this assembly is a finite union of finite paths and is thus finite (and its complexity is $0$), its size depends of $|\beta|$, $\vect{v}$, $P$ and $|T|$ and by its definition it is a subassembly of $\alpha$. 
Remark that for all $\ell \in \N$, $\gamma_1+\ell\vect{P_0P_{|P|+1}}$ intersects with $\gamma_1+(\ell+1)\vect{P_0P_{|P|+1}}$ and that $\gamma_1+\ell\vect{P_0P_{|P|+1}}$ is sub-assembly of $\uniterm$ by definition of $j$ and by corollary \ref{cor:selfavoid} and then the assembly $\bigcup_{\ell \in \N}(\gamma_1+\ell\vect{P_0P_{|P|+1}})$ is correctly defined. 
Moreover, this assembly is of complexity $1$. Consider the assembly: $$\gamma=\gamma_0 \cup \left(\bigcup_{\ell \in \N}(\gamma_1+\ell\vect{P_0P_{|P|+1}})\right).$$ This assembly is a subassembly of $\uniterm$ and for any path of $\inuniterm$ growing on $\pump{P}$ on a tile $\pump{P}_i$ with $i\geq|P|-1$ either this path is a subassembly of $\gamma_0$ or there is $\ell \in \N$ such that the translation of this path by $-\ell\vect{P_0P_{|P|+1}}$ is a subassembly of $\gamma_1$. Thus the union of all paths of $\inuniterm$ growing on $\pump{P}$ at index $i\geq|P|-1$ is $\gamma$ which is the union of an assembly of complexity $0$ with an assembly of complexity $1$ and thus is an assembly of complexity $1$ whose domain is a semilinear set whose size depends of $|\beta|$, $\vect{v}$, $P$ and $|T|$.
\end{proof}

\begin{lemma} 
\label{lem:compdeux}
Consider a tile assembly system $\mathcal{T}=(T,\sigma,1)$ whose terminal assembly is $\uniterm$, a simply pumpable path $P$ in $\inuniterm$ and a producible finite assembly $\ass$ such that $P_0$ is a tile of $\ass$. Then the union of $\pump{P}$ and all paths of $\inuniterm$ growing on $\pump{P}$ on a tile $\pump{P}_i$ with $i\geq|P|-1$ is an assembly of complexity $2$ whose domain is a semilinear set whose size depends of $|\beta|$, $P$ and $|T|$. 
\end{lemma}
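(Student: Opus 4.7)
The plan is to mirror the proof of Lemma \ref{lem:compun}, but since the hypothesis that $\pump{P}$ is $\vect{v}$-self-avoiding is dropped, the paths growing on $\pump{P}$ are no longer forced to be finite and can themselves contain simply pumpable sub-paths. First I would invoke Lemma \ref{lem:magicindice} applied to $P$ and $\ass$ to obtain an integer $j$, depending only on $|\ass|$, $P$, and $|T|$, such that every path growing on $\pump{P}$ at an index $k\geq j$ already lies in $\inuniterm$. Past this threshold $\ass$ can no longer interfere with growth, which lets me split the analysis into a bounded ``finite part'' $\gamma_0$ and a periodic ``tail'' $\gamma_1$.

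For the tail, define the period $\gamma_1$ as the union of $\pump{P}_{j,j+1,\ldots,j+|P|-2}$ together with every subassembly of $\uniterm$ reachable from these $|P|-1$ tiles. Any path $Q\in\inuniterm$ growing from one of these tiles is $\vect{P_0P_{|P|-1}}$-self-avoiding by Corollary \ref{cor:selfavoid}. If $Q$ has height or width exceeding $\funpump{|T|}{|\sigma|}$ then the pumping lemma (Theorem \ref{theorem:pumping}) forces $Q$ to contain a simply pumpable sub-path $R$; because $R\subset Q$ and $Q+\ell\vect{P_0P_{|P|-1}}\in\inuniterm$ for every $\ell$ by Corollary \ref{cor:selfavoid}, the sub-path $R$ is itself $\vect{P_0P_{|P|-1}}$-self-avoiding in the sense of the definition. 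When $\vect{R_0R_{|R|-1}}$ is not collinear with $\vect{P_0P_{|P|-1}}$, Lemma \ref{lem:compun} applied to $R$ with $\vect{v}=\vect{P_0P_{|P|-1}}$ shows that the assembly grown around $\pump{R}$ has complexity $1$. Since only finitely many starting tiles, finitely many short bounded-length paths, and finitely many tile-type sequences for a simply pumpable $R$ are possible, $\gamma_1$ is a finite union of complexity-$0$ and complexity-$1$ pieces, hence has complexity at most $1$.

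An entirely analogous argument, where the finite prefix $\gamma_0=\pump{P}_{0,\ldots,j-1}$ (together with the subassemblies it spawns) \emph{memoizes} the possible interactions with $\ass$, shows that $\gamma_0$ also has complexity at most $1$. The full assembly of the lemma is then
\[
\gamma_0 \;\cup\; \bigcup_{\ell\in\N}\bigl(\gamma_1+\ell\vect{P_0P_{|P|-1}}\bigr),
\]
a finite union of a complexity-$1$ assembly with the $\vect{P_0P_{|P|-1}}$-periodic pumping of a complexity-$1$ assembly, which by definition has complexity $2$. The semilinear bound on the domain, depending only on $|\ass|$, $P$, and $|T|$, follows by tracking the sizes through the index $j$ from Lemma \ref{lem:magicindice}, the pumping bound $\funpump{|T|}{|\sigma|}$, and the size estimates in Lemma \ref{lem:compun}.

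The principal difficulty is the case where a simply pumpable sub-path $R$ has pumping vector collinear with $\vect{P_0P_{|P|-1}}$, since then Lemma \ref{lem:compun} cannot be invoked directly. I would handle this by observing that in this situation the $\vect{P_0P_{|P|-1}}$-orbit $\{R+\ell\vect{P_0P_{|P|-1}}\}_{\ell\in\N}$ guaranteed by Corollary \ref{cor:selfavoid} already contains all of $\pump{R}$ past a bounded prefix, so $\pump{R}$ is absorbed into the outer $\vect{P_0P_{|P|-1}}$-periodic structure of $\gamma_1$ without any fresh application of Lemma \ref{lem:compun}. One must also verify that the recursion does not cascade: each simply pumpable $R$ produces only complexity-$1$ structure, and the $\vect{P_0P_{|P|-1}}$-self-avoidance given by Corollary \ref{cor:selfavoid} blocks any deeper nesting.
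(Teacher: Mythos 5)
Your proposal follows essentially the same route as the paper's proof: invoke Lemma \ref{lem:magicindice} to obtain the threshold index $j$, split the assembly into a finite prefix $\gamma_0$ and a $\vect{P_0P_{|P|-1}}$-periodic tail generated by a period $\gamma_1$, use Corollary \ref{cor:selfavoid} together with the pumping lemma to reduce every sufficiently large path growing on $\pump{P}$ to an application of Lemma \ref{lem:compun}, and conclude that the union of a complexity-$1$ piece with the periodic pumping of a complexity-$1$ piece has complexity $2$. The only notable differences are that you explicitly flag and handle the case where a sub-path's pumping vector is collinear with $\vect{P_0P_{|P|-1}}$ (a case the paper's proof silently passes over), and that the paper measures the pumping bound against the enlarged seed $\ass\cup\pump{P}_{0,\ldots,j+|P|-1}$ rather than $\sigma$, a detail you would need to incorporate when tracking the size estimates.
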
 

\begin{proof}
By lemma \ref{lem:magicindice}, there exists an index $j$ such that all paths growing on $\pump{P}$ at index $k>j$ are in $\inuniterm$, this index depends on $|\beta|$, $P$ and $|T|$. We define $\beta'$ as $\beta\cup \pump{P}_{0,1,\ldots, j+(|P|-1)}$. 
Consider a path $Q$ which grows on $\pump{P}$ at index $|P|-1\leq i \leq j+|P|-1$. By lemma \ref{cor:selfavoid}, $Q$ is $\vect{P_0P_{|P|-1}}$-self-avoiding. By the pumping lemma \ref{theorem:pumping}, either $Q$ is finite and its vertical height or horizontal width is less then $\funpump{|T|}{|\beta'|}$ or it is pumpable between $m$ and $n$ for some $1 \leq m <n \leq |Q|-1$. In the second case, without loss of generality, we can assume that the vertical height or horizontal width of $Q_{0,1,\ldots,n}$ is less then $\funpump{|T|}{|\beta'|}$ (otherwise there exist some smaller indices with the same properties as $m$ and $n$) and then the union of $Q_{0,1,\ldots,n}$ and all paths of $\inuniterm$ which grow on $Q$ on index $n$ is an assembly of complexity $1$ (by lemma \ref{lem:compun}). Remark that, there are only a finite number of different paths of $\inuniterm$ of vertical height or horizontal width bounded by $\funpump{|T|}{|\beta'|}$ which can grow on $\pump{P}_{|P|-1,\ldots,j+|P|-2}$. 
Thus, we consider the assembly $\gamma_0$ which is the union of $\pump{P}_{0,\ldots,j}$ and all paths of $\inuniterm$ growing on $\pump{P}$ at index $|P|-1\leq i \leq  j-1$, this assembly is a finite union of assemblies of complexity $1$ and is thus of complexity $1$ and its domain is a semilinear set whose size is bounded by a function which depends only of $|\beta'|$, $P$ and $|T|$. Moreover, by its definition it is a subassembly of $\alpha$. Similarly, consider the assembly $\gamma_1$ which is of the union of $\pump{P}_{j,j+1,\ldots,j+|P|-1}$ and all paths of $\inuniterm$ growing on $\pump{P}$ at index $j\leq i \leq j+|P|-2$, the complexity of this assembly is $1$ and its domain is a semilinear set whose size is bounded by a function which depends only of $|\beta'|$, $P$ and $|T|$. Moreover, by its definition it is a subassembly of $\alpha$. 
Remark that for all $\ell \in \N$, $\gamma_1+\ell\vect{P_0P_{|P|+1}}$ intersects with $\gamma_1+(\ell+1)\vect{P_0P_{|P|+1}}$ and that $\gamma_1+\ell\vect{P_0P_{|P|+1}}$ is sub-assembly of $\uniterm$ by definition of $j$ and by corollary \ref{cor:selfavoid} and then the assembly $\bigcup_{\ell \in \N}(\gamma_1+\ell\vect{P_0P_{|P|+1}})$ is correctly defined. 
Moreover, this assembly is of complexity $2$. Consider the assembly: $$\gamma=\gamma_0 \cup \left(\bigcup_{\ell \in \N}(\gamma_1+\ell\vect{P_0P_{|P|+1}})\right).$$ This assembly is a subassembly of $\uniterm$ and for any path of $\inuniterm$ growing on $\pump{P}$ at index $i\geq|P|-1$ either this path is a subassembly of $\gamma_0$ or there is $\ell \in \N$ such that the translation of this path by $-\ell\vect{P_0P_{|P|+1}}$ is a subassembly of $\gamma_1$. Thus the union of all paths of $\inuniterm$ growing on $\pump{P}$ at index $i\geq|P|-1$, is $\gamma$ which is the union of an assembly of complexity $1$ with an assembly of complexity $2$ and thus is an assembly of complexity $2$ whose domain is a semilinear set whose size depends of $|\beta'|$, $P$ and $|T|$. Remark that $|\beta'|$ depends on $|\beta|$ and $j$ and that $j$ depends on $|\beta|$, $P$ and $T$, hence the result.
\end{proof}

We can now prove the second half of the main theorem

\begin{theorem} 
Consider a directed tile assembly system $\mathcal{T}=(T,\sigma,1)$ whose terminal assembly $\uniterm$ is simply periodic. Then there exists a vector $\vect{v}$ and an assembly $\ass$ of complexity $1$ such that $$\uniterm=\bigcup_{\ell\in \Z} (\ass+\ell\vect{v}).$$
\end{theorem}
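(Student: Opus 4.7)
The plan is to apply Lemma~\ref{lem:firsthalf} to extract the ``middle band'' structure, then describe the two outer ``comb'' regions as unions of finitely many complexity-$1$ teeth, and finally bundle everything occurring in a single $\vect{v}$-period into a single complexity-$1$ assembly $\ass$.

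First, apply Lemma~\ref{lem:firsthalf} to obtain bi-pumpable paths $P^+$ and $P^-$ without redundancy, sharing a common pumping vector $\vect{v}$, together with a finite assembly $\beta_0$ of size at most $|T|^2$ such that the restriction of $\uniterm$ to $\rightside{\bipump{P^+}}\cap\leftside{\bipump{P^-}}$ equals $\bigcup_{\ell\in\Z}(\beta_0+\ell\vect{v})$. The same lemma also provides two crucial properties: no arc grows on the outer side of $\bipump{P^+}$ or $\bipump{P^-}$, and every path of $\inuniterm$ growing outward on $\bipump{P^\pm}$ is $\vect{v}$-self-avoiding. By Corollary~\ref{cor:periodic}, $\uniterm$ itself is $\vect{v}$-periodic.

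Now I describe the left comb (the right is symmetric). By $\vect{v}$-periodicity and the finiteness of $P^+$, the family of paths of $\inuniterm$ growing on $\bipump{P^+}$ into its left side decomposes into finitely many $\vect{v}$-orbits, and I pick one representative $Q$ per orbit. Each such $Q$ is $\vect{v}$-self-avoiding. If $Q$ is finite, it contributes a complexity-$0$ assembly. If $Q$ is infinite, Theorem~\ref{theorem:pumping} forces $Q$ to be infinitely pumpable beyond some index; by passing to a suffix and invoking Lemmas~\ref{lem:redun:exists} and~\ref{lem:redun:good}, extract a simply pumpable good-candidate sub-path $R$ with pumping vector $\vect{w}$. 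The vector $\vect{w}$ cannot be collinear with $\vect{v}$: otherwise $R$ would be bi-pumpable via $R$ and a $\vect{v}$-translate inside $\uniterm$, and combining this with Lemma~\ref{biperiodic:otherpath} and Corollary~\ref{cor:periodic} would force $\uniterm$ to be bi-periodic, contradicting the hypothesis. Hence Lemma~\ref{lem:compun} applies to $R$ with transverse self-avoidance direction $\vect{v}$, producing a complexity-$1$ subassembly $\gamma_Q$ that captures $\pump{R}$ together with every path of $\inuniterm$ growing on $\pump{R}$ at indices $\geq |R|-1$; the finite prefix of $Q$ preceding $R$, along with the finitely many short paths growing on it near the seed, is absorbed into a finite (complexity-$0$) remainder thanks to Lemma~\ref{lem:magicindice}.

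Finally, define $\ass$ as the union of one $\vect{v}$-period of $\beta_0$, one $\vect{v}$-period of each of $\bipump{P^+}$ and $\bipump{P^-}$, and one copy of the contribution of each representative tooth on each side (either finite or one of the $\gamma_Q$). As a finite union of assemblies of complexity at most $1$, $\ass$ has complexity $1$, and the construction together with $\vect{v}$-periodicity of $\uniterm$ yields $\uniterm=\bigcup_{\ell\in\Z}(\ass+\ell\vect{v})$. The main obstacle is the bookkeeping in the previous paragraph: showing that every infinite tooth is captured by one of the finitely many representative simply pumpable sub-paths $R$ with $\vect{w}\not\parallel\vect{v}$, and that the surgery from $Q$ to $R$ preserves $\vect{v}$-self-avoidance. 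These points rely on Lemma~\ref{lem:magicindice} (to localise seed-interactions), Lemma~\ref{lem:doubleavoid} (which forces every path growing on $\pump{R}$ past a finite index to be $(\vect{w},\vect{v})$-self-avoiding), and Lemma~\ref{lem:compzero} (which then bounds the length of any such growing path), so that no complexity beyond $1$ can accumulate within a single $\vect{v}$-period.
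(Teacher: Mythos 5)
Your proposal follows the paper's own proof essentially step for step: Lemma~\ref{lem:firsthalf} supplies the central band between $\bipump{P^+}$ and $\bipump{P^-}$, the pumping lemma (after re-seeding on $\bipump{P^+}$ via Lemma~\ref{lem:bipump:seed}) reduces each outward-growing tooth to a bounded prefix plus a simply pumpable path handled by Lemma~\ref{lem:compun}, and $\vect{v}$-periodicity tiles $\uniterm$ by one period's worth of these finitely many contributions. The one remark worth making is that your justification of $\vect{w}$ not being collinear with $\vect{v}$ (via forcing bi-periodicity) is not the right mechanism in the collinear case --- there the contradiction comes from the maximality of $P^+$ (Lemmas~\ref{lem:arcOnBiperiodic} and~\ref{lem:max}) rather than from bi-periodicity --- but this is a point the paper's own proof leaves equally implicit, so it does not change the verdict that the approaches coincide.
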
 

\begin{proof}
Consider the paths $P^+$ and $P^-$, the assembly $\ass$ and the vector $\vect{v}=\vect{P^+_0P^+_{|P^+|-1}}=\vect{P^-_0P^-_{|P^-|-1}}$ of lemma \ref{lem:firsthalf}. Consider a path $Q$ which grows on the left side of $\bipump{P^+}$ and by lemma \ref{lem:bipump:seed}, $\uniterm$ is the unique terminal assembly of $(T,Q_0,1)$. By lemma \ref{lem:firsthalf}, $Q$ is $\vect{P^+_0P^+_{|P^+|-1}}$-self-avoiding. By the pumping lemma \ref{theorem:pumping}, either $Q$ is finite and its vertical height or horizontal width is less then $\funpump{T}{1}$ or it is pumpable between $i$ and $j$ for some $1 \leq i <j \leq |Q|-1$. In the second case, all paths of $\inuniterm$ which grow on tile $Q_j$ of $Q_{0,1\ldots,j}$ belongs to an assembly of complexity $1$ whose domain is a semilinear set whose size depends of $P^+$, $Q$ and $|T|$. Moreover, without loss of generality we can suppose that the vertical height or horizontal width of $Q_{0,1\ldots,j}$ is less then $\funpump{T}{1}$. Since there exist only a finite number of paths of vertical height or horizontal width bounded by $\funpump{T}{1}$, we can consider the assembly $\ass^+$ which is the union of $P^+$ and all the paths growing on $\bipump{P^+}$ at index $0\leq k < |P|-1$. The complexity of $\ass^+$ is $1$ since it is the finite union of assembly of complexity less than $1$. Moreover, $\ass^-$ can be defined similarly for the right side of $\bipump{P^-}$ and let $$\gamma=\ass^+\cup\ass\cup\ass^-.$$ Since $\alpha$ is $\vect{v}$-periodic, we have $\uniterm=\cup_{\ell\in \Z} (\gamma+\ell\vect{v})$.
\end{proof}

We can now prove the last main theorem

\begin{theorem} 
\label{last:theorem}
Consider a directed tile assembly system $\mathcal{T}=(T,\sigma,1)$ whose terminal assembly $\uniterm$ is aperiodic, then the complexity of $\uniterm$ is $2$.
\end{theorem}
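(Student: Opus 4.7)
The plan is to decompose $\uniterm$ as a finite union of a bounded ``core'' assembly around the seed together with finitely many assemblies of complexity at most $2$ arising from simply pumpable paths exiting the core. First, note that aperiodicity of $\uniterm$ excludes any bi-pumpable path in $\inuniterm$: by Corollary \ref{cor:periodic}, a bi-pumpable path would force $\uniterm$ to be $\vect{v}$-periodic for some non-null $\vect{v}$. Hence every infinitely pumpable producible path in $\prodpathsT$ is \emph{simply} pumpable, exactly the setting where Lemma \ref{lem:compdeux} applies.

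To construct the core, I would consider the family of all producible paths $P\in\prodpathsT$ whose horizontal width and vertical height are both strictly less than $\funpump{|T|}{|\sigma|}$. Theorem \ref{theorem:pumping} forces every longer producible path to contain a simply pumpable sub-path in its prefix. Define $\gamma_0$ as the union of $\sigma$ together with every prefix of every producible path, truncated at the first tile that starts a simply pumpable sub-path. Because the pumping bound confines these prefixes to a bounded box around $\sigma$, the assembly $\gamma_0$ is finite and has complexity $0$. At the boundary of $\gamma_0$ there are only finitely many tiles that can serve as the starting tile $P_0$ of some simply pumpable path $P\in\inuniterm$. For each such starter $P$, Lemma \ref{lem:compdeux} supplies an assembly $A_P$ of complexity at most $2$ whose domain is a semilinear set, equal to the union of $\pump{P}$ with all paths of $\inuniterm$ growing on $\pump{P}$ at indices $\geq |P|-1$. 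A covering argument then shows $\uniterm = \gamma_0 \cup \bigcup_P A_P$, a finite union of assemblies of complexity at most $2$, so by Definition~1 the complexity of $\uniterm$ is at most $2$.

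Finally, aperiodicity precludes complexity below $2$ in the non-degenerate case: a complexity-$0$ assembly is finite, and a complexity-$1$ assembly of the form $\bigcup_{\ell}(\beta+\ell\vect{v})$ with $\beta$ finite is $\vect{v}$-periodic, contradicting aperiodicity. Therefore the complexity of $\uniterm$ is exactly $2$.

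The principal technical obstacle lies in the covering step: one must verify rigorously that every tile of $\uniterm$ lies either in $\gamma_0$ or in some $A_P$. For a tile $A\notin\gamma_0$, Observation~3 supplies a path of $\inuniterm$ from a tile of $\sigma$ to $A$; by the definition of $\gamma_0$ this path must cross the boundary of $\gamma_0$ by entering some simply pumpable starter $P$, and from there the recursive analysis inside Lemma \ref{lem:compdeux} (which itself invokes Lemma \ref{lem:magicindice} to neutralise seed interference far from $\sigma$ and Lemma \ref{lem:compun} for side-branches along $\pump{P}$) places $A$ inside $A_P$. Formalising this routing, and handling the edge case of a path that meets several pumpable extensions in succession (so that one must argue inductively on the number of ``levels'' of pumpable growth, bounded by two because of Lemma \ref{lem:doubleavoid}), is the delicate part of the argument.
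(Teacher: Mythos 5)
Your decomposition — a finite core $\gamma_0$ of all producible paths confined within the pumping bound, plus one complexity-$2$ assembly per simply pumpable path obtained from Lemma \ref{lem:compdeux}, glued together by a covering argument routing every tile of $\uniterm$ either into the core or onto some $\pump{P}$ — is essentially the proof the paper gives, including the observation that aperiodicity rules out bi-pumpable paths so that every infinitely pumpable path is simply pumpable. The one flaw is your final paragraph: an assembly of complexity $1$ of the form $\bigcup_{\ell\in\N}(\ass+\ell\vect{v})$ is \emph{not} $\vect{v}$-periodic, since the union runs over $\N$ rather than $\Z$ (a seed with a single pumped ray is aperiodic yet has complexity $1$), so your claimed lower bound fails; the paper does not attempt such a lower bound either, and the theorem should be read as asserting complexity at most $2$.
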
 

\begin{proof}
Consider the assembly $\ass$ which the union of the seed and all producible paths of $(T,\sigma,1)$ whose vertical height and horizontal width are less than $\funpump{T}{|\sigma|}$. If this assembly is terminal then $\uniterm=\ass$ is finite and its complexity is $0$. 
Let $\gamma=\ass$ and for any producible path $P$ of $(T,\sigma,1)$ such that $P$ is infinitely pumpable and $\asm{P}$ is a subassembly of $\ass$ do the following: 
\begin{itemize}
\item find the smallest index $j$ such that there exists $0\leq i<j$ such that $P$ is pumpable between $i$ and $j$;
\item $\gamma=\gamma \cup \pump{P_{i,i+1,\ldots,j}} \cup \{Q:Q \in \inuniterm \text{ and Q grows on } \pump{P_{i,i+1,\ldots,j}} \text{ at index $k\geq j$} \}$;
\end{itemize}
Since $\ass$ is finite then the number of producible path which are subassembly of $\ass$ is finite. By lemma \ref{lem:compdeux}, the complexity of the assembly $\pump{P_{i,i+1,\ldots,j}} \cup \{Q:Q \in \inuniterm \text{ and Q grows on } \pump{P_{i,i+1,\ldots,j}} \text{ at index $k\geq j$}\}$ is less than $2$ and thus at the end of this algorithm the complexity of the assembly $\gamma$ is less than $2$. Moreover, we claim that $\alpha=\gamma$ which would conclude this result. Since $\gamma$ is the union of paths of $\inuniterm$ then $\gamma$ is a subassembly of $\alpha$. For any tile $A$ of $\alpha$ either $A$ is a tile of $\ass$ (and thus of $\gamma$) or there exists a producible path $P$ of $(T,\sigma,1)$ such that $P_{|P|-1}=A$. If $A$ is not a tile of $\ass$, then the vertical height or horizontal width of $P$ is more than $\funpump{T}{|\sigma|}$ by definition of $\ass$ and thus there exists a prefix of $P$ whose vertical height or horizontal width is $\funpump{T}{|\sigma|}$. Again by definition of $\ass$,  this prefix is a subassembly of $\ass$. This prefix is infinitely pumpable by the pumping lemma \ref{theorem:pumping}. Then there exist $0\leq i <j \leq |P|-1$, such that $P$ is pumpable between $i$ and $j$ and $\asm{P_{0,1,\ldots, j}}$ is a subassembly of  $\ass$. Without loss of generality we can suppose that $j$ is minimal and then either $A$ is a tile of $\pump{P_{i,i+1,\ldots, j}}$ or a suffix of $P_{j,j+1,\ldots, |P|-1}$ grows on $\pump{P_{i,i+1,\ldots, j}}$ at index $k>j$. In both cases, $A$ is a tile of $\gamma$.

\end{proof}

\appendix

\section{Discussion about the seed}
\label{app:seed}

In \cite{Doty-2011}, it was assumed that the seed of a directed tile assembly system at temperature $1$ could be reduced to a single tile without loss of generality. In lemma~\ref{lem:bipump:seed}, we show that this statement is true when the terminal assembly is periodic but this statement is not true in the general case. Figure \ref{fig;appA} exhibits a counter-example. 

Our counter example is made of four kind of tile types, see Figure \ref{fig;appA}a. The first kind of tile types are the ten black tiles which constitute the seed. The second and third ones are the three blue tiles and the three green tiles which are used to assemble two simply pumpable paths which can bind on the seed (using glues $g_2$ and $g_1$). The fourth set of tile types are the eleven red tiles which are used to assemble a path whose extremities bind with the first iteration of the two simply pumpable paths (using glues $g_3$ and $g_4$).

Figure \ref{fig;appA}b shows the unique terminal assembly of this tile assembly system. Remark that only the first iteration of the red link can fully grow because all of its translations on the simply pumpable paths quickly collide with a previous period. Nevertheless, this link creates a cycle and if the seed is reduced to a single tile, as shown in Figure \ref{fig;appA}c where this single tile is pointed by a red arrow, then it is possible to grow one half of the seed, then the green pumpable path, the red path and finally the blue path in both direction which leads to another terminal assembly where there is a blue tile where a black one should be (pointed by a black arrow in the Figure). If another initial tile is chosen, either the same technique leads to a conflict or we can grow an assembly in order to pump the green pumpable path first which will lead to a conflict.

Figure \ref{fig;appA}d shows how the result of section \ref{subsec:pumpablepath} describes this terminal assembly using only a finite amount of information: 
\begin{itemize}
\item there is a finite assembly in gray made of the seed, the red path and the first iteration of the green and blue pumpable paths. 
\item a vector $\vect{v}$ and a purple finite assembly made of one period of the green pumpable path and some red tiles, for all $i \in \N$ the translation of this assembly by $i\vect{v}$ is in the terminal assembly.
\item a vector $\vect{w}$ and an orange finite assembly made of one period of the blue pumpable path and some red tiles, for all $i \in \N$ the translation of this assembly by $i\vect{w}$ is in the terminal assembly.
\end{itemize}

The interest of this remark is to show that the first period of a pumpable path may still assemble some artifacts which do not appear on the rest of the pumpable path.

\section{Wee lemmas and left/right turns for curves}

\begin{lemma}[Lemma 6.3 of~\cite{OneTile}]
  \label{lem:precious}
  Consider a two-dimensional, bounded, connected, regular closed set $S$, i.e. $S$ is equal to the topological closure of its interior points. Suppose $S$ is translated by a vector $v$ to obtain shape $S_v$, such that $S$ and $S_v$ do not overlap. Then the shape $S_{c*v}$ obtained by translating $S$ by $c*v$ for any integer $c\neq 0$ also does not overlap $S$.
\end{lemma}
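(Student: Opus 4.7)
My plan is to reduce the claim to a projection argument, handling the easy case by a direct interval analysis and using the hypothesis more delicately in the hard case. First, by the symmetry $v \mapsto -v$, it suffices to establish the conclusion for all integers $c \geq 2$. I would then project $S$ onto the line spanned by $v$; since $S$ is compact and connected, the image $\Pi(S)$ is a closed interval $[a, b]$. Let $W := b - a$ denote the width of $S$ in the $v$-direction.

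The easy case is $W \leq \|v\|$. Here, for any integer $c \geq 1$, the projections $[a, b]$ and $[a + c\|v\|, b + c\|v\|]$ are either disjoint or touch only at a single point (which happens precisely when $c = 1$ and $W = \|v\|$). In either situation, $S \cap S_{cv}$ is contained in the preimage of at most a single point of the projection line, a $1$-dimensional set with empty interior; so $S$ and $S_{cv}$ do not overlap.

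The main obstacle is the case $W > \|v\|$. Here the $v$-projections genuinely overlap in a positive-length interval, so the projection argument alone does not conclude, and one can exhibit examples (e.g., a very thin rectangle translated in a somewhat transverse direction) where $W > \|v\|$ and yet $S \cap S_v$ is still interior-disjoint. My approach would be to exploit the hypothesis to exhibit a \emph{separating direction}: a unit vector $u \in \mathbb{R}^2$ with $u \cdot v > 0$ such that the $u$-width of $S$ is at most $u \cdot v$. Once such a $u$ is produced, the $u$-projections of $S$ and $S_{cv}$ are separated by at least $(c-1)(u \cdot v) \geq 0$ for $c \geq 1$ with equality only at a single point, so $S \cap S_{cv}$ again has empty interior.

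The delicate step is producing the direction $u$. The intuition is that when $S \cap S_v$ has empty interior, the interior of $S_v$ must lie on one side of some local boundary piece of $S$, and the outward normal of that piece supplies $u$. Formally, I would work with the Minkowski difference $U - U$ where $U = \mathrm{int}(S)$: the condition ``$S \cap S_v$ has empty interior'' becomes $v \notin U - U$, while the connectedness and regular-closedness of $S$ force $U - U$ to be an open connected set containing $0$ whose complement is star-shaped with respect to the ray $\{tv : t \geq 1\}$ once $v$ is on its boundary. Concretely, one chooses $u$ as the outer normal of a supporting line of $U - U$ at the boundary point $\lambda v$ closest to $0$ on the ray; this gives $u \cdot (\lambda v) \geq u \cdot p$ for all $p \in U - U$, which translates into the width bound. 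Passing from $U - U$ back to $S$ uses that $S = \overline{U}$, so the $u$-widths of $U$ and $S$ coincide. With this $u$ in hand, the conclusion follows from the projection argument applied to the $u$-direction instead of the $v$-direction, and the lemma is proved.
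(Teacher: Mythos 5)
Note first that the paper does not actually prove this statement: it is imported verbatim from \cite{OneTile} as a black box, so there is no in-paper argument to compare against and your proposal has to stand on its own.

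Your easy case ($W\le\|v\|$) is fine, but the hard case rests on a claim that is false: a separating direction $u$ with $u\cdot v>0$ and $\mathrm{width}_u(S)\le u\cdot v$ need not exist. Take the L-shape $S=([0,10]\times[0,1])\cup([0,1]\times[0,10])$ (bounded, connected, regular closed) and $v=(5,5)$. Then $S$ and $S_v$ are disjoint, yet for every unit vector $u=(u_1,u_2)$ with $u\cdot v=5(u_1+u_2)>0$, using the points $(0,0)$, $(10,1)$, $(1,10)$ of $S$ one gets
$$\mathrm{width}_u(S)\ \ge\ \max\bigl(|10u_1+u_2|,|u_1+10u_2|\bigr)\ \ge\ \tfrac{11}{2}(u_1+u_2)\ >\ 5(u_1+u_2)\ =\ u\cdot v,$$
so no separating direction exists; and the projection width along $v$ is $11/\sqrt2>\|v\|=5\sqrt2$, so your easy case does not apply either. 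The lemma's conclusion still holds for this $S$, which shows your dichotomy genuinely fails to cover all instances rather than merely being awkward to verify. The root of the problem is that the existence of a bound $\mathrm{width}_u(S)\le u\cdot v$ is equivalent to $v\notin\mathrm{int}\bigl(\mathrm{conv}(S)-\mathrm{conv}(S)\bigr)$, whereas the hypothesis only gives $v\notin U-U$ for $U=\mathrm{int}(S)$; for non-convex $S$ the set $U-U$ is not convex, a boundary point of it need not admit a supporting line, and passing to the convex hull destroys exactly the information you need. Moreover, your auxiliary assertion that the complement of $U-U$ is star-shaped along the ray $\{tv:t\ge1\}$ is precisely a (real-parameter strengthening of the) statement to be proved, so invoking it is circular. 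Any correct proof must use connectedness in a more global, combinatorial way --- the statement is false for disconnected $S$ (e.g.\ two unit squares stacked at vertical distance $2.5$ with $v=(0,1.2)$) --- and the argument in \cite{OneTile} indeed works with the whole chain of translates $S,S_v,\dots,S_{cv}$ rather than with a single separating line.
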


The following lemma~\cite{STOC2017} formalizes the intuition behind Definition~\ref{def:pumpingPbetweeniandj}:
\begin{lemma}[Lemma 2.5 of \cite{STOC2017}]
  \label{lem:torture}
  Let $P$ be a path with tiles from some tileset $T$, $i<j$ be two integers, and $\olq$ be the pumping of $P$ between $i$ and $j$.
  Then for all integers $k\geq i$, $\olq_{k+(j-i)} = \olq_k + \vect{P_iP_j}$.
\end{lemma}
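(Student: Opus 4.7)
The plan is to prove this purely by unrolling the explicit formula of Definition~\ref{def:pumpingPbetweeniandj}: the expression $\torture{P}{k}{i}{j}$ was engineered so that shifting $k$ by the period $j-i$ leaves the modular index unchanged and adds exactly one copy of $\vect{P_iP_j}$ to the translation part, so the identity $\olq_{k+(j-i)} = \olq_k + \vect{P_iP_j}$ should fall out of two elementary arithmetic facts.

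First I would treat the generic case $k>i$. Both $\olq_k$ and $\olq_{k+(j-i)}$ are then given by the second branch of Definition~\ref{def:pumpingPbetweeniandj}, so I compute
$$\olq_k = P_{i+1 + ((k-i-1) \bmod (j-i))} + \left\lfloor \frac{k-i-1}{j-i}\right\rfloor \vect{P_iP_j}$$
and, substituting $k+(j-i)$ for $k$,
$$\olq_{k+(j-i)} = P_{i+1 + ((k-i-1+(j-i)) \bmod (j-i))} + \left\lfloor \frac{k-i-1+(j-i)}{j-i}\right\rfloor \vect{P_iP_j}.$$
Two elementary identities finish this case: $(m+(j-i))\bmod(j-i) = m\bmod(j-i)$ collapses the two path indices to the same value $i+1+((k-i-1)\bmod(j-i))$, and $\lfloor (m+(j-i))/(j-i)\rfloor = \lfloor m/(j-i)\rfloor + 1$ produces exactly one additional copy of $\vect{P_iP_j}$. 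Comparing term by term gives $\olq_{k+(j-i)} = \olq_k + \vect{P_iP_j}$.

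Second, I would verify the boundary case $k=i$, which sits on the seam between the two branches of the piecewise definition. By the first branch, $\olq_i = P_i$. For $\olq_{i+(j-i)} = \olq_j$ we use the second branch and compute $(j-i-1)\bmod(j-i) = j-i-1$ and $\lfloor (j-i-1)/(j-i)\rfloor = 0$, so $\olq_j = P_{i+1+(j-i-1)} + 0\cdot\vect{P_iP_j} = P_j$. The target equation then reduces to $P_j = P_i + \vect{P_iP_j}$, which is precisely the position equation $\pos{P_j} = \pos{P_i} + \vect{P_iP_j}$ (true by the definition of $\vect{P_iP_j}$), together with type equality, which is the standing pumping hypothesis $\type{P_i}=\type{P_j}$ (the hypothesis under which the pumping is even invoked in the paper).

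There is no real obstacle; this is a bookkeeping lemma, and the only care needed is to handle the $k=i$ seam separately from the generic $k>i$ case and to invoke the two arithmetic identities about modulus and floor when shifting the argument by the period. Everything is direct substitution from Definition~\ref{def:pumpingPbetweeniandj}.
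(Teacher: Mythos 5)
Your proof is correct and follows essentially the same route as the paper: expand both sides via the explicit formula of Definition~\ref{def:pumpingPbetweeniandj} and apply the two arithmetic identities for the modulus and the floor under a shift by the period $j-i$. The only difference is that you treat the seam $k=i$ explicitly (correctly observing that there the identity also needs $\type{P_i}=\type{P_j}$), whereas the paper applies the second branch of the formula uniformly and leaves that boundary case implicit.
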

\begin{proof}
  By the definition of $\olq$ (and using the fact that $(j-i) \mod (j-i)=0$):
  \begin{eqnarray*}
    \olq_{k + (j-i)} &=& \torture P {k+(j-i)} i j\\
        &=& \torture P k i j + \vect{P_iP_j}\\
        &=& \olq_{k}+ \vect{P_iP_j}
      \end{eqnarray*}
\end{proof}

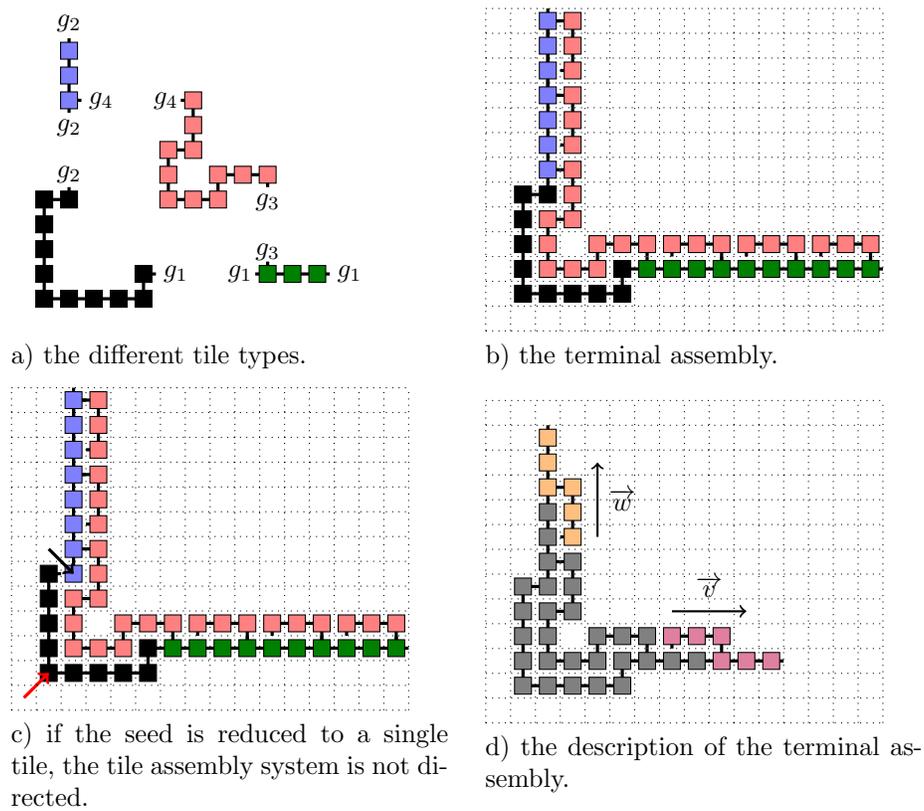
\begin{figure}
\centering

\begin{minipage}{0.45\linewidth}
\begin{tikzpicture}[x=0.33cm,y=0.33cm]

 \draw[very thick] (2.5,6) |- (1.5,5.5) |- (5.5,1.5) |- (6,2.5);

\drawwhite{0}{0}
\draws{1}{1}
\draws{2}{1}
\draws{3}{1}
\draws{4}{1}
\draws{5}{1}
\draws{5}{2}
\draws{1}{2}
\draws{1}{3}
\draws{1}{4}
\draws{1}{5}
\draws{2}{5}

\draw (6.8, 2.5) node {$g_1$};
\draw (2.5, 6.5) node {$g_2$};

\draw (9.4, 2.5) node {$g_1$};
\draw (10.5, 3.4) node {$g_3$};
\draw[very thick] (10,2.5) |- (13,2.5);
\draw[very thick] (10.5,2.5) |- (10.5,3);
\draw (13.8, 2.5) node {$g_1$};

\drawg{10}{2}
\drawg{11}{2}
\drawg{12}{2}

\draw (2.5, 8.4) node {$g_2$};
\draw (3.8, 9.5) node {$g_4$};
\draw[very thick] (2.5,9) |- (2.5,12);
\draw[very thick] (2.5,9.5) |- (3,9.5);
\draw (2.5, 12.6) node {$g_2$};

\drawb{2}{9}
\drawb{2}{10}
\drawb{2}{11}

\draw[very thick] (7,9.5) -| (7.5,7.5)  -| (6.5,5.5) -| (8.5,6.5)  -| (10.5,6);
\draw (6.4, 9.5) node {$g_4$};
\draw (10.5, 5.4) node {$g_3$};

\drawr{7}{9}
\drawr{7}{8}
\drawr{7}{7}
\drawr{6}{7}
\drawr{6}{6}
\drawr{6}{5}
\drawr{7}{5}
\drawr{8}{5}
\drawr{8}{6}
\drawr{9}{6}
\drawr{10}{6}

\end{tikzpicture}

a) the different tile types.
\end{minipage} \hfill
\begin{minipage}{0.48\linewidth}
\begin{tikzpicture}[x=0.33cm,y=0.33cm]

 \draw[very thick] (2.5,13) |- (1.5,5.5) |- (5.5,1.5) |- (16,2.5);
\draw[very thick] (9.5,2.5) |- (7.5,3.5)  -- (7.5,3);
\draw[very thick] (12.5,2.5) |- (10.5,3.5)  -- (10.5,3);
\draw[very thick] (15.5,2.5) |- (13.5,3.5)  -- (13.5,3);
\draw[very thick] (2.5,9.5) -| (3.5,7.5)  -- (3,7.5);
\draw[very thick] (2.5,12.5) -| (3.5,10.5)  -- (3,10.5);

\draws{1}{1}
\draws{2}{1}
\draws{3}{1}
\draws{4}{1}
\draws{5}{1}
\draws{5}{2}
\draws{1}{2}
\draws{1}{3}
\draws{1}{4}
\draws{1}{5}
\draws{2}{5}

\draw[very thick] (6.5,2.5) |- (6.5,3);
\draw[very thick] (9.5,2.5) |- (9.5,3);
\draw[very thick] (12.5,2.5) |- (12.5,3);
\draw[very thick] (15.5,2.5) |- (15.5,3);

\drawg{6}{2}
\drawg{7}{2}
\drawg{8}{2}
\drawg{9}{2}
\drawg{10}{2}
\drawg{11}{2}
\drawg{12}{2}
\drawg{13}{2}
\drawg{14}{2}
\drawg{15}{2}

\draw[very thick] (2.5,9.5) |- (3,12.5);
\draw[very thick] (2.5,9.5) |- (3,9.5);
\draw[very thick] (2.5,6.5) |- (3,6.5);

\drawb{2}{6}
\drawb{2}{7}
\drawb{2}{8}
\drawb{2}{9}
\drawb{2}{10}
\drawb{2}{11}
\drawb{2}{12}

\draw[very thick] (3,6.5) -| (3.5,4.5)  -| (2.5,2.5) -| (4.5,3.5)  -| (6.5,3);

\drawr{3}{6}
\drawr{3}{5}
\drawr{3}{4}
\drawr{2}{4}
\drawr{2}{3}
\drawr{2}{2}
\drawr{3}{2}
\drawr{4}{2}
\drawr{4}{3}
\drawr{5}{3}
\drawr{6}{3}

\drawr{7}{3}
\drawr{8}{3}
\drawr{9}{3}
\drawr{10}{3}
\drawr{11}{3}
\drawr{12}{3}
\drawr{13}{3}
\drawr{14}{3}
\drawr{15}{3}

\drawr{3}{7}
\drawr{3}{8}
\drawr{3}{9}
\drawr{3}{10}
\drawr{3}{11}
\drawr{3}{12}

\path [dotted, draw, thin] (0,0) grid[step=0.33cm] (16,13);
\end{tikzpicture}

b) the terminal assembly.
\end{minipage}

\vspace{+0.5em}
\begin{minipage}{0.48\linewidth}
\begin{tikzpicture}[x=0.33cm,y=0.33cm]

 \draw[very thick] (2.5,13) -- (2.5,5.5);
 \draw[very thick] (2,5.5) -- (1.5,5.5) |- (5.5,1.5) |- (16,2.5);
\draw[very thick] (9.5,2.5) |- (7.5,3.5)  -- (7.5,3);
\draw[very thick] (12.5,2.5) |- (10.5,3.5)  -- (10.5,3);
\draw[very thick] (15.5,2.5) |- (13.5,3.5)  -- (13.5,3);
\draw[very thick] (2.5,9.5) -| (3.5,7.5)  -- (3,7.5);
\draw[very thick] (2.5,12.5) -| (3.5,10.5)  -- (3,10.5);

\draws{1}{1}
\draws{2}{1}
\draws{3}{1}
\draws{4}{1}
\draws{5}{1}
\draws{5}{2}
\draws{1}{2}
\draws{1}{3}
\draws{1}{4}
\draws{1}{5}
\drawb{2}{5}

\draw[very thick] (6.5,2.5) |- (6.5,3);
\draw[very thick] (9.5,2.5) |- (9.5,3);
\draw[very thick] (12.5,2.5) |- (12.5,3);
\draw[very thick] (15.5,2.5) |- (15.5,3);

\drawg{6}{2}
\drawg{7}{2}
\drawg{8}{2}
\drawg{9}{2}
\drawg{10}{2}
\drawg{11}{2}
\drawg{12}{2}
\drawg{13}{2}
\drawg{14}{2}
\drawg{15}{2}

\draw[very thick] (2.5,9.5) |- (3,12.5);
\draw[very thick] (2.5,9.5) |- (3,9.5);
\draw[very thick] (2.5,6.5) |- (3,6.5);

\drawb{2}{6}
\drawb{2}{7}
\drawb{2}{8}
\drawb{2}{9}
\drawb{2}{10}
\drawb{2}{11}
\drawb{2}{12}

\draw[very thick] (3,6.5) -| (3.5,4.5)  -| (2.5,2.5) -| (4.5,3.5)  -| (6.5,3);

\drawr{3}{6}
\drawr{3}{5}
\drawr{3}{4}
\drawr{2}{4}
\drawr{2}{3}
\drawr{2}{2}
\drawr{3}{2}
\drawr{4}{2}
\drawr{4}{3}
\drawr{5}{3}
\drawr{6}{3}

\drawr{7}{3}
\drawr{8}{3}
\drawr{9}{3}
\drawr{10}{3}
\drawr{11}{3}
\drawr{12}{3}
\drawr{13}{3}
\drawr{14}{3}
\drawr{15}{3}

\drawr{3}{7}
\drawr{3}{8}
\drawr{3}{9}
\drawr{3}{10}
\drawr{3}{11}
\drawr{3}{12}

\draw[very thick,red,->] (0.5,0.5)--(1.5,1.5);
\draw[very thick,->] (1.5,6.5)--(2.5,5.5);

\path [dotted, draw, thin] (0,0) grid[step=0.33cm] (16,13);
\end{tikzpicture}

c) if the seed is reduced to a single tile, the tile assembly system is not directed.
\end{minipage} \hfill 
\begin{minipage}{0.48\linewidth}
\begin{tikzpicture}[x=0.33cm,y=0.33cm]

 \draw[very thick] (2.5,12) |- (1.5,5.5) |- (5.5,1.5) |- (12,2.5);
\draw[very thick] (9.5,2.5) |- (7.5,3.5)  -- (7.5,3);
\draw[very thick] (2.5,9.5) -| (3.5,7.5)  -- (3,7.5);

\drawgray{1}{1}
\drawgray{2}{1}
\drawgray{3}{1}
\drawgray{4}{1}
\drawgray{5}{1}
\drawgray{5}{2}
\drawgray{1}{2}
\drawgray{1}{3}
\drawgray{1}{4}
\drawgray{1}{5}
\drawgray{2}{5}

\draw[very thick] (6.5,2.5) |- (6.5,3);

\drawgray{6}{2}
\drawgray{7}{2}
\drawgray{8}{2}
\drawp{9}{2}
\drawp{10}{2}
\drawp{11}{2}

\draw[very thick] (2.5,6.5) |- (3,6.5);

\drawgray{2}{6}
\drawgray{2}{7}
\drawgray{2}{8}
\drawo{2}{9}
\drawo{2}{10}
\drawo{2}{11}

\draw[very thick] (3,6.5) -| (3.5,4.5)  -| (2.5,2.5) -| (4.5,3.5)  -| (6.5,3);

\drawgray{3}{6}
\drawgray{3}{5}
\drawgray{3}{4}
\drawgray{2}{4}
\drawgray{2}{3}
\drawgray{2}{2}
\drawgray{3}{2}
\drawgray{4}{2}
\drawgray{4}{3}
\drawgray{5}{3}
\drawgray{6}{3}

\drawp{7}{3}
\drawp{8}{3}
\drawp{9}{3}

\drawo{3}{7}
\drawo{3}{8}
\drawo{3}{9}

\draw[thick,->] (4.5,7.5)--(4.5,10.5);
\draw (5.5, 9) node {$\vect{w}$};

\draw[thick,->] (7.5,4.5)--(10.5,4.5);
\draw (9, 5.5) node {$\vect{v}$};

\path [dotted, draw, thin] (0,0) grid[step=0.33cm] (16,13);
\end{tikzpicture}

d) the description of the terminal assembly.
\end{minipage}

\caption{The seed cannot be reduced to a single tile.}
\label{fig;appA}
\end{figure}

\subsection{Jordan curve theorem for infinite polygonal curves and left/right turns}\label{appB:curves}

Consider a vector $\vect{w} = (u,v)$, the height of $(x,y) \in \R^2$ is defined as $\height((x,y))=-vx+uy$. The line $\ell:(-\infty,+\infty)\rightarrow \mathbb{R}^2$ of vector $\vect{w}$ passing through a point $(a,b) \in \mathbb{R}^2$ is defined as $\{(x,y) \mid \height((x,y))=\height((a,b))\}$. This line cuts the 2D plane $\mathbb{R}^2$ into two connected components: the right-hand side $R=\{(x,y) \mid  \height((x,y)) \leq \height((a,b))\}$ and the left-hand side $L=\{(x,y) \mid \height((x,y)) \geq \height((a,b))\}$. We say that a curve $c$ turns right (\resp left) of $\ell$ if there exist $\epsilon>0$ and $t \in \mathbb{R}$ such that $c(t)$ is a point of $\ell$ and for all $t< z \leq t+\epsilon$, $c(z)$ belongs to $R$ (\resp $L$) and is not on $\ell$. Moreover, we say that {\em $c$ crosses $\ell$ from left to right} (\resp {\em from right to left}) if there exist $t$ and $\epsilon>0$ such that $c$ turns right (\resp left) of $\ell$ at $c(t)$, for all $t-\epsilon \leq z\leq t$, $c(t)\in L$ (\resp $c(t)\in R$) and $c(t-\epsilon)$ is not on~$\ell$.\footnote{\label{fn:cross}The definition of ``$c$ crosses $\ell$ from left to right'' ensures that certain kinds of intersections between $c$ and $\ell$ are not counted as crossing, those include: coincident intersection between two straight segments of length $>0$, and ``glancing'' of a line ``tangentially'' to a corner.} 
We say that {\em  $c$ crosses $\ell$} if $c$ crosses $\ell$ either from left to right or from right to left.  
Now, we generalise these notions to a specific class of polygonal curves:

\begin{definition}\label{def:simple bi-infinite periodic polygonal curve}
A {\em simple bi-infinite periodic polygonal curve} is a simple curve that is a concatenation of all the translations of a finite curve $c:[a,b]\rightarrow R^2$ which is made of horizontal and vertical segments of length 1, \emph{i.e}: $$\mathrm{concat}_{i \in Z}\!\left(c+i\vect{c(a)c(b)}\right).$$
\end{definition}

Let $w_0,w_1\in\mathbb{R}^2$ and let $c$ be a curve. We say that {\em $w_0$ is connected to $w_1$ while avoiding} $c$ if there is a curve $d: [0,1] \rightarrow \mathbb{R}^2$ with $d(0)=w_0$ and $d(1)=w_1$ and $d$ does not intersect $c$.

Consider a finite curve $c:[a,b]\rightarrow R^2$ which is made of horizontal and vertical segments of lengths $1$, let $\vect{w}=\vect{c(a)c(b)}=(u,v)$ and such that $c$ and $c+\vect{w}$ intersect only at $c(b)=c(a)+\vect{w}$. Then the curve $d=\mathrm{concat}_{i \in Z}\!\left(c+i\vect{w}\right)$ is a simple bi-infinite periodic polygonal curve (see Figure \ref{fig:appB}). We call $\vect{w}$ the \emph{direction} of $c$ and $d$. The curve $c$ considered here  are such that $\height^+=\max\{\height((x,y)) \mid (x,y)\in c([a,b])\}$ and $\height^-=\min\{\height((x,y)) \mid (x,y)\in c([a,b])\}$ are bounded. Remark that for any $(x,y)\in \R^2$, the height of $(x,y)$ is the height of $(x,y)+\vect{w}$, \emph{i.e.} $-v(x+u)+u(y+v)=-vx+uy=\height((x,y))$ and thus for all $(x,y) \in d(\R)$, $\height^-\leq \height((x,y)) \leq \height^+$. Intuitively, the curve $c$ is between two lines of direction $\vect{w}$ (the lines $\ell^+$ and $\ell^-$ of Figure \ref{fig:appB}). We say that $(x, y) \in \R^2$ is below (resp. over)  $c$ if $\height((x,y)) \leq \height^-$ (resp.  $\height((x,y)) \geq \height^+$). Consider a vector $\vect{w'}$ which is not collinear to $\vect{w}$ then for any $(x,y)\in \R^2$, the height of $(x,y)+\vect{w'}$ is not the same as the height of $(x,y)$. 
In particular, if $\vect{w'}=(-v,u)$ (resp. $\vect{w'}=(v,-u)$), \emph{i.e.} $\vect{v}$ rotated by $\pi/2$ (resp. $-\pi/2$), then $\height((x,y))< \height((x,y)+\vect{w'})$ (resp. $\height(x,y)>\height((x,y)+\vect{w'})$).

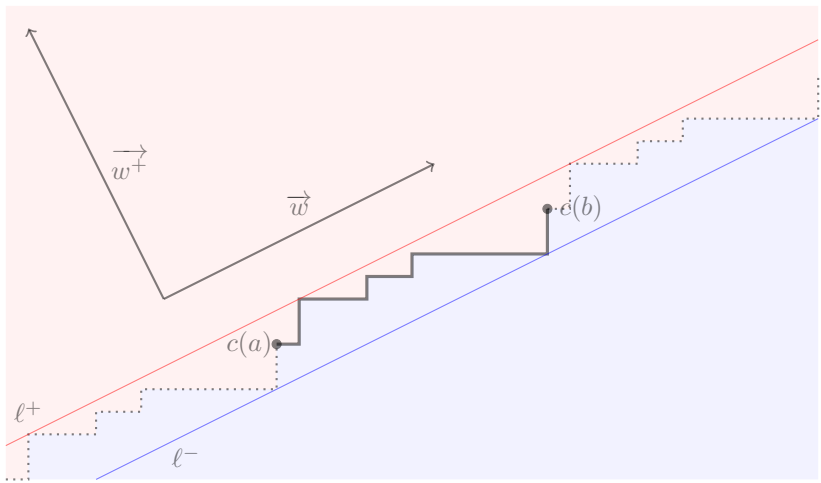
\begin{figure}
\centering
\begin{tikzpicture}[x=0.3cm,y=0.3cm]

\transparent{0.5}
\fill[red!10!white] (0,0) |- (36,21) |- (30,16) |- (28,15) |- (25,14) |- (24,12) |- (18,10) |-(16,9) |- (13,8) |- (12,6) |- (6,4) |- (4,3) |- (1,2) |- (0,0);
\fill[blue!10!white] (0,0) -| (36,18) |- (30,16) |- (28,15) |- (25,14) |- (24,12) |- (18,10) |-(16,9) |- (13,8) |- (12,6) |- (6,4) |- (4,3) |- (1,2) |- (0,0);

\draw[very thick] (12,6) -| (13,8)  -| (16,9) -| (18,10)  -| (24,12);
\draw[thick,dotted] (0,0) -| (1,2)  -| (4,3) -| (6,4)  -| (12,6);
\draw[thick,dotted] (24,12) -| (25,14)  -| (28,15) -| (30,16)  -| (36,18);

\draw[red] (0,1.5) -- (36,19.5);
\draw[blue] (4,0) -- (36,16);
\draw (1,3) node {$\ell^+$};
\draw (8, 1) node {$\ell^-$};


\draw[thick,->] (7,8)--(19,14);
\draw[thick,->] (7,8)--(1,20);
\draw (13, 12.2) node {$\vect{w}$};
\draw (5.5, 14) node {$\vect{w^+}$};

\draw[fill=black] (12,6) circle (0.2);
\draw[fill=black] (24,12) circle (0.2);
\draw (10.8,6) node {$c(a)$};
\draw (25.5,12) node {$c(b)$};


\end{tikzpicture}

\caption{A curve $c:[a,b]\rightarrow \R^2$ of direction $\vect{w}$ which generates a simple bi-infinite periodic polygonal curve and cuts the plane in two parts. All points which are over the curve are in the left hand side (in red) and all points which are under the curve are in the right hand side of the curve (in blue).}
\label{fig:appB}
\end{figure}

\begin{theorem}
  \label{thm:infinite-jordan}
  Let $c : \mathbb{R} \rightarrow \mathbb{R}^2$ be a simple bi-infinite periodic polygonal curve, 
  and let $(x_\textrm{min}, y_\textrm{min})$ and $(x_\textrm{max}, y_\textrm{max})$ be respectively below and over $c$. 
    Then $c$ cuts $\R^2$ into two connected components:
  \begin{enumerate}
  \item\label{thm:infinite-jordan:conc:1} the left-hand side of $c$:  
  $$c(\mathbb{R}) \cup  \{ w \mid w \in\mathbb{R}^2 \textrm{ is connected to } (x_\textrm{max},y_\textrm{max}) \textrm{ while avoiding } c \},$$ 
  \item\label{thm:infinite-jordan:conc:2} the right-hand side of $c$:  
  $$c(\mathbb{R}) \cup  \{ w \mid w \in\mathbb{R}^2 \textrm{ is connected to } (x_\textrm{min},y_\textrm{min}) \textrm{ while avoiding } c \}$$
  \end{enumerate}
\end{theorem}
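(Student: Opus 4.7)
The plan is to reduce the bi-infinite periodic case to the classical polygonal Jordan Curve Theorem (Theorem~\ref{thm:jordan}) by \emph{closing} a large enough portion of $c$ into a simple closed polygon. Write $\vect{u}$ for $\vect{w}$ rotated counterclockwise by $\pi/2$, so that $\height$ strictly increases along the direction $\vect{u}$. The key observation is that $c$ is entirely contained in the strip $S = \{p \in \mathbb{R}^2 : \height^- \leq \height(p) \leq \height^+\}$, so the open half-planes $H^+ = \{\height > \height^+\}$ and $H^- = \{\height < \height^-\}$ are disjoint from $c(\mathbb{R})$, each path-connected, and respectively contain $(x_{\max}, y_{\max})$ and $(x_{\min}, y_{\min})$. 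This gives path-connectedness of the two proposed components for free.

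First I would show that these two components exhaust $\mathbb{R}^2 \setminus c(\mathbb{R})$. For any $p \notin c(\mathbb{R})$, consider the ray from $p$ in direction $\vect{u}$; since $\height$ strictly increases along it and $c \subset S$ is bounded in height, the intersection of the ray with $S$ is a bounded segment, so it meets $c$ in only finitely many points (possibly after an arbitrarily small perturbation of the direction to avoid degenerate tangencies of the ray with unit segments of $c$). Following the perturbed ray while making small semicircular detours around each transverse crossing produces a continuous path from $p$ into either $H^+$ or $H^-$, and this path avoids $c$.

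The main step, and the hardest one, is disjointness: no path $\gamma$ avoiding $c$ can connect $H^+$ to $H^-$. After rotating coordinates so that $\vect{w}$ is horizontal, the image of $\gamma$ is bounded, so I pick an integer $N$ large enough that $\gamma \subset (-N|\vect{w}|, N|\vect{w}|) \times \mathbb{R}$ and a height $M > \height^+$ such that $\gamma \subset \mathbb{R} \times (-\infty, M)$. Let $c_N$ be the finite sub-arc of $c$ with $x$-coordinates in $[-N|\vect{w}|, N|\vect{w}|]$; close it into a simple closed polygon $\tilde c_N$ by appending a three-segment cap inside $\{y \geq M+1\} \subset H^+$ that runs from the right endpoint of $c_N$ up to height $M+1$, across to the $x$-coordinate of the left endpoint of $c_N$, and down to the left endpoint. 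By Theorem~\ref{thm:jordan}, $\tilde c_N$ cuts $\mathbb{R}^2$ into an interior $I$ and an unbounded exterior $E$. One verifies that $H^- \subset E$ (escape downward to $y = -\infty$ without meeting $\tilde c_N$) and that a small region just below the cap lies in $I \cap H^+$. The path $\gamma$ then must cross $\tilde c_N$; yet it avoids $c \supseteq c_N$ by hypothesis and avoids the cap by the choice of $M$, a contradiction.

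The hard part is the clean bookkeeping in the last paragraph: making ``$\gamma$'s endpoint in $H^+$ lies in $I$'' rigorous (which requires choosing the cap so that $I$ contains a specified point of $H^+$), and symmetrically handling the case where the $H^+$-endpoint of $\gamma$ instead lands in $E$, which is resolved by capping \emph{below} at height $m < \height^-$ and running the same argument with the roles of $H^+$ and $H^-$ swapped. Once these routine checks and the standard transversality perturbation of step two are in place, the argument is complete.
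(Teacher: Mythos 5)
Your strategy is a genuinely different route from the paper's: the paper never invokes the finite Jordan Curve Theorem here, but instead runs a direct crossing-parity argument, defining the left (resp.\ right) side as the set of points whose ray in direction $\vect{v^-}$ (resp.\ $\vect{v^+}$) crosses $c$ an odd number of times, and connecting two points of the same side by going along the ray to the first hit, then along $c$, then back out the other ray. Unfortunately your exhaustion step contains a real error. At a \emph{transverse} crossing $q$ of the (perturbed) ray with $c$, the curve $c$ locally separates a small disk around $q$ into two half-disks; the ray enters $q$ from one half-disk and leaves into the other, so any ``small semicircular detour'' joining the incoming and outgoing pieces of the ray inside that disk must itself cross $c$. (Tangential touches are the ones you can detour around --- you have this exactly backwards.) The claim also proves too much: if one could always detour around transverse crossings, then the upward ray from \emph{every} point of $\R^2\setminus c(\R)$ would yield a $c$-avoiding path into $H^+$, and every point would lie in the left-hand component, contradicting the disjointness you establish in your third step. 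The correct replacement needs the parity idea: the perpendicular line through $p$ crosses $c$ an odd number of times, so exactly one of the two rays from $p$ crosses it an odd number of times, and $p$ is connected to the corresponding half-plane by going along the other ray, or (as in the paper) by walking along $c$ just off the curve on the appropriate side.

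Your disjointness step is the part that is essentially sound, but it too has an unaddressed issue: the capped curve $\tilde c_N$ need not be simple as described, because the two vertical legs of the cap start \emph{at the endpoints of $c_N$}, which lie inside the strip $\height^-\leq\height\leq\height^+$, and the arc $c_N$ may pass above its own endpoints, so those legs can re-intersect $c_N$. This is fixable (e.g.\ replace the straight top of the cap by the translate $c_N+h\vect{v^+}$ for large $h$, or choose the endpoints and legs so that the legs lie outside the horizontal range of the rest of $c_N$), but it is precisely the kind of ``routine check'' that is not routine for a periodic curve that wanders within its strip; note also that with the construction set up so that the legs lie outside the bounding box of $\gamma$, the $H^+$ endpoint of $\gamma$ necessarily lands in the interior $I$ (its upward ray crosses only the top of the cap, once), so the case split you defer to a lower cap does not actually arise. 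As it stands, the proposal does not constitute a proof: step two must be rewritten from scratch, and step three needs the simplicity of $\tilde c_N$ established.
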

\begin{proof}
  The proof is similar to the proof of the Jordan Curve Theorem for polygonal curves. 

Let $\vect{v}$ be the direction of $c$, let $\vect{v^+}$ (resp. $\vect{v^-}$) be the vector $\vect{v}$ rotated by $\pi/2$ (resp. $-\pi/2$). For each point $(x, y)\in\R^2$ we define $\ell(x,y)$ to be the line of vector $\vect{v^+}$ through $(x, y)$, $\ell^+(x,y)$ to be the ray of vector $\vect{v^+}$ from $(x, y)$, and $\ell^-(x,y)$ to be the ray of vector $\vect{v^-}$ from $(x,y)$.

  Since $\vect{v}^+$ and $\vect{v}$ are not collinear,
  for all $(x, y)\in\R^2$,
   $\ell^-(x,y)$ and $\ell^+(x, y)$ intersects $c$ at a finite number of points.
  Moreover, because $\ell(x,y)$ cuts the plane into two connected components, $\ell(x,y)$ crosses $c$
  an odd number of times,

  Let $L$ be the subset of $\R^2$ such that for all $(x, y)\in L$, $c$ crosses $\ell^-(x, y)$ an odd number of times,\footnote{The term ``crosses'' was defined with respect to a line $\ell(x,y)$. This definition is easily generalised to the (coincident) ray $\ell^+(x,y)$, by considering only those crossings at locations that are simultaneously on both $\ell$ and the ray. Likewise for  the ray $\ell^-(x,y)$.} and let $R$ be the subset of $\R^2$ such that for all $(x, y)\in R$, $c$ crosses $\ell^+(x, y)$ an odd number of times. 
  Note that $L\cap R = c(\R)$  (where $c(\R)$ is the range of~$c$), or in other words the intersection of $L$ and $R$ is the set of all points of $c$.

  We claim that $L$ (\resp $R$) is a connected component: indeed, since $c$ is connected, if $(x_0,y_0)$ and $(x_1,y_1)$ are both in $L$ (\resp both in $R$), let $t_0,t_1\in\R$ be the smallest real numbers such that $\ell^-(x_0,y_0)(t_0)$ (\resp $\ell^+(x_0,y_0)(t_0)$) is on $c$, and $\ell^-(x_1,y_1)(t_1)$ (\resp $\ell^+(x_1,y_1)(t_1)$) is on $c$.
  We know there is at least one such intersection because $\ell^-(x_0,y_0)$ and $\ell^-(x_1,y_1)$ (\resp $\ell^+(x_0,y_0)$ and  $\ell^+(x_1,y_1)$) cross $c$ an odd number of times, hence at least once. Without loss of generality we suppose that $\ell^-(x_0,y_0)(t_0)$ (\resp $\ell^+(x_0,y_0)(t_0)$) is before $\ell^-(x_1,y_1)(t_1)$  (\resp $\ell^+(x_1,y_1)(t_1)$) according to the order of positions along $c$.

  Let $d$ be the curve defined as the concatenation of:
  \begin{itemize}
  \item $\ell^-(x_0,y_0)$ (\resp, $\ell^+(x_0,y_0)$) from $(x_0,y_0)$ up to $\ell^-(x_0,y_0)(t_0)$ (\resp, $\ell^+(x_0,y_0)(t_0)$),
  \item $c$ from $\ell^-(x_0,y_0)(t_0)$ to  $\ell^-(x_1,y_1)(t_1)$ (\resp, from $\ell^+(x_0,y_0)(t_0)$ to  $\ell^+(x_1,y_1)(t_1)$),
  \item $\reverse{\ell^-(x_1,y_1)}$  (\resp, $\reverse{\ell^+(x_1,y_1)}$) from $\ell^-(x_1,y_1)(t_1)$ (\resp, $\ell^+(x_1,y_1)(t_1)$) to $(x_1,y_1)$.
  \end{itemize}
  The curve $d$ is entirely in $L$ (\resp in $R$), starts at $(x_0,y_0)$ and ends at $(x_1,y_1)$, which proves that $L$ (\resp $R$) is indeed a connected component.

  We claim that $L\cup R=\R^2$: indeed, for all $(x,y)\in\R^2$, since $\ell(x,y)$ crosses $c$ an odd number of times, $\ell^+(x,y)$ crosses $c$ an even number of times if and only if $\ell^-(x,y)$ crosses $c$ an odd number of times (and vice-versa),
  hence $(x,y)$ is in at least one of $L$ and $R$, and only the points of $c$ are in both.

  As a conclusion, $c$ cuts the plane into two disjoint connected components:  
   $L\setminus c(\R)$ which we call the \emph{strict left-hand side} of $c$ and
   $R\setminus c(\R) $ which we call the \emph{strict right-hand side} of $c$.
  
  Let $(x,y)$ be a point over $c$ then $\ell^+(x,y)$ does not intersect $c$, and thus $\ell^-(x,y)$ intersects $c$ an odd number of times. 
  $(x,y)$ is in $L$, and since $L$ is connected, we get Conclusion~\ref{thm:infinite-jordan:conc:1}. 
  Let $(x,y)$ be a point under $c$ then $\ell^-(x,y)$ does not intersect $c$, and thus $\ell^+(x,y)$ intersects $c$ an odd number of times. 
  $(x,y)$ is in $R$, and since $R$ is connected, we get Conclusion~\ref{thm:infinite-jordan:conc:2}. 
\end{proof}

Using the technique in the proof of Theorem~\ref{thm:infinite-jordan}, the definitions of turning right and left can be extended from a line to an infinite polygonal curve. 

\begin{definition}[left-hand and right-hand side of a curve]\label{def:curve LHS RHS}
The conclusion of Theorem~\ref{thm:infinite-jordan} defines the {\em left-hand side} $\mathcal{L}\subsetneq \mathbb{R}^2$, and 
{\em right-hand side} $\mathcal{R}\subsetneq \mathbb{R}^2$ of a simple bi-infinite periodic polygonal curve $c$. 
The {\em strict left-hand side} of a simple bi-infinite periodic polygonal curve $c:\mathbb{R}\rightarrow \mathbb{R}^2$ is the set $\mathcal{L} \setminus c(\mathbb{R})$. 
Likewise the {\em strict right-hand side} of a simple bi-infinite periodic polygonal curve $c$ is the set $\mathcal{R} \setminus c(\mathbb{R})$, where $\mathcal{R}$ is the right-hand side of $c$.
\end{definition}

We have already defined what this means for a curve to cross a line. 
Theorem~\ref{thm:infinite-jordan} enables us to generalise that definition to one curve turning from another
simple bi-infinite periodic polygonal curve.   
\begin{definition}[One curve turning left or right from another] \label{def:curve turn}
Let $d$ be a curve and let $c : \mathbb{R}\rightarrow \mathbb{R}^2$ be a simple bi-infinite periodic polygonal curve.
We say that $d$ turns left (\resp, right) from curve $c$ at the point $d(z) = c(w)$, for some $z,w \in \mathbb R$, if  there is an $\epsilon > 0$ such that $d(z+\epsilon)$ is in the strict left-hand (\resp, right-hand) side of $d$ and 
  $d(z')$ is not on $c$, for all $z'$ where $z < z' < z+\epsilon$.
\end{definition}

Definition~\ref{def:curve turn} is consistent with the definition of one path turning left/right from another (Section~\ref{sec:defs-paths}) in the following sense.  
Consider two paths $P$ and $Q$ such that $Q$ turns right (\resp, left) of $P$ and consider a curve $c$ which contains $\embed{P}$ (with the same orientation) then $\embed{Q}$ turns right (\resp, left) of $c$. 
However, not all curves are the embedding of some path, hence the reverse implication does not hold. 
Also, the curve turning definition has no requirement analogous to the orientation requirement for path turns (implied by $i>0$ and the definition of $\tau$ in the path turning definition).

Throughout the article, we will use the following fact several times. Consider a simple bi-infinite periodic polygonal curve $c$ of direction $\vect{v}$ and a vector $\vect{w}$ which is not collinear to $\vect{v}$ and such that $c$ and $c+\vect{w}$ do not intersect. Without loss of generality, we suppose that $c(\R)$ is included into $\mathcal{L}(c+\vect{w})$. In this case, $\mathcal{R}(c) \cap \mathcal{L}(c+\vect{w}) \subset \R^2$ is connected and we claim that for any $(x,y)\in \R^2$, there exists $i\in\Z$ such that $(x,y)+i\vect{w}$ is in $\mathcal{R}(c) \cap \mathcal{L}(c+\vect{w}) \subset \R^2$. Indeed, for all $t\in \Z$, let $$\height^+_t=\max\{\height((x,y)+t\vect{w}) \mid (x,y)\in c([a,b])\} \text{ and }$$  $$\height^-_t=\min\{\height((x,y)+t\vect{w}) \mid (x,y)\in c([a,b])\}. $$ Since, $c(\R)$ is included into $\mathcal{L}(c+\vect{w})$ then for all $t\in Z$, $h^+_t>h^+_{t+1}$ and $h^-_t>h^-_{t+1}$. Thus, for any $(x,y)\in \R^2$ there exist $j<k$ such that $(x,y)$ is under $c+j\vect{w}$ (resp. over $c+k\vect{w}$) and thus in $\mathcal{R}(c+t\vect{w})$ (in $\mathcal{L}(c+k\vect{w})$). Then, $i=\max\{t: (x,y)\in \mathcal{R}(c+t\vect{w})\}$ is correctly defined and $(x,y)$ is in $\mathcal{R}(c+i\vect{w}) \cap \mathcal{L}(c+(i+1)\vect{w})$. Thus, $(x,y)-i\vect{w}$ is in $\mathcal{R}(c) \cap \mathcal{L}(c+\vect{w})$.

Also consider two simple bi-infinite periodic polygonal curves $c$ and $c'$ both of direction $\vect{v}$ with $c(0)=c'(0)$ and such that $c'(\R)$ is a subset of $\mathcal{R}(c)$. Then $\mathcal{R}(c) \cap \mathcal{L}(c')$ is connected and there exists $A\subset \Z^2$ such that $A$ is finite and $\Z \cap (\mathcal{R}(c) \cap \mathcal{L}(c'))=\bigcup_{i\in\Z}(A+i\vect{v})$. Indeed, consider $\ell$ (resp. $\ell'$) which is the restriction of $c$ (resp. $c'$) which starts in $c(0)$ (resp. $c'(0)$) and ends in $c(0)+\vect{v}$ then $\concat{\ell,\ell'}$ is a cycle but it may not be simple. Let $\vect{v^+}$ be the rotation of $\vect{v}$ by $\pi/2$. Consider the translation of $\ell$ by $\epsilon\vect{v^+}$ with $\epsilon>0$, then $$\concat{\left [c(0),c(0)+\epsilon\vect{v^+} \right],\ell+\epsilon\vect{v^+},\left[c(0)+\vect{v}+\epsilon\vect{v^+},c(0)+\vect{v}\right] , \ell'}$$ is now a simple cycle and let $I$ be its interior and let $A=I\cap Z$. Moreover, $\epsilon$ could chosen small enough such that for any $(x,y) \in \Z^2$ such that $(x,y) \in A$, we have  $(x,y) \in \mathcal{R}(c) \cap \mathcal{L}(c')$. Then $\Z\cap(\mathcal{R}(c) \cap \mathcal{L}(c'))=\bigcup_{i\in\Z}(A+i\vect{v})$. We call $A$, the interior of $c$ and $c'$. 


\section{Analysis of an aperiodic tile assembly system.}
\label{app:C}

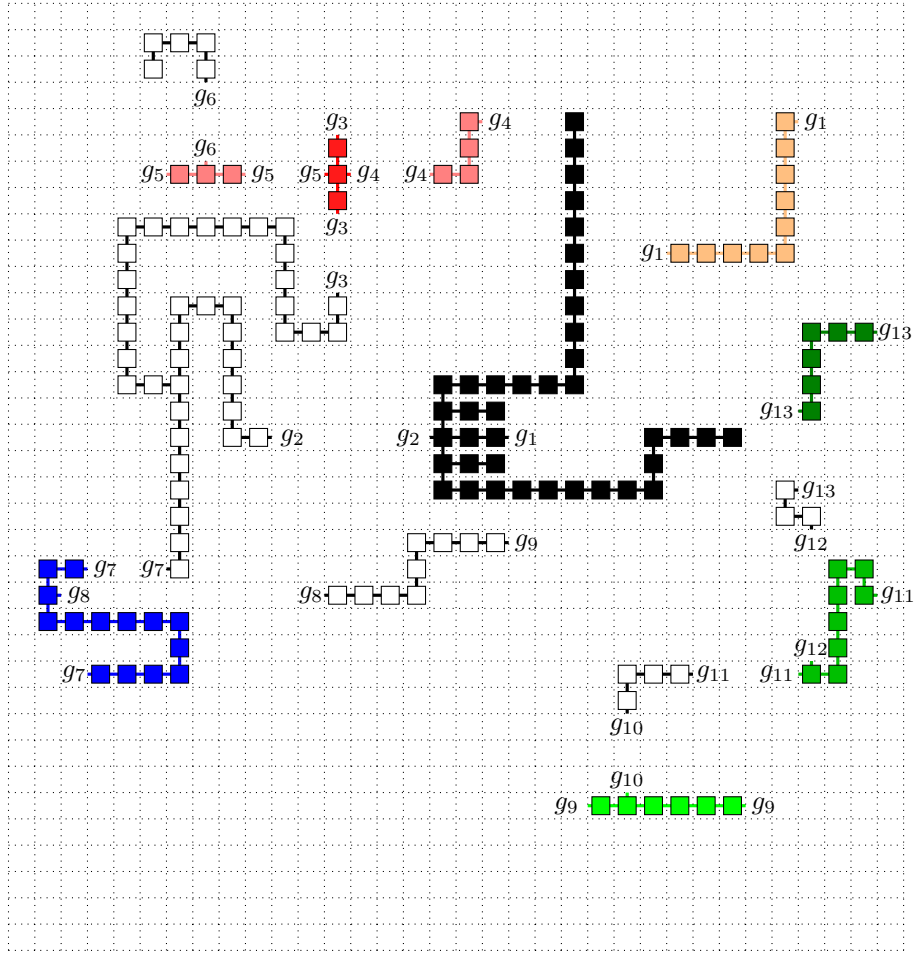
\begin{figure}
\centering
\begin{tikzpicture}[x=0.35cm,y=0.35cm]

\draw[very thick] (16.5,21.5) -| (21.5,31.5);
\draw[very thick] (16.5,21.5) |- (24.5,17.5) |- (27.5,19.5);
\draw[very thick] (16.5,20.5) |- (18.5,20.5);
\draw[very thick] (16,19.5) |- (19,19.5);
\draw[very thick] (16.5,18.5) |- (18.5,18.5);
\draws{16}{21}
\draws{16}{20}
\draws{16}{19}
\draws{16}{18}
\draws{16}{17}
\draws{17}{21}
\draws{17}{20}
\draws{17}{19}
\draws{17}{18}
\draws{17}{17}
\draws{18}{21}
\draws{18}{20}
\draws{18}{19}
\draws{18}{18}
\draws{18}{17}
\draws{19}{21}
\draws{20}{21}
\draws{21}{21}
\draws{21}{22}
\draws{21}{23}
\draws{21}{24}
\draws{21}{25}
\draws{21}{26}
\draws{21}{27}
\draws{21}{28}
\draws{21}{29}
\draws{21}{30}
\draws{21}{31}
\draws{19}{17}
\draws{20}{17}
\draws{21}{17}
\draws{22}{17}
\draws{23}{17}
\draws{24}{17}
\draws{24}{18}
\draws{24}{19}
\draws{25}{19}
\draws{26}{19}
\draws{27}{19}

\draw (19.7,19.5) node {$g_1$};
\draw (15.2,19.5) node {$g_2$};

\draw[very thick] (10,19.5) -| (8.5,24.5) -| (6.5,14.5) -| (6,14.5);
\draw[very thick] (6.5,21.5) -| (4.5,27.5) -| (10.5,23.5) -| (12.5,25) ;
\drawt{9}{19}
\drawt{8}{19}
\drawt{8}{20}
\drawt{8}{21}
\drawt{8}{22}
\drawt{8}{23}
\drawt{5}{21}
\drawt{4}{21}
\drawt{4}{22}
\drawt{4}{23}
\drawt{4}{24}
\drawt{4}{25}
\drawt{4}{26}
\drawt{4}{27}
\drawt{5}{27}
\drawt{6}{27}
\drawt{7}{27}
\drawt{8}{27}
\drawt{9}{27}
\drawt{10}{27}
\drawt{10}{26}
\drawt{10}{25}
\drawt{10}{24}
\drawt{10}{23}
\drawt{11}{23}
\drawt{12}{23}
\drawt{12}{24}
\drawt{8}{24}
\drawt{7}{24}
\drawt{6}{24}
\drawt{6}{23}
\drawt{6}{22}
\drawt{6}{21}
\drawt{6}{20}
\drawt{6}{19}
\drawt{6}{18}
\drawt{6}{17}
\drawt{6}{16}
\drawt{6}{15}
\drawt{6}{14}

\draw (10.8,19.5) node {$g_2$};
\draw (12.5,25.5) node {$g_3$};
\draw (5.5,14.5) node {$g_7$};

\draw[very thick] (12,13.5) -| (15.5,15.5)  -| (19,15.5);
\drawt{12}{13}
\drawt{13}{13}
\drawt{14}{13}
\drawt{15}{13}
\drawt{15}{14}
\drawt{15}{15}
\drawt{16}{15}
\drawt{17}{15}
\drawt{18}{15}

\draw[very thick] (7.5,33) |- (5.5,34.5)  -| (5.5,33.5);
\drawt{7}{33}
\drawt{7}{34}
\drawt{6}{34}
\drawt{5}{34}
\drawt{5}{33}

\draw (7.5,32.4) node {$g_6$};

\draw[red,very thick] (12.5,28) |- (12.5,31);
\draw[red,very thick] (12,29.5) |- (13,29.5);
\drawred{12}{28}
\drawred{12}{29}
\drawred{12}{30}
\draw (12.5,27.5) node {$g_3$};
\draw (12.5,31.5) node {$g_3$};
\draw (13.7,29.5) node {$g_4$};
\draw (11.5,29.5) node {$g_5$};

\draw[red!50!white,very thick] (6,29.5) |- (9,29.5);
\draw[red!50!white,very thick] (7.5,29.5) |- (7.5,30);
\drawr{8}{29}
\drawr{7}{29}
\drawr{6}{29}

\draw (9.7,29.5) node {$g_5$};
\draw (5.5,29.5) node {$g_5$};
\draw (7.5,30.5) node {$g_6$};

\draw[red!50!white,very thick] (16,29.5) -| (17.5,31.5) -| (18,31.5);
\drawr{16}{29}
\drawr{17}{29}
\drawr{17}{30}
\drawr{17}{31}

\draw (15.5,29.5) node {$g_4$};
\draw (18.7,31.5) node {$g_4$};

\draw[orange!50!white,very thick] (25,26.5) -| (29.5,31.5) -| (30,31.5);
\drawo{25}{26}
\drawo{26}{26}
\drawo{27}{26}
\drawo{28}{26}
\drawo{29}{26}
\drawo{29}{27}
\drawo{29}{28}
\drawo{29}{29}
\drawo{29}{30}
\drawo{29}{31}
\draw (24.5,26.5) node {$g_1$};
\draw (30.7,31.5) node {$g_1$};

\draw[blue,very thick] (3,14.5) |- (1.5,14.5) |- (6.5,12.5) |- (3,10.5);
\draw[blue,very thick] (1.5,13.5) |- (2,13.5);
\drawblue{2}{14}
\drawblue{1}{14}
\drawblue{1}{13}
\drawblue{1}{12}
\drawblue{2}{12}
\drawblue{3}{12}
\drawblue{4}{12}
\drawblue{5}{12}
\drawblue{6}{12}
\drawblue{6}{11}
\drawblue{6}{10}
\drawblue{5}{10}
\drawblue{4}{10}
\drawblue{3}{10}

\draw (3.7,14.5) node {$g_7$};
\draw (2.5,10.5) node {$g_7$};
\draw (2.7,13.5) node {$g_8$};

\draw[green,very thick] (22,5.5) |- (28,5.5);
\draw[green,very thick] (23.5,5.5) |- (23.5,6);
\drawgreen{27}{5}
\drawgreen{26}{5}
\drawgreen{25}{5}
\drawgreen{24}{5}
\drawgreen{23}{5}
\drawgreen{22}{5}

\draw (21.2,5.5) node {$g_9$};
\draw (28.7,5.5) node {$g_9$};
\draw (23.5,6.5) node {$g_{10}$};

\draw[very thick] (23.5,9) |- (26,10.5);
\drawt{23}{9}
\drawt{23}{10}
\drawt{24}{10}
\drawt{25}{10}

\draw (11.5,13.5) node {$g_8$};
\draw (19.7,15.5) node {$g_9$};

\draw (23.5,8.5) node {$g_{10}$};
\draw (26.8,10.5) node {$g_{11}$};

\draw[green!75!black,very thick] (30,10.5) -| (31.5,14.5)  -| (32.5,13.5) -| (33,13.5);
\draw[green!75!black,very thick] (30.5,10.5) -| (30.5,11);
\drawgre{30}{10}
\drawgre{31}{10}
\drawgre{31}{11}
\drawgre{31}{12}
\drawgre{31}{13}
\drawgre{31}{14}
\drawgre{32}{14}
\drawgre{32}{13}

\draw (29.2,10.5) node {$g_{11}$};
\draw (30.5,11.5) node {$g_{12}$};
\draw (33.8,13.5) node {$g_{11}$};

\draw[very thick] (30.5,16) |- (29.5,16.5) |- (30,17.5);
\drawt{30}{16}
\drawt{29}{16}
\drawt{29}{17}

\draw (30.5,15.5) node {$g_{12}$};
\draw (30.8,17.5) node {$g_{13}$};

\draw[green!50!black,very thick] (30,20.5) -| (30.5,23.5) |- (33,23.5);
\drawg{30}{20}
\drawg{30}{21}
\drawg{30}{22}
\drawg{30}{23}
\drawg{31}{23}
\drawg{32}{23}

\draw (29.2,20.5) node {$g_{13}$};
\draw (33.7,23.5) node {$g_{13}$};

\path [dotted, draw, thin] (0,0) grid[step=0.35cm] (34,36);
\end{tikzpicture}
\caption{A tile set: the seed is in black and the colored paths are simply pumpable. Each tile drawn here is unique.}
\label{fig:appC:step1}
\end{figure}

\begin{figure}
\centering
\begin{tikzpicture}[x=0.35cm,y=0.35cm]

\draw[very thick] (7.5,21.5) -| (12.5,31.5);
\draw[very thick] (7.5,21.5) |- (15.5,17.5) |- (18.5,19.5);
\draw[very thick] (7.5,20.5) |- (9.5,20.5);
\draw[very thick] (7,19.5) |- (10,19.5);
\draw[very thick] (7.5,18.5) |- (9.5,18.5);
\draws{7}{21}
\draws{7}{20}
\draws{7}{19}
\draws{7}{18}
\draws{7}{17}
\draws{8}{21}
\draws{8}{20}
\draws{8}{19}
\draws{8}{18}
\draws{8}{17}
\draws{9}{21}
\draws{9}{20}
\draws{9}{19}
\draws{9}{18}
\draws{9}{17}
\draws{10}{21}
\draws{11}{21}
\draws{12}{21}
\draws{12}{22}
\draws{12}{23}
\draws{12}{24}
\draws{12}{25}
\draws{12}{26}
\draws{12}{27}
\draws{12}{28}
\draws{12}{29}
\draws{12}{30}
\draws{12}{31}
\draws{10}{17}
\draws{11}{17}
\draws{12}{17}
\draws{13}{17}
\draws{14}{17}
\draws{15}{17}
\draws{15}{18}
\draws{15}{19}
\draws{16}{19}
\draws{17}{19}
\draws{18}{19}

\draw[very thick] (7,19.5) -| (5.5,24.5) -| (3.5,14.5) -| (3,14.5);
\draw[very thick] (3.5,21.5) -| (1.5,27.5) -| (7.5,23.5) -| (9.5,25) ;
\drawt{6}{19}
\drawt{5}{19}
\drawt{5}{20}
\drawt{5}{21}
\drawt{5}{22}
\drawt{5}{23}
\drawt{2}{21}
\drawt{1}{21}
\drawt{1}{22}
\drawt{1}{23}
\drawt{1}{24}
\drawt{1}{25}
\drawt{1}{26}
\drawt{1}{27}
\drawt{2}{27}
\drawt{3}{27}
\drawt{4}{27}
\drawt{5}{27}
\drawt{6}{27}
\drawt{7}{27}
\drawt{7}{26}
\drawt{7}{25}
\drawt{7}{24}
\drawt{7}{23}
\drawt{8}{23}
\drawt{9}{23}
\drawt{9}{24}
\drawt{5}{24}
\drawt{4}{24}
\drawt{3}{24}
\drawt{3}{23}
\drawt{3}{22}
\drawt{3}{21}
\drawt{3}{20}
\drawt{3}{19}
\drawt{3}{18}
\drawt{3}{17}
\drawt{3}{16}
\drawt{3}{15}
\drawt{3}{14}

\draw[very thick] (2,13.5) -| (5.5,15.5)  -| (9,15.5);
\drawt{2}{13}
\drawt{3}{13}
\drawt{4}{13}
\drawt{5}{13}
\drawt{5}{14}
\drawt{5}{15}
\drawt{6}{15}
\drawt{7}{15}
\drawt{8}{15}

\draw[very thick] (2,9.5) -| (5.5,10);
\drawt{2}{9}
\drawt{3}{9}
\drawt{4}{9}
\drawt{5}{9}

\draw[very thick] (2,5.5) -| (5.5,6);
\drawt{2}{5}
\drawt{3}{5}
\drawt{4}{5}
\drawt{5}{5}

\draw[very thick] (2,1.5) -| (5.5,2);
\drawt{2}{1}
\drawt{3}{1}
\drawt{4}{1}
\drawt{5}{1}

\draw[very thick] (7.5,30) |- (5.5,31.5)  -| (5.5,30.5);
\drawt{7}{30}
\drawt{7}{31}
\drawt{6}{31}
\drawt{5}{31}
\drawt{5}{30}

\draw[very thick] (4.5,30) |- (2.5,31.5)  -| (2.5,30.5);
\drawt{4}{30}
\drawt{4}{31}
\drawt{3}{31}
\drawt{2}{31}
\drawt{2}{30}

\draw[very thick] (1.5,30) |- (0,31.5);
\drawt{1}{30}
\drawt{1}{31}
\drawt{0}{31}

\draw[very thick] (7.5,33) |- (5.5,34.5)  -| (5.5,33.5);
\drawt{7}{33}
\drawt{7}{34}
\drawt{6}{34}
\drawt{5}{34}
\drawt{5}{33}

\draw[very thick] (4.5,33) |- (2.5,34.5)  -| (2.5,33.5);
\drawt{4}{33}
\drawt{4}{34}
\drawt{3}{34}
\drawt{2}{34}
\drawt{2}{33}

\draw[very thick] (1.5,33) |- (0,34.5);
\drawt{1}{33}
\drawt{1}{34}
\drawt{0}{34}

\draw[red,very thick] (9.5,25) |- (9.5,36);
\draw[red,very thick] (9,26.5) |- (10,26.5);
\draw[red,very thick] (9,29.5) |- (10,29.5);
\draw[red,very thick] (9,32.5) |- (10,32.5);
\draw[red,very thick] (9,35.5) |- (10,35.5);
\drawred{9}{25}
\drawred{9}{26}
\drawred{9}{27}
\drawred{9}{28}
\drawred{9}{29}
\drawred{9}{30}
\drawred{9}{31}
\drawred{9}{32}
\drawred{9}{33}
\drawred{9}{34}
\drawred{9}{35}

\draw[red!50!white,very thick] (8,26.5) |- (9,26.5);
\drawr{8}{26}

\draw[red!50!white,very thick] (0,29.5) |- (9,29.5);
\draw[red!50!white,very thick] (7.5,29.5) |- (7.5,30);
\draw[red!50!white,very thick] (4.5,29.5) |- (4.5,30);
\draw[red!50!white,very thick] (1.5,29.5) |- (1.5,30);
\drawr{8}{29}
\drawr{7}{29}
\drawr{6}{29}
\drawr{5}{29}
\drawr{4}{29}
\drawr{3}{29}
\drawr{2}{29}
\drawr{1}{29}
\drawr{0}{29}

\draw[red!50!white,very thick] (0,32.5) |- (9,32.5);
\draw[red!50!white,very thick] (7.5,32.5) |- (7.5,33);
\draw[red!50!white,very thick] (4.5,32.5) |- (4.5,33);
\draw[red!50!white,very thick] (1.5,32.5) |- (1.5,33);
\drawr{8}{32}
\drawr{7}{32}
\drawr{6}{32}
\drawr{5}{32}
\drawr{4}{32}
\drawr{3}{32}
\drawr{2}{32}
\drawr{1}{32}
\drawr{0}{32}

\draw[red!50!white,very thick] (0,35.5) |- (9,35.5);
\draw[red!50!white,very thick] (7.5,35.5) |- (7.5,36);
\draw[red!50!white,very thick] (4.5,35.5) |- (4.5,36);
\draw[red!50!white,very thick] (1.5,35.5) |- (1.5,36);
\drawr{8}{35}
\drawr{7}{35}
\drawr{6}{35}
\drawr{5}{35}
\drawr{4}{35}
\drawr{3}{35}
\drawr{2}{35}
\drawr{1}{35}
\drawr{0}{35}

\draw[red!50!white,very thick] (10,35.5) -| (11.5,36);
\drawr{10}{35}
\drawr{11}{35}

\draw[red!50!white,very thick] (10,32.5) -| (11.5,34.5) -| (13.5,36);
\drawr{10}{32}
\drawr{11}{32}
\drawr{11}{33}
\drawr{11}{34}
\drawr{12}{34}
\drawr{13}{34}
\drawr{13}{35}

\draw[red!50!white,very thick] (10,29.5) -| (11.5,31.5) -| (12,31.5);
\drawr{10}{29}
\drawr{11}{29}
\drawr{11}{30}
\drawr{11}{31}

\draw[red!50!white,very thick] (10,26.5) -| (11.5,28.5) -| (12,28.5);
\drawr{10}{26}
\drawr{11}{26}
\drawr{11}{27}
\drawr{11}{28}

\draw[orange!50!white,very thick] (10,19.5) -| (14.5,24.5) -| (15,24.5);
\drawo{10}{19}
\drawo{11}{19}
\drawo{12}{19}
\drawo{13}{19}
\drawo{14}{19}
\drawo{14}{20}
\drawo{14}{21}
\drawo{14}{22}
\drawo{14}{23}
\drawo{14}{24}

\draw[orange!50!white,very thick] (15,24.5) -| (19.5,29.5) -| (20,29.5);
\drawo{15}{24}
\drawo{16}{24}
\drawo{17}{24}
\drawo{18}{24}
\drawo{19}{24}
\drawo{19}{25}
\drawo{19}{26}
\drawo{19}{27}
\drawo{19}{28}
\drawo{19}{29}

\draw[orange!50!white,very thick] (20,29.5) -| (24.5,34.5) -| (25,34.5);
\drawo{20}{29}
\drawo{21}{29}
\drawo{22}{29}
\drawo{23}{29}
\drawo{24}{29}
\drawo{24}{30}
\drawo{24}{31}
\drawo{24}{32}
\drawo{24}{33}
\drawo{24}{34}

\draw[orange!50!white,very thick] (25,34.5) -| (29.5,36);
\drawo{25}{34}
\drawo{26}{34}
\drawo{27}{34}
\drawo{28}{34}
\drawo{29}{34}
\drawo{29}{35}

\draw[blue,very thick] (3,14.5) |- (1.5,14.5) |- (6.5,12.5) |- (3,10.5);
\draw[blue,very thick] (1.5,13.5) |- (2,13.5);
\drawblue{2}{14}
\drawblue{1}{14}
\drawblue{1}{13}
\drawblue{1}{12}
\drawblue{2}{12}
\drawblue{3}{12}
\drawblue{4}{12}
\drawblue{5}{12}
\drawblue{6}{12}
\drawblue{6}{11}
\drawblue{6}{10}
\drawblue{5}{10}
\drawblue{4}{10}
\drawblue{3}{10}

\draw[blue,very thick] (3,10.5) |- (1.5,10.5) |- (6.5,8.5) |- (3,6.5);
\draw[blue,very thick] (1.5,9.5) |- (2,9.5);
\drawblue{2}{10}
\drawblue{1}{10}
\drawblue{1}{9}
\drawblue{1}{8}
\drawblue{2}{8}
\drawblue{3}{8}
\drawblue{4}{8}
\drawblue{5}{8}
\drawblue{6}{8}
\drawblue{6}{7}
\drawblue{6}{6}
\drawblue{5}{6}
\drawblue{4}{6}
\drawblue{3}{6}

\draw[blue,very thick] (3,6.5) |- (1.5,6.5) |- (6.5,4.5) |- (3,2.5);
\draw[blue,very thick] (1.5,5.5) |- (2,5.5);
\drawblue{2}{6}
\drawblue{1}{6}
\drawblue{1}{5}
\drawblue{1}{4}
\drawblue{2}{4}
\drawblue{3}{4}
\drawblue{4}{4}
\drawblue{5}{4}
\drawblue{6}{4}
\drawblue{6}{3}
\drawblue{6}{2}
\drawblue{5}{2}
\drawblue{4}{2}
\drawblue{3}{2}

\draw[blue,very thick] (3,2.5) |- (1.5,2.5) |- (6.5,0.5) |- (6.5,0);
\draw[blue,very thick] (1.5,1.5) |- (2,1.5);
\drawblue{2}{2}
\drawblue{1}{2}
\drawblue{1}{1}
\drawblue{1}{0}
\drawblue{2}{0}
\drawblue{3}{0}
\drawblue{4}{0}
\drawblue{5}{0}
\drawblue{6}{0}

\draw[green,very thick] (9,15.5) |- (15,15.5);
\draw[green,very thick] (10.5,15.5) |- (10.5,16);
\drawgreen{14}{15}
\drawgreen{13}{15}
\drawgreen{12}{15}
\drawgreen{11}{15}
\drawgreen{10}{15}
\drawgreen{9}{15}

\draw[green,very thick] (15,15.5) |- (21,15.5);
\draw[green,very thick] (16.5,15.5) |- (16.5,16);
\drawgreen{20}{15}
\drawgreen{19}{15}
\drawgreen{18}{15}
\drawgreen{17}{15}
\drawgreen{16}{15}
\drawgreen{15}{15}

\draw[green,very thick] (21,15.5) |- (27,15.5);
\draw[green,very thick] (22.5,15.5) |- (22.5,16);
\drawgreen{26}{15}
\drawgreen{25}{15}
\drawgreen{24}{15}
\drawgreen{23}{15}
\drawgreen{22}{15}
\drawgreen{21}{15}

\draw[green,very thick] (27,15.5) |- (33,15.5);
\draw[green,very thick] (28.5,15.5) |- (28.5,16);
\drawgreen{32}{15}
\drawgreen{31}{15}
\drawgreen{30}{15}
\drawgreen{29}{15}
\drawgreen{28}{15}
\drawgreen{27}{15}

\draw[green,very thick] (33,15.5) |- (34,15.5);
\drawgreen{33}{15}

\draw[very thick] (10.5,16) |- (10.5,17);
\drawt{10}{16}

\draw[very thick] (16.5,16) |- (19,17.5);
\drawt{16}{16}
\drawt{16}{17}
\drawt{17}{17}
\drawt{18}{17}

\draw[very thick] (22.5,16) |- (25,17.5);
\drawt{22}{16}
\drawt{22}{17}
\drawt{23}{17}
\drawt{24}{17}

\draw[very thick] (28.5,16) |- (31,17.5);
\drawt{28}{16}
\drawt{28}{17}
\drawt{29}{17}
\drawt{30}{17}

\draw[green!75!black,very thick] (19,17.5) -| (20.5,21.5)  -| (21.5,20.5) -| (22,20.5);
\draw[green!75!black,very thick] (19.5,17.5) -| (19.5,18);
\drawgre{19}{17}
\drawgre{20}{17}
\drawgre{20}{18}
\drawgre{20}{19}
\drawgre{20}{20}
\drawgre{20}{21}
\drawgre{21}{21}
\drawgre{21}{20}

\draw[green!75!black,very thick] (22,20.5) -| (23.5,24.5)  -| (24.5,23.5) -| (25,23.5);
\draw[green!75!black,very thick] (22.5,20.5) -| (22.5,21);
\drawgre{22}{20}
\drawgre{23}{20}
\drawgre{23}{21}
\drawgre{23}{22}
\drawgre{23}{23}
\drawgre{23}{24}
\drawgre{24}{24}
\drawgre{24}{23}

\draw[green!75!black,very thick] (25,23.5) -| (26.5,27.5)  -| (27.5,26.5) -| (28,26.5);
\draw[green!75!black,very thick] (25.5,23.5) -| (25.5,24);
\drawgre{25}{23}
\drawgre{26}{23}
\drawgre{26}{24}
\drawgre{26}{25}
\drawgre{26}{26}
\drawgre{26}{27}
\drawgre{27}{27}
\drawgre{27}{26}

\draw[green!75!black,very thick] (28,26.5) -| (29.5,30.5)  -| (30.5,29.5) -| (31,29.5);
\draw[green!75!black,very thick] (28.5,26.5) -| (28.5,27);
\drawgre{28}{26}
\drawgre{29}{26}
\drawgre{29}{27}
\drawgre{29}{28}
\drawgre{29}{29}
\drawgre{29}{30}
\drawgre{30}{30}
\drawgre{30}{29}

\draw[green!75!black,very thick] (31,29.5) -| (32.5,33.5)  -| (33.5,32.5) -| (34,32.5);
\draw[green!75!black,very thick] (31.5,29.5) -| (31.5,30);
\drawgre{31}{29}
\drawgre{32}{29}
\drawgre{32}{30}
\drawgre{32}{31}
\drawgre{32}{32}
\drawgre{32}{33}
\drawgre{33}{33}
\drawgre{33}{32}

\draw[green!75!black,very thick] (25,17.5) -| (26.5,21.5)  -| (27.5,20.5) -| (28,20.5);
\draw[green!75!black,very thick] (25.5,17.5) -| (25.5,18);
\drawgre{25}{17}
\drawgre{26}{17}
\drawgre{26}{18}
\drawgre{26}{19}
\drawgre{26}{20}
\drawgre{26}{21}
\drawgre{27}{21}
\drawgre{27}{20}

\draw[green!75!black,very thick] (28,20.5) -| (29.5,24.5)  -| (30.5,23.5) -| (31,23.5);
\draw[green!75!black,very thick] (28.5,20.5) -| (28.5,21);
\drawgre{28}{20}
\drawgre{29}{20}
\drawgre{29}{21}
\drawgre{29}{22}
\drawgre{29}{23}
\drawgre{29}{24}
\drawgre{30}{24}
\drawgre{30}{23}

\draw[green!75!black,very thick] (31,23.5) -| (32.5,27.5)  -| (33.5,26.5) -| (34,26.5);
\draw[green!75!black,very thick] (31.5,23.5) -| (31.5,24);
\drawgre{31}{23}
\drawgre{32}{23}
\drawgre{32}{24}
\drawgre{32}{25}
\drawgre{32}{26}
\drawgre{32}{27}
\drawgre{33}{27}
\drawgre{33}{26}

\draw[green!75!black,very thick] (31,17.5) -| (32.5,21.5)  -| (33.5,20.5) -| (34,20.5);
\draw[green!75!black,very thick] (31.5,17.5) -| (31.5,18);
\drawgre{31}{17}
\drawgre{32}{17}
\drawgre{32}{18}
\drawgre{32}{19}
\drawgre{32}{20}
\drawgre{32}{21}
\drawgre{33}{21}
\drawgre{33}{20}

\draw[very thick] (19.5,18) |- (18.5,18.5) |- (18.5,19);
\drawt{19}{18}
\drawt{18}{18}

\draw[very thick] (22.5,21) |- (22,21.5);
\drawt{22}{21}

\draw[very thick] (25.5,24) |- (25,24.5);
\drawt{25}{24}

\draw[very thick] (28.5,27) |- (28,27.5);
\drawt{28}{27}

\draw[very thick] (31.5,30) |- (31,30.5);
\drawt{31}{30}

\draw[very thick] (25.5,18) |- (24.5,18.5) |- (25,19.5);
\drawt{25}{18}
\drawt{24}{18}
\drawt{24}{19}

\draw[very thick] (28.5,21) |- (28,21.5);
\drawt{28}{21}

\draw[very thick] (31.5,24) |- (31,24.5);
\drawt{31}{24}

\draw[very thick] (31.5,18) |- (30.5,18.5) |- (31,19.5);
\drawt{31}{18}
\drawt{30}{18}
\drawt{30}{19}

\draw[green!50!black,very thick] (25,19.5) -| (25.5,22.5) |- (28,22.5);
\drawg{25}{19}
\drawg{25}{20}
\drawg{25}{21}
\drawg{25}{22}
\drawg{26}{22}
\drawg{27}{22}

\draw[green!50!black,very thick] (28,22.5) -| (28.5,25.5) |- (31,25.5);
\drawg{28}{22}
\drawg{28}{23}
\drawg{28}{24}
\drawg{28}{25}
\drawg{29}{25}
\drawg{30}{25}

\draw[green!50!black,very thick] (31,25.5) -| (31.5,28.5) |- (34,28.5);
\drawg{31}{25}
\drawg{31}{26}
\drawg{31}{27}
\drawg{31}{28}
\drawg{32}{28}
\drawg{33}{28}

\draw[green!50!black,very thick] (31,19.5) -| (31.5,22.5) |- (34,22.5);
\drawg{31}{19}
\drawg{31}{20}
\drawg{31}{21}
\drawg{31}{22}
\drawg{32}{22}
\drawg{33}{22}

\path [dotted, draw, thin] (0,0) grid[step=0.35cm] (34,36);
\end{tikzpicture}
\caption{the terminal assembly of the tile set described in Figure \ref{fig:appC:step1}.}
\label{fig:appC:step2}
\end{figure}
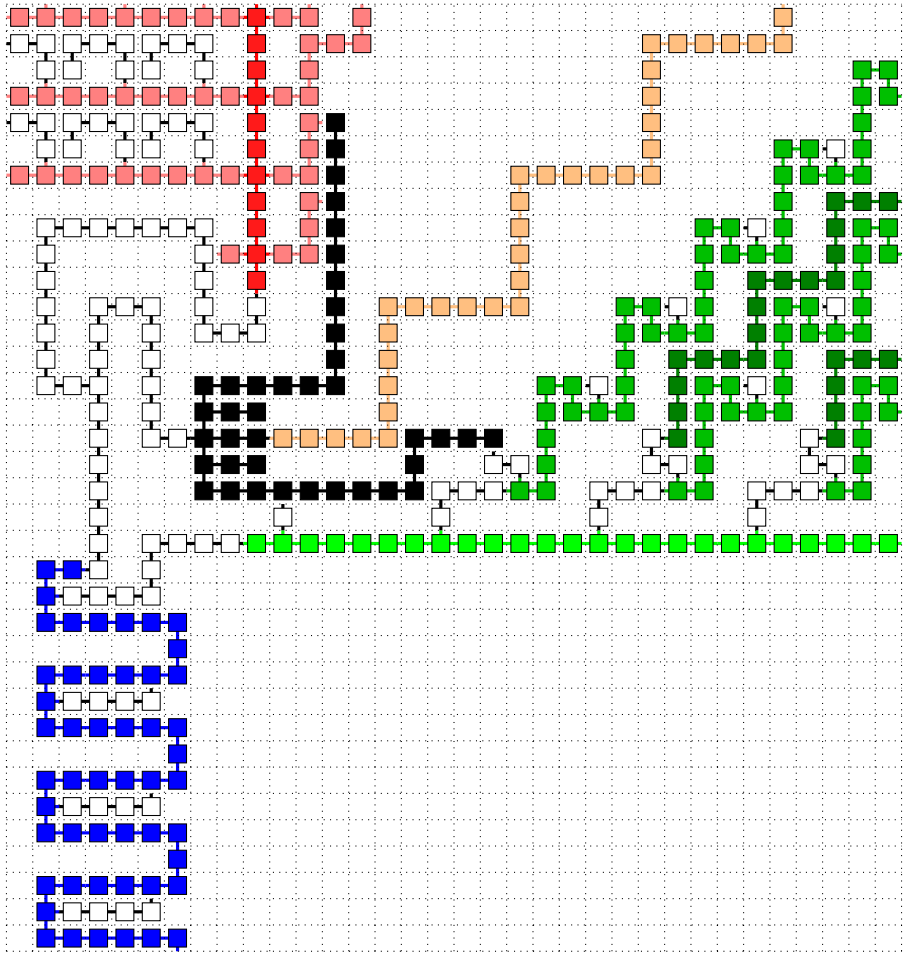

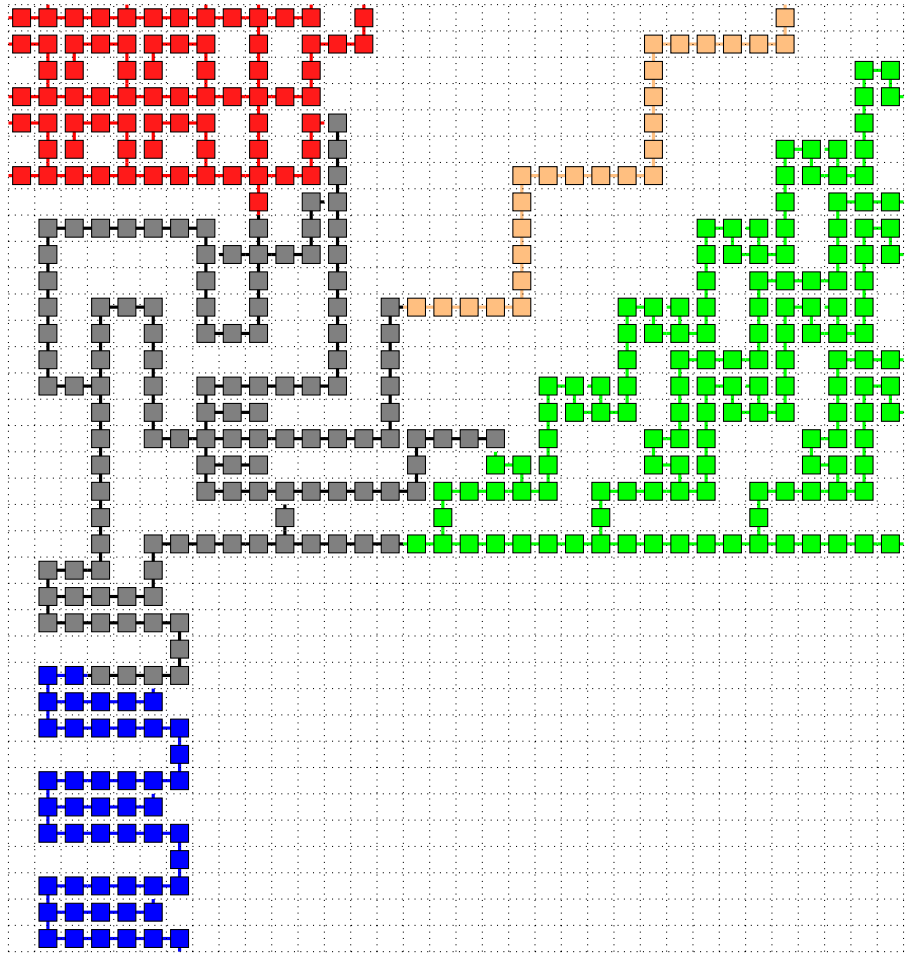
\begin{figure}
\centering
\begin{tikzpicture}[x=0.35cm,y=0.35cm]

\draw[very thick] (7.5,21.5) -| (12.5,31.5);
\draw[very thick] (7.5,21.5) |- (15.5,17.5) |- (18.5,19.5);
\draw[very thick] (7.5,20.5) |- (9.5,20.5);
\draw[very thick] (7,19.5) |- (10,19.5);
\draw[very thick] (7.5,18.5) |- (9.5,18.5);
\drawgray{7}{21}
\drawgray{7}{20}
\drawgray{7}{19}
\drawgray{7}{18}
\drawgray{7}{17}
\drawgray{8}{21}
\drawgray{8}{20}
\drawgray{8}{19}
\drawgray{8}{18}
\drawgray{8}{17}
\drawgray{9}{21}
\drawgray{9}{20}
\drawgray{9}{19}
\drawgray{9}{18}
\drawgray{9}{17}
\drawgray{10}{21}
\drawgray{11}{21}
\drawgray{12}{21}
\drawgray{12}{22}
\drawgray{12}{23}
\drawgray{12}{24}
\drawgray{12}{25}
\drawgray{12}{26}
\drawgray{12}{27}
\drawgray{12}{28}
\drawgray{12}{29}
\drawgray{12}{30}
\drawgray{12}{31}
\drawgray{10}{17}
\drawgray{11}{17}
\drawgray{12}{17}
\drawgray{13}{17}
\drawgray{14}{17}
\drawgray{15}{17}
\drawgray{15}{18}
\drawgray{15}{19}
\drawgray{16}{19}
\drawgray{17}{19}
\drawgray{18}{19}

\draw[very thick] (7,19.5) -| (5.5,24.5) -| (3.5,14.5) -| (3,14.5);
\draw[very thick] (3.5,21.5) -| (1.5,27.5) -| (7.5,23.5) -| (9.5,25) ;
\drawgray{6}{19}
\drawgray{5}{19}
\drawgray{5}{20}
\drawgray{5}{21}
\drawgray{5}{22}
\drawgray{5}{23}
\drawgray{2}{21}
\drawgray{1}{21}
\drawgray{1}{22}
\drawgray{1}{23}
\drawgray{1}{24}
\drawgray{1}{25}
\drawgray{1}{26}
\drawgray{1}{27}
\drawgray{2}{27}
\drawgray{3}{27}
\drawgray{4}{27}
\drawgray{5}{27}
\drawgray{6}{27}
\drawgray{7}{27}
\drawgray{7}{26}
\drawgray{7}{25}
\drawgray{7}{24}
\drawgray{7}{23}
\drawgray{8}{23}
\drawgray{9}{23}
\drawgray{9}{24}
\drawgray{5}{24}
\drawgray{4}{24}
\drawgray{3}{24}
\drawgray{3}{23}
\drawgray{3}{22}
\drawgray{3}{21}
\drawgray{3}{20}
\drawgray{3}{19}
\drawgray{3}{18}
\drawgray{3}{17}
\drawgray{3}{16}
\drawgray{3}{15}
\drawgray{3}{14}

\draw[very thick] (2,13.5) -| (5.5,15.5)  -| (9,15.5);
\drawgray{2}{13}
\drawgray{3}{13}
\drawgray{4}{13}
\drawgray{5}{13}
\drawgray{5}{14}
\drawgray{5}{15}
\drawgray{6}{15}
\drawgray{7}{15}
\drawgray{8}{15}

\draw[blue,very thick] (2,9.5) -| (5.5,10);
\drawblue{2}{9}
\drawblue{3}{9}
\drawblue{4}{9}
\drawblue{5}{9}

\draw[blue,very thick] (2,5.5) -| (5.5,6);
\drawblue{2}{5}
\drawblue{3}{5}
\drawblue{4}{5}
\drawblue{5}{5}

\draw[blue,very thick] (2,1.5) -| (5.5,2);
\drawblue{2}{1}
\drawblue{3}{1}
\drawblue{4}{1}
\drawblue{5}{1}

\draw[red,very thick] (7.5,30) |- (5.5,31.5)  -| (5.5,30.5);
\drawred{7}{30}
\drawred{7}{31}
\drawred{6}{31}
\drawred{5}{31}
\drawred{5}{30}

\draw[red,very thick] (4.5,30) |- (2.5,31.5)  -| (2.5,30.5);
\drawred{4}{30}
\drawred{4}{31}
\drawred{3}{31}
\drawred{2}{31}
\drawred{2}{30}

\draw[red,very thick] (1.5,30) |- (0,31.5);
\drawred{1}{30}
\drawred{1}{31}
\drawred{0}{31}

\draw[red,very thick] (7.5,33) |- (5.5,34.5)  -| (5.5,33.5);
\drawred{7}{33}
\drawred{7}{34}
\drawred{6}{34}
\drawred{5}{34}
\drawred{5}{33}

\draw[red,very thick] (4.5,33) |- (2.5,34.5)  -| (2.5,33.5);
\drawred{4}{33}
\drawred{4}{34}
\drawred{3}{34}
\drawred{2}{34}
\drawred{2}{33}

\draw[red,very thick] (1.5,33) |- (0,34.5);
\drawred{1}{33}
\drawred{1}{34}
\drawred{0}{34}

\draw[red,very thick] (9.5,28) |- (9.5,36);
\draw[very thick] (9.5,25) |- (9.5,28);
\draw[very thick] (9,26.5) |- (10,26.5);
\draw[red,very thick] (9,29.5) |- (10,29.5);
\draw[red,very thick] (9,32.5) |- (10,32.5);
\draw[red,very thick] (9,35.5) |- (10,35.5);
\drawgray{9}{25}
\drawgray{9}{26}
\drawgray{9}{27}
\drawred{9}{28}
\drawred{9}{29}
\drawred{9}{30}
\drawred{9}{31}
\drawred{9}{32}
\drawred{9}{33}
\drawred{9}{34}
\drawred{9}{35}

\draw[very thick] (8,26.5) |- (9,26.5);
\drawgray{8}{26}

\draw[red,very thick] (0,29.5) |- (9,29.5);
\draw[red,very thick] (7.5,29.5) |- (7.5,30);
\draw[red,very thick] (4.5,29.5) |- (4.5,30);
\draw[red,very thick] (1.5,29.5) |- (1.5,30);
\drawred{8}{29}
\drawred{7}{29}
\drawred{6}{29}
\drawred{5}{29}
\drawred{4}{29}
\drawred{3}{29}
\drawred{2}{29}
\drawred{1}{29}
\drawred{0}{29}

\draw[red,very thick] (0,32.5) |- (9,32.5);
\draw[red,very thick] (7.5,32.5) |- (7.5,33);
\draw[red,very thick] (4.5,32.5) |- (4.5,33);
\draw[red,very thick] (1.5,32.5) |- (1.5,33);
\drawred{8}{32}
\drawred{7}{32}
\drawred{6}{32}
\drawred{5}{32}
\drawred{4}{32}
\drawred{3}{32}
\drawred{2}{32}
\drawred{1}{32}
\drawred{0}{32}

\draw[red,very thick] (0,35.5) |- (9,35.5);
\draw[red,very thick] (7.5,35.5) |- (7.5,36);
\draw[red,very thick] (4.5,35.5) |- (4.5,36);
\draw[red,very thick] (1.5,35.5) |- (1.5,36);
\drawred{8}{35}
\drawred{7}{35}
\drawred{6}{35}
\drawred{5}{35}
\drawred{4}{35}
\drawred{3}{35}
\drawred{2}{35}
\drawred{1}{35}
\drawred{0}{35}

\draw[red,very thick] (10,35.5) -| (11.5,36);
\drawred{10}{35}
\drawred{11}{35}

\draw[red,very thick] (10,32.5) -| (11.5,34.5) -| (13.5,36);
\drawred{10}{32}
\drawred{11}{32}
\drawred{11}{33}
\drawred{11}{34}
\drawred{12}{34}
\drawred{13}{34}
\drawred{13}{35}

\draw[red,very thick] (10,29.5) -| (11.5,31.5) -| (12,31.5);
\drawred{10}{29}
\drawred{11}{29}
\drawred{11}{30}
\drawred{11}{31}

\draw[very thick] (10,26.5) -| (11.5,28.5) -| (12,28.5);
\drawgray{10}{26}
\drawgray{11}{26}
\drawgray{11}{27}
\drawgray{11}{28}

\draw[very thick] (10,19.5) -| (14.5,24.5) -| (15,24.5);
\drawgray{10}{19}
\drawgray{11}{19}
\drawgray{12}{19}
\drawgray{13}{19}
\drawgray{14}{19}
\drawgray{14}{20}
\drawgray{14}{21}
\drawgray{14}{22}
\drawgray{14}{23}
\drawgray{14}{24}

\draw[orange!50!white,very thick] (15,24.5) -| (19.5,29.5) -| (20,29.5);
\drawo{15}{24}
\drawo{16}{24}
\drawo{17}{24}
\drawo{18}{24}
\drawo{19}{24}
\drawo{19}{25}
\drawo{19}{26}
\drawo{19}{27}
\drawo{19}{28}
\drawo{19}{29}

\draw[orange!50!white,very thick] (20,29.5) -| (24.5,34.5) -| (25,34.5);
\drawo{20}{29}
\drawo{21}{29}
\drawo{22}{29}
\drawo{23}{29}
\drawo{24}{29}
\drawo{24}{30}
\drawo{24}{31}
\drawo{24}{32}
\drawo{24}{33}
\drawo{24}{34}

\draw[orange!50!white,very thick] (25,34.5) -| (29.5,36);
\drawo{25}{34}
\drawo{26}{34}
\drawo{27}{34}
\drawo{28}{34}
\drawo{29}{34}
\drawo{29}{35}

\draw[very thick] (3,14.5) |- (1.5,14.5) |- (6.5,12.5) |- (3,10.5);
\draw[very thick] (1.5,13.5) |- (2,13.5);
\drawgray{2}{14}
\drawgray{1}{14}
\drawgray{1}{13}
\drawgray{1}{12}
\drawgray{2}{12}
\drawgray{3}{12}
\drawgray{4}{12}
\drawgray{5}{12}
\drawgray{6}{12}
\drawgray{6}{11}
\drawgray{6}{10}
\drawgray{5}{10}
\drawgray{4}{10}
\drawgray{3}{10}

\draw[blue,very thick] (3,10.5) |- (1.5,10.5) |- (6.5,8.5) |- (3,6.5);
\draw[blue,very thick] (1.5,9.5) |- (2,9.5);
\drawblue{2}{10}
\drawblue{1}{10}
\drawblue{1}{9}
\drawblue{1}{8}
\drawblue{2}{8}
\drawblue{3}{8}
\drawblue{4}{8}
\drawblue{5}{8}
\drawblue{6}{8}
\drawblue{6}{7}
\drawblue{6}{6}
\drawblue{5}{6}
\drawblue{4}{6}
\drawblue{3}{6}

\draw[blue,very thick] (3,6.5) |- (1.5,6.5) |- (6.5,4.5) |- (3,2.5);
\draw[blue,very thick] (1.5,5.5) |- (2,5.5);
\drawblue{2}{6}
\drawblue{1}{6}
\drawblue{1}{5}
\drawblue{1}{4}
\drawblue{2}{4}
\drawblue{3}{4}
\drawblue{4}{4}
\drawblue{5}{4}
\drawblue{6}{4}
\drawblue{6}{3}
\drawblue{6}{2}
\drawblue{5}{2}
\drawblue{4}{2}
\drawblue{3}{2}

\draw[blue,very thick] (3,2.5) |- (1.5,2.5) |- (6.5,0.5) |- (6.5,0);
\draw[blue,very thick] (1.5,1.5) |- (2,1.5);
\drawblue{2}{2}
\drawblue{1}{2}
\drawblue{1}{1}
\drawblue{1}{0}
\drawblue{2}{0}
\drawblue{3}{0}
\drawblue{4}{0}
\drawblue{5}{0}
\drawblue{6}{0}

\draw[very thick] (9,15.5) |- (15,15.5);
\draw[very thick] (10.5,15.5) |- (10.5,16);
\drawgray{14}{15}
\drawgray{13}{15}
\drawgray{12}{15}
\drawgray{11}{15}
\drawgray{10}{15}
\drawgray{9}{15}

\draw[green,very thick] (15,15.5) |- (21,15.5);
\draw[green,very thick] (16.5,15.5) |- (16.5,16);
\drawgreen{20}{15}
\drawgreen{19}{15}
\drawgreen{18}{15}
\drawgreen{17}{15}
\drawgreen{16}{15}
\drawgreen{15}{15}

\draw[green,very thick] (21,15.5) |- (27,15.5);
\draw[green,very thick] (22.5,15.5) |- (22.5,16);
\drawgreen{26}{15}
\drawgreen{25}{15}
\drawgreen{24}{15}
\drawgreen{23}{15}
\drawgreen{22}{15}
\drawgreen{21}{15}

\draw[green,very thick] (27,15.5) |- (33,15.5);
\draw[green,very thick] (28.5,15.5) |- (28.5,16);
\drawgreen{32}{15}
\drawgreen{31}{15}
\drawgreen{30}{15}
\drawgreen{29}{15}
\drawgreen{28}{15}
\drawgreen{27}{15}

\draw[green,very thick] (33,15.5) |- (34,15.5);
\drawgreen{33}{15}

\draw[very thick] (10.5,16) |- (10.5,17);
\drawgray{10}{16}

\draw[green,very thick] (16.5,16) |- (19,17.5);
\drawgreen{16}{16}
\drawgreen{16}{17}
\drawgreen{17}{17}
\drawgreen{18}{17}

\draw[green,very thick] (22.5,16) |- (25,17.5);
\drawgreen{22}{16}
\drawgreen{22}{17}
\drawgreen{23}{17}
\drawgreen{24}{17}

\draw[green,very thick] (28.5,16) |- (31,17.5);
\drawgreen{28}{16}
\drawgreen{28}{17}
\drawgreen{29}{17}
\drawgreen{30}{17}

\draw[green,very thick] (19,17.5) -| (20.5,21.5)  -| (21.5,20.5) -| (22,20.5);
\draw[green,very thick] (19.5,17.5) -| (19.5,18);
\drawgreen{19}{17}
\drawgreen{20}{17}
\drawgreen{20}{18}
\drawgreen{20}{19}
\drawgreen{20}{20}
\drawgreen{20}{21}
\drawgreen{21}{21}
\drawgreen{21}{20}

\draw[green,very thick] (22,20.5) -| (23.5,24.5)  -| (24.5,23.5) -| (25,23.5);
\draw[green,very thick] (22.5,20.5) -| (22.5,21);
\drawgreen{22}{20}
\drawgreen{23}{20}
\drawgreen{23}{21}
\drawgreen{23}{22}
\drawgreen{23}{23}
\drawgreen{23}{24}
\drawgreen{24}{24}
\drawgreen{24}{23}

\draw[green,very thick] (25,23.5) -| (26.5,27.5)  -| (27.5,26.5) -| (28,26.5);
\draw[green,very thick] (25.5,23.5) -| (25.5,24);
\drawgreen{25}{23}
\drawgreen{26}{23}
\drawgreen{26}{24}
\drawgreen{26}{25}
\drawgreen{26}{26}
\drawgreen{26}{27}
\drawgreen{27}{27}
\drawgreen{27}{26}

\draw[green,very thick] (28,26.5) -| (29.5,30.5)  -| (30.5,29.5) -| (31,29.5);
\draw[green,very thick] (28.5,26.5) -| (28.5,27);
\drawgreen{28}{26}
\drawgreen{29}{26}
\drawgreen{29}{27}
\drawgreen{29}{28}
\drawgreen{29}{29}
\drawgreen{29}{30}
\drawgreen{30}{30}
\drawgreen{30}{29}

\draw[green,very thick] (31,29.5) -| (32.5,33.5)  -| (33.5,32.5) -| (34,32.5);
\draw[green,very thick] (31.5,29.5) -| (31.5,30);
\drawgreen{31}{29}
\drawgreen{32}{29}
\drawgreen{32}{30}
\drawgreen{32}{31}
\drawgreen{32}{32}
\drawgreen{32}{33}
\drawgreen{33}{33}
\drawgreen{33}{32}

\draw[green,very thick] (25,17.5) -| (26.5,21.5)  -| (27.5,20.5) -| (28,20.5);
\draw[green,very thick] (25.5,17.5) -| (25.5,18);
\drawgreen{25}{17}
\drawgreen{26}{17}
\drawgreen{26}{18}
\drawgreen{26}{19}
\drawgreen{26}{20}
\drawgreen{26}{21}
\drawgreen{27}{21}
\drawgreen{27}{20}

\draw[green,very thick] (28,20.5) -| (29.5,24.5)  -| (30.5,23.5) -| (31,23.5);
\draw[green,very thick] (28.5,20.5) -| (28.5,21);
\drawgreen{28}{20}
\drawgreen{29}{20}
\drawgreen{29}{21}
\drawgreen{29}{22}
\drawgreen{29}{23}
\drawgreen{29}{24}
\drawgreen{30}{24}
\drawgreen{30}{23}

\draw[green,very thick] (31,23.5) -| (32.5,27.5)  -| (33.5,26.5) -| (34,26.5);
\draw[green,very thick] (31.5,23.5) -| (31.5,24);
\drawgreen{31}{23}
\drawgreen{32}{23}
\drawgreen{32}{24}
\drawgreen{32}{25}
\drawgreen{32}{26}
\drawgreen{32}{27}
\drawgreen{33}{27}
\drawgreen{33}{26}

\draw[green,very thick] (31,17.5) -| (32.5,21.5)  -| (33.5,20.5) -| (34,20.5);
\draw[green,very thick] (31.5,17.5) -| (31.5,18);
\drawgreen{31}{17}
\drawgreen{32}{17}
\drawgreen{32}{18}
\drawgreen{32}{19}
\drawgreen{32}{20}
\drawgreen{32}{21}
\drawgreen{33}{21}
\drawgreen{33}{20}

\draw[green,very thick] (19.5,18) |- (18.5,18.5) |- (18.5,19);
\drawgreen{19}{18}
\drawgreen{18}{18}

\draw[green,very thick] (22.5,21) |- (22,21.5);
\drawgreen{22}{21}

\draw[green,very thick] (25.5,24) |- (25,24.5);
\drawgreen{25}{24}

\draw[green,very thick] (28.5,27) |- (28,27.5);
\drawgreen{28}{27}

\draw[green,very thick] (31.5,30) |- (31,30.5);
\drawgreen{31}{30}

\draw[green,very thick] (25.5,18) |- (24.5,18.5) |- (25,19.5);
\drawgreen{25}{18}
\drawgreen{24}{18}
\drawgreen{24}{19}

\draw[green,very thick] (28.5,21) |- (28,21.5);
\drawgreen{28}{21}

\draw[green,very thick] (31.5,24) |- (31,24.5);
\drawgreen{31}{24}

\draw[green,very thick] (31.5,18) |- (30.5,18.5) |- (31,19.5);
\drawgreen{31}{18}
\drawgreen{30}{18}
\drawgreen{30}{19}

\draw[green,very thick] (25,19.5) -| (25.5,22.5) |- (28,22.5);
\drawgreen{25}{19}
\drawgreen{25}{20}
\drawgreen{25}{21}
\drawgreen{25}{22}
\drawgreen{26}{22}
\drawgreen{27}{22}

\draw[green,very thick] (28,22.5) -| (28.5,25.5) |- (31,25.5);
\drawgreen{28}{22}
\drawgreen{28}{23}
\drawgreen{28}{24}
\drawgreen{28}{25}
\drawgreen{29}{25}
\drawgreen{30}{25}

\draw[green,very thick] (31,25.5) -| (31.5,28.5) |- (34,28.5);
\drawgreen{31}{25}
\drawgreen{31}{26}
\drawgreen{31}{27}
\drawgreen{31}{28}
\drawgreen{32}{28}
\drawgreen{33}{28}

\draw[green,very thick] (31,19.5) -| (31.5,22.5) |- (34,22.5);
\drawgreen{31}{19}
\drawgreen{31}{20}
\drawgreen{31}{21}
\drawgreen{31}{22}
\drawgreen{32}{22}
\drawgreen{33}{22}

\path [dotted, draw, thin] (0,0) grid[step=0.35cm] (34,36);
\end{tikzpicture}
\caption{the terminal assembly is composed of a finite assembly in gray and four assemblies of complexity less than $2$ (in blue, red, orange, and green).}
\label{fig:appC:step3}
\end{figure}

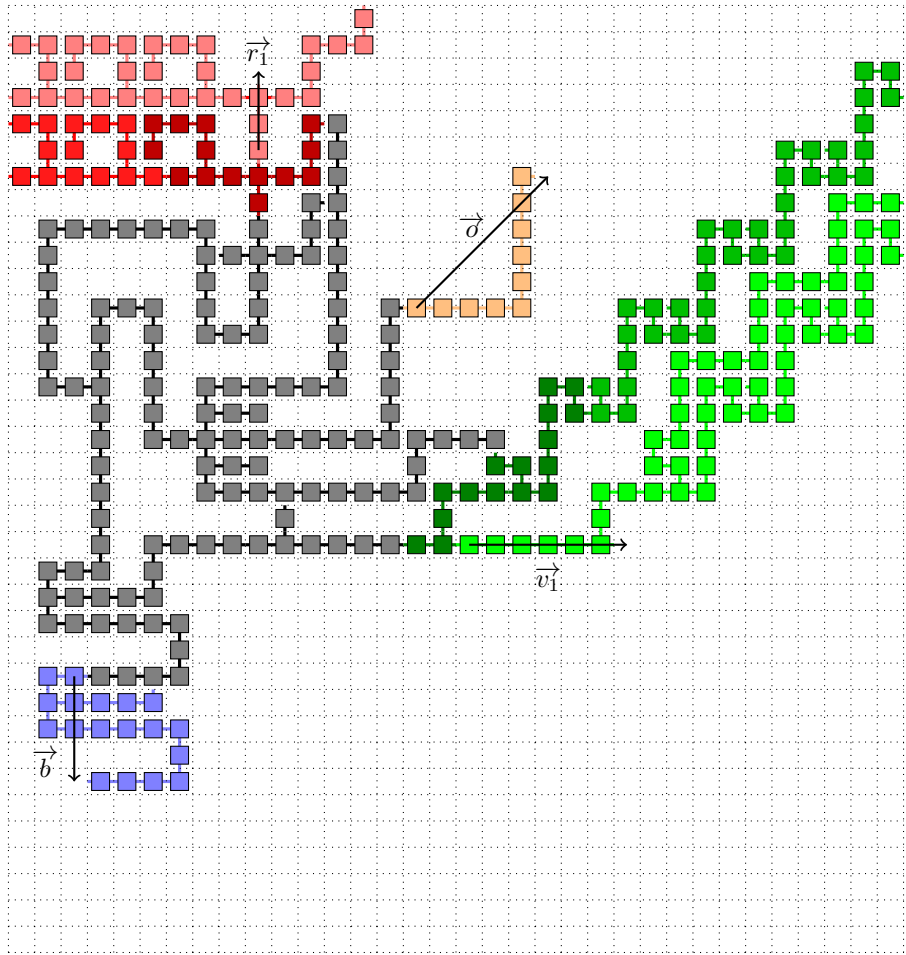
\begin{figure}
\centering
\begin{tikzpicture}[x=0.35cm,y=0.35cm]

\draw[very thick] (7.5,21.5) -| (12.5,31.5);
\draw[very thick] (7.5,21.5) |- (15.5,17.5) |- (18.5,19.5);
\draw[very thick] (7.5,20.5) |- (9.5,20.5);
\draw[very thick] (7,19.5) |- (10,19.5);
\draw[very thick] (7.5,18.5) |- (9.5,18.5);
\drawgray{7}{21}
\drawgray{7}{20}
\drawgray{7}{19}
\drawgray{7}{18}
\drawgray{7}{17}
\drawgray{8}{21}
\drawgray{8}{20}
\drawgray{8}{19}
\drawgray{8}{18}
\drawgray{8}{17}
\drawgray{9}{21}
\drawgray{9}{20}
\drawgray{9}{19}
\drawgray{9}{18}
\drawgray{9}{17}
\drawgray{10}{21}
\drawgray{11}{21}
\drawgray{12}{21}
\drawgray{12}{22}
\drawgray{12}{23}
\drawgray{12}{24}
\drawgray{12}{25}
\drawgray{12}{26}
\drawgray{12}{27}
\drawgray{12}{28}
\drawgray{12}{29}
\drawgray{12}{30}
\drawgray{12}{31}
\drawgray{10}{17}
\drawgray{11}{17}
\drawgray{12}{17}
\drawgray{13}{17}
\drawgray{14}{17}
\drawgray{15}{17}
\drawgray{15}{18}
\drawgray{15}{19}
\drawgray{16}{19}
\drawgray{17}{19}
\drawgray{18}{19}

\draw[very thick] (7,19.5) -| (5.5,24.5) -| (3.5,14.5) -| (3,14.5);
\draw[very thick] (3.5,21.5) -| (1.5,27.5) -| (7.5,23.5) -| (9.5,25) ;
\drawgray{6}{19}
\drawgray{5}{19}
\drawgray{5}{20}
\drawgray{5}{21}
\drawgray{5}{22}
\drawgray{5}{23}
\drawgray{2}{21}
\drawgray{1}{21}
\drawgray{1}{22}
\drawgray{1}{23}
\drawgray{1}{24}
\drawgray{1}{25}
\drawgray{1}{26}
\drawgray{1}{27}
\drawgray{2}{27}
\drawgray{3}{27}
\drawgray{4}{27}
\drawgray{5}{27}
\drawgray{6}{27}
\drawgray{7}{27}
\drawgray{7}{26}
\drawgray{7}{25}
\drawgray{7}{24}
\drawgray{7}{23}
\drawgray{8}{23}
\drawgray{9}{23}
\drawgray{9}{24}
\drawgray{5}{24}
\drawgray{4}{24}
\drawgray{3}{24}
\drawgray{3}{23}
\drawgray{3}{22}
\drawgray{3}{21}
\drawgray{3}{20}
\drawgray{3}{19}
\drawgray{3}{18}
\drawgray{3}{17}
\drawgray{3}{16}
\drawgray{3}{15}
\drawgray{3}{14}

\draw[very thick] (2,13.5) -| (5.5,15.5)  -| (9,15.5);
\drawgray{2}{13}
\drawgray{3}{13}
\drawgray{4}{13}
\drawgray{5}{13}
\drawgray{5}{14}
\drawgray{5}{15}
\drawgray{6}{15}
\drawgray{7}{15}
\drawgray{8}{15}

\draw[blue!50!white,very thick] (2,9.5) -| (5.5,10);
\drawb{2}{9}
\drawb{3}{9}
\drawb{4}{9}
\drawb{5}{9}

\draw[red!75!black,very thick] (7.5,30) |- (5.5,31.5)  -| (5.5,30.5);
\drawre{7}{30}
\drawre{7}{31}
\drawre{6}{31}
\drawre{5}{31}
\drawre{5}{30}

\draw[red,very thick] (4.5,30) |- (2.5,31.5)  -| (2.5,30.5);
\drawred{4}{30}
\drawred{4}{31}
\drawred{3}{31}
\drawred{2}{31}
\drawred{2}{30}

\draw[red,very thick] (1.5,30) |- (0,31.5);
\drawred{1}{30}
\drawred{1}{31}
\drawred{0}{31}

\draw[red!50!white,very thick] (7.5,33) |- (5.5,34.5)  -| (5.5,33.5);
\drawr{7}{33}
\drawr{7}{34}
\drawr{6}{34}
\drawr{5}{34}
\drawr{5}{33}

\draw[red!50!white,very thick] (4.5,33) |- (2.5,34.5)  -| (2.5,33.5);
\drawr{4}{33}
\drawr{4}{34}
\drawr{3}{34}
\drawr{2}{34}
\drawr{2}{33}

\draw[red!50!white,very thick] (1.5,33) |- (0,34.5);
\drawr{1}{33}
\drawr{1}{34}
\drawr{0}{34}

\draw(9.5,34.2)node {$\vect{r_1}$};
\draw[very thick] (9.5,25) |- (9.5,28);
\draw[very thick] (9,26.5) |- (10,26.5);
\draw[red!75!black,very thick] (9.5,28) |- (9.5,30);
\draw[red!50!white,very thick] (9.5,30) |- (9.5,33);
\draw[red!75!black,very thick] (9,29.5) |- (10,29.5);
\draw[red,very thick] (9,32.5) |- (10,32.5);
\drawgray{9}{25}
\drawgray{9}{26}
\drawgray{9}{27}
\drawre{9}{28}
\drawre{9}{29}
\drawr{9}{30}
\drawr{9}{31}
\drawr{9}{32}
\draw[thick,->] (9.5,30.5) -- (9.5,33.5);

\draw[very thick] (8,26.5) |- (9,26.5);
\drawgray{8}{26}

\draw[red,very thick] (0,29.5) |- (6,29.5);
\draw[red!75!black,very thick] (6,29.5) |- (9,29.5);
\draw[red!75!black,very thick] (7.5,29.5) |- (7.5,30);
\draw[red,very thick] (4.5,29.5) |- (4.5,30);
\draw[red,very thick] (1.5,29.5) |- (1.5,30);
\drawre{8}{29}
\drawre{7}{29}
\drawre{6}{29}
\drawred{5}{29}
\drawred{4}{29}
\drawred{3}{29}
\drawred{2}{29}
\drawred{1}{29}
\drawred{0}{29}

\draw[red!50!white,very thick] (0,32.5) |- (9,32.5);
\draw[red!50!white,very thick] (7.5,32.5) |- (7.5,33);
\draw[red!50!white,very thick] (4.5,32.5) |- (4.5,33);
\draw[red!50!white,very thick] (1.5,32.5) |- (1.5,33);
\drawr{8}{32}
\drawr{7}{32}
\drawr{6}{32}
\drawr{5}{32}
\drawr{4}{32}
\drawr{3}{32}
\drawr{2}{32}
\drawr{1}{32}
\drawr{0}{32}

\draw[red!50!white,very thick] (10,32.5) -| (11.5,34.5) -| (13.5,36);
\drawr{10}{32}
\drawr{11}{32}
\drawr{11}{33}
\drawr{11}{34}
\drawr{12}{34}
\drawr{13}{34}
\drawr{13}{35}

\draw[red!75!black,very thick] (10,29.5) -| (11.5,31.5) -| (12,31.5);
\drawre{10}{29}
\drawre{11}{29}
\drawre{11}{30}
\drawre{11}{31}

\draw[very thick] (10,26.5) -| (11.5,28.5) -| (12,28.5);
\drawgray{10}{26}
\drawgray{11}{26}
\drawgray{11}{27}
\drawgray{11}{28}

\draw[very thick] (10,19.5) -| (14.5,24.5) -| (15,24.5);
\drawgray{10}{19}
\drawgray{11}{19}
\drawgray{12}{19}
\drawgray{13}{19}
\drawgray{14}{19}
\drawgray{14}{20}
\drawgray{14}{21}
\drawgray{14}{22}
\drawgray{14}{23}
\drawgray{14}{24}

\draw[orange!50!white,very thick] (15,24.5) -| (19.5,29.5) -| (20,29.5);
\drawo{15}{24}
\drawo{16}{24}
\drawo{17}{24}
\drawo{18}{24}
\drawo{19}{24}
\drawo{19}{25}
\drawo{19}{26}
\drawo{19}{27}
\drawo{19}{28}
\drawo{19}{29}

\draw(17.6,27.6) node {$\vect{o}$};
\draw[thick,->] (15.5,24.5) -- (20.5,29.5);

\draw[very thick] (3,14.5) |- (1.5,14.5) |- (6.5,12.5) |- (3,10.5);
\draw[very thick] (1.5,13.5) |- (2,13.5);
\drawgray{2}{14}
\drawgray{1}{14}
\drawgray{1}{13}
\drawgray{1}{12}
\drawgray{2}{12}
\drawgray{3}{12}
\drawgray{4}{12}
\drawgray{5}{12}
\drawgray{6}{12}
\drawgray{6}{11}
\drawgray{6}{10}
\drawgray{5}{10}
\drawgray{4}{10}
\drawgray{3}{10}

\draw[blue!50!white,very thick] (3,10.5) |- (1.5,10.5) |- (6.5,8.5) |- (3,6.5);
\draw[blue!50!white,very thick] (1.5,9.5) |- (2,9.5);
\drawb{2}{10}
\drawb{1}{10}
\drawb{1}{9}
\drawb{1}{8}
\drawb{2}{8}
\drawb{3}{8}
\drawb{4}{8}
\drawb{5}{8}
\drawb{6}{8}
\drawb{6}{7}
\drawb{6}{6}
\drawb{5}{6}
\drawb{4}{6}
\drawb{3}{6}

\draw(1.4,7.2) node {$\vect{b}$};
\draw[thick,->] (2.5,10.5) -- (2.5,6.5);

\draw[very thick] (9,15.5) |- (15,15.5);
\draw[very thick] (10.5,15.5) |- (10.5,16);
\drawgray{14}{15}
\drawgray{13}{15}
\drawgray{12}{15}
\drawgray{11}{15}
\drawgray{10}{15}
\drawgray{9}{15}

\draw[green!50!black,very thick] (15,15.5) |- (17,15.5);
\draw[green,very thick] (17,15.5) |- (21,15.5);
\draw[green!50!black,very thick] (16.5,15.5) |- (16.5,16);
\drawgreen{20}{15}
\drawgreen{19}{15}
\drawgreen{18}{15}
\drawgreen{17}{15}
\drawg{16}{15}
\drawg{15}{15}

\draw[green,very thick] (21,15.5) |- (23,15.5);
\draw[green,very thick] (22.5,15.5) |- (22.5,16);
\drawgreen{22}{15}
\drawgreen{21}{15}

\draw(20.5,14.2) node {$\vect{v_1}$};
\draw[thick,->] (17.5,15.5) -- (23.5,15.5);

\draw[very thick] (10.5,16) |- (10.5,17);
\drawgray{10}{16}

\draw[green!50!black,very thick] (16.5,16) |- (19,17.5);
\drawg{16}{16}
\drawg{16}{17}
\drawg{17}{17}
\drawg{18}{17}

\draw[green,very thick] (22.5,16) |- (25,17.5);
\drawgreen{22}{16}
\drawgreen{22}{17}
\drawgreen{23}{17}
\drawgreen{24}{17}

\draw[green!50!black,very thick] (19,17.5) -| (20.5,21.5)  -| (21.5,20.5) -| (22,20.5);
\draw[green!50!black,very thick] (19.5,17.5) -| (19.5,18);
\drawg{19}{17}
\drawg{20}{17}
\drawg{20}{18}
\drawg{20}{19}
\drawg{20}{20}
\drawg{20}{21}
\drawg{21}{21}
\drawg{21}{20}

\draw[green!75!black,very thick] (22,20.5) -| (23.5,24.5)  -| (24.5,23.5) -| (25,23.5);
\draw[green!75!black,very thick] (22.5,20.5) -| (22.5,21);
\drawgre{22}{20}
\drawgre{23}{20}
\drawgre{23}{21}
\drawgre{23}{22}
\drawgre{23}{23}
\drawgre{23}{24}
\drawgre{24}{24}
\drawgre{24}{23}

\draw[green!75!black,very thick] (25,23.5) -| (26.5,27.5)  -| (27.5,26.5) -| (28,26.5);
\draw[green!75!black,very thick] (25.5,23.5) -| (25.5,24);
\drawgre{25}{23}
\drawgre{26}{23}
\drawgre{26}{24}
\drawgre{26}{25}
\drawgre{26}{26}
\drawgre{26}{27}
\drawgre{27}{27}
\drawgre{27}{26}

\draw[green!75!black,very thick] (28,26.5) -| (29.5,30.5)  -| (30.5,29.5) -| (31,29.5);
\draw[green!75!black,very thick] (28.5,26.5) -| (28.5,27);
\drawgre{28}{26}
\drawgre{29}{26}
\drawgre{29}{27}
\drawgre{29}{28}
\drawgre{29}{29}
\drawgre{29}{30}
\drawgre{30}{30}
\drawgre{30}{29}

\draw[green!75!black,very thick] (31,29.5) -| (32.5,33.5)  -| (33.5,32.5) -| (34,32.5);
\draw[green!75!black,very thick] (31.5,29.5) -| (31.5,30);
\drawgre{31}{29}
\drawgre{32}{29}
\drawgre{32}{30}
\drawgre{32}{31}
\drawgre{32}{32}
\drawgre{32}{33}
\drawgre{33}{33}
\drawgre{33}{32}

\draw[green,very thick] (25,17.5) -| (26.5,21.5)  -| (27.5,20.5) -| (28,20.5);
\draw[green,very thick] (25.5,17.5) -| (25.5,18);
\drawgreen{25}{17}
\drawgreen{26}{17}
\drawgreen{26}{18}
\drawgreen{26}{19}
\drawgreen{26}{20}
\drawgreen{26}{21}
\drawgreen{27}{21}
\drawgreen{27}{20}

\draw[green,very thick] (28,20.5) -| (29.5,24.5)  -| (30.5,23.5) -| (31,23.5);
\draw[green,very thick] (28.5,20.5) -| (28.5,21);
\drawgreen{28}{20}
\drawgreen{29}{20}
\drawgreen{29}{21}
\drawgreen{29}{22}
\drawgreen{29}{23}
\drawgreen{29}{24}
\drawgreen{30}{24}
\drawgreen{30}{23}

\draw[green,very thick] (31,23.5) -| (32.5,27.5)  -| (33.5,26.5) -| (34,26.5);
\draw[green,very thick] (31.5,23.5) -| (31.5,24);
\drawgreen{31}{23}
\drawgreen{32}{23}
\drawgreen{32}{24}
\drawgreen{32}{25}
\drawgreen{32}{26}
\drawgreen{32}{27}
\drawgreen{33}{27}
\drawgreen{33}{26}

\draw[green!50!black,very thick] (19.5,18) |- (18.5,18.5) |- (18.5,19);
\drawg{19}{18}
\drawg{18}{18}

\draw[green!75!black,very thick] (22.5,21) |- (22,21.5);
\drawgre{22}{21}

\draw[green!75!black,very thick] (25.5,24) |- (25,24.5);
\drawgre{25}{24}

\draw[green!75!black,very thick] (28.5,27) |- (28,27.5);
\drawgre{28}{27}

\draw[green!75!black,very thick] (31.5,30) |- (31,30.5);
\drawgre{31}{30}

\draw[green,very thick] (25.5,18) |- (24.5,18.5) |- (25,19.5);
\drawgreen{25}{18}
\drawgreen{24}{18}
\drawgreen{24}{19}

\draw[green,very thick] (28.5,21) |- (28,21.5);
\drawgreen{28}{21}

\draw[green,very thick] (31.5,24) |- (31,24.5);
\drawgreen{31}{24}

\draw[green,very thick] (25,19.5) -| (25.5,22.5) |- (28,22.5);
\drawgreen{25}{19}
\drawgreen{25}{20}
\drawgreen{25}{21}
\drawgreen{25}{22}
\drawgreen{26}{22}
\drawgreen{27}{22}

\draw[green,very thick] (28,22.5) -| (28.5,25.5) |- (31,25.5);
\drawgreen{28}{22}
\drawgreen{28}{23}
\drawgreen{28}{24}
\drawgreen{28}{25}
\drawgreen{29}{25}
\drawgreen{30}{25}

\draw[green,very thick] (31,25.5) -| (31.5,28.5) |- (34,28.5);
\drawgreen{31}{25}
\drawgreen{31}{26}
\drawgreen{31}{27}
\drawgreen{31}{28}
\drawgreen{32}{28}
\drawgreen{33}{28}

\path [dotted, draw, thin] (0,0) grid[step=0.35cm] (34,36);
\end{tikzpicture}
\caption{An assembly of complexity less than $2$ can be decomposed as the union of a finite assembly (dark colors), some assemblies of complexity $1$ (medium colors) and all the translations of an assembly of complexity less than $1$ by a given vector (light color).}
\label{fig:appC:step4}
\end{figure}

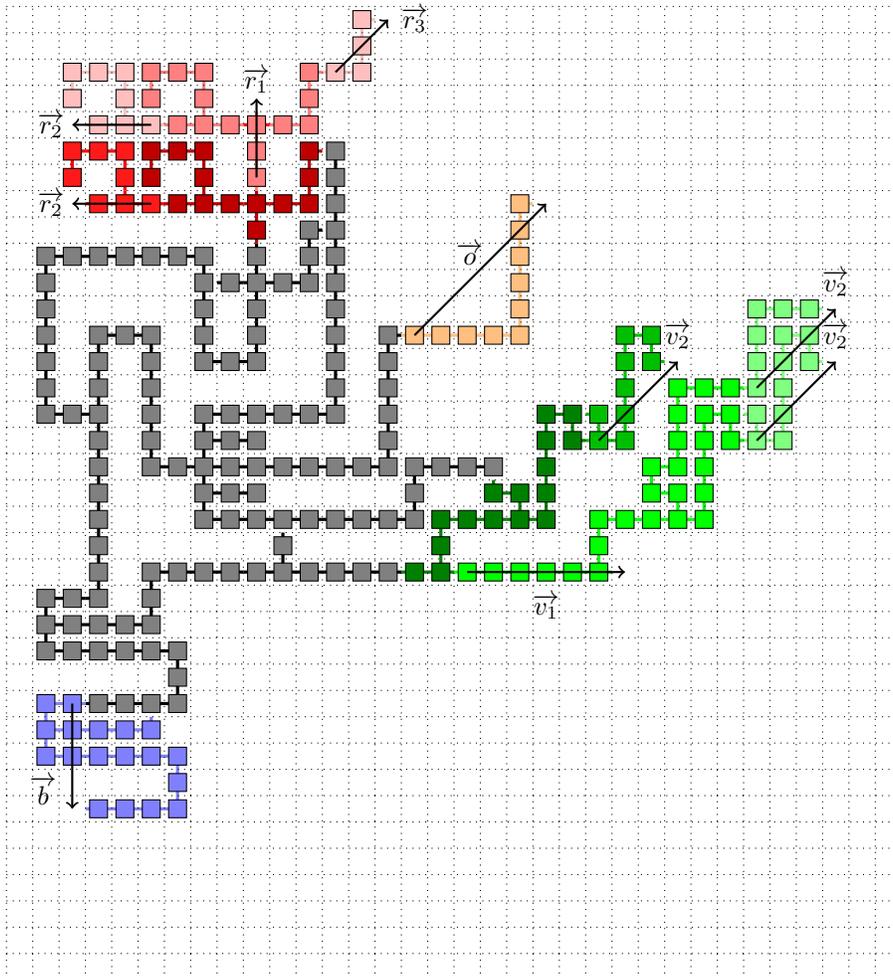
\begin{figure}
\centering
\begin{tikzpicture}[x=0.35cm,y=0.35cm]

\draw[very thick] (7.5,21.5) -| (12.5,31.5);
\draw[very thick] (7.5,21.5) |- (15.5,17.5) |- (18.5,19.5);
\draw[very thick] (7.5,20.5) |- (9.5,20.5);
\draw[very thick] (7,19.5) |- (10,19.5);
\draw[very thick] (7.5,18.5) |- (9.5,18.5);
\drawgray{7}{21}
\drawgray{7}{20}
\drawgray{7}{19}
\drawgray{7}{18}
\drawgray{7}{17}
\drawgray{8}{21}
\drawgray{8}{20}
\drawgray{8}{19}
\drawgray{8}{18}
\drawgray{8}{17}
\drawgray{9}{21}
\drawgray{9}{20}
\drawgray{9}{19}
\drawgray{9}{18}
\drawgray{9}{17}
\drawgray{10}{21}
\drawgray{11}{21}
\drawgray{12}{21}
\drawgray{12}{22}
\drawgray{12}{23}
\drawgray{12}{24}
\drawgray{12}{25}
\drawgray{12}{26}
\drawgray{12}{27}
\drawgray{12}{28}
\drawgray{12}{29}
\drawgray{12}{30}
\drawgray{12}{31}
\drawgray{10}{17}
\drawgray{11}{17}
\drawgray{12}{17}
\drawgray{13}{17}
\drawgray{14}{17}
\drawgray{15}{17}
\drawgray{15}{18}
\drawgray{15}{19}
\drawgray{16}{19}
\drawgray{17}{19}
\drawgray{18}{19}

\draw[very thick] (7,19.5) -| (5.5,24.5) -| (3.5,14.5) -| (3,14.5);
\draw[very thick] (3.5,21.5) -| (1.5,27.5) -| (7.5,23.5) -| (9.5,25) ;
\drawgray{6}{19}
\drawgray{5}{19}
\drawgray{5}{20}
\drawgray{5}{21}
\drawgray{5}{22}
\drawgray{5}{23}
\drawgray{2}{21}
\drawgray{1}{21}
\drawgray{1}{22}
\drawgray{1}{23}
\drawgray{1}{24}
\drawgray{1}{25}
\drawgray{1}{26}
\drawgray{1}{27}
\drawgray{2}{27}
\drawgray{3}{27}
\drawgray{4}{27}
\drawgray{5}{27}
\drawgray{6}{27}
\drawgray{7}{27}
\drawgray{7}{26}
\drawgray{7}{25}
\drawgray{7}{24}
\drawgray{7}{23}
\drawgray{8}{23}
\drawgray{9}{23}
\drawgray{9}{24}
\drawgray{5}{24}
\drawgray{4}{24}
\drawgray{3}{24}
\drawgray{3}{23}
\drawgray{3}{22}
\drawgray{3}{21}
\drawgray{3}{20}
\drawgray{3}{19}
\drawgray{3}{18}
\drawgray{3}{17}
\drawgray{3}{16}
\drawgray{3}{15}
\drawgray{3}{14}

\draw[very thick] (2,13.5) -| (5.5,15.5)  -| (9,15.5);
\drawgray{2}{13}
\drawgray{3}{13}
\drawgray{4}{13}
\drawgray{5}{13}
\drawgray{5}{14}
\drawgray{5}{15}
\drawgray{6}{15}
\drawgray{7}{15}
\drawgray{8}{15}

\draw[blue!50!white,very thick] (2,9.5) -| (5.5,10);
\drawb{2}{9}
\drawb{3}{9}
\drawb{4}{9}
\drawb{5}{9}

\draw[red!75!black,very thick] (7.5,30) |- (5.5,31.5)  -| (5.5,30.5);
\drawre{7}{30}
\drawre{7}{31}
\drawre{6}{31}
\drawre{5}{31}
\drawre{5}{30}

\draw[red,very thick] (4.5,30) |- (2.5,31.5)  -| (2.5,30.5);
\drawred{4}{30}
\drawred{4}{31}
\drawred{3}{31}
\drawred{2}{31}
\drawred{2}{30}

\draw[red!50!white,very thick] (7.5,33) |- (5.5,34.5)  -| (5.5,33.5);
\drawr{7}{33}
\drawr{7}{34}
\drawr{6}{34}
\drawr{5}{34}
\drawr{5}{33}

\draw[red!25!white,very thick] (4.5,33) |- (2.5,34.5)  -| (2.5,33.5);
\drawlr{4}{33}
\drawlr{4}{34}
\drawlr{3}{34}
\drawlr{2}{34}
\drawlr{2}{33}

\draw(9.5,34.2)node {$\vect{r_1}$};
\draw[very thick] (9.5,25) |- (9.5,28);
\draw[very thick] (9,26.5) |- (10,26.5);
\draw[red!75!black,very thick] (9.5,28) |- (9.5,30);
\draw[red!50!white,very thick] (9.5,30) |- (9.5,33);
\draw[red!75!black,very thick] (9,29.5) |- (10,29.5);
\draw[red,very thick] (9,32.5) |- (10,32.5);
\drawgray{9}{25}
\drawgray{9}{26}
\drawgray{9}{27}
\drawre{9}{28}
\drawre{9}{29}
\drawr{9}{30}
\drawr{9}{31}
\drawr{9}{32}
\draw[thick,->] (9.5,30.5) -- (9.5,33.5);

\draw[very thick] (8,26.5) |- (9,26.5);
\drawgray{8}{26}

\draw[red,very thick] (3,29.5) |- (6,29.5);
\draw[red!75!black,very thick] (6,29.5) |- (9,29.5);
\draw[red!75!black,very thick] (7.5,29.5) |- (7.5,30);
\draw[red,very thick] (4.5,29.5) |- (4.5,30);
\drawre{8}{29}
\drawre{7}{29}
\drawre{6}{29}
\drawred{5}{29}
\drawred{4}{29}
\drawred{3}{29}

\draw[red!50!white,very thick] (6,32.5) |- (9,32.5);
\draw[red!50!white,very thick] (7.5,32.5) |- (7.5,33);
\drawr{8}{32}
\drawr{7}{32}
\drawr{6}{32}

\draw[red!25!white,very thick] (3,32.5) |- (6,32.5);
\draw[red!25!white,very thick] (4.5,32.5) |- (4.5,33);
\drawlr{5}{32}
\drawlr{4}{32}
\drawlr{3}{32}

\draw(1.7,29.5) node {$\vect{r_2}$};
\draw[thick,->] (5.5,29.5) -- (2.5,29.5);

\draw(1.7,32.5) node {$\vect{r_2}$};
\draw[thick,->] (5.5,32.5) -- (2.5,32.5);

\draw[red!50!white,very thick] (10,32.5) -| (11.5,34.5) -| (12,34.5);
\drawr{10}{32}
\drawr{11}{32}
\drawr{11}{33}
\drawr{11}{34}

\draw[red!25!white,very thick] (12,34.5) -| (13.5,36.5) -| (14,36.5);
\drawlr{12}{34}
\drawlr{13}{34}
\drawlr{13}{35}
\drawlr{13}{36}

\draw[red!75!black,very thick] (10,29.5) -| (11.5,31.5) -| (12,31.5);
\drawre{10}{29}
\drawre{11}{29}
\drawre{11}{30}
\drawre{11}{31}

\draw[very thick] (10,26.5) -| (11.5,28.5) -| (12,28.5);
\drawgray{10}{26}
\drawgray{11}{26}
\drawgray{11}{27}
\drawgray{11}{28}

\draw[very thick] (10,19.5) -| (14.5,24.5) -| (15,24.5);
\drawgray{10}{19}
\drawgray{11}{19}
\drawgray{12}{19}
\drawgray{13}{19}
\drawgray{14}{19}
\drawgray{14}{20}
\drawgray{14}{21}
\drawgray{14}{22}
\drawgray{14}{23}
\drawgray{14}{24}

\draw[orange!50!white,very thick] (15,24.5) -| (19.5,29.5) -| (20,29.5);
\drawo{15}{24}
\drawo{16}{24}
\drawo{17}{24}
\drawo{18}{24}
\drawo{19}{24}
\drawo{19}{25}
\drawo{19}{26}
\drawo{19}{27}
\drawo{19}{28}
\drawo{19}{29}

\draw(17.6,27.6) node {$\vect{o}$};
\draw[thick,->] (15.5,24.5) -- (20.5,29.5);

\draw[very thick] (3,14.5) |- (1.5,14.5) |- (6.5,12.5) |- (3,10.5);
\draw[very thick] (1.5,13.5) |- (2,13.5);
\drawgray{2}{14}
\drawgray{1}{14}
\drawgray{1}{13}
\drawgray{1}{12}
\drawgray{2}{12}
\drawgray{3}{12}
\drawgray{4}{12}
\drawgray{5}{12}
\drawgray{6}{12}
\drawgray{6}{11}
\drawgray{6}{10}
\drawgray{5}{10}
\drawgray{4}{10}
\drawgray{3}{10}

\draw[blue!50!white,very thick] (3,10.5) |- (1.5,10.5) |- (6.5,8.5) |- (3,6.5);
\draw[blue!50!white,very thick] (1.5,9.5) |- (2,9.5);
\drawb{2}{10}
\drawb{1}{10}
\drawb{1}{9}
\drawb{1}{8}
\drawb{2}{8}
\drawb{3}{8}
\drawb{4}{8}
\drawb{5}{8}
\drawb{6}{8}
\drawb{6}{7}
\drawb{6}{6}
\drawb{5}{6}
\drawb{4}{6}
\drawb{3}{6}

\draw(1.4,7.2) node {$\vect{b}$};
\draw[thick,->] (2.5,10.5) -- (2.5,6.5);

\draw[very thick] (9,15.5) |- (15,15.5);
\draw[very thick] (10.5,15.5) |- (10.5,16);
\drawgray{14}{15}
\drawgray{13}{15}
\drawgray{12}{15}
\drawgray{11}{15}
\drawgray{10}{15}
\drawgray{9}{15}

\draw[green!50!black,very thick] (15,15.5) |- (17,15.5);
\draw[green,very thick] (17,15.5) |- (21,15.5);
\draw[green!50!black,very thick] (16.5,15.5) |- (16.5,16);
\drawgreen{20}{15}
\drawgreen{19}{15}
\drawgreen{18}{15}
\drawgreen{17}{15}
\drawg{16}{15}
\drawg{15}{15}

\draw[green,very thick] (21,15.5) |- (23,15.5);
\draw[green,very thick] (22.5,15.5) |- (22.5,16);
\drawgreen{22}{15}
\drawgreen{21}{15}

\draw(20.5,14.2) node {$\vect{v_1}$};
\draw[thick,->] (17.5,15.5) -- (23.5,15.5);

\draw[very thick] (10.5,16) |- (10.5,17);
\drawgray{10}{16}

\draw[green!50!black,very thick] (16.5,16) |- (19,17.5);
\drawg{16}{16}
\drawg{16}{17}
\drawg{17}{17}
\drawg{18}{17}

\draw[green,very thick] (22.5,16) |- (25,17.5);
\drawgreen{22}{16}
\drawgreen{22}{17}
\drawgreen{23}{17}
\drawgreen{24}{17}

\draw[green!50!black,very thick] (19,17.5) -| (20.5,21.5)  -| (21.5,20.5) -| (22,20.5);
\draw[green!50!black,very thick] (19.5,17.5) -| (19.5,18);
\drawg{19}{17}
\drawg{20}{17}
\drawg{20}{18}
\drawg{20}{19}
\drawg{20}{20}
\drawg{20}{21}
\drawg{21}{21}
\drawg{21}{20}

\draw[green!75!black,very thick] (22,20.5) -| (23.5,24.5)  -| (24.5,23.5) -| (25,23.5);
\draw[green!75!black,very thick] (22.5,20.5) -| (22.5,21);
\drawgre{22}{20}
\drawgre{23}{20}
\drawgre{23}{21}
\drawgre{23}{22}
\drawgre{23}{23}
\drawgre{23}{24}
\drawgre{24}{24}
\drawgre{24}{23}

\draw(25.5,24.5) node {$\vect{v_2}$};
\draw[thick,->] (22.5,20.5) -- (25.5,23.5);

\draw[green,very thick] (25,17.5) -| (26.5,21.5)  -| (27.5,20.5) -| (28,20.5);
\draw[green,very thick] (25.5,17.5) -| (25.5,18);
\drawgreen{25}{17}
\drawgreen{26}{17}
\drawgreen{26}{18}
\drawgreen{26}{19}
\drawgreen{26}{20}
\drawgreen{26}{21}
\drawgreen{27}{21}
\drawgreen{27}{20}

\draw[green!50!white,very thick] (28,20.5) -| (29.5,24.5)  -| (30.5,23.5) -| (31,23.5);
\draw[green!50!white,very thick] (28.5,20.5) -| (28.5,21);
\drawlg{28}{20}
\drawlg{29}{20}
\drawlg{29}{21}
\drawlg{29}{22}
\drawlg{29}{23}
\drawlg{29}{24}
\drawlg{30}{24}
\drawlg{30}{23}

\draw[green!50!black,very thick] (19.5,18) |- (18.5,18.5) |- (18.5,19);
\drawg{19}{18}
\drawg{18}{18}

\draw[green!75!black,very thick] (22.5,21) |- (22,21.5);
\drawgre{22}{21}

\draw(31.5,24.5) node {$\vect{v_2}$};
\draw[thick,->] (28.5,20.5) -- (31.5,23.5);

\draw[green,very thick] (25.5,18) |- (24.5,18.5) |- (25,19.5);
\drawgreen{25}{18}
\drawgreen{24}{18}
\drawgreen{24}{19}

\draw[green!50!white,very thick] (28.5,21) |- (28,21.5);
\drawlg{28}{21}

\draw[green,very thick] (25,19.5) -| (25.5,22.5) |- (28,22.5);
\drawgreen{25}{19}
\drawgreen{25}{20}
\drawgreen{25}{21}
\drawgreen{25}{22}
\drawgreen{26}{22}
\drawgreen{27}{22}

\draw[green!50!white,very thick] (28,22.5) -| (28.5,25.5) |- (31,25.5);
\drawlg{28}{22}
\drawlg{28}{23}
\drawlg{28}{24}
\drawlg{28}{25}
\drawlg{29}{25}
\drawlg{30}{25}

\draw(31.5,26.5) node {$\vect{v_2}$};
\draw[thick,->] (28.5,22.5) -- (31.5,25.5);

\draw(15.5,36.5) node {$\vect{r_3}$};
\draw[thick,->] (12.5,34.5) -- (14.5,36.5);

\path [dotted, draw, thin] (0,0) grid[step=0.35cm] (34,37);
\end{tikzpicture}
\caption{An assembly of complexity less than $1$ can be decomposed as the union of a finite assembly (light color) and all the translations of a finite assembly by a given vector (very light color).}
\label{fig:appC:step5}
\end{figure}

\thanks{We thank Damien Woods for his support and advices.}

\bibliographystyle{plain}
\bibliography{dirige}

\end{document}